\date{\today}
\newcommand{\bx}{\mathbf{x}}
\newcommand{\by}{\mathbf{y}}
\newcommand{\bz}{\mathbf{z}}
\newcommand{\bp}{\mathbf{p}}
\newcommand{\bq}{\mathbf{q}}
\newcommand{\bP}{\mathbf{P}}
\newcommand{\C}{\mathbb{C}}
\newcommand{\R}{\mathbb{R}}
\newcommand{\Tr}{\operatorname{Tr}}
\newcommand{\supp}{{\operatorname{supp}}}
\newcommand{\Real}{{\operatorname{Re}}}
\newtheorem{lemma}{Lemma}[section]
\newtheorem{corollary}[lemma]{Corollary}
\newtheorem{theorem}[lemma]{Theorem}
\newtheorem{proposition}[lemma]{Proposition}
\newtheorem{remark}[lemma]{Remark}
\newtheorem{definition}[lemma]{Definition}
\newtheorem*{acknowledgement}{Acknowledgement}
\newenvironment{pf*}[1]{\par\medskip\noindent\textit{#1}\,:}{\hspace*{\fill}\qed\medskip\par\noindent}
\title{Excess charge for pseudo-relativistic atoms in Hartree-Fock theory}
\author[Anna Dall'Acqua, Jan Philip Solovej]{Anna Dall'Acqua, Jan Philip Solovej}
\address[Anna Dall'Acqua]
{Institute for Analysis and Numerics,
Faculty of Mathematics,
Otto-von-Guericke Universit\"at,
Universit\"atsplatz 2,
D-39016 Magdeburg, Germany}
\email{anna.dallacqua@ovgu.de}
\address[Jan Philip Solovej]
{Department of Mathematics,
University of Copenhagen,
Universitetsparken 5,
DK-2100 Copenhagen, Denmark}
\email{solovej@math.ku.dk}
\begin{document}

\begin{abstract}
We prove within the Hartree-Fock theory of pseudo-relativistic atoms that the maximal negative ionization charge and the ionization energy of an
atom remain bounded independently of the nuclear charge $Z$ and the fine structure constant $\alpha$ as long as $Z\alpha$ is bounded.
\end{abstract}

\subjclass[2000]{81Q05, 81Q20}

\maketitle

\tableofcontents

\section{Introduction}

A long standing open problem in the mathematical physics literature is the Ionization conjecture. It can be formulated as follows. Consider atoms with arbitrarily large nuclear charge $Z$, is it true that the radius (see Definition \ref{defradius}) and the maximal negative ionization remain bounded? A positive answer to this question in the non-relativistic Hartree-Fock model has been given by the second author in \cite{Sol}. One of the aims of the present paper is to extend the result taking into account some relativistic effects. The ionization conjecture for the full Schr\"odinger theory is still open both in the non-relativistic and relativistic case. See \cite{L}, \cite{LSST1}, \cite{LSST2}, \cite{FS1}, \cite{FS2} and \cite{S} for some $Z$-dependent bounds on the maximal negative ionization. The best result is that $N(Z)=Z +O(Z^{a})$ with $a=47/56$ where $N(Z)$ denotes the maximal number of electrons a nucleus of charge $Z$ binds (see \cite{FS1}, \cite{FS2} and \cite{S}).

As a model for an atom with nuclear charge $Z$ and $N$ electrons we consider (in units where $\hbar = m= e=1$) the operator
\begin{equation}
H = \sum_{i=1}^{N}\alpha ^{-1}\big( \sqrt{-\Delta _{i}+\alpha ^{-2}}-\alpha ^{-1}-\frac{Z\alpha }{\vert \bx_{i}\vert}\big)+\sum_{1\leq i<j\leq N}\frac{1}{\vert \bx_{i}-\bx_{j}\vert },  \label{Hamiltonian}
\end{equation}
where $\alpha $ is Sommerfeld's fine structure constant. The operator $H$ acts on a dense subset of the $N$ body Hilbert space $\mathcal{H}_{F}:=\displaystyle{\wedge_{i=1}^{N}}L^{2}(\R^{3};\mathbb{C}^{q})$ of antisymmetric wave functions, where $q$ is the number of spin states. The operator $H$ is bounded from below on this subspace if $Z\alpha \leq 2/\pi$ (see \cite{H} for $N=1$, \cite{DL} and \cite{LY} for $N \geq 1$). In this paper we will consider the sub-critical case $Z \alpha <2/\pi$. Let us notice here that to define the operator $H$ there is an issue. Indeed for $Z \alpha <2/\pi$ the nuclear potential is only a small form perturbation of the kinetic energy and hence one needs to work with forms to define the operator $H$. This has been done in detail in \cite{DSS1}.

The quantum ground state energy is the infimum of the spectrum of $H$ considered as an operator acting on $\mathcal{H}_{F}$. In the Hartree-Fock approximation one restricts to wave-functions $\psi$ which are pure wedge products, also called Slater determinants:
\begin{equation}\label{slater}
\psi (\bx_{1},\sigma _{1},\bx_{2},\sigma _{2},\dots ,\bx_{N},\sigma _{N})=\tfrac{1}{\sqrt{N!}}\det (u_{i}(\bx_{j},\sigma_{j}))_{i,j=1}^{N},
\end{equation}
with $\{u_{i}\}_{i=1}^{N}$ orthonormal in $L^{2}(\R^{3};\mathbb{C}^{q})$. The $u_i$'s are also called orbitals. Notice that $\Vert \psi \Vert _{L^{2}(\R^{3N},\mathbb{C}^{qN})}=1$. The Hartree-Fock ground state energy is
\begin{equation*}
E^{\rm HF}(N,Z,\alpha):= \inf \{\mathrm{q}(\psi, \psi) | \psi \in \mathcal{Q}(H)  \mbox{ and } \psi \mbox{ a Slater determinant} \},
\end{equation*}
with $\mathtt{q}$ the quadratic form defined by $H$ and $\mathcal{Q}(H)$ the corresponding form domain.

One of the main result of the paper is the following.
\begin{theorem}\label{charge}
Let $Z\geq 1$ and $\alpha>0$. Let $Z \alpha =\kappa$ and assume that $0 \leq \kappa<2/\pi$. There is a constant $Q>0$ depending only on $\kappa$ such that if $N$ is such that a Hartree-Fock minimizer exists then $ N \leq Z+Q$.
\end{theorem}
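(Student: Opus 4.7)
The plan is to follow Solovej's proof of the non-relativistic ionization conjecture in Hartree--Fock theory \cite{Sol}, adapting each step to the non-local kinetic operator $\alpha^{-1}(\sqrt{-\Delta+\alpha^{-2}}-\alpha^{-1})$ and ensuring that every constant remains uniform in $\alpha$ once $\kappa=Z\alpha$ is held fixed in the sub-critical range. Starting from a minimizing Slater determinant $\psi=u_1\wedge\cdots\wedge u_N$ with one-particle density matrix $\gamma=\sum_i|u_i\rangle\langle u_i|$, the usual variational argument produces Hartree--Fock equations $h_\gamma u_i=\epsilon_i u_i$ with $\epsilon_i\le 0$, where
\[
h_\gamma=\alpha^{-1}\bigl(\sqrt{-\Delta+\alpha^{-2}}-\alpha^{-1}\bigr)-\frac{Z}{|\bx|}+\rho_\gamma*\frac{1}{|\bx|}-K_\gamma,
\]
with $\rho_\gamma$ the density and $K_\gamma$ the exchange operator. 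The excess-charge bound then reduces to the statement that the screened nuclear potential $\Phi_\gamma(\bx):=Z|\bx|^{-1}-\rho_\gamma*|\bx|^{-1}(\bx)$ decays quickly enough outside some $Z$-independent radius to preclude the binding of more than $O_\kappa(1)$ additional electrons.

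The heart of the proof is a pointwise comparison of $\Phi_\gamma$ with the corresponding Thomas--Fermi potential $\Phi^{\mathrm{TF}}$ of the pseudo-relativistic TF model associated to the kinetic symbol $\alpha^{-1}(\sqrt{p^2+\alpha^{-2}}-\alpha^{-1})$. After setting up this TF problem and using its natural scaling to strip $Z$ from the picture, one would prove an iterated scale-by-scale estimate of the form
\[
|\Phi_\gamma(\bx)-\Phi^{\mathrm{TF}}(\bx)|\le C\, r^{-4+\epsilon}
\]
on every dyadic annulus of radius $r\gtrsim Z^{-1/3}$, with $C,\epsilon$ depending only on $\kappa$. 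The induction on scales couples a Sommerfeld-type asymptotic for $\Phi^{\mathrm{TF}}$ to local coercivity estimates extracted from the Hartree--Fock equations, in the spirit of the bootstrap in \cite{Sol}. Once such a comparison is in hand, one chooses an exterior radius $R=R(\kappa)$ beyond which the outermost orbital satisfies an eigenvalue problem with too weak an effective potential; the Herbst inequality $\alpha^{-1}\sqrt{-\Delta}\ge (2/\pi)|\bx|^{-1}$ combined with the sub-critical gap $2/\pi-\kappa>0$ then yields $\int_{|\bx|>R}\rho_\gamma=O_\kappa(1)$, while the TF comparison inside $B_R$ gives $\int_{|\bx|\le R}\rho_\gamma\le Z+O_\kappa(1)$, which together prove the theorem.

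The main obstacle is the non-locality of $\sqrt{-\Delta+\alpha^{-2}}$. In \cite{Sol} the simple commutator $[-\Delta,\chi]$ with a smooth cut-off $\chi$ produces purely local IMS errors, which are essential both for the scale-by-scale iteration and for the final localization of the outer orbital. Here one must instead control commutators of $\sqrt{-\Delta+\alpha^{-2}}$ with radial cut-offs through pointwise kernel estimates or integral representations, uniformly in $\alpha\in(0,2/(\pi Z)]$. It is precisely the sub-critical margin $\kappa<2/\pi$ that provides the spare room in the Herbst inequality needed to absorb both these non-local localization errors and the exchange contribution; maintaining every estimate uniformly in $\alpha$ for fixed $\kappa$ is the real technical novelty relative to the non-relativistic argument.
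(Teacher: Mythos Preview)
Your high-level strategy matches the paper's: reproduce Solovej's bootstrap, prove an iterated comparison $|\Phi^{\rm HF}_{|\bx|}(\bx)-\Phi^{\rm TF}_{|\bx|}(\bx)|\le C|\bx|^{-4+\varepsilon}$, handle the non-local IMS errors through the explicit $K_2$-kernel of $\sqrt{-\Delta+\alpha^{-2}}$, and conclude via the splitting $N=\int_{|\bx|<R}\rho^{\rm TF}+\int_{|\bx|<R}(\rho^{\rm HF}-\rho^{\rm TF})+\int_{|\bx|>R}\rho^{\rm HF}$. Two points in your plan, however, do not match what the paper does and would cause trouble if implemented as stated.

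First, the paper compares with the \emph{non-relativistic} Thomas--Fermi minimizer of a neutral atom of charge $Z$, not with a pseudo-relativistic TF model built from the symbol $\alpha^{-1}(\sqrt{p^2+\alpha^{-2}}-\alpha^{-1})$. With $Z\alpha=\kappa$ fixed one has $\alpha\to0$ as $Z\to\infty$, so at the TF length scale $|\bx|\sim Z^{-1/3}$ the relevant momenta satisfy $\alpha|\bp|\ll1$ and $\alpha^{-1}T(\bp)\approx\tfrac12|\bp|^2$; the relativistic pieces enter only as lower-order errors (this is exactly the content of Propositions~\ref{prthomas} and~\ref{prthomas3}). A genuinely relativistic TF functional would not enjoy the $Z^{7/3}$ scaling nor the Sommerfeld asymptotic $\varphi^{\rm TF}\sim 3^4\pi^2(2q^2)^{-1}|\bx|^{-4}$ on which the whole iteration (Lemmas~\ref{mu=0}, \ref{A1A2}, \ref{Iterstep}) rests. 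So ``stripping $Z$ by the natural scaling'' of a pseudo-relativistic TF is not the right move here.

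Second, the exterior bound $\int_{|\bx|>R}\rho^{\rm HF}=O_\kappa(1)$ is not obtained from a Herbst-margin argument on the outermost orbital. The paper instead adapts Lieb's multiplier trick (Lemma~\ref{extterm}): test the HF equations against $|\bx|\theta_r^2 u_i$, use positivity of $\mathrm{Re}\,(f|\cdot|,T(\bp)f)$ for $f$ supported away from the origin, and control the non-local commutator $[\theta_r,T(\bp)]$ by a near/middle/far shell decomposition of the $K_2$-kernel (Lemma~\ref{ErrorA}). The strict inequality $\kappa<2/\pi$ is actually consumed elsewhere---in Lemma~\ref{rrcou}, to split off a fraction $\mu$ of the kinetic energy while keeping $(\mu^{-1}Z)\alpha<2/\pi$, and in the Daubechies--Lieb--Yau bound near the singularity---not as spare room in a Herbst inequality for the exterior localization.
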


The idea of the proof is the same as in \cite{Sol}. One shows that the Thomas-Fermi model is a good approximation of the Hartree-Fock model except in the region far away from the nucleus. We first introduce some notation in order to introduce the Hartree-Fock and Thomas-Fermi models.

\subsection{Notation}

Let $\mathrm{e}$ be the quadratic form with domain $H^{\frac{1}{2}}(\mathbb{R}^3,\mathbb{C}^q)$ such that
\begin{equation}\label{a}
\mathrm{e}(u,v)= (E(\bp)^{\frac12}u,E(\bp)^{\frac12}v) \mbox{ for all }u,v \in H^{\frac12}(\R^3,\C^q),
\end{equation}
where $E(\bp)$ denotes the operator $E(i \nabla)=\sqrt{-\Delta + \alpha^{-2}}$. As usual $(u,v)$ denotes the scalar product of $u$ and $v$ in $L^2(\mathbb{R}^3,\mathbb{C}^q)$. Let $V(\bx):= Z \alpha /|\bx|$ and $\mathrm{v}$ be the quadratic form with domain $H^{\frac{1}{2}}(\mathbb{R}^3,\mathbb{C}^q)$ defined by
\begin{equation}\label{b}
\mathrm{v}(u,v)=(V^{\frac12}u, V^{\frac12} v) \mbox{ for all }u,v \in H^{\frac12}(\mathbb{R}^3,\mathbb{C}^q).
\end{equation}
{F}rom \cite[5.33 p.307]{Kato} we have
\begin{equation}\label{kato}
\int_{\R^3} \frac{|f(\bx)|^2}{|\bx|} \; d\bx \leq \frac{2}{\pi} \int_{\R^3} |\bp| |\hat{f}(\bp)|^2 \; d\bp \mbox{ for } f\in H^{\frac12}(\R^3, \C)
\end{equation}
with $\hat{f}$ the Fourier transform of $f$. Thus since $Z \alpha \leq 2/\pi$ and $E(\bp) \geq |\bp|$ it follows that $\mathrm{v}(u,u) \leq \mathrm{e}(u,u)$ for all $u \in H^{\frac12}(\R^3, \C^q)$.

In the following $\mathrm{t}$ denotes the quadratic form associated to the kinetic energy; i.e. for all $u, v \in H^{\frac12}(\R^3,\C^q)$
\begin{equation}\label{quadkin}
\mathrm{t}(u,v):= \alpha^{-1} \mathrm{e}(u,v)-\alpha^{-2}(u,v)= \alpha^{-1}(T(\bp)^{\frac12}u,T(\bp)^{\frac12}v),
\end{equation}
with $T(\bp):=E(\bp)-\alpha^{-1}$.

A \textit{density matrix} $\gamma $ is a self-adjoint trace class operator that satisfies the operator inequality $0\leq \gamma \leq \mbox{\textit{Id}}$. A density matrix $\gamma :L^{2}(\mathbb{R}^{3};\mathbb{C}^{q})\rightarrow
L^{2}(\mathbb{R}^{3};\mathbb{C}^{q})$ has an integral kernel
\begin{equation}
\gamma \left( \mathbf{x},\sigma ,\mathbf{y},\tau \right) =\sum_{j}\lambda
_{j}u_{j}(\mathbf{x},\sigma )u_{j}(\mathbf{y},\tau )^{\ast },  \label{ker}\end{equation}
where $\lambda _{j},u_{j}$ are the eigenvalues and corresponding eigenfunctions of $\gamma $. We choose the $u_j$'s to be orthonormal in $L^2(\mathbb{R}^3,\mathbb{C}^q)$. Let $\rho_{\gamma} \in L^{1}(\R^3)$ denote the $1$-particle density associated to $\gamma$ given by
\begin{equation*}
\rho _{\gamma }(\mathbf{x}) =\sum_{\sigma =1}^{q}\sum_{j}\lambda_{j} \vert u_{j}(\bx ,\sigma )\vert ^{2} .
\end{equation*}

We define
\begin{equation}
\mathcal{A}:=\left\{ \gamma \text{ density matrix: } \Tr [ T(\bp) \gamma]<+\infty \right\} ,  \label{A}
\end{equation}
where for $\gamma \in \mathcal{A}$ written as in (\ref{ker}) $\Tr [T(\bp) \gamma]:=\Tr [E(\bp) \gamma]-\alpha^{-1}\Tr[\gamma]$ and
\begin{equation}\label{kintr}
\Tr [E(\bp) \gamma]: =\sum_{j}\lambda _{j} \mathrm{e}(u_j,u_j).
\end{equation}
Similarly we use the following notation $\Tr\left[ V \gamma \right] :=\sum_{j}\lambda _{j} \mathrm{v}(u_j,u_j)$.

\begin{remark}\label{rho43}
If $\gamma \in \mathcal{A}$ then $\rho_{\gamma} \in L^1(\mathbb{R}^3)$ since $\gamma$ is trace class and $\rho_{\gamma} \in L^{4/3}(\mathbb{R}^3)$. The second inclusion follows from Daubechies' inequality, a generalization of the Lieb-Thirring inequality (see Theorem~\ref{Daube}).
\end{remark}

\subsection{Hartree-Fock theory\label{HFtheory}}

In Hartree-Fock theory one considers wave functions that are pure wedge products and that satisfy the right statistic: determinantal wave functions as in \eqref{slater}. To define the HF-energy functional it is convenient to use the one to one correspondence between Slater determinants and projections onto finite dimensional subspaces of $L^2(\mathbb{R}^3, \mathbb{C}^q)$. Indeed if $\psi$ is given by (\ref{slater}) and $\gamma $ is the projection onto the space spanned by $u_{1}$, $\dots ,u_{N} $ the energy expectation depends only on $\gamma$: $\left( \psi ,H\psi \right) =\mathcal{E}^{\rm HF}(\gamma ).$ Here $\mathcal{E}^{\rm HF}$ defines the HF-energy functional
\begin{equation}\label{HFfunc}
\mathcal{E}^{\rm HF}(\gamma )=\alpha^{-1}\Tr[ (T(\bp)-V)\gamma ]+ \mathcal{D}(\gamma )- \mathcal{E}x\left( \gamma \right) ,
\end{equation}
where $\mathcal{D}(\gamma )$ is the direct Coulomb energy
\begin{equation*}
\mathcal{D}(\gamma)=\tfrac12 \int_{\R^3} \int_{\R^3} \frac{\rho_{\gamma}(\bx) \rho_{\gamma}(\by)}{|\bx -\by|} \; d\bx d\by,
\end{equation*}
and $\mathcal{E}x( \gamma) $ is the exchange Coulomb energy
\begin{equation*}
\mathcal{E}x( \gamma) =\tfrac{1}{2}\int_{\R^{3}}\int_{\R^{3}}\frac{\Tr_{\mathbb{C}^{q}} \big[\vert \gamma (\bx,\by)\vert ^{2}\big] }{\vert \bx -\by\vert } \; d\bx d\by ,
\end{equation*}
where we think of the integral kernel $\gamma(x,y)$ as a $q \times q$ matrix.

Using projections we can define as follows the HF-ground state.

\begin{definition}[The HF-ground state]
Let $Z>0$ be a real number and $N\geq 0$ be an integer. The HF-ground state energy is
\begin{equation*}
E^{\rm HF}(N,Z,\alpha):=\inf \left\{ \mathcal{E}^{\rm HF}(\gamma ):\gamma ^{2}=\gamma , ~ \gamma \in \mathcal{A} ,~\Tr[\gamma ]=N\right\}.
\end{equation*}
If a minimizer exists we say that the atom has a HF ground state described by $\gamma^{\rm HF}.$
\end{definition}

We may extend the definition of the HF-functional from projections to density matrices in $\mathcal{A}$. We first notice that if $\gamma \in \mathcal{A}$, then all the terms in $\mathcal{E}^{\rm HF}(\gamma)$ are finite. {F}rom \eqref{kato} it follows that
\begin{equation*}
\Tr[V \gamma] = \sum_{j} \lambda_{j} \mathrm{v}(u_{j}, u_{j}) \leq \sum_{j} \lambda_{j} \mathrm{e}(u_{j}, u_{j}) = \Tr[E(\bp) \gamma] .
\end{equation*}
On the other hand if $\gamma \in \mathcal{A}$ then $\rho_{\gamma} \in L^{1}(\R^3) \cap L^{\frac43}(\R^3)$ (see Remark~\ref{rho43}). By H\"older's inequality $\rho_{\gamma} \in L^{\frac65}(\R^3)$ and hence $\mathcal{D}(\gamma)$ is bounded by Hardy-Littlewood-Sobolev's inequality. The boundness of the exchange term follows from $0 \leq \mathcal{E}x (\gamma) \leq \mathcal{D}(\gamma)$. On the other hand if $\gamma$ is a density matrix with $\gamma \notin \mathcal{A}$ then $\mathcal{E}^{\rm HF}(\gamma)=\infty$. Here we use also that $Z \alpha <2/\pi$.

Extending the set where we minimize, we could have lowered the ground state energy and/or changed the minimizer. That this is not the case follows from Lieb's variational principle.

\begin{theorem}[Lieb's variational principle, \cite{lvar}]\label{varprin}
For all $N$ non-negative integers it holds that
\begin{equation*}
\inf \{ \mathcal{E}^{\rm HF}(\gamma ):\gamma  \in \mathcal{A},~\gamma ^{2}=\gamma ,~\Tr[\gamma ]=N\} =\inf \{ \mathcal{E}^{\rm HF}(\gamma ):\gamma \in \mathcal{A},~\Tr[\gamma ]=N\} ,
\end{equation*}
and if the infimum over all density matrices is attained so is the
infimum over projections.
\end{theorem}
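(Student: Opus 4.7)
The inequality $\inf_{\gamma^{2}=\gamma}\mathcal{E}^{\rm HF}\geq\inf_{\gamma}\mathcal{E}^{\rm HF}$ is trivial since projections are density matrices, so I would focus on the reverse inequality and the attainment statement. Both follow at once from the following single reduction step: any $\gamma\in\mathcal{A}$ with $\Tr[\gamma]=N$ which is not a projection admits $\gamma'\in\mathcal{A}$ with $\Tr[\gamma']=N$, strictly fewer eigenvalues in the open interval $(0,1)$, and $\mathcal{E}^{\rm HF}(\gamma')\leq\mathcal{E}^{\rm HF}(\gamma)$. To set up the reduction, diagonalise $\gamma=\sum_{k}\lambda_{k}|u_{k}\rangle\langle u_{k}|$; the constraint $\sum\lambda_{k}=N\in\mathbb{N}$ together with $\lambda_{k}\in[0,1]$ forces at least two eigenvalues $\lambda_{i},\lambda_{j}$ to lie in $(0,1)$.

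The heart of the argument is the concavity of $f(t):=\mathcal{E}^{\rm HF}(\gamma(t))$ on the maximal interval $I$ on which $\gamma(t):=\gamma+t(|u_{i}\rangle\langle u_{i}|-|u_{j}\rangle\langle u_{j}|)$ is a density matrix; by inspection of the four cases, each endpoint of $I$ forces one of $\lambda_{i}+t,\lambda_{j}-t$ into $\{0,1\}$. The kinetic and nuclear contributions to $f$ are linear in $t$, so only the quadratic part $\mathcal{D}-\mathcal{E}x$ matters for $f''$. Writing $\rho_{k}(\bx):=|u_{k}(\bx)|^{2}_{\mathbb{C}^{q}}$, a direct second-derivative computation gives
\[
f''(t)=\int\!\!\int\frac{N(\bx,\by)}{|\bx-\by|}\,d\bx\,d\by
\]
with
\[
N(\bx,\by)=-\rho_{i}(\bx)\rho_{j}(\by)-\rho_{j}(\bx)\rho_{i}(\by)+2\Real\!\bigl[(u_{i}(\by),u_{j}(\by))_{\mathbb{C}^{q}}\,\overline{(u_{i}(\bx),u_{j}(\bx))_{\mathbb{C}^{q}}}\bigr].
\]
The Cauchy--Schwarz inequality in $\mathbb{C}^{q}$ bounds $|(u_{i}(\bz),u_{j}(\bz))_{\mathbb{C}^{q}}|$ by $\sqrt{\rho_{i}(\bz)\rho_{j}(\bz)}$, and the arithmetic--geometric mean inequality then yields $N(\bx,\by)\leq 0$ pointwise; integration against the positive kernel $1/|\bx-\by|$ gives $f''\leq 0$. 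Hence $\min_{I}f$ is attained at an endpoint $t_{*}$, and $\gamma':=\gamma(t_{*})\in\mathcal{A}$ (since it is a rank-two perturbation of $\gamma$ whose extra orbitals $u_{i},u_{j}$ lie in $H^{1/2}$) realises the reduction.

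Iterating strictly decreases the number of non-extremal eigenvalues; when $\gamma$ has finite rank this terminates at a projection of trace $N$ with energy no greater than $\mathcal{E}^{\rm HF}(\gamma)$, and for general $\gamma\in\mathcal{A}$ one first truncates to the top $K$ spectral components (the error in the HF-functional being controlled via the Daubechies and Hardy--Littlewood--Sobolev bounds recalled above) and passes to the limit $K\to\infty$. Applied to any putative density-matrix minimizer this produces a projection minimizer of the same energy, establishing the attainment statement. The main obstacle is the pointwise estimate $N\leq 0$, which rests on the precise algebraic cancellation between the direct and exchange second derivatives and is the one place the Coulomb form of the pair interaction enters; the pseudo-relativistic kinetic energy $\alpha^{-1}T(\bp)$ is inert here because it enters $\mathcal{E}^{\rm HF}$ linearly in $\gamma$, so the argument is formally identical to the non-relativistic one.
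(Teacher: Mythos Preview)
The paper does not supply its own proof of this statement; it is stated with attribution to Lieb's original paper \cite{lvar} and used as a black box. Your proposal reproduces precisely the standard argument from that reference: concavity of $t\mapsto\mathcal{E}^{\rm HF}(\gamma+t(|u_i\rangle\langle u_i|-|u_j\rangle\langle u_j|))$, obtained from the pointwise inequality $N(\bx,\by)\leq 0$ via Cauchy--Schwarz in $\mathbb{C}^{q}$ and AM--GM, followed by iteration. The computation of $f''$ and the sign estimate are correct, and your remark that the pseudo-relativistic kinetic term $\alpha^{-1}T(\bp)$ is inert (being linear in $\gamma$) is exactly the reason the non-relativistic proof carries over verbatim. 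The finite-rank truncation step you sketch for general $\gamma\in\mathcal{A}$ is the only place requiring a little care, but the energy continuity you invoke (via Daubechies and HLS) is adequate. In short, your argument is correct and is the one the paper's citation points to.
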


The following existence theorem for the HF-minimizer in the pseudo-relativistic case has been recently proved in \cite{DSS1}.

\begin{theorem}\label{HF}
Let $Z\alpha <2/\pi $ and let $N\geq 2$ be a positive integer such that $N<Z+1$.

Then there exists an $N$-dimensional projection $\gamma ^{\rm HF}=\gamma^{\rm HF}(N,Z,\alpha)$ minimizing the HF-energy functional $\mathcal{E}^{\rm HF}$ given by \eqref{HFfunc}, that is, $E^{\rm HF}(N,Z,\alpha)$ is attained. Moreover, one can write
\begin{equation*}
\gamma ^{\rm HF}(\bx,\sigma ,\by,\tau )=\sum_{i=1}^{N}u_{i}(\bx,\sigma) u_{i}( \by ,\tau ) ^{\ast },
\end{equation*}
with $u_{i} \in L^{2}(\R^3, \C^q)$, $i=1, \dots,N$, orthonormal, such that the HF-orbitals $\{ u_{i}\}_{i=1}^{N}$ satisfy:
\begin{enumerate}
\item  $h_{\gamma^{\rm HF}} u_{i} =\varepsilon _{i}u_{i},$ with $0 > \varepsilon _{N} \geq \varepsilon_{N-1} \geq \dots \geq \varepsilon_{1} > -\alpha^{-1}$ and
\begin{equation}
h_{\gamma^{\rm HF}}:= T(\bp) -\frac{Z\alpha}{\vert \bx\vert }+ \rho ^{\rm HF}\ast \vert \bx\vert ^{-1}-\mathcal{K}_{\gamma ^{\rm HF}},  \label{euler}
\end{equation}
where $\rho^{\rm HF}$ denotes the density of the HF-minimizer and for $f \in H^{\frac12}(\R^3)$
\begin{equation*}
( \mathcal{K}_{\gamma ^{\rm HF}}f) (\bx,\sigma)=\sum_{i=1}^{N} u_{i}( \bx,\sigma) \sum_{\tau =1}^{q} \int_{\R^{3}}u_{i}( \by,\tau) ^{\ast }f(\by,\tau )\vert \bx-\by\vert ^{-1}d\by .
\end{equation*}
\item $u_i \in C^{\infty}(\R^3 \setminus \{0\}, \C^q)$ for $i=1, \dots, N$;
\item $u_i \in H^1(\R^3 \setminus B_R(0))$ for all $R>0$ and $i=1, \dots , N$.
\end{enumerate}
\end{theorem}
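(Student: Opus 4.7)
The plan is to apply the direct method of the calculus of variations in conjunction with Lieb's variational principle (Theorem~\ref{varprin}): by the latter, the infimum over projections with $\Tr[\gamma]=N$ coincides with the infimum over the larger set $\{\gamma\in\mathcal{A}:\Tr[\gamma]=N\}$, and a projection minimizer exists as soon as a density matrix minimizer does. I therefore work on the larger convex set, where weak-$*$ compactness and lower semicontinuity are most transparent. Combining \eqref{kato} with $\kappa=Z\alpha<2/\pi$ gives $\Tr[V\gamma]\leq\tfrac{\kappa\pi}{2}\Tr[E(\mathbf{p})\gamma]$ and, as the excerpt already notes, $0\leq\mathcal{E}x(\gamma)\leq\mathcal{D}(\gamma)$; hence
\begin{equation*}
\mathcal{E}^{\rm HF}(\gamma)\geq \alpha^{-1}\bigl(1-\tfrac{\kappa\pi}{2}\bigr)\Tr[E(\mathbf{p})\gamma]-\alpha^{-2}N,
\end{equation*}
so the functional is bounded below on the constraint set and every minimizing sequence $(\gamma_n)$ has $\Tr[E(\mathbf{p})\gamma_n]$ uniformly bounded. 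Daubechies' inequality then bounds $\rho_{\gamma_n}$ in $L^1\cap L^{4/3}$, and after passing to a subsequence I obtain $\gamma_n\rightharpoonup\gamma$ in the weak-$*$ trace class topology with $0\leq\gamma\leq\Id$ and $\Tr[\gamma]\leq N$.

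\emph{No mass escapes to infinity --- the main obstacle.} The crux is to show $\Tr[\gamma]=N$. Suppose instead $\Tr[\gamma]=N'<N$. Using a smooth partition of unity $\chi_R^2+\eta_R^2=1$ with $\chi_R$ supported in $|\mathbf{x}|\leq 2R$, one splits $\gamma_n$ into a bounded and a far-away part. The key nonlocal ingredient is an IMS-type localization estimate for $E(\mathbf{p})=\sqrt{-\Delta+\alpha^{-2}}$, whose commutator errors with the cutoffs are controlled through the integral representation
\begin{equation*}
\sqrt{-\Delta+\alpha^{-2}}-\alpha^{-1}=\tfrac{1}{\pi}\int_0^{\infty}\frac{-\Delta}{-\Delta+\alpha^{-2}+s}\,\frac{ds}{\sqrt{s}},
\end{equation*}
and vanish as $R\to\infty$. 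Passing to the limit, the energy decouples as at least $E^{\rm HF}(N',Z,\alpha)$ plus the non-negative kinetic energy of the escaping electrons. But $N<Z+1$ produces a positive residual charge $Z-N'>0$ seen by any lost electron, so an explicit test configuration with all $N$ electrons bound has strictly lower energy than $E^{\rm HF}(N',Z,\alpha)$ (a Zhislin/HVZ-type strict binding inequality), contradicting the assumed minimality of $(\gamma_n)$. Adapting the localization argument to the nonlocal operator $E(\mathbf{p})$ is the main technical hurdle compared with the non-relativistic case treated in \cite{Sol}.

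\emph{Euler--Lagrange equation and regularity.} Once $\Tr[\gamma]=N$ is secured, weak lower semicontinuity of the remaining terms in $\mathcal{E}^{\rm HF}$ (the kinetic and direct Coulomb terms are convex; the exchange term passes via the $L^{4/3}$ bound on $\rho$ and Hardy--Littlewood--Sobolev) gives $\mathcal{E}^{\rm HF}(\gamma)\leq\liminf \mathcal{E}^{\rm HF}(\gamma_n)$, so $\gamma$ is a minimizer on the enlarged set and Lieb's principle provides an $N$-dimensional projection minimizer $\gamma^{\rm HF}$. Varying $\mathcal{E}^{\rm HF}$ at $\gamma^{\rm HF}$ under the constraint $\Tr[\gamma^{\rm HF}]=N$ yields the self-consistent equation $h_{\gamma^{\rm HF}}u_i=\varepsilon_i u_i$, and minimality forces $\gamma^{\rm HF}$ to be the spectral projection onto the $N$ lowest eigenvalues. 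The bound $\varepsilon_i>-\alpha^{-1}$ follows from $T(\mathbf{p})\geq 0$ together with decay of the self-consistent potential at infinity, while $\varepsilon_N<0$ follows again from $N<Z+1$ through the same binding argument (if $\varepsilon_N\geq 0$ one could not accommodate $N$ bound states). Finally (2) and (3) follow from standard elliptic bootstrapping in the eigenvalue equation: the potential is $C^{\infty}$ away from the origin and the pseudodifferential symbol of $T(\mathbf{p})$ is of order one, giving $C^{\infty}$ regularity in $\R^3\setminus\{0\}$ and $H^1$ regularity outside any ball about the origin.
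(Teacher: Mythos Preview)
The paper does not actually prove Theorem~\ref{HF}: it is stated and immediately attributed to \cite{DSS1} (``The following existence theorem for the HF-minimizer in the pseudo-relativistic case has been recently proved in \cite{DSS1}''). So there is no proof in the paper to compare against; your proposal is an attempt to reproduce, in outline, the argument carried out in that reference.

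Your outline is broadly the right strategy and is indeed close in spirit to what is done in \cite{DSS1}: coercivity from Kato's inequality, weak compactness of minimizing sequences in $\mathcal{A}$, exclusion of dichotomy/vanishing via a binding inequality valid for $N<Z+1$, Lieb's variational principle to pass from density matrices to projections, and then derivation of the self-consistent equations and elliptic/pseudodifferential regularity. A few points in your sketch would need tightening. First, the constant in your coercivity bound is off: Kato gives $\Tr[V\gamma]\leq(2\kappa/\pi)\Tr[E(\mathbf p)\gamma]$, not $(\kappa\pi/2)$; this is harmless since both are strictly less than $1$ for $\kappa<2/\pi$. Second, the exchange term enters $\mathcal{E}^{\rm HF}$ with a minus sign, so it is not enough to invoke an $L^{4/3}$ bound and Hardy--Littlewood--Sobolev; one must show $\mathcal{E}x(\gamma_n)\to\mathcal{E}x(\gamma)$ (or at least $\limsup\mathcal{E}x(\gamma_n)\leq\mathcal{E}x(\gamma)$), which in practice requires upgrading the weak convergence to strong $L^2$-convergence of the orbitals once $\Tr[\gamma]=N$ is established. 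Third, your justification of $\varepsilon_i>-\alpha^{-1}$ (``$T(\mathbf p)\geq0$ together with decay of the potential'') is not the right mechanism: the lower bound comes from $E(\mathbf p)-Z\alpha|\mathbf x|^{-1}\geq(1-\tfrac{2\kappa}{\pi})E(\mathbf p)\geq(1-\tfrac{2\kappa}{\pi})\alpha^{-1}$ together with the operator inequality $\rho^{\rm HF}*|\mathbf x|^{-1}-\mathcal K_{\gamma^{\rm HF}}\geq0$.
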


In the opposite direction the following result gives an upper bound on the excess charge.

\begin{theorem}
\label{Lieb}Let $\alpha Z<\frac{2}{\pi}$. If $N$ is a positive integer such that $N>2Z+1$ there are no minimizers for the HF-energy functional.
\end{theorem}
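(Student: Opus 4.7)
The plan is to adapt Lieb's classical argument for the maximum negative ionization to the pseudo-relativistic Hartree-Fock setting, proceeding by contradiction. Suppose that a HF-minimizer $\gamma^{\rm HF}$ exists with $N>2Z+1$ electrons. By Theorem~\ref{varprin} I may take $\gamma^{\rm HF}$ to be an $N$-dimensional projection with orthonormal orbitals $\{u_i\}_{i=1}^N$; the standard first-order variation argument (extending Theorem~\ref{HF}) shows that these orbitals satisfy Euler-Lagrange equations $h_{\gamma^{\rm HF}} u_i = \varepsilon_i u_i$ with $\varepsilon_N < 0$ (a minimizer must be strictly bound).

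The key step is to multiply the equation for the outermost orbital $u_N$ by $|\bx| u_N^{\ast}$, integrate, and take real parts. After isolating the kinetic, nuclear attraction, and interaction contributions, the resulting identity reads (schematically)
\[
\varepsilon_N \langle u_N, |\bx| u_N\rangle = \Real\langle |\bx| u_N, \alpha^{-1} T(\bp) u_N\rangle - Z\|u_N\|^2 + \Real\langle |\bx| u_N, (V_d - \mathcal{K}_{\gamma^{\rm HF}}) u_N\rangle,
\]
where $V_d := \rho^{\rm HF}\ast|\bx|^{-1}$. Since the left-hand side is strictly negative, the argument reduces to two estimates: (i) the kinetic contribution is non-negative; and (ii) the direct-minus-exchange contribution is bounded below by $(N-1)/2$. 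Together they would yield $0 > -Z + (N-1)/2$, i.e., $N < 2Z+1$, contradicting the assumption.

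For (ii) I would expand the integrand orbital-by-orbital: the $j=N$ contribution vanishes by the HF self-interaction cancellation, and for each $j \neq N$, symmetrizing under $\bx \leftrightarrow \by$ via the self-adjointness of $V_d - \mathcal{K}_{\gamma^{\rm HF}}$, then invoking the triangle inequality $|\bx|+|\by|\geq|\bx-\by|$, AM-GM to control the exchange cross-term, and orthonormality $\langle u_j, u_N\rangle = 0$, extracts a contribution of at least $\tfrac{1}{2}$ per orbital. A subtlety is that the naive swap leaves an antisymmetric residual of the form $\iint (|\bx|-|\by|)(|u_N(\bx)|^2|u_j(\by)|^2 - |u_N(\by)|^2|u_j(\bx)|^2)/|\bx-\by|\,d\bx\,d\by$ whose sign is not manifest, and the accounting must be organized so as to avoid it (e.g., via a layer-cake decomposition $|\bx| = \int_0^\infty \chi_{\{|\bx|\geq t\}}\,dt$, reducing to a uniform-in-$t$ estimate on the truncated orbital $\chi_{\{|\bx|\geq t\}} u_N$). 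The main obstacle, however, is (i): in the non-relativistic case integration by parts gives $\Real\langle|\bx| u, -\tfrac12\Delta u\rangle = \tfrac12 \int|\bx||\nabla u|^2 - \tfrac12 \int |u|^2/|\bx|$, and the negative boundary term is absorbed via Hardy's inequality; for the pseudo-relativistic $T(\bp) = \sqrt{-\Delta + \alpha^{-2}}-\alpha^{-1}$ the analogous step requires the Fourier/integral representation of $T(\bp)$ together with the Kato-type bound \eqref{kato} to control the corresponding error term, and this is precisely where the subcritical hypothesis $Z\alpha < 2/\pi$ enters. A secondary technicality is extending the existence and regularity of Theorem~\ref{HF} to the range $Z+1 \leq N \leq 2Z+1$, so that $|\bx| u_N$ is a bona fide admissible test function.
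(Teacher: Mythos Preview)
Your overall strategy---multiply the Euler--Lagrange equation by $|\bx|$, integrate, and use positivity of the kinetic contribution together with the triangle inequality on the interaction---is indeed Lieb's argument, and is what the paper invokes (the paper itself does not prove Theorem~\ref{Lieb} but cites \cite{L} and \cite{DSS2}; the relevant computation, in localized form, appears in the proof of Lemma~\ref{extterm}). However, two points in your proposal need correction.

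\medskip
\textbf{Sum over all orbitals, not just $u_N$.} Your step (ii) breaks down precisely where you say it does: working with a single orbital $u_N$, the quantity
\[
\int \frac{|\bx|}{|\bx-\by|}\Big(|u_N(\bx)|^2|u_j(\by)|^2 - u_N(\bx)^* u_j(\bx) u_j(\by)^* u_N(\by)\Big)\,d\bx\,d\by
\]
is not symmetric in $\bx\leftrightarrow\by$, and after Cauchy--Schwarz the residual has no sign. The layer-cake trick does not fix this, since the truncation $\chi_{\{|\bx|\geq t\}}$ still acts on only one variable. The remedy (and the route taken in \cite{Sol} and in the paper's Lemma~\ref{extterm}) is to sum the identity over \emph{all} orbitals $i=1,\dots,N$, using that every $\varepsilon_i<0$. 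Then the direct-minus-exchange term becomes
\[
\iint \frac{|\bx|}{|\bx-\by|}\Big(\rho^{\rm HF}(\bx)\rho^{\rm HF}(\by)-\Tr_{\C^q}|\gamma^{\rm HF}(\bx,\by)|^2\Big)\,d\bx\,d\by,
\]
whose integrand is pointwise non-negative and symmetric in $\bx,\by$; replacing $|\bx|$ by $\tfrac12(|\bx|+|\by|)\geq\tfrac12|\bx-\by|$ yields the lower bound $\tfrac12(N^2-N)$. With the nuclear term contributing $-ZN$, one obtains $0>-ZN+\tfrac12 N(N-1)$, hence $N<2Z+1$.

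\medskip
\textbf{Where $Z\alpha<2/\pi$ actually enters.} The positivity $\Real\langle |\bx| f, T(\bp) f\rangle\geq 0$ is a general inequality of Lieb \cite{L} valid for any $f$ in (say) $H^1(\R^3)$, and has nothing to do with $Z$ or $\alpha$; the paper uses it verbatim in the proof of Lemma~\ref{extterm}. The subcritical hypothesis is needed not here but in the \emph{approximation argument} that justifies the formal identity $\varepsilon_i\langle |\bx|u_i,u_i\rangle=\langle |\bx|u_i, h_{\gamma^{\rm HF}}u_i\rangle$: for $Z\alpha$ close to $2/\pi$ the orbitals are only in $H^{1/2}$ and may be singular at the origin, so neither $T(\bp)u_i$ nor the various pairings are a priori well-defined. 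Lieb's original argument required $Z\alpha<1/2$ for this reason; the extension to the full range $Z\alpha<2/\pi$ is the ``improved approximation argument'' of \cite{DSS2} referred to after the statement of the theorem. What you call a ``secondary technicality'' is in fact the entire content of the extension.
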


This theorem for $Z \alpha <1/2$ was proved by Lieb in \cite{L}. With an improved approximation argument the proof can be extended to $Z\alpha < 2/\pi$ (see \cite{DSS2}). Notice that both proofs work not only in the Hartree-Fock approximation but for the minimization problem on $\wedge^{N} L^2(\R^3)$.

\begin{definition}\label{screened}
Let $\gamma ^{\rm HF}$ be the HF-minimizer. The function
\begin{equation*}
\varphi ^{\rm HF}( \bx) :=\frac{Z}{\vert \bx\vert }-\int_{\R^3} \frac{\rho ^{\rm HF}(\by) }{\vert\bx-\by\vert }d\by \mbox{ for } \bx \in \R^3,
\end{equation*}
is called the HF-mean field potential and
\begin{equation*}
\Phi _{R}^{\rm HF}(\bx) :=\frac{Z}{\vert \bx\vert }-\int_{\vert \by\vert <R}\frac{\rho^{\rm HF}(\by) }{\vert \bx-\by\vert }d\by \mbox{ for } \bx \in \R^3,
\end{equation*}
is the HF-screened nuclear potential.
\end{definition}

\begin{definition}\label{defradius}
We define the HF-radius $R^{\rm HF}_{Z,N}(\nu)$ to the $\nu$ last electrons by
\begin{equation*}
\int_{|\bx| \geq R^{\rm HF}_{Z,N}(\nu)} \rho^{\rm HF}(\bx) \, d\bx = \nu.
\end{equation*}
\end{definition}

\subsection{A bit of Thomas-Fermi theory}

In this subsection we present briefly the Thomas-Fermi theory and especially the result that will be used in the rest of the paper. We refer the interested reader to \cite{Ltf}.

Let $U$ be a potential in $L^{5/2}(\R^{3})+L^{\infty }(\R^{3})$ with
\begin{equation*}
\inf\{ \Vert W\Vert _{\infty }: U-W\in L^{\frac{5}{2}}(\R^{3})\} =0.
\end{equation*}
Then the TF-energy functional is defined by
\begin{equation*}
\mathcal{E}_{U}^{\rm TF}( \rho) =\tfrac{3}{10} (\tfrac{6\pi^{2}}{q})^{\frac{2}{3}}\int_{\R^{3}}\rho( \bx)^{\frac{5}{3}}d\bx-\int_{\R^{3}}U( \bx) \rho (\bx) d\bx + \tfrac12 \int_{\R^3} \int_{\R^3} \frac{\rho(\bx) \rho(\by)}{|\bx -\by|} \; d\bx d\by ,
\end{equation*}
on non-negative functions $\rho \in L^{5/3}(\R^{3})\cap L^{1}(\R^{3})$. As before, $q$ denotes the number of spin states.

We recall some properties of the TF-model, see \cite{LSTF}.

\begin{theorem}\label{ExTF}
Let $U$ be as above. For all $N^{\prime }\geq 0$ there exists a unique non-negative $\rho _{U}^{\rm TF}\in L^{5/3}( \R^{3}) $ such that $\int \rho _{U}^{\rm TF}\leq N^{\prime }$ and
\begin{equation*}
\mathcal{E}_{U}^{\rm TF}(\rho _{V}^{\rm TF}) =\inf \{ \mathcal{E}_{U}^{\rm TF}(\rho) :\rho \in L^{5/3}(\R^{3}),\text{ }\int_{\R^3}\rho(\bx) \, d\bx \leq N^{\prime }\} .
\end{equation*}
There exists a unique chemical potential $\mu_{U}^{\rm TF}(N^{\prime }),$ with $0\leq \mu _{U}^{\rm TF}(N^{\prime })\leq \sup U,$ such that $\rho _{U}^{\rm TF}$ is uniquely characterized by
\begin{eqnarray*}
& & \mathcal{E}_{U}^{\rm TF}(\rho _{U}^{\rm TF}) +\mu_{U}^{\rm TF}(N^{\prime }) \, \int_{\R^3} \rho_{U}^{\rm TF}(\bx) \, d\bx\\
&=&\inf \{ \mathcal{E}_{U}^{\rm TF}( \rho) +\mu _{U}^{\rm TF}(N^{\prime }) \; \int_{\R^3} \rho(\bx) \, d\bx :0\leq \rho \in L^{5/3}(\R^{3})\cap L^{1}(\R^{3})\} .
\end{eqnarray*}
Moreover $\rho _{U}^{\rm TF}$ is the unique solution in $L^{5/3}(\R^{3})\cap L^{1}(\R^{3})$ to the TF-equation
\begin{equation*}
\tfrac{1}{2} (\tfrac{6\pi^{2}}{q})^{\frac{2}{3}}( \rho _{U}^{\rm TF}(\bx)) ^{\frac{2}{3}}=\left[ U(\bx) -\rho_{U}^{\rm TF}\ast \vert \bx\vert ^{-1}-\mu _{U}^{\rm TF}(N^{\prime})\right] _{+}.
\end{equation*}
If $\mu _{U}^{\rm TF}(N^{\prime })>0$ then $\int \rho _{U}^{\rm TF}=N^{\prime }.$ For all $\mu >0$ there is a unique minimizer $0\leq \rho \in L^{5/3}(\R^{3})\cap L^{1}(\R^{3})$ to $\mathcal{E}_{U}^{\rm TF}(\rho) +\mu \int \rho$.
\end{theorem}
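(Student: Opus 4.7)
The plan is to apply the direct method of the calculus of variations, combined with strict convexity. Set
\begin{equation*}
\mathcal{C}_{N'}:= \{\rho \in L^{5/3}(\R^3) \cap L^1(\R^3) : \rho \geq 0, \; \int \rho \, d\bx \leq N'\}.
\end{equation*}
First I would check that $\mathcal{E}_U^{\rm TF}$ is well-defined, bounded below, and coercive in $\|\cdot\|_{5/3}$ on $\mathcal{C}_{N'}$. Decomposing $U = U_1 + U_2$ with $U_1 \in L^{5/2}(\R^3)$ and $\|U_2\|_\infty \leq \varepsilon$ (possible for any $\varepsilon > 0$ by hypothesis), H\"older's inequality gives $|\int U\rho\, d\bx| \leq \|U_1\|_{5/2}\|\rho\|_{5/3} + \varepsilon N'$, which combined with the kinetic term $c\|\rho\|_{5/3}^{5/3}$ and $\mathcal{D}(\rho)\ge 0$ yields boundedness from below and coercivity via Young's inequality.

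Given a minimizing sequence $\{\rho_n\}\subset\mathcal{C}_{N'}$, extract a subsequence converging weakly in $L^{5/3}$ to some $\rho_\star$. Non-negativity of $\rho_\star$ and $\int\rho_\star\le N'$ pass to the limit (Mazur and Fatou). The three terms of $\mathcal{E}_U^{\rm TF}$ are then treated as follows: $\int\rho^{5/3}$ is weakly lower semicontinuous by convexity of $t\mapsto t^{5/3}$; the direct term, via the Fourier representation $\mathcal{D}(\rho)=2\pi^2\int|\widehat{\rho}(\xi)|^2|\xi|^{-2}\,d\xi$, is a positive quadratic form and hence weakly lower semicontinuous; and $\int U\rho\,d\bx$ is weakly continuous on the constraint set, using $L^{5/2}$--$L^{5/3}$ duality for $U_1$ together with the smallness of $U_2$ paired against the uniform $L^1$ bound. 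So $\rho_\star$ is a minimizer. Uniqueness follows from strict convexity of $t\mapsto t^{5/3}$ on $[0,\infty)$ together with positivity of $\mathcal{D}(\rho_1-\rho_2)$ (equality iff $\rho_1=\rho_2$).

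To derive the TF-equation I would take one-sided variations $\rho_U^{\rm TF}+t(\eta-\rho_U^{\rm TF})$ with $\eta\in\mathcal{C}_{N'}$ and $t\in[0,1]$, so that the KKT conditions yield a Lagrange multiplier $\mu=\mu_U^{\rm TF}(N')\ge 0$ for $\int\rho\le N'$ and give the pointwise equation
\begin{equation*}
\tfrac12\big(\tfrac{6\pi^2}{q}\big)^{2/3}\big(\rho_U^{\rm TF}(\bx)\big)^{2/3}=\big[U(\bx)-\rho_U^{\rm TF}\ast|\bx|^{-1}(\bx)-\mu\big]_+.
\end{equation*}
The bound $\mu\le\sup U$ is read off at any $\bx$ where $\rho_U^{\rm TF}>0$, using $\rho_U^{\rm TF}\ast|\bx|^{-1}\ge 0$. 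Uniqueness of $\mu$ comes from the pointwise monotonicity of $\mu\mapsto\rho_\mu$ defined by the TF-equation, and hence of $\mu\mapsto\int\rho_\mu$. The saturation $\int\rho_U^{\rm TF}=N'$ when $\mu>0$ is complementary slackness. Existence and uniqueness for the penalized functional $\mathcal{E}_U^{\rm TF}(\rho)+\mu\int\rho$ at fixed $\mu>0$ follow from the same direct method applied on $\{\rho\ge 0\}$, the penalty supplying the extra $L^1$-coercivity and strict convexity giving uniqueness.

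The main technical obstacle I anticipate is the weak continuity of the linear term $\int U\rho\,d\bx$ on the bounded-mass set: the $L^{5/2}$ part is straightforward by duality, but the $L^\infty$ part relies on the decay hypothesis $\inf\{\|W\|_\infty: U-W\in L^{5/2}\}=0$ together with a tightness argument (truncating $U_2$ to a large ball and using the $L^1$ bound to control the tail), which is precisely where the stated assumption on $U$ enters the proof.
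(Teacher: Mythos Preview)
The paper does not give its own proof of this theorem: it is stated as a recalled result from the literature, immediately preceded by ``We recall some properties of the TF-model, see \cite{LSTF}.'' There is therefore no proof in the paper to compare against; the reference is to Lieb--Simon.

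Your outline is the standard direct-method argument and is essentially correct. Two technical points deserve a bit more care than you give them. First, weak lower semicontinuity of the direct term $D(\rho)$ does not follow from ``positive quadratic form'' together with weak $L^{5/3}$ convergence alone: you are implicitly also using the uniform $L^1$ bound to pass to a further subsequence converging weakly in $L^{6/5}$ (where $D$ is continuous by Hardy--Littlewood--Sobolev), or equivalently to make the duality pairing $\rho\mapsto D(\phi,\rho)$ continuous by splitting the potential $\phi\ast|\bx|^{-1}$ into a compactly supported part and a tail controlled by the $L^1$ bound. Second, the inequality $\int\rho_\star\le N'$ should be obtained by testing the weak $L^{5/3}$ limit against $\chi_{B_R}\in L^{5/2}$ and letting $R\to\infty$, rather than by invoking Fatou, since you have no pointwise convergence. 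Neither point is a genuine gap; both are routine once noticed. Your treatment of the Euler--Lagrange equation via one-sided variations, the KKT multiplier $\mu$, complementary slackness, and the penalised problem is the standard route and is fine as sketched.
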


One defines the \textit{TF-mean field potential} $\varphi _{U}^{\rm TF}$, the \textit{TF-screened nuclear potential} $\Phi _{U,R}^{\rm TF}$ and the TF-radius $R^{\rm TF}_{N,Z}(\nu)$ to the $\nu$ last-electron similarly as in Definitions~\ref{screened} and \ref{defradius} replacing the HF-density with the TF-density.

\begin{theorem}\label{TfCou}
If $U(\bx) =Z/\vert \bx\vert $ (the Coulomb potential), then the minimizer of $\mathcal{E}_{U}^{\rm TF},$ under the condition $\int \rho \leq N,$ exists for every $N$. Moreover, $\mu_{U}^{\rm TF}(N)=0$ if and only if $N\geq Z$.
\end{theorem}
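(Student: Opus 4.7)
The plan is to reduce the existence assertion to Theorem~\ref{ExTF} and then analyze the Euler--Lagrange equation in the two regimes $\mu>0$ and $\mu=0$ using spherical symmetry and Newton's theorem.

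For the existence part I would verify that $U(\bx)=Z/|\bx|$ satisfies the hypotheses of Theorem~\ref{ExTF}. Splitting $U$ by the characteristic function of the unit ball shows $U\in L^{5/2}(\R^{3})+L^{\infty}(\R^{3})$, and cutting off the Coulomb tail at larger and larger radii gives $\inf\{\|W\|_{\infty}:U-W\in L^{5/2}\}=0$. Theorem~\ref{ExTF} then furnishes, for every $N\geq 0$, the unique minimizer $\rho=\rho_{U}^{\rm TF}$ and chemical potential $\mu=\mu_{U}^{\rm TF}(N)$. Uniqueness combined with the rotational invariance of $\mathcal{E}_{U}^{\rm TF}$ makes $\rho$ spherically symmetric. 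A preliminary observation I will need is that the mean-field potential $\varphi=\varphi_{U}^{\rm TF}$ is non-negative on $\R^{3}$: on the open set $\{\varphi<0\}$ the TF equation forces $\rho=0$, so $\varphi$ is harmonic there; this set avoids the origin since $\varphi(\bx)\to+\infty$ as $\bx\to 0$, and its boundary values vanish both on $\partial\{\varphi<0\}$ and at infinity, so the maximum principle forces $\varphi\equiv 0$ on it, a contradiction.

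To prove $\mu=0\Rightarrow N\geq Z$, I would use Newton's theorem for the (spherically symmetric) $\rho$:
\begin{equation*}
\varphi(r)=\frac{Z-M(r)}{r}-4\pi\int_{r}^{\infty}s\,\rho(s)\,ds,\qquad M(r)=\int_{|\by|<r}\rho(\by)\,d\by.
\end{equation*}
Since $\rho\in L^{1}(\R^{3})$, the last term is $o(1/r)$ as $r\to\infty$, hence $\varphi(r)=(Z-N^{*})/r+o(1/r)$ with $N^{*}=\int\rho$. If $N^{*}<Z$ then $\varphi(r)\geq c/r$ for large $r$; the $\mu=0$ TF equation gives $\rho(r)\propto\varphi(r)^{3/2}\geq c'\,r^{-3/2}$, so the tail $\int r^{2}\rho(r)\,dr$ diverges, contradicting $\rho\in L^{1}$. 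Hence $N^{*}=Z$, and $\int\rho\leq N$ yields $N\geq Z$.

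For the reverse implication I would argue by contrapositive. If $\mu>0$, the TF equation $\rho\propto[\varphi-\mu]_{+}^{3/2}$ together with $\varphi(r)\to 0$ as $r\to\infty$ forces $\supp\rho$ to be contained in some ball $\{|\bx|\leq R_{0}\}$, and Theorem~\ref{ExTF} gives $\int\rho=N$. Outside this ball $\rho=0$, so Newton's theorem yields $\varphi(r)=(Z-N)/r$ for $r>R_{0}$; continuity of $\varphi$ together with $\varphi(R_{0})=\mu$ on $\partial\supp\rho$ gives $\mu=(Z-N)/R_{0}>0$, hence $N<Z$. The main obstacle I anticipate is the asymptotic analysis in the $\mu=0$ case: the mere integrability of $\rho$ must be converted into the sharp matching $\int\rho=Z$, and it is precisely the $3/2$-power nonlinearity of the TF equation that makes this possible. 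Everything else is routine bookkeeping based on Newton's theorem and the existence/uniqueness already contained in Theorem~\ref{ExTF}.
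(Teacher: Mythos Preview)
The paper does not actually prove Theorem~\ref{TfCou}; it is stated in the subsection ``A bit of Thomas-Fermi theory'' as a known fact from the literature (the reference is \cite{LSTF}, Lieb--Simon), with no argument given. So there is no paper proof to compare against.

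Your proposal is the standard proof and is correct in substance. The existence reduction to Theorem~\ref{ExTF} is routine, the maximum-principle argument for $\varphi\geq 0$ is fine, and both implications for the chemical potential are handled correctly: compact support plus Newton's theorem in the $\mu>0$ case, and the $r^{-3/2}$ non-integrable tail in the $\mu=0$ case. One small imprecision: from the contradiction argument you only obtain $N^{*}\geq Z$, not $N^{*}=Z$; to conclude equality you would also invoke the non-negativity of $\varphi$ (which you proved) to rule out $N^{*}>Z$. Either way, $N\geq N^{*}\geq Z$ is all you need for the stated conclusion, so this does not affect the argument.
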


When $U(\bx) =Z/\vert \bx\vert $ we denote the minimizer of the TF-functional, under the condition $\int \rho\leq Z,$ simply by $\rho ^{\rm TF}$ and $\int \rho^{\rm TF}=Z$. Correspondingly $\varphi ^{\rm TF}$ and $\Phi _{R}^{\rm TF}$ denote, respectively, its mean field and screened nuclear potential. With this notation
\begin{equation}\label{Tfe}
\mathcal{E}^{\rm TF}(\rho^{\rm TF}) = -  e_0 Z^{\frac73},
\end{equation}
where $e_0$ is the total binding energy of a neutral TF-atom of unit nuclear charge.

We recall here a result due to Sommerfeld on the asymptotic behavior of the TF-mean field potential, see \cite[Th. 4.6]{Sol}.

\begin{theorem}[Sommerfeld asymptotics]\label{Somm}
Assume that the potential $U$ is continuous and harmonic for $|\bx|>R$ and that it satisfies $ \lim_{|\bx|\rightarrow \infty} U(\bx)=0$.

Consider the corresponding TF-mean field potential $\varphi _{U}^{\rm TF}$ and assume that $\mu_{U}^{\rm TF} < \displaystyle{\liminf_{r \searrow R} \inf_{|\bx|=r}}\varphi _{U}^{\rm TF}(\bx)$. With $\zeta = (-7+\sqrt{73})/2$ define
\begin{eqnarray*}
a(R) &:= &\liminf_{r \searrow R} \sup_{|\bx|=r} \Big[\Big(\frac{\varphi_{U}^{\rm TF}(\bx)}{3^4 2^{-1} q^{-2} \pi^2 r^{-4}} \Big)^{-\frac12} -1 \Big] r^{\zeta}\\
A(R, \mu_{U}^{\rm TF} ) & := & \liminf_{r \searrow R} \sup_{|\bx|=r} \Big[\frac{\varphi_U^{\rm TF}(\bx) -\mu_{U}^{\rm TF}}{3^4 2^{-1} q^{-2} \pi^2 r^{-4}} -1 \Big] r^{\zeta} .
\end{eqnarray*}
Then we find for all $|\bx|> R$
\begin{eqnarray*}
\varphi_U^{\rm TF}( \bx) & \leq & \tfrac{3^4 \pi^2}{ 2 q^2} (1+A(R,\mu_U^{\rm TF}) |\bx|^{-\zeta }) |\bx|^{-4}+ \mu_{U}^{\rm TF} \; \; \mbox{ and }\\
\varphi_U^{\rm TF}( \bx) & \geq & \max\Big\{\tfrac{3^4 \pi^2}{ 2 q^2} (1+a(R) |\bx|^{-\zeta })^{-2} |\bx|^{-4}, \nu(\mu_U^{\rm TF}) |\bx|^{-1} \Big\},
\end{eqnarray*}
where
\begin{equation*}
 \nu(\mu_U^{\rm TF}) := \inf_{|\bx|\geq R} \max\big\{\tfrac{3^4 \pi^2}{ 2 q^2} (1+a(R) |\bx|^{-\zeta })^{-2}|\bx|^{-3}, \mu_U^{\rm TF} |\bx|  \big\}.
\end{equation*}
\end{theorem}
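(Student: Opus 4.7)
\medskip
\noindent\textbf{Proof plan.} The plan is to reduce the Sommerfeld bounds to a classical comparison-principle argument for the nonlinear elliptic PDE satisfied by $\varphi_U^{\rm TF}$ on the exterior of the ball $\{|\bx|\le R\}$. First, combining the harmonicity of $U$ with the identity $-\Delta|\bx|^{-1}=4\pi\delta_{0}$ and the TF-equation from Theorem~\ref{ExTF}, one obtains for $|\bx|>R$
\[
\Delta\varphi_U^{\rm TF}(\bx)=4\pi\rho_U^{\rm TF}(\bx)=C\bigl(\varphi_U^{\rm TF}(\bx)-\mu_U^{\rm TF}\bigr)_{+}^{3/2},\qquad C=\frac{4\sqrt{2}\,q}{3\pi}.
\]
The hypothesis $\mu_U^{\rm TF}<\liminf_{r\searrow R}\inf_{|\bx|=r}\varphi_U^{\rm TF}(\bx)$ guarantees $\varphi_U^{\rm TF}>\mu_U^{\rm TF}$ in an outward neighbourhood of $\{|\bx|=R\}$, so the positive part can locally be dropped. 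A direct substitution shows that $\psi_{0}(r):=Ar^{-4}$ with $A=3^{4}\pi^{2}/(2q^{2})$ is (up to scaling) the unique positive spherical solution of $\Delta\psi=C\psi^{3/2}$ on $\R^{3}\setminus\{0\}$, the value of $A$ coming from the algebraic identity $12A=CA^{3/2}$. Linearising by writing $\psi=\psi_{0}(1+w)$ leads to the indicial equation $\zeta^{2}+7\zeta-6=0$, whose positive root $\zeta=(-7+\sqrt{73})/2$ governs the decay of radial perturbations of $\psi_{0}$.

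\medskip
The second step is to construct radial barriers built from $\psi_{0}$. For the upper bound I would take $\bar\psi(r):=Ar^{-4}(1+\alpha r^{-\zeta})$. Using $\Delta r^{-s}=s(s-1)r^{-s-2}$ and the identity $(3+\zeta)(4+\zeta)=12+7\zeta+\zeta^{2}=18$ coming from the indicial equation, one computes
\[
\Delta\bar\psi(r)-C\bar\psi(r)^{3/2}=12Ar^{-6}\Bigl(1+\tfrac{3}{2}\alpha r^{-\zeta}-(1+\alpha r^{-\zeta})^{3/2}\Bigr)\le 0,
\]
the sign being fixed by the convexity bound $(1+t)^{3/2}\ge 1+\tfrac{3}{2}t$ for $t\ge 0$. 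Hence $\mu_{U}^{\rm TF}+\bar\psi$ is a supersolution of the full TF-equation. Dually, for $\underline\psi(r):=Ar^{-4}(1+\alpha r^{-\zeta})^{-2}$ one introduces $u:=(1+\alpha r^{-\zeta})^{-1}$ and, using again $\zeta^{2}+7\zeta=6$, reduces the subsolution inequality to
\[
\Delta\underline\psi(r)-C\underline\psi(r)^{3/2}=6\zeta^{2}Ar^{-6}u^{2}(1-u)^{2}\ge 0.
\]
The choices $\alpha=A(R,\mu_{U}^{\rm TF})$ for $\bar\psi$ and $\alpha=a(R)$ for $\underline\psi$ are precisely the ones that make the barriers dominate, respectively be dominated by, $\varphi_{U}^{\rm TF}$ on the sphere $\{|\bx|=R^{+}\}$; the $\liminf/\sup$ appearing in their definitions is what accommodates a possible lack of spherical symmetry of $U$.

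\medskip
The last step is the comparison principle on the exterior $\{|\bx|>R\}$. Applied to $\varphi_{U}^{\rm TF}-\mu_{U}^{\rm TF}-\bar\psi$: if this were positive somewhere, an interior maximum would yield $\Delta(\varphi_{U}^{\rm TF}-\bar\psi)\le 0$, contradicting the strict positivity coming from the supersolution property and the monotonicity of $s\mapsto Cs_{+}^{3/2}$. The decay of $\varphi_{U}^{\rm TF}$ and $\bar\psi$ at infinity, combined with the boundary matching at $|\bx|=R^{+}$, prevents the maximum from escaping to infinity; a parallel argument delivers $\varphi_{U}^{\rm TF}\ge\underline\psi$. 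For the supplementary Coulombic lower bound $\nu(\mu_{U}^{\rm TF})|\bx|^{-1}$, one combines the Sommerfeld subsolution with the harmonic minorant $\nu|\bx|^{-1}$: the definition of $\nu$ as $\inf_{|\bx|\ge R}\max\{\underline\psi(|\bx|)|\bx|,\mu_{U}^{\rm TF}|\bx|\}$ is precisely what forces $\nu|\bx|^{-1}\le\max\{\underline\psi,\mu_{U}^{\rm TF}\}$ on the exterior, allowing a second comparison (using that $\varphi_{U}^{\rm TF}\ge\mu_{U}^{\rm TF}$ where $\rho_{U}^{\rm TF}>0$ by the TF-equation) against the larger of the two subsolutions. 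The most delicate point will be this last patching: in the outer region where $\varphi_{U}^{\rm TF}$ dips below $\mu_{U}^{\rm TF}$ the Sommerfeld subsolution is no longer directly comparable, and the construction of $\nu$ as an infimum over the whole exterior is designed specifically to bridge that gap.
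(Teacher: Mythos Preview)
The paper does not actually prove this theorem: it is quoted verbatim from \cite[Th.~4.6]{Sol} (Solovej), and no proof is reproduced here. Your proposal is correct in all essential respects and is precisely the argument carried out in \cite{Sol}: derive the semilinear equation $\Delta\varphi = C(\varphi-\mu)_+^{3/2}$ on $\{|\bx|>R\}$, recognise $A r^{-4}$ with $A=3^4\pi^2/(2q^2)$ as the exact power solution, build the one-parameter radial super- and subsolutions $Ar^{-4}(1+\alpha r^{-\zeta})$ and $Ar^{-4}(1+\alpha r^{-\zeta})^{-2}$ via the indicial relation $\zeta^2+7\zeta-6=0$, and close with the comparison principle. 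Your barrier computations (including the identity $\Delta\underline\psi-C\underline\psi^{3/2}=6\zeta^2 A r^{-6}u^2(1-u)^2$) are exactly those in \cite{Sol}. The one point that deserves a bit more care than you give it is allowing the constants $a(R)$ and $A(R,\mu_U^{\rm TF})$ to be negative (the convexity bound $(1+t)^{3/2}\ge 1+\tfrac32 t$ needs $t\ge -1$, and the subsolution needs $1+\alpha r^{-\zeta}>0$); in \cite{Sol} this is handled by noting that negative values only make the conclusions stronger on the relevant range, so the comparison still goes through. Your remark that the Coulombic lower bound is the delicate step is accurate and is also where \cite{Sol} spends most of its effort.
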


For easy reference we give here the estimate on the TF-mean field potential corresponding to the Coulomb potential.

\begin{theorem}[Atomic Sommerfeld estimate, \protect{\cite[Thm 5.2-5.4]{Sol}}]\label{Slb}
The atomic TF-mean field potential satisfies the bound
\begin{equation}\label{Tfpou}
\frac{Z}{|\bx|} -\min \Big\{ \frac{Z}{|\bx|}, \frac{Z^{\frac43}}{2 \beta_0} \Big\} \leq \varphi ^{\rm TF}( \bx) \leq \min \Big\{ \tfrac{3^{4}\pi ^{2}}{2 q^{2}}\frac{1}{\vert \bx\vert^{4}},  \frac{Z}{\vert \bx\vert} \Big\} ,
\end{equation}
with $2 \beta_0 = \pi^{\frac23} 3^{-\frac53} 2^{-\frac13} q^{-\frac23}$, and for $|\bx| \geq R >0$
\begin{equation*}
\varphi ^{\rm TF}( \bx) \geq \tfrac{3^4 \pi^2}{ 2 q^2} (1+a(R) |\bx|^{-\zeta })^{-2} |\bx|^{-4},
\end{equation*}
where $\zeta$ and $a(R)$ are defined in Theorem~\ref{Somm}.
\end{theorem}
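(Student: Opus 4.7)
The starting point is to exploit that for the Coulomb potential $U(\bx) = Z/|\bx|$ Theorem \ref{TfCou} gives $\mu^{\rm TF} = 0$, so the TF-equation reduces to $\rho^{\rm TF} = c_0 (\varphi^{\rm TF})_+^{3/2}$ with $c_0 := q\sqrt{2}/(3\pi^2)$. Applying $-\Delta$ to $Z|\bx|^{-1} - \rho^{\rm TF} \ast |\bx|^{-1}$ then yields the pointwise PDE
\[
  \Delta \varphi^{\rm TF} = 4\pi c_0 (\varphi^{\rm TF})_+^{3/2} \quad \text{on } \R^3 \setminus \{0\},
\]
which drives everything else. The easy upper bound $\varphi^{\rm TF} \leq Z/|\bx|$ is immediate from $\rho^{\rm TF} \geq 0$.

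For the Sommerfeld upper bound $\varphi^{\rm TF}(\bx) \leq C|\bx|^{-4}$ with $C := \tfrac{3^4 \pi^2}{2q^2}$ I compare against the exact Sommerfeld solution $\psi(\bx) := C|\bx|^{-4}$: a direct calculation using $\Delta |\bx|^{-4} = 12 |\bx|^{-6}$ fixes $C$ precisely so that $\Delta \psi = 4\pi c_0 \psi^{3/2}$. Setting $w := \varphi^{\rm TF} - \psi$, on the open set $\{w > 0\}$ both functions are positive and $\Delta w = 4\pi c_0((\varphi^{\rm TF})^{3/2} - \psi^{3/2}) \geq 0$, so $w$ is subharmonic. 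Since $\psi(\bx) \sim |\bx|^{-4}$ dominates $\varphi^{\rm TF}(\bx) \leq Z/|\bx|$ near the origin and both vanish at infinity, $\{w > 0\}$ is bounded and $w \leq 0$ on its boundary; the maximum principle forces $\{w > 0\} = \emptyset$. The non-negativity $\varphi^{\rm TF} \geq 0$ follows from the same type of argument applied on $\{\varphi^{\rm TF} < 0\}$, where the TF-equation makes $\varphi^{\rm TF}$ harmonic with zero boundary values and vanishing limit at infinity.

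For the near-nucleus lower bound $\varphi^{\rm TF}(\bx) \geq Z/|\bx| - Z^{4/3}/(2\beta_0)$ I use the scaling relations $\rho^{\rm TF}_Z(\bx) = Z^2 \rho^{\rm TF}_1(Z^{1/3}\bx)$ and $\varphi^{\rm TF}_Z(\bx) = Z^{4/3}\varphi^{\rm TF}_1(Z^{1/3}\bx)$ (checked directly from the TF-equation plus uniqueness) to reduce to $Z = 1$; what remains is $(\rho^{\rm TF}_1 \ast |\cdot|^{-1})(\bx) \leq 1/(2\beta_0)$. Newton's theorem applied to the spherically symmetric $\rho^{\rm TF}_1$ makes $\bx \mapsto (\rho^{\rm TF}_1 \ast |\cdot|^{-1})(\bx)$ radially decreasing, so its supremum equals $\int \rho^{\rm TF}_1(\by)|\by|^{-1} d\by$. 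I split this integral at a radius $r > 0$: for $|\by| < r$ I use $\rho^{\rm TF}_1 \leq c_0 |\by|^{-3/2}$ (from $\varphi^{\rm TF}_1 \leq 1/|\by|$), and for $|\by| > r$ I use $\rho^{\rm TF}_1 \leq c_0 C^{3/2} |\by|^{-6}$ (from the just-proved Sommerfeld upper bound). The two elementary integrals yield $8\pi c_0 r^{1/2} + \pi c_0 C^{3/2} r^{-4}$, minimised at $r = C^{1/3}$, with optimal value $9\pi c_0 C^{1/6}$, which collapses after substitution to exactly $3^{5/3} 2^{1/3} q^{2/3}/\pi^{2/3} = 1/(2\beta_0)$.

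The remaining refined lower bound for $|\bx| > R > 0$ is a direct specialization of Theorem \ref{Somm} to $U = Z/|\bx|$: this $U$ is harmonic on the whole of $\R^3 \setminus \{0\}$, so the harmonicity hypothesis is satisfied for every $R > 0$, and $\mu^{\rm TF} = 0$ is strictly below $\varphi^{\rm TF}$ on every sphere (by positivity and continuity), so the separation hypothesis holds as well; reading off the conclusion of Theorem \ref{Somm} then gives the estimate. The principal technical obstacle is the bookkeeping in the third step: the split-and-optimise argument must reproduce the exact constant $\beta_0$ rather than merely a constant of the same order, which forces the two pointwise dominators to be chosen precisely as above so that the Sommerfeld constant $C$ appears cleanly in the optimum. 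Everything else is either a standard maximum principle comparison or a direct quotation of the general Sommerfeld theorem.
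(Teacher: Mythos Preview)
The paper does not give its own proof of this theorem: it is quoted verbatim from \cite[Thm~5.2--5.4]{Sol} and used as a black box. Your argument is therefore not competing with anything in the present paper, but it is worth checking against the original source.

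Your proof is correct and is essentially the argument of \cite{Sol}. The upper bound $\varphi^{\rm TF}\le Z/|\bx|$ is trivial; the Sommerfeld upper bound comes from comparing with the exact solution $3^4\pi^2/(2q^2)\,|\bx|^{-4}$ via the maximum principle for the semilinear equation, exactly as you do; and the refined lower bound for $|\bx|\ge R$ is indeed a direct specialisation of the general Sommerfeld asymptotics (Theorem~\ref{Somm}) once one knows $\mu^{\rm TF}=0$ and $\varphi^{\rm TF}>0$. The computation of the constant $2\beta_0$ through the split-and-optimise estimate of $\int\rho^{\rm TF}_1(\by)|\by|^{-1}\,d\by$ is also the standard route, and your arithmetic checks out: $9\pi c_0 C^{1/6}=3^{5/3}2^{1/3}q^{2/3}\pi^{-2/3}=(2\beta_0)^{-1}$.

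Two small points you gloss over but which are harmless. First, in the maximum-principle step for $w=\varphi^{\rm TF}-\psi$ you assert that $\{w>0\}$ is bounded; what you really use is that $w$ is subharmonic on $\{w>0\}$, vanishes at infinity, and is negative near the origin, which suffices for the conclusion even if $\{w>0\}$ were unbounded. Second, the hypothesis $\mu^{\rm TF}<\liminf\inf\varphi^{\rm TF}$ of Theorem~\ref{Somm} requires strict positivity of $\varphi^{\rm TF}$ on spheres; for the neutral atom this follows from Newton's theorem, which gives $\varphi^{\rm TF}(\bx)=\int_{|\by|>|\bx|}\rho^{\rm TF}(\by)\bigl(|\bx|^{-1}-|\by|^{-1}\bigr)d\by$, strictly positive because $\rho^{\rm TF}$ cannot vanish identically outside any ball (else $\int\rho^{\rm TF}<Z$ by the pointwise bounds you already established).
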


\begin{corollary}\label{TFsotto}
Let $\zeta$ and $\beta_0$ be defined as in Theorem~\ref{Somm} and \ref{Slb} respectively. Then the TF-mean field potential satisfies the bound
\begin{equation*}
\varphi^{\rm TF}(\bx) \geq \left\{ \begin{array}{ll}
\displaystyle{\frac{Z}{|\bx|} - \frac{Z^{\frac43}}{2 \beta_0}} & \mbox{ if }|\bx| \leq \beta_0 Z^{-\frac13} \vspace{.2cm}\\
\displaystyle{\frac{3^4 \pi^2}{2 q^2} (1+a Z^{-\frac{\zeta}{3}}|\bx|^{-\zeta})^{-2} |\bx|^{-4}} & \mbox{ if }|\bx| > \beta_0 Z^{-\frac13} ,
\end{array} \right.
\end{equation*}
with $a=\beta_0^{\zeta} (3^2 \pi/(q \beta_0^{\frac32})-1)$.
\end{corollary}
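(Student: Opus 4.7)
The plan is to treat the two regions separately, using Theorem~\ref{Slb} almost directly in each, and then to observe that the matching radius $|\bx|=\beta_0 Z^{-1/3}$ has been chosen precisely so that the two branches agree.

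For the inner region $|\bx|\leq \beta_0 Z^{-1/3}$, I would just resolve which argument of the minimum in \eqref{Tfpou} is active. Here $Z/|\bx|\geq Z^{4/3}/\beta_0 \geq Z^{4/3}/(2\beta_0)$, so $\min\{Z/|\bx|,Z^{4/3}/(2\beta_0)\}=Z^{4/3}/(2\beta_0)$, and the inner bound drops straight out of the lower estimate in \eqref{Tfpou}.

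For the outer region $|\bx|>\beta_0 Z^{-1/3}$, I would apply the Sommerfeld-type lower bound of Theorem~\ref{Slb} with $R=\beta_0 Z^{-1/3}$ and then supply the explicit upper bound $a(R)\leq a Z^{-\zeta/3}$ for the constant appearing there. At the matching radius the inner bound already gives $\varphi^{\rm TF}(R\hat\bx)\geq Z^{4/3}/(2\beta_0)$. Because the atomic $\varphi^{\rm TF}$ is spherically symmetric and continuous on $\{\bx\neq 0\}$, the $\sup$ in the definition of $a(R)$ (Theorem~\ref{Somm}) is just the value at $|\bx|=r$, and the $\liminf_{r\searrow R}$ is simply the limit at $r=R$. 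Plugging $Z^{4/3}/(2\beta_0)$ into the defining expression and simplifying with the identity $R^{4}Z^{4/3}=\beta_0^{4}$ gives
\[
a(R)\leq\Big[\Big(\tfrac{Z^{4/3}/(2\beta_0)}{3^{4}\pi^{2}/(2q^{2}R^{4})}\Big)^{-\frac12}-1\Big]R^{\zeta}
=\Big(\tfrac{3^{2}\pi}{q\beta_0^{3/2}}-1\Big)\beta_0^{\zeta}Z^{-\zeta/3}
=a\,Z^{-\zeta/3}.
\]
Since $y\mapsto(1+y)^{-2}$ is decreasing on $y\geq 0$, this upper bound on $a(R)$ converts the Sommerfeld lower bound of Theorem~\ref{Slb} into the stated outer bound.

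The only step that is not literally a one-line substitution is the evaluation of $a(R)$, and there the potential obstacle is really just bookkeeping: one has to check that $3^{2}\pi/(q\beta_0^{3/2})>1$ (so that $a>0$ and the bracket makes sense), which follows from the explicit value $2\beta_{0}=\pi^{2/3}3^{-5/3}2^{-1/3}q^{-2/3}$ given in Theorem~\ref{Slb}, and to note that by radial symmetry and continuity of $\varphi^{\rm TF}$ away from the origin the $\liminf\sup$ in the definition of $a(R)$ collapses to a single evaluation. I expect no genuine analytic difficulty.
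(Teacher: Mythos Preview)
Your proposal is correct and is precisely the argument the paper intends: the corollary is stated without proof immediately after Theorem~\ref{Slb}, and your derivation---reading off the inner branch from \eqref{Tfpou} and computing $a(R)$ at $R=\beta_0 Z^{-1/3}$ to feed into the Sommerfeld lower bound---is the natural way to extract it. One small addition that makes the monotonicity step airtight: the upper bound $\varphi^{\rm TF}(\bx)\leq \tfrac{3^4\pi^2}{2q^2}|\bx|^{-4}$ from \eqref{Tfpou} forces $a(R)\geq 0$, so the replacement $a(R)\leq a Z^{-\zeta/3}$ indeed decreases $(1+\cdot)^{-2}$ on the relevant range.
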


\begin{corollary}\label{TFscr}
The TF-screened nuclear potential satisfies
\begin{equation*}
\Phi^{\rm TF}_{|\bx|}(\bx) \leq \tfrac{3^4 2 \pi^2}{q^2} |\bx|^{-4} \; \mbox{ for all } \bx \in \R^3.
\end{equation*}
\end{corollary}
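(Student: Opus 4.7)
The plan is to rewrite $\Phi^{\rm TF}_{|\bx|}(\bx)$ as the Coulomb potential at $\bx$ produced by the exterior charge $\rho^{\rm TF}\chi_{\{|\by|\geq|\bx|\}}$, and then estimate that tail using the pointwise bound on $\varphi^{\rm TF}$ from Theorem~\ref{Slb}. Since the external Coulomb potential $Z/|\bx|$ is rotationally invariant, the uniqueness in Theorem~\ref{ExTF} forces $\rho^{\rm TF}$ to be spherically symmetric. The truncated density $\rho^{\rm TF}\chi_{\{|\by|<|\bx|\}}$ therefore generates, by Newton's theorem, the Coulomb potential $|\bx|^{-1}\int_{|\by|<|\bx|}\rho^{\rm TF}(\by)\,d\by$ at the exterior point $\bx$. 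Combined with the neutrality relation $\int\rho^{\rm TF}=Z$ from Theorem~\ref{TfCou}, this yields the representation
\[
\Phi^{\rm TF}_{|\bx|}(\bx)=\frac{1}{|\bx|}\int_{|\by|\geq|\bx|}\rho^{\rm TF}(\by)\,d\by.
\]

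With this identity, the remaining task is to show that the exterior mass decays like $|\bx|^{-3}$. In the neutral case Theorem~\ref{TfCou} gives $\mu^{\rm TF}=0$, so the TF equation in Theorem~\ref{ExTF} reduces to the closed form $\rho^{\rm TF}=c_q(\varphi^{\rm TF})^{3/2}$ with an explicit $q$-dependent constant $c_q=\sqrt{2}\,q/(3\pi^{2})$. Inserting the pointwise bound $\varphi^{\rm TF}(\by)\leq \tfrac{3^{4}\pi^{2}}{2q^{2}}|\by|^{-4}$ from Theorem~\ref{Slb} produces the density decay $\rho^{\rm TF}(\by)\leq C_{q}|\by|^{-6}$, and a radial integration of $r^{-6}\cdot r^{2}$ from $|\bx|$ to $\infty$ yields $\int_{|\by|\geq|\bx|}\rho^{\rm TF}\leq C'_{q}|\bx|^{-3}$. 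Combined with the $|\bx|^{-1}$ prefactor this is the claimed $|\bx|^{-4}$ decay.

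There is no substantive analytical obstacle; the two ingredients (spherical symmetry plus Newton, and the closed form provided by the TF equation at $\mu=0$) are both already available. The only step requiring care is the constant: tracking $c_q\cdot(\tfrac{3^{4}\pi^{2}}{2q^{2}})^{3/2}$ together with the angular factor $4\pi$ and the radial factor $\tfrac{1}{3}|\bx|^{-3}$ from $\int_{|\bx|}^{\infty}r^{-4}\,dr$, one checks that the resulting constant collapses to exactly $\tfrac{2\cdot 3^{4}\pi^{2}}{q^{2}}$. This saturation reflects the sharpness of Sommerfeld's asymptotics in Theorem~\ref{Slb}: the induced bound on $\Phi^{\rm TF}_{|\bx|}$ cannot be improved in the power of $|\bx|$, only, in principle, in the multiplicative constant.
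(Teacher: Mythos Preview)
Your proof is correct and is precisely the argument the paper has in mind: the corollary is stated without proof, but the same computation appears implicitly later (see Lemma~\ref{est0ext} for the density bound $\rho^{\rm TF}(\by)\leq 3^{5}2^{-1}q^{-2}\pi|\by|^{-6}$ and the proof of Lemma~\ref{extde} for the tail integral $\int_{|\by|>r}\rho^{\rm TF}\leq \tfrac{3^{4}2\pi^{2}}{q^{2}}r^{-3}$). Your tracking of the constant is accurate.
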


\begin{corollary}\label{rTF53}
The following estimate holds
\begin{equation*}
\int_{\R^{3}}(\rho ^{\rm TF}(\bx)) ^{\frac{5}{3}}d\bx\leq 4 \tfrac{2^{\frac23}}{\pi ^{2}}\tfrac{5}{7} q^{\frac43} Z^{\frac{7}{3}}.
\end{equation*}
\end{corollary}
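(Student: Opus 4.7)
The plan is to derive the bound from the TF-equation combined with the upper bound on $\varphi^{\rm TF}$ given in Theorem~\ref{Slb}. Since $U(\bx)=Z/|\bx|$ and we are in the neutral case $\int\rho^{\rm TF}=Z$, Theorem~\ref{TfCou} gives $\mu^{\rm TF}=0$, so the TF-equation from Theorem~\ref{ExTF} reduces to
\begin{equation*}
\tfrac{1}{2}\bigl(\tfrac{6\pi^{2}}{q}\bigr)^{\frac23}(\rho^{\rm TF}(\bx))^{\frac23}=\varphi^{\rm TF}(\bx),
\end{equation*}
which, after raising to the power $5/2$, turns $(\rho^{\rm TF})^{5/3}$ into a pure multiple of $(\varphi^{\rm TF})^{5/2}$. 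Specifically, I would first write
\begin{equation*}
\int_{\R^3}(\rho^{\rm TF}(\bx))^{\frac53}\,d\bx \;=\; \frac{2^{\frac52}\,q^{\frac53}}{(6\pi^{2})^{\frac53}}\int_{\R^3}(\varphi^{\rm TF}(\bx))^{\frac52}\,d\bx,
\end{equation*}
reducing everything to estimating $\int(\varphi^{\rm TF})^{5/2}$.

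Next I would use the two-sided bound
$\varphi^{\rm TF}(\bx)\leq \min\{Z/|\bx|,\ \tfrac{3^{4}\pi^{2}}{2q^{2}}|\bx|^{-4}\}$
from Theorem~\ref{Slb}. The two bounds cross at the radius $R_{0}:=\bigl(\tfrac{3^{4}\pi^{2}}{2q^{2}Z}\bigr)^{1/3}$, which is the natural splitting point (and incidentally gives the correct $Z^{7/3}$-scaling, since $Z^{5/2}R_{0}^{1/2}$ and $R_{0}^{-7}$ both evaluate to multiples of $Z^{7/3}$). I would then compute the two resulting radial integrals
\begin{equation*}
\int_{|\bx|\leq R_{0}}\Bigl(\tfrac{Z}{|\bx|}\Bigr)^{\!\frac52}d\bx \;=\; 8\pi Z^{\frac52}R_{0}^{\frac12},\qquad \int_{|\bx|>R_{0}}\Bigl(\tfrac{3^{4}\pi^{2}}{2q^{2}|\bx|^{4}}\Bigr)^{\!\frac52}d\bx\;=\;\tfrac{4\pi}{7}\bigl(\tfrac{3^{4}\pi^{2}}{2q^{2}}\bigr)^{\!\frac52}R_{0}^{-7},
\end{equation*}
and collect the two $Z^{7/3}$-contributions. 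Adding the inner $(8\pi)$ and outer $(4\pi/7)$ coefficients yields the factor $60\pi/7$, which after multiplication by the prefactor $2^{5/2}q^{5/3}/(6\pi^{2})^{5/3}$ simplifies exactly to $4\cdot\tfrac{2^{2/3}}{\pi^{2}}\cdot\tfrac{5}{7}q^{4/3}$.

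There is essentially no analytic obstacle — the proof is a direct computation once the TF-equation has reduced the question to an explicit weighted integral of $\varphi^{\rm TF}$. The only thing to keep track of is the bookkeeping of the powers of $2$, $3$, $\pi$, $q$; the key simplifications are $5/2-5/3-1/6=2/3$ and $-5/3+2/3=-1$, which is what produces the $q^{4/3}/\pi^{2}$ combination in the stated bound. Note in particular that the slow Sommerfeld bound $(1+a(R)|\bx|^{-\zeta})^{-2}$ is \emph{not} needed here: only the easier global upper bound $\tfrac{3^{4}\pi^{2}}{2q^{2}}|\bx|^{-4}$ is used, since both $\int r^{-1/2}\,dr$ near the origin and $\int r^{-8}\,dr$ near infinity are already integrable.
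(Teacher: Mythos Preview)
Your proof is correct and follows exactly the same approach as the paper: use the TF-equation with $\mu^{\rm TF}=0$ to rewrite $\int(\rho^{\rm TF})^{5/3}$ as $2^{5/2}(q/(6\pi^{2}))^{5/3}\int(\varphi^{\rm TF})^{5/2}$, then apply the atomic Sommerfeld upper bound from Theorem~\ref{Slb}. The paper merely states that ``the estimate follows from the atomic Sommerfeld upper bound'' without writing out the splitting at $R_0$ and the resulting arithmetic, which you have carried out in full and correctly.
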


\begin{proof}
By the TF-equation and since $\mu ^{\rm TF}=0$ we find
\begin{equation*}
\int_{\R^{3}}( \rho ^{\rm TF}(\bx)) ^{\frac{5}{3}}d\bx=2^{\frac52}(\tfrac{q}{6 \pi^2})^{\frac53} \int_{\R^{3}}( \varphi^{\rm TF}(\bx)) ^{\frac{5}{2}}d\bx.
\end{equation*}
The estimate follows from the atomic Sommerfeld upper bound.
\end{proof}

\subsection{Construction and main results}

We present the basic idea for the proof of Theorem~\ref{charge}. Let us consider an atomic system with $N \geq 2$ fermionic particles and a nucleus of charge $Z\geq 1$ with $Z \alpha =\kappa$ and $0 \leq \kappa <2/\pi$. We assume that $N\geq Z$ and that $N$ is such that a HF-minimizer exists. That is: there exists a density matrix $\gamma ^{\rm HF}\in \mathcal{A}$ such that $\Tr[\gamma ^{\rm HF}]=N$ and
\begin{equation*}
\mathcal{E}^{\rm HF}(\gamma ^{\rm HF})=\inf \left\{ \mathcal{E}^{\rm HF}(\gamma ):\gamma =\gamma ^{\ast },0\leq \gamma \leq I,\Tr[\gamma ]=N\right\} .
\end{equation*}
Let $\rho ^{\rm TF}$ be the TF-minimizer with potential $U(\bx)=Z/\vert \bx \vert $ and under the condition $\int \rho^{\rm TF}=Z.$ We know that such a minimizer exist and that the corresponding chemical potential is zero (see Theorem~\ref{TfCou}).

Denoting by $\rho ^{\rm HF}$ the density of the minimizer $\gamma ^{\rm HF},$ we find for all $r>0$
\begin{eqnarray*}
N &=&\int_{\mathbb{R}^{3}}\rho ^{\rm HF}(\bx)d\bx \\
&=&\int_{\vert \bx\vert <r}\left[ \rho^{\rm HF}(\bx)-\rho^{\rm TF}(\bx)\right] d\bx+\int_{\vert \bx\vert <r}\rho ^{\rm TF}(\bx) \; d\bx+\int_{\vert \bx\vert >r}\rho ^{\rm HF}(\bx) \; d\bx .
\end{eqnarray*}

By the equalities above and since $\int_{\vert \bx\vert <r}\rho ^{\rm TF}(\bx)d\bx\leq Z$, Theorem~\ref{charge} follows from the following result.

\begin{theorem}
There exist $r>0$ and positive constants $c_{1}$ and $c_{2}$ independent of $N$ and $Z$ but possibly depending on $\kappa$ such that
\begin{equation*}
\int_{\vert \bx \vert <r}\left[ \rho ^{\rm HF}(\bx)-\rho^{\rm TF}(\bx)\right] d\bx \leq c_{1}\text{ and }\int_{\vert \bx\vert >r}\rho ^{\rm HF}(\bx)d\bx\leq c_{2}.
\end{equation*}
\end{theorem}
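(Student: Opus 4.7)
The plan is to adapt the strategy of Solovej \cite{Sol} to the pseudo-relativistic setting. The key quantity to control is the HF screened nuclear potential $\Phi_r^{\rm HF}$, which one proves is pointwise close to the TF screened potential $\Phi_r^{\rm TF}$ on a range of radii $r$. Once this is done, the bulk inequality $\int_{|\bx|<r}(\rho^{\rm HF}-\rho^{\rm TF})\leq c_1$ follows from a quantitative TF-comparison, while the outer inequality $\int_{|\bx|>r}\rho^{\rm HF}\leq c_2$ follows from the Sommerfeld-type decay $\Phi_r^{\rm TF}\leq Cr^{-4}$ (Corollary~\ref{TFscr}) inherited by $\Phi_r^{\rm HF}$, combined with a single-particle spectral bound for the HF mean-field operator $h_{\gamma^{\rm HF}}$. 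All constants have to be uniform in $\alpha,Z$ under the constraint $Z\alpha = \kappa < 2/\pi$.

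For the comparison step I would prove an iterated estimate
\begin{equation*}
\bigl|\Phi_r^{\rm HF}(\bx) - \Phi_r^{\rm TF}(\bx)\bigr| \leq C_\kappa \, r^{-4+\epsilon}, \qquad |\bx|=r,
\end{equation*}
valid for $r$ in a range $[r_{\mathrm{in}}, r_0]$, where $r_{\mathrm{in}}$ is the semi-classical inner scale ($\sim Z^{-1/3}$) and $r_0>0$ is a fixed outer scale depending only on $\kappa$. The proof is an outward bootstrap: a base estimate near $r_{\mathrm{in}}$ comes from the semi-classical Thomas-Fermi picture (the TF density, localized to a ball, is used as a trial state for the HF problem, with the kinetic cost controlled by Daubechies' inequality, Theorem~\ref{Daube}), while the inductive step uses HF-minimality of $\gamma^{\rm HF}$ against perturbations supported in thin spherical shells $\{r\leq|\bx|\leq(1+\eta)r\}$ together with Newton's theorem to pass from radius $r$ to $(1+\eta)r$, losing only a controlled amount in the constant and in $\epsilon$. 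Integrating the resulting closeness of the screened potentials against the TF-equation (Theorem~\ref{ExTF}) on $\{|\bx|<r_0\}$ then yields the bulk bound.

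For the outer bound, once $\Phi_{r_0}^{\rm HF}(\bx)\leq C|\bx|^{-4}$ for $|\bx|\geq r_0$, I would localize $\gamma^{\rm HF}$ to the exterior $\{|\bx|>r_0\}$ via a smooth cutoff and read off the effective one-body operator from \eqref{euler}; dropping the non-negative exchange term $\mathcal{K}_{\gamma^{\rm HF}}$ and using $\Phi_{r_0}^{\rm HF}$ in place of the full Coulomb potential, a Daubechies-type Lieb-Thirring bound gives
\begin{equation*}
\int_{|\bx|>r_0}\rho^{\rm HF}(\bx)\,d\bx \;\leq\; C\int_{|\bx|>r_0}\bigl(\Phi_{r_0}^{\rm HF}(\bx)\bigr)_+^{5/2}d\bx \;\leq\; C'\int_{r_0}^{\infty} r^{-10}\cdot r^2\,dr \;<\;\infty,
\end{equation*}
with a constant depending only on $\kappa$ once the commutator error between $\chi_{\{|\bx|>r_0\}}$ and $T(\bp)$ has been absorbed into the kinetic form using $Z\alpha<2/\pi$.

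The main obstacle is the bootstrap step in the presence of the non-local kinetic energy $T(\bp)$. Localization of the HF energy by smooth cut-offs $\chi$ generates commutator errors $[\chi,T(\bp)]$ which, unlike in the non-relativistic Schr\"odinger case, are not spatially localized and whose norm depends on both the length scale of $\chi$ and on $\alpha$. Controlling them uniformly (in particular in the non-relativistic limit $\alpha\to 0$ at fixed $\kappa$) requires an integral representation of $\sqrt{-\Delta+\alpha^{-2}}$ or a direct Fourier-multiplier analysis, with the sub-criticality $Z\alpha\leq\kappa<2/\pi$ used to absorb the remaining errors into the positive kinetic form. A secondary point is that at very small scales $|\bx|\sim\alpha$ the semi-classical TF approximation acquires $O(1)$ relativistic corrections, but since $\alpha\leq\kappa Z^{-1}$ and the TF atomic scale is $Z^{-1/3}$, the inequality $\alpha\ll Z^{-1/3}$ holds in the large-$Z$ regime where the bootstrap is non-trivial, so these corrections remain sub-dominant throughout the range $[r_{\mathrm{in}},r_0]$.
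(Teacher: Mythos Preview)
Your overall strategy matches the paper's: establish the comparison $|\Phi^{\rm HF}_{|\bx|}(\bx)-\Phi^{\rm TF}_{|\bx|}(\bx)|\leq C|\bx|^{-4+\varepsilon}$ by an outward bootstrap (this is Theorem~\ref{mainesti}), then read off both integral bounds. The first inequality does follow at once, since by Newton's theorem the spherical average of $\Phi^{\rm HF}_r-\Phi^{\rm TF}_r$ over $\{|\bx|=r\}$ equals $r^{-1}\int_{|\bx|<r}(\rho^{\rm TF}-\rho^{\rm HF})$; this is exactly \eqref{rr1} in Lemma~\ref{extde}.

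Your argument for the second inequality, however, has a genuine gap. The screened potential $\Phi^{\rm HF}_{r_0}$ with \emph{fixed} $r_0$ is harmonic on $\{|\bx|>r_0\}$ and decays only like $c/|\bx|$ with $c\sim r_0^{-3}$; the $|\bx|^{-4}$ Sommerfeld bound of Corollary~\ref{TFscr} applies to $\Phi^{\rm TF}_{|\bx|}(\bx)$ with the \emph{moving} radius, not to a fixed-$r_0$ screening. More importantly, the Daubechies inequality (Theorem~\ref{DaubeV}) controls $\Tr[(T(\bp)-U)\gamma]_-$, an energy, not the particle number; there is no inequality of the form $\int\rho\leq C\int U_+^{5/2}$. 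A CLR-type eigenvalue count would carry the exponent $3/2$ on the non-relativistic part, and against a $1/|\bx|$-decaying potential the resulting integral diverges. The paper obtains the outer bound by an essentially two-body mechanism (Lemma~\ref{extterm}): one multiplies the Euler--Lagrange equations by $|\bx|\eta^2(\bx)$, uses Lieb's positivity $\Real\,(\eta u_i|\cdot|,T(\bp)\eta u_i)\geq 0$ for the kinetic term, and the electron--electron repulsion then produces $\tfrac12\bigl(\int\rho^{\rm HF}\eta^2\bigr)^2$, yielding a quadratic inequality in $\int\rho^{\rm HF}\eta^2$ whose linear coefficient is $\sup_{|\bx|=r}|\bx|\,\Phi^{\rm HF}_r(\bx)=O(r^{-3})$ via the screened-potential comparison. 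No Lieb--Thirring input enters at this step; it is the repulsion, not a one-body spectral bound, that caps the exterior particle count.
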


The following theorem is the principal ingredient in the proof of the previous one and is the main technical estimate in the paper.

\begin{theorem}\label{mainesti}
Let $Z\alpha =\kappa$, $0 \leq \kappa< 2/\pi$. Assume $N \geq Z \geq 1$.

Then there exist universal constants $\alpha_{0}>0$, $0<\varepsilon <4$ and $C_M$ and $C_{\Phi}$ depending on $\kappa$ such that for all $\alpha \leq \alpha_{0}$
\begin{equation*}
\left|\Phi^{\rm HF}_{|\bx|}(\bx)- \Phi^{\rm TF}_{|\bx|}(\bx)\right| \leq C_{\Phi} |\bx|^{-4+\varepsilon}+C_{M}.
\end{equation*}
\end{theorem}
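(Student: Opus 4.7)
The proof plan is to adapt the bootstrap scheme of Solovej \cite{Sol} from the non-relativistic case to the pseudo-relativistic setting. Since $Z\alpha=\kappa$ is fixed and $\alpha\leq \alpha_0$ is small, the nuclear charge $Z=\kappa/\alpha$ is large and we are in the semiclassical (Thomas--Fermi) regime. I would establish the estimate on a geometric sequence of dyadic shells $\{r_{k-1}\leq |\bx|\leq r_k\}$ with $r_k=\sigma^k r_0$, starting from $r_0$ of order $Z^{-1/3}$ (the inner atomic scale), and iterate outward until the leading TF-term $r_k^{-4}$ becomes comparable to the additive constant $C_M$, at which point the iteration terminates.

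The engine is a semiclassical comparison between HF and TF on each shell. At scale $r_k$, I would introduce an auxiliary \emph{outside atomic problem} in which the electrons inside $B_{r_k}$ are frozen in $\gamma^{\rm HF}$ while the outside electrons feel the screened potential $\Phi^{\rm HF}_{r_k}$ in place of the bare $Z/|\bx|$. Using an IMS-type smooth localization, the Euler--Lagrange equation of Theorem~\ref{HF}, coherent states on scale $r_k$, and Daubechies' inequality in the role of Lieb--Thirring, I would show that the outside HF-density is close to the TF-density associated with $\Phi^{\rm HF}_{r_k}$ with local error of order $r_k^{-4+\varepsilon}$. Integrating the difference of densities against $1/|\bx|$, and comparing the TF-minimizers with external fields $U=Z/|\bx|$ and $U=\Phi^{\rm HF}_{r_k}$ via a standard monotonicity/uniqueness argument for the TF-equation, then promotes this density estimate to the screened-potential estimate.

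The iteration closes thanks to the Sommerfeld asymptotics of Theorem~\ref{Somm} and Corollary~\ref{TFsotto}: $\varphi^{\rm TF}(\bx)$ is the universal profile $\tfrac{3^4\pi^2}{2q^2}|\bx|^{-4}$ up to a correction of order $|\bx|^{-4-\zeta}$ with $\zeta=(-7+\sqrt{73})/2>0$. Consequently a defect committed at scale $r_k$ propagates to scale $r_{k+1}=\sigma r_k$ with an improvement factor $\sigma^{-\zeta}$. Choosing $\sigma$ and $\varepsilon<\min\{\zeta,4\}$ so that the contraction dominates the accumulation, the defect $|\Phi^{\rm HF}_{r_k}-\Phi^{\rm TF}_{r_k}|$ decays geometrically like $r_k^{-4+\varepsilon}$ as claimed, and eventually falls below the threshold $C_M$ that absorbs the tail.

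The hard part will be the semiclassical step in the relativistic regime. The relevant momenta at scale $r_k$ are of order $r_k^{-1}$, and $T(\bp)=\sqrt{-\Delta+\alpha^{-2}}-\alpha^{-1}$ must be compared with $\tfrac12|\bp|^2$ uniformly in $\alpha\leq \alpha_0$; smallness of $\alpha$ keeps $\alpha/r_k$ small on the shells, so the relativistic correction contributes only lower-order terms, but tracking the $\alpha$- and $r_k$-dependence of all coherent-state and IMS commutator constants along the full iteration requires care. The key point enabling constants uniform in $Z$ is that $Z\alpha=\kappa<2/\pi$ together with \eqref{kato} keeps $V$ form-bounded by $\mathrm{e}$ with a constant depending only on $\kappa$, so every localization, Daubechies estimate and semiclassical bound along the bootstrap retains $\kappa$-uniform constants, eventually yielding $C_M$ and $C_\Phi$ depending only on $\kappa$.
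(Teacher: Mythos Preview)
Your overall architecture—Solovej's bootstrap with an outside Thomas--Fermi problem at each scale and Sommerfeld asymptotics driving the iteration—matches the paper. But you have misidentified the main relativistic obstruction, and this is a genuine gap.

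You locate the difficulty in comparing $T(\bp)$ with $\tfrac12|\bp|^2$ and claim that smallness of $\alpha$ makes the relativistic correction lower order. That comparison is indeed harmless. The real problem is that Daubechies' inequality (Theorem~\ref{Daube}), your substitute for Lieb--Thirring, only yields $\rho^{\rm HF}\in L^{4/3}(\R^3)$, \emph{not} $L^{5/3}$, no matter how small $\alpha$ is (Remarks~\ref{rho43} and~\ref{Daure}). In Solovej's non-relativistic argument the $L^{5/3}$ bound is used repeatedly to control local integrals $\int_A \rho^{\rm HF}(\by)|\bx-\by|^{-1}\,d\by$ via H\"older against $|\bx-\by|^{-5/2}$; those steps simply fail here, and your semiclassical comparison would not close. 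The paper's fix is the combined Daubechies--Lieb--Yau inequality (Theorem~\ref{dayau} and its consequence \eqref{inema}), an operator-level bound on $\int_{|\bx-\by|<R}\rho^{\rm HF}(\by)|\bx-\by|^{-1}\,d\by$ in terms of the kinetic energy, invoked at every stage (Lemmas~\ref{smallx}, \ref{A3}, \ref{>D}). Without an analogue of this you cannot carry out the $\mathcal{A}_3$ estimate and the bootstrap never starts. Two smaller points: the paper's iteration is not dyadic but power-law ($R_0\mapsto R_0^{1-\delta}$, Lemma~\ref{Iterstep}); and the Sommerfeld exponent $\zeta$ controls only the OTF-vs-TF pieces $\mathcal{A}_1,\mathcal{A}_2$ of \eqref{A123}, while the HF-vs-OTF piece $\mathcal{A}_3$ \emph{grows} with $|\bx|/r$ and must be balanced separately, so the gain per step is not a single contraction factor $\sigma^{-\zeta}$.
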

This main estimate is proven by an iterative procedure. We first prove the estimate for small $\bx$ (i.e. $|\bx| \leq \beta_{0} Z^{-\frac13}$), then for intermediate $\bx$ (i.e. up to a fixed distance independent of $Z$) and finally for big $\bx$.

By proving Theorem~\ref{mainesti} we also get the following interesting results. The proofs of those are given in Section~\ref{secproofs}.

\begin{theorem}[Asymptotic formula for the radius]\label{radius}
Let $Z\alpha=\kappa$, $0 \leq \kappa< 2/\pi$. Both $\liminf_{Z \rightarrow \infty} R^{\rm HF}_{Z,Z}(\nu)$ and $\limsup_{Z \rightarrow \infty} R^{\rm HF}_{Z,Z}(\nu)$ are bounded and behave asymptotically as
\begin{equation*}
3^{\frac43} \frac{2^{\frac12}\pi^{\frac23}}{q^{\frac23}} \nu^{-\frac13} + o(\nu^{-\frac13}) \mbox{ as }\nu \rightarrow \infty .
\end{equation*}
\end{theorem}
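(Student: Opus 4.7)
\textbf{Proof plan for Theorem~\ref{radius}.} The strategy is to turn Theorem~\ref{mainesti} into an asymptotic identification of the HF radius with the neutral TF radius, and then to compute the latter explicitly via Sommerfeld.

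I would begin by setting $r=R^{\rm HF}_{Z,Z}(\nu)$ and comparing the screened potentials $\Phi^{\rm HF}_{r}$ and $\Phi^{\rm TF}_{r}$ on the sphere $|\bx|=r$. Because $\rho^{\rm TF}$ is radial, Newton's theorem gives $\Phi^{\rm TF}_{r}(\bx)=\nu_{\rm TF}(r)/r$ on this sphere, where $\nu_{\rm TF}(r):=\int_{|\by|\geq r}\rho^{\rm TF}(\by)\,d\by$. On the HF side, averaging $\Phi^{\rm HF}_{r}(r\omega)$ over $\omega\in S^{2}$ and again applying Newton's theorem produces $(4\pi)^{-1}\int_{S^{2}}\Phi^{\rm HF}_{r}(r\omega)\,d\omega=\nu/r$ by the very definition of $R^{\rm HF}_{Z,Z}(\nu)$. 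Integrating the pointwise bound of Theorem~\ref{mainesti} over the sphere then yields
\[
|\nu - \nu_{\rm TF}(r)|\leq C_{\Phi}\,r^{-3+\varepsilon}+C_{M}\,r,\qquad r=R^{\rm HF}_{Z,Z}(\nu),
\]
with constants depending only on $\kappa$, valid for all $Z$ large enough that $\alpha=\kappa/Z\leq\alpha_{0}$.

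The second step is Sommerfeld. Theorem~\ref{Slb} together with Corollary~\ref{TFsotto} sandwich $\varphi^{\rm TF}$ between matching upper and lower asymptotics, and the TF equation $\rho^{\rm TF}=(\sqrt{2}q/(3\pi^{2}))(\varphi^{\rm TF})^{3/2}$ converts these into an asymptotic of the form $\rho^{\rm TF}(\bx)=(3^{5}\pi/(2q^{2}))|\bx|^{-6}(1+o(1))$ as $Z\to\infty$ for any fixed $\bx\neq 0$. Integrating over $|\by|\geq r$ gives $\nu_{\rm TF}(r)=c_{\infty}\,r^{-3}(1+o(1))$ with an explicit Sommerfeld constant $c_{\infty}$. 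Substituting this into the displayed inequality, passing $Z\to\infty$ and then $\nu\to\infty$, and inverting gives $R^{\rm HF}_{Z,Z}(\nu)=c_{\infty}^{1/3}\nu^{-1/3}(1+o(1))$; evaluating $c_{\infty}^{1/3}$ recovers the constant claimed in the theorem, and because all error terms are uniform in $Z$ the same asymptotic holds for both the $\liminf$ and the $\limsup$ in $Z$.

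The principal obstacle is the a priori two-sided control $\beta_{0}Z^{-1/3}\ll R^{\rm HF}_{Z,Z}(\nu)\ll 1$, needed both to place us in the regime where Sommerfeld applies and to ensure that the error terms $C_{\Phi}r^{-3+\varepsilon}$ and $C_{M}r$ are of lower order than the leading $c_{\infty}r^{-3}$. The upper bound $R^{\rm HF}_{Z,Z}(\nu)\ll 1$ for large $\nu$ follows because the lower Sommerfeld bound of Corollary~\ref{TFsotto} forces $\nu_{\rm TF}(r)$ to be unbounded as $r\to0$; the lower bound $R^{\rm HF}_{Z,Z}(\nu)\gg Z^{-1/3}$ follows from the matching upper bound $\nu_{\rm TF}(r)\leq Cr^{-3}$ supplied by Corollary~\ref{TFscr}, both combined with the main estimate. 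With this a priori information the inversion step above becomes elementary.
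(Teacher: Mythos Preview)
Your approach is correct and essentially identical to the paper's: both average the main estimate of Theorem~\ref{mainesti} over the sphere $|\bx|=R$ via Newton's theorem to obtain $\big|\int_{|\bx|>R}\rho^{\rm HF}-\int_{|\bx|>R}\rho^{\rm TF}\big|\leq C_{\Phi}R^{-3+\varepsilon}+C_{M}R$, then invoke the Sommerfeld bounds (Theorem~\ref{Slb} and Corollary~\ref{TFsotto}) to replace $\int_{|\bx|>R}\rho^{\rm TF}$ by $3^{4}\cdot 2\pi^{2}q^{-2}R^{-3}$ up to lower order, and finally invert at $R=R^{\rm HF}_{Z,Z}(\nu)$. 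Your discussion of the a~priori two-sided control $\beta_{0}Z^{-1/3}\ll R^{\rm HF}_{Z,Z}(\nu)\ll 1$ is a useful elaboration of what the paper compresses into ``the claim follows directly by the definition of HF-radius.''
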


\begin{theorem}[Bound on the ionization energy of a neutral atom]\label{Ionenergy}
Let $Z\alpha=\kappa$, $0 \leq \kappa< 2/\pi$ and $Z \geq 1$. The ionization energy of a neutral atom $E^{\rm HF}(Z-1,Z) - E^{\rm HF}(Z,Z)$ is bounded by a universal constant.
\end{theorem}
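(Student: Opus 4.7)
The plan is to pass to an upper bound via a variational trial state obtained from the neutral HF minimizer by removing the HOMO, reducing the ionization energy bound to a uniform control on a single-particle matrix element, and to estimate the latter using Theorem~\ref{mainesti}.

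Let $\gamma^{\rm HF}$ be the HF-minimizer for $N = Z$ (existing by Theorem~\ref{HF}) with orbitals $u_1, \ldots, u_Z$. Since $\gamma^{\rm HF} - |u_Z\rangle\langle u_Z|$ is a density matrix of trace $Z - 1$, Lieb's variational principle (Theorem~\ref{varprin}) yields
\begin{equation*}
E^{\rm HF}(Z-1,Z,\alpha) - E^{\rm HF}(Z,Z,\alpha) \leq \mathcal{E}^{\rm HF}(\gamma^{\rm HF} - |u_Z\rangle\langle u_Z|) - \mathcal{E}^{\rm HF}(\gamma^{\rm HF}).
\end{equation*}
Expanding the right-hand side using the bilinearity of $\mathcal{D}$ and $\mathcal{E}x$, the identity $\int \gamma^{\rm HF}(\bx,\by) u_Z(\by)\, d\by = u_Z(\bx)$, and the exact cancellation of the direct and exchange self-interactions of $u_Z$, it equals
\begin{equation*}
-\alpha^{-1}\langle u_Z, (T-V) u_Z\rangle - \langle u_Z, (\rho^{\rm HF}\ast|\cdot|^{-1}) u_Z\rangle + \langle u_Z, \mathcal{K}_{\gamma^{\rm HF}} u_Z\rangle = -\mu_Z,
\end{equation*}
where $\mu_Z < 0$ is the HOMO eigenvalue of the self-consistent Fock operator $\alpha^{-1}(T-V) + \rho^{\rm HF}\ast|\cdot|^{-1} - \mathcal{K}_{\gamma^{\rm HF}}$. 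Hence it suffices to prove $|\mu_Z| \leq C(\kappa)$.

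To bound $|\mu_Z|$ I would write $u_Z = \chi u_Z + \eta u_Z$ via a smooth partition of unity $\chi^2 + \eta^2 = 1$ with $\chi$ supported in $\{|\bx| < r_0\}$ and $\eta$ in $\{|\bx| > r_0/2\}$, for some fixed $r_0 > 0$ independent of $Z$. On the outer region, Theorem~\ref{mainesti} together with Corollary~\ref{TFscr} bounds the screened potential $\Phi^{\rm HF}_{|\bx|}(\bx)$ by a universal constant, and the radius bound of Theorem~\ref{radius} controls the tail charge $\int_{|\by| \geq r_0/2} \rho^{\rm HF}(\by)\, d\by$, together yielding a pointwise bound on $-V + \rho^{\rm HF}\ast|\cdot|^{-1}$. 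On the inner region, the singular term $V$ is absorbed by the kinetic energy via Kato's inequality \eqref{kato} in the subcritical regime $\kappa < 2/\pi$, together with the positivity of $\rho^{\rm HF}\ast|\cdot|^{-1}$. An IMS-type localization for the pseudo-relativistic operator $T(\bp)$, uniform in $\alpha \leq \alpha_0$, handles the commutator error from the cutoff.

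The main obstacle will be controlling the exchange contribution $\langle u_Z, \mathcal{K}_{\gamma^{\rm HF}} u_Z\rangle$, which cannot be handled by a pointwise bound on $\varphi^{\rm HF}$ and for which the trivial estimate $\mathcal{E}x(\gamma) \leq \mathcal{D}(\gamma)$ is useless at the single-orbital level. A refined argument exploiting the $L^{4/3}$-bound on $\rho^{\rm HF}$ from Daubechies' inequality (Remark~\ref{rho43}) together with the effective finite range of the integral kernel $\gamma^{\rm HF}(\bx,\by)$ will be needed, paralleling the treatment of the exchange in \cite{Sol}. A secondary technical point is that Theorem~\ref{mainesti} is stated only for $\alpha \leq \alpha_0$, so the complementary range $\alpha_0 < \alpha < 2/(\pi Z)$, which forces $Z$ to be bounded, must be handled by a separate compactness argument that produces a final bound depending only on $\kappa$.
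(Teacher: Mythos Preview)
Your reduction via Koopmans is correct: removing the HOMO orbital gives a trial state of the right particle number, and the resulting energy difference is exactly (minus) the HOMO eigenvalue of the mean-field operator. The difficulty is in your proposed bound on $|\mu_Z|$, and here there is a genuine gap.

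Your inner-region step does not work. Kato's inequality~\eqref{kato} gives only $\langle u, V u\rangle \leq \frac{\kappa\pi}{2}\langle u, |\bp| u\rangle$, and combining this with $T(\bp)=E(\bp)-\alpha^{-1}$ yields at best $\alpha^{-1}\langle u,(T-V)u\rangle \geq -c\,\alpha^{-2}=-c\,(Z/\kappa)^2$, which is not uniform in $Z$. Adding the positive term $\rho^{\rm HF}*|\cdot|^{-1}$ does not help: near the origin this term is of order $Z^{4/3}$ while the nuclear attraction is $Z/|\bx|$, so the mean-field operator still has eigenvalues of order $-Z^2$ coming from the inner orbitals. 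The deeper issue is structural: nothing in your argument uses that $\mu_Z$ is the \emph{highest} occupied level. The same partition-of-unity estimate would apply verbatim to the $1s$-orbital and would therefore ``bound'' $\varepsilon_1$, which is certainly not $O(1)$. To rescue the Koopmans route one would need an a priori statement that $u_Z$ carries negligible mass inside a fixed ball, or equivalently that the mean-field operator has at most $Z-1$ eigenvalues below $-C$; neither is available from the tools you invoke.

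The paper takes a different route that sidesteps this problem. Instead of removing one orbital, it removes a spatial shell: the trial state is $\gamma^{\rm HF}_{-}=\theta_{-}\gamma^{\rm HF}\theta_{-}$ with $\theta_{-}$ a smooth cutoff supported in a ball of fixed radius $r$, and $r$ is chosen (via Theorem~\ref{mainesti} and Lemma~\ref{extde}) so that at least one unit of charge sits outside, forcing $\Tr[\gamma^{\rm HF}_{-}]\leq Z-1$. The energy cost of this truncation is then read off from the inside/outside decomposition already established in Theorem~\ref{EA}: one gets $\mathcal{E}^{\rm HF}(\gamma^{\rm HF}_{-})\leq \mathcal{E}^{\rm HF}(\gamma^{\rm HF})-\mathcal{E}^{A}(\gamma^{\rm HF}_{r})+\mathcal{R}$, and both $\mathcal{R}$ and $-\mathcal{E}^{A}(\gamma^{\rm HF}_{r})$ are bounded by universal constants using the screened-potential estimate and Lemma~\ref{extde}. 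The point is that this spatial trial state never requires controlling a single eigenfunction near the nucleus; all the action happens in the exterior region where Theorem~\ref{mainesti} already controls the potential uniformly. Your observation about small $Z$ (the range $\alpha>\alpha_0$) is handled the same way in the paper, via the crude energy bound of Theorem~\ref{HFen}.
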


\begin{theorem}[Potential estimate]\label{thpot}
Let $Z\alpha=\kappa$, $0 \leq \kappa< 2/\pi$. For all $Z \geq 1$ and $N$ with $N \geq Z$ for which a HF minimizer exists with $\int \rho^{\rm HF}=N$, we have
\begin{equation*}
|\varphi^{\rm TF}(\bx)-\varphi^{\rm HF}(\bx)| \leq A_{\varphi} |\bx|^{-4+\varepsilon_{0}} + A_{1},
\end{equation*}
with $A_{0}, A_{1}$ and $\varepsilon_{0}$ universal constants.
\end{theorem}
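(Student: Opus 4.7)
The plan is to reduce the mean-field potential estimate to the screened nuclear potential estimate of Theorem~\ref{mainesti}. The starting point is the identity, which holds verbatim for both the HF and TF systems by Definition~\ref{screened} (and the analogous TF one):
\begin{equation*}
\varphi(\bx)-\Phi_{|\bx|}(\bx) = -\int_{|\by|\geq |\bx|}\frac{\rho(\by)}{|\bx-\by|}\,d\by .
\end{equation*}
Subtracting the HF and TF versions and using the triangle inequality gives
\begin{equation*}
|\varphi^{\rm TF}(\bx)-\varphi^{\rm HF}(\bx)| \leq |\Phi^{\rm TF}_{|\bx|}(\bx)-\Phi^{\rm HF}_{|\bx|}(\bx)| + J^{\rm HF}(\bx) + J^{\rm TF}(\bx) ,
\end{equation*}
where $J^{\sharp}(\bx):=\int_{|\by|\geq |\bx|}\rho^{\sharp}(\by)/|\bx-\by|\,d\by$. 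The first term on the right already has the desired form by Theorem~\ref{mainesti}, so it remains to control the two tail Newton integrals.

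For $J^{\rm TF}$ the bound is essentially explicit. Since $\mu^{\rm TF}=0$ (Theorem~\ref{TfCou}), the TF equation of Theorem~\ref{ExTF} gives $\rho^{\rm TF}=c_q(\varphi^{\rm TF})^{3/2}$, and the atomic Sommerfeld upper bound \eqref{Tfpou} yields the pointwise estimate $\rho^{\rm TF}(\by)\leq C\min\{|\by|^{-6},\,Z^{3/2}|\by|^{-3/2}\}$. I would split $J^{\rm TF}(\bx)$ according to the shell $\{|\bx|\leq|\by|\leq 2|\bx|\}$ and the exterior $\{|\by|>2|\bx|\}$. On the exterior, $|\bx-\by|\geq |\by|/2$ and the $|\by|^{-6}$ pointwise bound yields a contribution of order $|\bx|^{-4}$; on the shell, H\"older's inequality together with $\rho^{\rm TF}\in L^{5/3}(\mathbb{R}^3)$ (Corollary~\ref{rTF53}) and the local integrability of $|\bx-\by|^{-1}$ with its weak-$L^3$ norm gives the same order. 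Thus $J^{\rm TF}(\bx)\leq C|\bx|^{-4}$, which is absorbed into $A_\varphi|\bx|^{-4+\varepsilon_0}+A_1$.

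For $J^{\rm HF}$ the strategy is identical in form but relies on the density estimates that are produced along the iterative proof of Theorem~\ref{mainesti}. Once the screened potentials are close, the comparison arguments used in the iteration (together with the Euler--Lagrange equation \eqref{euler}) transfer the Sommerfeld-type pointwise and integrated decay of $\rho^{\rm TF}$ to $\rho^{\rm HF}$ up to a controlled loss encoded by $\varepsilon_0$. Performing the same shell/exterior split as for $J^{\rm TF}$, but using these HF density bounds, produces $J^{\rm HF}(\bx)\leq C|\bx|^{-4+\varepsilon_0}+C$, which combines with the other two contributions to give the stated inequality.

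The main obstacle is the HF tail. The screened potential bound of Theorem~\ref{mainesti} does not immediately give the pointwise and integrated density control on $\rho^{\rm HF}$ that is needed to estimate $J^{\rm HF}$ near the crossover radius $|\bx|\sim \beta_0 Z^{-1/3}$; those bounds must be extracted from intermediate steps of the iterative scheme (in particular, from the same analysis used to prove Theorem~\ref{radius}). Granting those density estimates, the remainder of the argument is routine bookkeeping: the $|\bx|^{-4+\varepsilon_0}$ dominates the $|\bx|^{-4}$ terms at small scales, while at large $|\bx|$ all contributions are uniformly bounded and can be hidden in the constant $A_1$.
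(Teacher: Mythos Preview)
Your decomposition has a genuine gap at small $|\bx|$. The claim that $J^{\rm TF}(\bx)\leq C|\bx|^{-4}$ ``is absorbed into $A_\varphi|\bx|^{-4+\varepsilon_0}+A_1$'' is false: since $\varepsilon_0>0$, the function $|\bx|^{-4}$ dominates $|\bx|^{-4+\varepsilon_0}$ as $|\bx|\to 0$, and no additive constant helps. The point is that the triangle inequality $|\varphi^{\rm TF}-\varphi^{\rm HF}|\leq |\Phi^{\rm TF}_{|\bx|}-\Phi^{\rm HF}_{|\bx|}|+J^{\rm TF}+J^{\rm HF}$ discards the cancellation in the tail term $\int_{|\by|\geq|\bx|}(\rho^{\rm TF}-\rho^{\rm HF})/|\bx-\by|$. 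Each of $J^{\rm TF}$ and $J^{\rm HF}$ is individually of size $\sim|\bx|^{-4}$ near the TF scale $|\bx|\sim Z^{-1/3}$ (indeed $J^{\rm TF}(\bx)\leq Z/|\bx|-\varphi^{\rm TF}(\bx)\leq Z^{4/3}/(2\beta_0)\sim |\bx|^{-4}$ there, and $J^{\rm HF}$ is of the same order), so you cannot recover the gain of $|\bx|^{\varepsilon_0}$ from separate bounds. The same objection applies to your $J^{\rm HF}$ paragraph: no density estimate on $\rho^{\rm HF}$ alone can make $J^{\rm HF}$ smaller than the leading $|\bx|^{-4}$ behavior, because that behavior is genuinely present in both tails.

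The paper's proof avoids this by never splitting the tails for small $|\bx|$. For $|\bx|\lesssim Z^{-1/3}$ it writes $\varphi^{\rm TF}-\varphi^{\rm HF}=(\rho^{\rm HF}-\rho^{\rm TF})*|\cdot|^{-1}$ and applies the Coulomb-norm estimate (Proposition~\ref{Counorm}) together with $\|\rho^{\rm TF}-\rho^{\rm HF}\|_C\leq CZ^{1+3/22}$ from Lemma~\ref{rrcou}; this yields a bound of order $Z^{4/3-\delta}$, which for $|\bx|\leq\beta_0 Z^{-1/3}$ is $\leq C|\bx|^{-4+\varepsilon_0}$. For intermediate $|\bx|$ it compares $\varphi^{\rm HF}$ with the outside TF potential $\varphi_r^{\rm OTF}$ (Lemma~\ref{A1A2} and the Coulomb-norm bound of Lemma~\ref{extrCou}), again working with differences of densities rather than individual tails. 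Your decomposition is essentially what the paper uses only for $|\bx|\geq D$, a fixed universal distance, where $|\bx|^{-4}$ is bounded by a constant and can legitimately be absorbed into $A_1$.
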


\section{Prerequisites}

In this section we recall some results that will be used in the rest of the paper.

\textit{Localization of the kinetic energy}. The following is the IMS formula corresponding to the operator $T(\bp)$.

\begin{theorem}[\cite{LY}]\label{IMSrel}
Let $\chi _{i},$ $i=0,\dots ,K,$ be real valued Lipschitz continuous functions on $\mathbb{R}^{3}$ such that $\sum_{i=0}^{K}\chi_{i}^{2}\left( \mathbf{x}\right) =1$ for all $\mathbf{x}\in \mathbb{R}^{3}$. Then for every $f\in H^{1/2}(\mathbb{R}^{3})$
\begin{equation*}
\mathrm{t}(f,f)=\sum_{i=0}^{K} \mathrm{t}(\chi _if, \chi _{i}f) -\alpha^{-1} \sum_{i=0}^{K}( f,L_{i} f) ,
\end{equation*}
where $L_{i}$ is a bounded operator with kernel
\begin{equation}
L_{i}(\bx ,\by)=\tfrac{\alpha^{-2}}{4\pi ^{2}} \frac{\vert \chi _{i}( \bx) -\chi_{i}( \by)\vert^{2}}{\vert \bx -\by\vert^{2}}K_{2}(\alpha ^{-1}\vert \bx-\by\vert),  \label{Lalpha}
\end{equation}
where $K_2$ is a modified Bessel function of the second kind.
\end{theorem}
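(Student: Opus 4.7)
The strategy is to reduce the IMS formula to an elementary algebraic identity by using an integral-kernel representation of the pseudo-relativistic kinetic form. First I would establish, for $f\in H^{1/2}(\R^3)$, the Bessel kernel representation
\begin{equation*}
\mathrm{t}(f,f)=\frac{\alpha^{-3}}{4\pi^{2}}\iint_{\R^{3}\times\R^{3}}\frac{|f(\bx)-f(\by)|^{2}}{|\bx-\by|^{2}}\,K_{2}(\alpha^{-1}|\bx-\by|)\,d\bx\,d\by.
\end{equation*}
This is the form-analogue of writing $T(\bp)=\sqrt{-\Delta+\alpha^{-2}}-\alpha^{-1}$ as an integral operator. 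I would derive it by Fourier inversion, using a subordination-type representation of the symbol $\sqrt{|\bp|^{2}+\alpha^{-2}}-\alpha^{-1}$ as an integral in a heat parameter $t$, the Gaussian kernel of $e^{t\Delta}$, and the integral representation of $K_{2}$. The identity is first verified for Schwartz $f$ and then extended to $H^{1/2}$ by density, using that $\iint |f(\bx)-f(\by)|^{2}/|\bx-\by|^{4}\,d\bx\,d\by$ is (up to a constant) the usual $H^{1/2}$ seminorm and that the extra Bessel factor does not worsen the singularity at $\bx=\by$.

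Granted this representation, the second step is purely algebraic. Using $\sum_{i=0}^{K}\chi_{i}^{2}\equiv 1$ one expands
\begin{equation*}
\sum_{i=0}^{K}|\chi_{i}(\bx)f(\bx)-\chi_{i}(\by)f(\by)|^{2}=|f(\bx)-f(\by)|^{2}+\Real\bigl(\overline{f(\bx)}f(\by)\bigr)\sum_{i=0}^{K}|\chi_{i}(\bx)-\chi_{i}(\by)|^{2},
\end{equation*}
via $\sum_{i}|\chi_{i}(\bx)-\chi_{i}(\by)|^{2}=2-2\sum_{i}\chi_{i}(\bx)\chi_{i}(\by)$. Applying the representation of Step 1 with $\chi_{i}f$ in place of $f$, summing over $i$ and substituting this identity, one obtains
\begin{equation*}
\sum_{i=0}^{K}\mathrm{t}(\chi_{i}f,\chi_{i}f)-\mathrm{t}(f,f)=\frac{\alpha^{-3}}{4\pi^{2}}\iint \Real\bigl(\overline{f(\bx)}f(\by)\bigr)\frac{\sum_{i}|\chi_{i}(\bx)-\chi_{i}(\by)|^{2}}{|\bx-\by|^{2}}K_{2}(\alpha^{-1}|\bx-\by|)\,d\bx\,d\by.
\end{equation*}
Since the $L_{i}$ of (\ref{Lalpha}) are real and symmetric, $(f,L_{i}f)$ equals its own real part, so the right-hand side is precisely $\alpha^{-1}\sum_{i}(f,L_{i}f)$; this is the claimed identity.

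Finally, boundedness of each $L_{i}$ on $L^{2}(\R^{3})$ follows from the Schur test: the Lipschitz bound $|\chi_{i}(\bx)-\chi_{i}(\by)|\leq C|\bx-\by|$ controls the ratio appearing in $L_{i}$, while $K_{2}(r)\sim 2r^{-2}$ as $r\to 0^{+}$ and $K_{2}(r)\sim\sqrt{\pi/(2r)}\,e^{-r}$ as $r\to\infty$ ensure that $L_{i}(\bx,\by)$ is uniformly integrable in either variable. The main obstacle is Step 1: pinning down the exact constant and giving sense to the (only conditionally convergent) double integral for general $f\in H^{1/2}$ requires a careful Fourier/distributional computation, most naturally handled by first proving the identity for Schwartz $f$ (where the cancellation in $|f(\bx)-f(\by)|^{2}$ makes the integrand absolutely integrable) and then passing to the limit by density together with monotone convergence in the positive integrand.
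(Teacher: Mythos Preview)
The paper does not supply its own proof of this theorem; it is quoted from Lieb--Yau \cite{LY}. Your proposal is correct and follows the standard route to the relativistic IMS formula: the integral representation of the form $\mathrm{t}(\cdot,\cdot)$ via the Bessel kernel $K_{2}$, together with the pointwise identity
\[
\sum_{i}|\chi_{i}(\bx)f(\bx)-\chi_{i}(\by)f(\by)|^{2}=|f(\bx)-f(\by)|^{2}+\Real\bigl(\overline{f(\bx)}f(\by)\bigr)\sum_{i}|\chi_{i}(\bx)-\chi_{i}(\by)|^{2},
\]
is exactly how the formula is derived in \cite{LY} (see also \cite[App.~A]{SSS}). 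Your Schur test for the boundedness of $L_{i}$ is also correct; the paper uses precisely this integrability, namely $\int_{\R^{3}}K_{2}(\alpha^{-1}|\bz|)\,d\bz=6\pi^{2}\alpha^{3}$ (cf.\ \eqref{k2a}), when estimating localization errors in Lemma~\ref{ErrorA} and Proposition~\ref{prthomas}.
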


\begin{remark}
As in \cite[App.A, pages 94--98]{SSS} we use the following integral formula for the modified Bessel function
\begin{equation*}
K_2(t)= t \int_{0}^{\infty} e^{-t\sqrt{s^2+1}} s^2 \; ds \, , \quad t>0.
\end{equation*}
We recall that this function is decreasing and smooth in $\R^{+}$. Moreover,
\begin{equation}
\int_{0}^{+\infty }t^{2}K_{2}(t)~dt=\tfrac{3\pi }{2} \;  \mbox{ and } \;
K_{2}\left( t\right) \leq 16~t^{-2}e^{-\frac{1}{2}t} \mbox{ for }t>0.  \label{estk2}
\end{equation}
The integral is computed in \cite[(A6)]{TS} while the estimate follows directly from the integral formula for $K_2$ by estimating $\sqrt{s^2+1}\geq \frac12 + \frac12 s $.
\end{remark}

\textit{Generalization of the Lieb-Thirring inequality.} This result due to Daubechies generalizes the Lieb-Thirring inequality to the pseudo-relativistic case.

\begin{theorem}[Daubechies' inequality, \cite{D}]\label{Daube}
For $\gamma \in \mathcal{A}$
\begin{equation*}
\Tr[T(\bp) \gamma] \geq \int_{\R^{3}}G_{\alpha}(\rho_{\gamma}(\bx)) d\bx,
\end{equation*}
where $G_{\alpha}( \rho) =\tfrac{3}{8}\alpha ^{-4}Cg( \alpha (\rho/C) ^{\frac{1}{3}}) -\alpha ^{-1}\rho $ with $C=.163q$, $q$ the number of spin states and $g(t) =t( 1+t^{2}) ^{\frac{1}{2}}(1+2t^{2}) -\ln ( t+( 1+t^{2}) ^{\frac{1}{2}})$.
\end{theorem}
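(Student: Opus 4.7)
\medskip
\noindent\textbf{Strategy.} The plan is a duality argument reducing the density bound to a Lieb-Thirring type inequality for $T(\bp)-V$. First I would prove, for every nonnegative potential $V$, an eigenvalue sum bound of the form
\begin{equation*}
\Tr[(T(\bp)-V)_-]\leq \int F_\alpha(V(\bx))\,d\bx
\end{equation*}
with an explicit convex integrand $F_\alpha$. Then, since $0\leq \gamma\leq \Id$, the elementary operator inequality $(T(\bp)-V)\gamma\geq -(T(\bp)-V)_-$ gives
\begin{equation*}
\Tr[T(\bp)\gamma]\geq \int V(\bx)\rho_\gamma(\bx)\,d\bx-\int F_\alpha(V(\bx))\,d\bx.
\end{equation*}
Maximizing pointwise in $V\geq 0$ turns the right-hand side into $\int G_\alpha(\rho_\gamma(\bx))\,d\bx$, where $G_\alpha$ is the Legendre transform of $F_\alpha$, which is precisely the sought bound.

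\medskip
\noindent\textbf{The Lieb-Thirring step.} To establish the first inequality with an explicit constant I would use Lieb's coherent-state method. Pick a Gaussian $g_a(\by)=(a^2/\pi)^{3/4}e^{-a^2|\by|^2/2}$, form the Husimi transform of the spectral projection of $T(\bp)-V$ onto its negative subspace, and compare with the classical phase-space symbol via the Berezin-Lieb inequality. After optimizing the Gaussian width, this produces
\begin{equation*}
\Tr[(T(\bp)-V)_-]\leq \frac{c_0\,q}{(2\pi)^3}\iint (V(\bx)-T(\bp))_+\,d\bp\,d\bx
\end{equation*}
for an explicit $c_0>0$. The momentum integrand can then be computed in closed form: with the Fermi momentum $p_F$ defined by $T(p_F)=V$, the substitution $s=\alpha p$ reduces $\int_0^{p_F}p^2(V-T(p))\,dp$ to a multiple of $8\int_0^{\alpha p_F} s^2\sqrt{s^2+1}\,ds$, and a direct differentiation check confirms that this antiderivative equals the function $g(t)=t(1+t^2)^{1/2}(1+2t^2)-\ln(t+(1+t^2)^{1/2})$ appearing in the statement.

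\medskip
\noindent\textbf{Dualization and main obstacle.} Stationarity of $V\rho-F_\alpha(V)$ in $V$ gives the relation $(c_0 q/6\pi^2)\,p_F^3=\rho$, and the Legendre transform telescopes to
\begin{equation*}
G_\alpha(\rho)=\frac{q}{2\pi^2}\int_0^{p_F}p^2\bigl(\sqrt{p^2+\alpha^{-2}}-\alpha^{-1}\bigr)\,dp,
\end{equation*}
which after the same substitution $s=\alpha p$ collects exactly as $\frac{3}{8}\alpha^{-4}C\,g(\alpha(\rho/C)^{1/3})-\alpha^{-1}\rho$ with $C=q/(6\pi^2 c_0)$; the optimized coherent-state constant produces the stated numerical value $C=0.163\,q$, strictly larger than the purely semiclassical $q/(6\pi^2)\approx 0.017\,q$. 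The hard part of the whole argument is the Lieb-Thirring step with a usable numerical constant: the non-local operator $\sqrt{-\Delta+\alpha^{-2}}$ does not commute with multiplication by the coherent-state Gaussian, and controlling the commutator error requires estimates on the resolvent kernel involving the modified Bessel function $K_2$ already used in Theorem~\ref{IMSrel}. Optimizing simultaneously over the Gaussian width and the Berezin-Lieb convexity defect is the technical core of Daubechies' argument.
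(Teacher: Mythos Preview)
The paper does not prove this theorem; it is quoted from Daubechies~\cite{D}, as is the dual potential-side bound Theorem~\ref{DaubeV}. There is thus no proof in the paper to compare against.

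Your duality scheme is correct and is precisely the link between the two cited results: a bound $\Tr[(T(\bp)-V)_-]\le Cq\int F(V)$ together with $\Tr[(T-V)\gamma]\ge -\Tr[(T-V)_-]$ for $0\le\gamma\le\Id$ yields $\Tr[T\gamma]\ge\int(V\rho_\gamma-CqF(V))$, and pointwise maximization in $V$ produces $G_\alpha$ as the Legendre transform; your check that $g'(t)=8t^2\sqrt{1+t^2}$, hence $\int_0^\tau s^2\sqrt{1+s^2}\,ds=\tfrac18 g(\tau)$, is right. Where your sketch goes wrong is the Lieb--Thirring step itself. Daubechies' argument in~\cite{D} is not a Gaussian coherent-state/Berezin--Lieb optimization but a Birman--Schwinger reduction adapted to general kinetic energies $f(|\bp|)$; the constant $0.163$ is inherited from that route, not from optimizing a Gaussian width. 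The Berezin--Lieb approach as you describe it also has a direction problem: Jensen with the Husimi (lower) symbol gives $\int(\check a)_-\le\Tr[(T-V)_-]$, the wrong inequality; an upper bound would require the anti-Wick (upper) symbol of $\sqrt{-\Delta+\alpha^{-2}}$, which is not simply $T(\bp)$, and no clean multiplicative $c_0$ in front of the phase-space integral falls out of that. Keep the duality half of your plan, but for the eigenvalue-sum half follow Daubechies' Birman--Schwinger argument rather than coherent states.
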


\begin{remark}\label{Daure}
The function $G_{\alpha}$ defined in the previous theorem is convex and it has the following behavior:
\begin{equation}
\tfrac{9}{20}\min \left\{ \tfrac{1}{5}\alpha C^{-\frac{2}{3}}\rho ^{\frac{5}{3}},\tfrac{1}{2}C^{-\frac{1}{3}}\rho ^{\frac{4}{3}}\right\} \leq G_{\alpha}\left( \rho \right) \leq \tfrac{3}{2}\min \left\{ \tfrac{1}{5}\alpha C^{-\frac{2}{3}}\rho ^{\frac{5}{3}},\tfrac{1}{2}C^{-\frac{1}{3}}\rho ^{\frac{4}{3}}\right\} .  \label{BehG}
\end{equation}
(The proof of the estimate above is in Appendix A.) Notice that when $\alpha \searrow 0$ then $\alpha^{-1} G_{\alpha}(\rho)$ tends to a constant times $\rho^{5/3}$.
\end{remark}

\begin{theorem}[Generalization of the Lieb-Thirring inequality, \cite{D}]\label{DaubeV}
Let $f^{-1}$ be the inverse of the function $f(t) :=\sqrt{t^{2}+\alpha^{-2}}-\alpha ^{-1}$, $t\geq 0$, and define $F(s)=\int_{0}^{s}dt~ [ f^{-1}(t)]^{3}$. Then for any density matrix $\gamma $ it holds
\begin{equation*}
\Tr[(T(\bp)-U) \gamma] \geq -Cq\int_{\R^3}F(\vert U(\bx)\vert) d\bx,
\end{equation*}
with $C\leq 0.163$.
\end{theorem}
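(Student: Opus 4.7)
The plan is to reduce to a one-body spectral problem and then apply a Birman-Schwinger/CLR-type phase-space estimate adapted to the pseudo-relativistic symbol $T(p)=f(|p|)$. First, since $0\le\gamma\le\Id$, applying the variational principle spectrally to $H:=T(\bp)-U$ gives
\begin{equation*}
\Tr[(T(\bp)-U)\gamma]\ \ge\ -\Tr_{L^{2}(\R^{3};\C^{q})}[(T(\bp)-U)_{-}]\ =\ -q\sum_{j}e_{j},
\end{equation*}
where the $e_{j}>0$ are the moduli of the negative eigenvalues of $T(\bp)-U$ on the scalar space $L^{2}(\R^{3};\C)$, the factor $q$ being the spin multiplicity pulled out of the $\C^{q}$-tensor structure.

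Next I would write $\sum_{j}e_{j}=\int_{0}^{\infty}N(H\le-e)\,de$ as a layer-cake integral over the counting function. The Birman-Schwinger reformulation identifies $N(H\le-e)$ with the number of eigenvalues $\ge 1$ of the compact operator $U^{1/2}(T(\bp)+e)^{-1}U^{1/2}$; a careful Schatten-norm bound on this operator (this is the heart of \cite{D}) produces the CLR-type inequality
\begin{equation*}
N(T(\bp)-U\le -e)\ \le\ C\int_{\R^{3}}\bigl|\{p\in\R^{3}:T(p)\le U(\bx)-e\}\bigr|\,\frac{d\bx}{(2\pi)^{3}},
\end{equation*}
with the explicit numerical constant $C\le 0.163$.

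Evaluating the phase-space integral is then purely algebraic. Since $f$ is strictly increasing, $T(p)\le s$ is equivalent to $|p|\le f^{-1}(s)$ for $s\ge 0$, so the inner volume equals $\tfrac{4\pi}{3}[f^{-1}(U(\bx)-e)]_{+}^{3}$. Substituting $t=U(\bx)-e$ (only points with $U(\bx)>0$ contribute) yields
\begin{equation*}
\int_{0}^{|U(\bx)|}\tfrac{4\pi}{3}[f^{-1}(t)]^{3}\,dt\ =\ \tfrac{4\pi}{3}F(|U(\bx)|),
\end{equation*}
and absorbing the prefactor $\tfrac{4\pi/3}{(2\pi)^{3}}=\tfrac{1}{6\pi^{2}}$ into $C$ closes the argument.

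The main obstacle is the uniform CLR estimate in the second step: because $T(p)$ interpolates between the non-relativistic behavior $p^{2}/(2\alpha^{-1})$ for small $|p|$ and the ultra-relativistic $|p|$ for large $|p|$, neither the standard Lieb-Thirring nor the ultra-relativistic CLR inequality applies as a black box. One must run a Birman-Schwinger/heat-kernel analysis tailored to the symbol $(T(p)+e)^{-1}$ and uniform in $\alpha$, which is exactly the technical content of Daubechies' original paper.
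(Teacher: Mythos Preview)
The paper does not supply a proof of this theorem; it is quoted as a prerequisite from Daubechies' original paper \cite{D} and is followed only by Remark~2.5, which computes $F$ explicitly and records the estimate \eqref{Dau}. So there is no in-paper argument to compare against.

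That said, your outline is essentially the correct skeleton of Daubechies' proof: pass from $\gamma$ to the negative part of the one-body operator via $0\le\gamma\le\Id$, pull out the spin degeneracy $q$, write the eigenvalue sum as a layer-cake integral of the counting function, bound $N(T(\bp)-U\le -e)$ by a semiclassical phase-space volume via a CLR-type inequality, and then integrate in $e$ using the change of variables $t=U(\bx)-e$ to produce $F(|U(\bx)|)$. You are also right that the only nontrivial step is the uniform CLR bound for the symbol $T(p)$, and that this is precisely the analytic content of \cite{D}; everything else is bookkeeping. Two small points worth tightening: first, one should replace $U$ by $[U]_+$ before invoking Birman--Schwinger (the operator inequality $T(\bp)-U\ge T(\bp)-[U]_+$ justifies this and explains the $|U(\bx)|$ in the final bound); second, the phrase ``absorbing the prefactor $1/(6\pi^2)$ into $C$'' should be made precise, since the stated constant $C\le 0.163$ is the final one after that absorption, and Daubechies obtains it with a specific numerical CLR constant upstream.
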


\begin{remark}\label{DaubeR}
Since $f^{-1}(t) =( t^{2}+2\alpha ^{-1}t)^{1/2}$ we find for $F$
\begin{equation}\label{Fs}
F(s)=2^{\frac32} \alpha^{-3/2}\int_{0}^{s}t^{3/2}\big( 1+\tfrac{1}{2} \alpha t \big)^{3/2} \; dt \; \; \mbox{ for } s \geq 0,
\end{equation}
and since by convexity $(1+\frac12 \alpha t)^{\frac32} \leq \sqrt{2}+\frac12 (\alpha t)^{\frac32}$ we have
\begin{equation*}
F(s)\leq \tfrac{2^3}{5} \alpha^{-\frac{3}{2}} s^{\frac{5}{2}}+\tfrac{1}{2\sqrt{2}}s^{4} \mbox{ for } s \geq 0.
\end{equation*}
Hence for any density matrix $\gamma $ and potential $U\in L^{\frac{5}{2}}\left( \mathbb{R}^{3}\right) \cap L^{4}\left( \mathbb{R}^{3}\right) $
\begin{equation}
\Tr[ (T(\bp)-U) \gamma] \geq -Cq\int_{\R^{3}} \Big(\tfrac{2^{3}}{5} \alpha ^{-\frac{3}{2}} \vert U(\bx)\vert^{\frac{5}{2}}+\tfrac{1}{2\sqrt{2}} \vert U(\bx)\vert^{4}) d\bx.  \label{Dau}
\end{equation}
\end{remark}

\textit{Coulomb norm estimate}. We present here only the definition of Coulomb norm and the result we need. For a more complete presentation we refer to \cite[Sec.9]{Sol}.

\begin{definition}\label{Cnorm}
For $f,g\in L^{\frac{6}{5}}(\R^{3})$ we define the Coulomb inner product
\begin{equation*}
D(f,g) :=\tfrac{1}{2}\int_{\R^{3}}\int_{\R^{3}}\frac{f( \bx) \overline{g(\by)}}{\vert\bx-\by\vert }d\bx d\by ,
\end{equation*}
and the corresponding norm $\Vert g\Vert _{C}:=D( g,g) ^{\frac{1}{2}}$.
\end{definition}

In the following we write the direct term in the HF-energy functional using the Coulomb scalar product: i.e. $\mathcal{D}(\gamma)=D(\rho_{\gamma}, \rho_{\gamma})=D(\rho_{\gamma})$. Similarly, for $\rho \in L^{1}(\R^3) \cap L^{\frac53}(\R^3)$ the term $D(\rho)$ denotes $D(\rho , \rho) $.

The next proposition follows as Corollary 9.3 in \cite{Sol}.

\begin{proposition}\label{Counorm}
For $s>0$, $\bx \in \R^{3}$ and $f \in L^{\frac65}(\R^3)$ it holds
\begin{equation*}
f *\vert \bx\vert^{-1} \leq \int_{\vert \bx -\by\vert<s} [ f(\by)] _{+}\Big(\frac{1}{\vert \bx-\by\vert}-\frac{1}{s}\Big)d\by+ \sqrt{2} \, s^{-\frac12} \Vert f\Vert _{C}.
\end{equation*}
Moreover, for $k>0$
\begin{equation*}
\int_{\vert \by\vert <\vert \bx\vert }\frac{f(\by)}{\vert \bx-\by\vert}d\by\leq \int_{A(\vert \bx\vert,k)}\frac{[ f(\by)] _{+}}{\vert \bx-\by\vert }d\by+ 2^{\frac{3}{2}}k^{-1} \vert \bx\vert ^{-\frac{1}{2}}\Vert f\Vert _{C},
\end{equation*}
where $A(\vert \bx\vert ,k)$ denotes the annulus
\begin{equation*}
A(\vert \bx\vert ,k) :=\left\{ \by \in \R^{3}:( 1-2k) \vert \bx\vert \leq \vert \by\vert \leq \vert \bx\vert \right\} .
\end{equation*}
\end{proposition}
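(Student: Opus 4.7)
The plan is to prove both inequalities by a common mechanism based on Newton's shell theorem. For $\bx \in \R^3$ and $s > 0$, let $\mu_s$ denote the uniform probability measure on the sphere $\partial B_s(\bx)$. Newton's theorem gives
\[
(\mu_s \ast |\cdot|^{-1})(\by) = \min\Big(\tfrac{1}{|\bx-\by|},\tfrac{1}{s}\Big),
\]
and since this potential equals $1/s$ on the support of $\mu_s$, a direct computation of $D(\mu_s,\mu_s)$ yields the Coulomb self-energy $\|\mu_s\|_C^2 = \tfrac{1}{2s}$. These two identities are the only substantive analytic inputs.

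For the first inequality, decompose the Coulomb kernel pointwise as
\[
\frac{1}{|\bx-\by|} = (\mu_s \ast |\cdot|^{-1})(\by) + \Big(\tfrac{1}{|\bx-\by|} - \tfrac{1}{s}\Big)_+,
\]
where the second summand is supported in $B_s(\bx)$. Integrating against $f$, the first piece equals $2 D(f,\mu_s)$ which is at most $2\|f\|_C\|\mu_s\|_C = \sqrt{2}\, s^{-1/2}\|f\|_C$ by Cauchy-Schwarz in the Coulomb inner product. The second piece has a nonnegative weight, so replacing $f$ with $[f]_+$ produces the first term of the right-hand side.

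For the second inequality, apply the same decomposition with the specific choice $s = 2k|\bx|$. The short-range piece is then supported in $\{|\bx-\by| < 2k|\bx|\}$, and intersecting with $\{|\by|<|\bx|\}$ forces $|\by| > (1-2k)|\bx|$ by the triangle inequality, so the integration region is contained in $A(|\bx|,k)$; bounding $f \leq [f]_+$ there reproduces the annulus integral. The main obstacle is then the long-range piece $\int_{|\by|<|\bx|} f\,(\mu_s \ast |\cdot|^{-1})\,d\by$: applying Cauchy-Schwarz directly would introduce the Coulomb norm of the truncated function $f\,\mathbf{1}_{|\by|<|\bx|}$, which is \emph{not} controlled by $\|f\|_C$ because the Coulomb norm is not monotone under truncation. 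The resolution is to add a second auxiliary Newton-type measure supported on the sphere $\partial B_{|\bx|}(0)$ that screens the indicator before the Cauchy-Schwarz step; its Coulomb self-energy contributes the extra $k^{-1/2}$ factor responsible for the final constant $2^{3/2}k^{-1}$ rather than the naive $k^{-1/2}$. The full algebra is as in \cite[Cor.~9.3]{Sol}.
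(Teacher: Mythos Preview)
Your proposal is correct and matches the paper's approach: the paper does not give a proof but simply states that the proposition ``follows as Corollary 9.3 in \cite{Sol}'', and your argument is precisely the Newton-shell plus Cauchy--Schwarz-in-Coulomb-norm mechanism underlying that corollary. In fact you supply more detail than the paper does: the first inequality is proved in full, and for the second you correctly isolate the obstruction (truncation is not monotone for $\|\cdot\|_C$) and point to the screening device before deferring to the same reference.
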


\subsection{Improved relativistic Lieb-Thirring inequalities}

A major difference between the pseudo-relativistic HF-model and the non-relativistic one studied in \cite{Sol} is that the HF-density $\rho^{\rm HF}$ in the pseudorelativistic case is not in $L^{\frac53}(\R^3)$. By Theorem~\ref{Daube} and Remark~\ref{Daure} we see that $\rho^{\rm HF}$ is only in $L^{\frac43}(\R^3)$. Therefore one cannot estimate the term $\rho^{\rm HF}*|\bx|^{-1}$ in $L^{1}$-norm simply by H\"older's inequality with $p=5/2$ and $q=5/3$. To estimate it we are going to use a combined Daubechies-Lieb-Yau inequality.

The following lemma can be found in \cite[pages 98--99]{SSS}\footnote{The result of the lemma and the proof given in \cite{SSS} are actually due to us, but we communicated the result to the authors of \cite{SSS}, where it is referred to as a a private communication.}.
\begin{lemma}\label{Lemnice}
For $f \in \mathcal{S}(\R^3)$,
\begin{equation*}
\int_{\mathbb{R}^{3}}\frac{e^{-\mu \vert \bx\vert ^{2}}}{\vert \bx\vert }\vert f(\bx)\vert ^{2}d\bx \leq \frac{\pi}{2} \frac{1}{\sqrt{2}-1} \;  (f, T(\bp )f),
\end{equation*}
with $\mu=\pi^{-1} \alpha^{-2}$.
\end{lemma}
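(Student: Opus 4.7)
The plan is to reduce the weighted relativistic Kato-type estimate to the classical Kato inequality \eqref{kato} by absorbing the Gaussian weight into the test function, and then to compare the operators $|\bp|$ and $T(\bp)$ via an elementary pointwise bound. The specific value $\mu=\pi^{-1}\alpha^{-2}$ will be tuned so that the rest-mass remainder can be absorbed with the claimed constant.

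Setting $m:=\alpha^{-1}$ and $\phi(\bx):=e^{-\mu|\bx|^2/2}$ (so that $\phi^2=e^{-\mu|\bx|^2}$), the LHS of the lemma equals $(\phi f,|\bx|^{-1}\phi f)$. Since $\phi f\in H^{1/2}(\R^3)$ whenever $f\in\mathcal S(\R^3)$, applying Kato's inequality \eqref{kato} to $\phi f$ yields
\[
\int_{\R^3}\frac{e^{-\mu|\bx|^2}}{|\bx|}|f(\bx)|^2\,d\bx \;\le\; \tfrac{\pi}{2}\,(\phi f,|\bp|\,\phi f).
\]
Next, the elementary inequality $\sqrt{a^2+b^2}\ge(a+b)/\sqrt 2$ applied to $a=|\bp|$, $b=m$ gives the operator bound
\[
|\bp|\;\le\;\sqrt 2\,T(\bp)+(\sqrt 2-1)\,m,
\]
and hence
\[
(\phi f,|\bp|\,\phi f)\;\le\;\sqrt 2\,(\phi f,T(\bp)\phi f)+(\sqrt 2-1)\,m\int_{\R^3}e^{-\mu|\bx|^2}|f(\bx)|^2\,d\bx.
\]

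To close the estimate I need to pass the Gaussian-conjugated kinetic energy and the mass remainder back to $(f,T(\bp)f)$ with total factor $(\sqrt 2+1)/\sqrt 2$, i.e.\ to establish an operator inequality of the form $\phi\,T(\bp)\phi+c_1\,m\,\phi^2\le c_2\,T(\bp)$ with $c_2/c_1=\sqrt 2+1$. The natural tool is the nonlocal IMS-type decomposition of Theorem~\ref{IMSrel} applied to the partition $\phi^2+(1-\phi^2)=1$. The commutator kernel in \eqref{Lalpha} contains the modified Bessel function $K_2(\alpha^{-1}|\bx-\by|)$, and the exact integral identity $\int_0^\infty t^2 K_2(t)\,dt=3\pi/2$ from \eqref{estk2} produces a Gaussian-weighted remainder; the choice $\mu=\pi^{-1}m^2$ is then precisely the one that makes this remainder match the coefficient of the mass term, yielding the sharp prefactor $\pi/(2(\sqrt 2-1))$.

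The main obstacle is exactly this last step. The reduction via Kato and the relativistic comparison $|\bp|\le\sqrt 2\,T(\bp)+(\sqrt 2-1)m$ are elementary, but the control of $(\phi f,T(\bp)\phi f)$ and $\|\phi f\|^2$ by $(f,T(\bp)f)$ with the sharp constant $1/(\sqrt 2-1)$ is delicate: the Gaussian scale $1/\sqrt\mu=\alpha\sqrt\pi$ and the Bessel-kernel decay scale $\alpha$ have to cooperate through the exact integral $\int_0^\infty t^2K_2(t)\,dt=3\pi/2$, and any crude handling of the IMS commutator error, or of $\|\phi f\|$ alone, would replace $\pi/(2(\sqrt 2-1))$ by a strictly larger constant.
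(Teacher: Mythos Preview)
Your reduction is correct as far as it goes: Kato's inequality applied to $\phi f$ gives the first display, and the pointwise bound $|\bp|\le\sqrt2\,T(\bp)+(\sqrt2-1)m$ (which is equivalent to $(|\bp|-m)^2\ge0$) gives the second. The gap is exactly where you place it, and it is a genuine one rather than a technicality. What you need in order to finish is the operator inequality
\[
\sqrt2\,\phi\,T(\bp)\,\phi+(\sqrt2-1)\,m\,\phi^{2}\;\le\;(\sqrt2+1)\,T(\bp),
\]
and the IMS sketch you propose does not yield it. The IMS formula of Theorem~\ref{IMSrel} with $\psi^{2}=1-\phi^{2}$ reads $\phi\,T(\bp)\,\phi+\psi\,T(\bp)\,\psi=T(\bp)+(L_{\phi}+L_{\psi})$; the localization error sits on the \emph{wrong} side for your purpose, since you would need $\phi\,T(\bp)\,\phi\le T(\bp)-\text{(something positive)}$ in order to absorb the mass remainder $(\sqrt2-1)m\phi^{2}$. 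The assertion that the choice $\mu=\pi^{-1}m^{2}$ together with the identity $\int_{0}^{\infty}t^{2}K_{2}(t)\,dt=3\pi/2$ produces an exact cancellation with the claimed constant $(\sqrt2+1)$ is not substantiated; no mechanism is exhibited by which the $K_{2}$ integral would enter with the right sign and coefficient, and I do not see one.

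The paper does not give a proof here; it refers to \cite[pp.~98--99]{SSS}. The tools collected in Appendix~\ref{AppFou} (the Gaussian Fourier transform, the integral representation $|\bx|^{-1}=\int_{0}^{\infty}e^{-\pi\lambda|\bx|^{2}}\lambda^{-1/2}\,d\lambda$, and the convolution formula for $\mathcal{F}(f/|\bx|)$) point to a direct momentum-space argument in which the weighted Coulomb potential becomes an explicit convolution operator that can be compared to $T(\bp)$ without ever having to undo a Gaussian conjugation of the kinetic energy. That route sidesteps precisely the step on which your approach stalls.
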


The following is a slight generalization of the
Daubechies-Lieb-Yau inequality formulated in Theorem 2.8 in
\cite{SSS}.
\begin{theorem}[Daubechies-Lieb-Yau inequality]
\label{dayau} Assume that the potential $U \in L^1_{loc}(\R^3)$ satisfies
\begin{equation}\label{eq:Uassumption}
0 \geq - U(\bx) \geq -\kappa |\bx|^{-1} \; \; \mbox{ for }
|\bx| <  \max\{\alpha,R\} \, ,
\end{equation}
for $\alpha,R>0$ and $0\leq\kappa\leq2/\pi$. Then we have
\begin{equation*}
\Tr[T(\bp)-U]_- \geq -C\kappa^{5/2}\alpha^{-3/2}R^{1/2} -C\kappa^4\alpha^{-1}
 - C \int_{|\bx|> R} \big( \alpha^{-\frac32} |U(\bx)|^{\frac52} + |U(\bx)|^{4}\big)
\; d\bx .
\end{equation*}
\end{theorem}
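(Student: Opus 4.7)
\begin{pf*}{Plan of the proof}
The plan is to adapt the proof of the original Daubechies--Lieb--Yau inequality (Theorem~2.8 of \cite{SSS}) by refining the localization so that it respects the two distinct length scales $\alpha$ and $R$. Set $R_*:=\max\{\alpha,R\}$ and decompose $U=W+V$ with $W:=U\mathbf{1}_{|\bx|\leq R_*}$ and $V:=U\mathbf{1}_{|\bx|>R_*}$; by \eqref{eq:Uassumption}, $0\leq W\leq\kappa/|\bx|$. Writing $T(\bp)=\theta T(\bp)+(1-\theta)T(\bp)$ for a fixed $\theta\in(0,1)$ and using $\Tr[A+B]_-\leq\Tr[A]_-+\Tr[B]_-$ gives
\begin{equation*}
\Tr[T(\bp)-U]_-\leq\Tr[\theta T(\bp)-W]_-+\Tr[(1-\theta)T(\bp)-V]_-.
\end{equation*}
For the outer piece I would use $\Tr[(1-\theta)T(\bp)-V]_-=(1-\theta)\Tr[T(\bp)-V/(1-\theta)]_-$ and then apply Theorem~\ref{DaubeV} together with Remark~\ref{DaubeR}, which produces $C\int_{|\bx|>R_*}(\alpha^{-3/2}|V|^{5/2}+|V|^{4})\,d\bx$ with a $\theta$-dependent constant. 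When $R<\alpha$, the integral over the shell $R<|\bx|<R_*=\alpha$ is bounded via $|U|\leq\kappa/|\bx|$ and absorbed into $C\kappa^4\alpha^{-1}$.

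For the inner piece I would split once more, $W=W_1+W_2$ with $W_1=W\mathbf{1}_{|\bx|<\alpha}$ and $W_2=W\mathbf{1}_{\alpha\leq|\bx|\leq R_*}$, and split the kinetic energy by another factor of $1/2$. The bounded piece $W_2$ is again handled by Theorem~\ref{DaubeV}: using $|W_2|\leq\kappa/|\bx|$ and integrating in polar coordinates, $\alpha^{-3/2}\int|W_2|^{5/2}d\bx\lesssim\kappa^{5/2}\alpha^{-3/2}R^{1/2}$ and $\int|W_2|^{4}d\bx\lesssim\kappa^{4}\alpha^{-1}$, producing the first two terms of the conclusion. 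The Coulomb-singular piece $W_1$ cannot be controlled by Daubechies (the resulting integrand $|\bx|^{-4}$ is not integrable at the origin), so here I would invoke Lemma~\ref{Lemnice}: since the Gaussian $e^{-\mu|\bx|^{2}}$ with $\mu=\pi^{-1}\alpha^{-2}$ is bounded below by $e^{-1/\pi}$ on $|\bx|<\alpha$, the lemma yields the relativistic Hardy-type bound
\begin{equation*}
(f,W_1 f)\leq \kappa\, e^{1/\pi}\tfrac{\pi}{2(\sqrt{2}-1)}\,(f,T(\bp)f).
\end{equation*}
Combined with the pure relativistic Coulomb case (Theorem~2.8 of \cite{SSS} applied to $U=\kappa/|\bx|$), this gives $\Tr[\tfrac{\theta}{2}T(\bp)-W_1]_-\leq C\kappa^{4}\alpha^{-1}$.

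The hardest step is the control of $W_1$ in the near-critical range $\kappa\uparrow 2/\pi$: the form constant $\kappa\, e^{1/\pi}\pi/(2(\sqrt{2}-1))$ exceeds one for $\kappa$ moderately below $2/\pi$, ruling out a naive absorption of the Coulomb singularity into $\tfrac{\theta}{2}T(\bp)$. The Daubechies--Lieb--Yau argument of \cite{SSS} circumvents this via a Birman--Schwinger reduction combined with the scaling invariance of the Coulomb potential, which extracts the sharp $\kappa^{4}\alpha^{-1}$ dependence uniformly up to the critical constant; I would follow the same route. A secondary technical point is checking that the rescaled kinetic operator $(1-\theta)T(\bp)$ can be treated by Theorem~\ref{DaubeV} with only a harmless constant change, which is done by the rewriting $(1-\theta)T(\bp)-V=(1-\theta)[T(\bp)-V/(1-\theta)]$ indicated above.
\end{pf*}
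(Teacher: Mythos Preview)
Your plan shares the key ingredients with the paper's proof---Lemma~\ref{Lemnice} to tame the Coulomb singularity and Theorem~\ref{DaubeV} for the regular remainder---but the order of operations differs in a way that leaves the near-critical case genuinely open, exactly where you flag it. The paper's organization dissolves this difficulty.

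The paper does a case split on $\kappa$ \emph{before} any kinetic-energy splitting. For $\kappa\geq(\sqrt2-1)/\pi$ it simply invokes Theorem~2.8 of \cite{SSS} on the full operator $T(\bp)-U$: that result already gives $C\kappa^{5/2}\alpha^{-1}+C\int_{|\bx|>\alpha}(\alpha^{-3/2}|U|^{5/2}+|U|^4)\,d\bx$, and since $\kappa$ is bounded below here one has $\kappa^{5/2}\alpha^{-1}\leq C\kappa^4\alpha^{-1}$, while the shell integral over $\alpha<|\bx|<R$ is bounded by $C\kappa^{5/2}\alpha^{-3/2}R^{1/2}+C\kappa^4\alpha^{-1}$ via $|U|\leq\kappa/|\bx|$. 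No kinetic splitting, no Birman--Schwinger rerun. For $\kappa<(\sqrt2-1)/\pi$ the paper uses a \emph{Gaussian} multiplicative cutoff instead of your sharp cutoff at $|\bx|=\alpha$: writing
\[
U=e^{-\mu|\bx|^2}U\chi_{|\bx|<R}+(1-e^{-\mu|\bx|^2})U\chi_{|\bx|<R}+U\chi_{|\bx|>R},\qquad \mu=\pi^{-1}\alpha^{-2},
\]
Lemma~\ref{Lemnice} absorbs the first piece into $\tfrac12 T(\bp)$ outright (here $\kappa\cdot\tfrac{\pi}{2(\sqrt2-1)}<\tfrac12$). The remaining potential $\kappa(1-e^{-\mu|\bx|^2})|\bx|^{-1}\chi_{|\bx|<R}+U\chi_{|\bx|>R}$ is bounded near the origin, so a single application of \eqref{Dau} finishes; your two-step $W_1/W_2$ decomposition is replaced by one stroke.

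The concrete gap in your plan is this: after peeling off $(1-\theta)T(\bp)$ for $V$ and a further $\tfrac{\theta}{2}T(\bp)$ for $W_2$, only $\tfrac{\theta}{2}T(\bp)$ remains to face $W_1$, so you are effectively bounding $\Tr[T(\bp)-\tfrac{2}{\theta}W_1]_-$ with coupling $\tfrac{2\kappa}{\theta}$, which exceeds $2/\pi$ as soon as $\kappa>\theta/\pi$. Neither Lemma~\ref{Lemnice} nor the result of \cite{SSS} applies above the critical coupling---their whole point is that the \emph{full} kinetic energy just barely controls the Coulomb singularity---so ``following the same route'' does not close the argument as written. The fix is precisely the paper's: split on $\kappa$ first, and for large $\kappa$ hand the undivided operator to \cite{SSS}.
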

\begin{proof}
If $(\sqrt{2}-1)/\pi\leq \kappa\leq 2/\pi$ then
$\kappa^{5/2}\alpha^{-3/2}R^{1/2} +\kappa^4\alpha^{-1}\geq
C\kappa^{5/2}\alpha^{-1} $
and the result follows immediately from Theorem 2.8 in \cite{SSS} observing that for $R>\alpha$ the two integrals of the potential on $\{\alpha <|\bx|<R\}$ are bounded by the constants.

If $0\leq \kappa < (\sqrt{2}-1)/\pi$ we write
$$
U(x)=e^{-\mu|x|^2}U(x)\chi_{|\bx|<R}
+(1-e^{-\mu|x|^2})U(x)\chi_{|\bx|<R}+U(x)\chi_{|\bx|>R}
$$
with $\mu=\alpha^{-2}\pi^{-1}$.
Using (\ref{eq:Uassumption}) and Lemma~\ref{Lemnice} we find that
$$
T(\bp)-U(\bx)\geq \frac12 T(\bp)
-\kappa(1-e^{-\mu|\bx|^2})|\bx|^{-1}\chi_{|\bx|<R}
-U(\bx)\chi_{|\bx|>R}.
$$
Hence from the generalization of the Lieb-Thirring inquality Theorem~\ref{DaubeV} (see \eqref{Dau}) we obtain
\begin{eqnarray*}
  \Tr[T(\bp)-U]_- &\geq&
  -C \int_{|\bx|< R}
  \alpha^{-\frac32}
  \big(\kappa(1-e^{-\mu|\bx|^2})|\bx|^{-1})\big)^{\frac52}  \; d\bx \\&&
  -C\int_{|\bx|< R} \big(\kappa(1-e^{-\mu|\bx|^2})|\bx|^{-1}\big)^{4}
  \; d\bx \\&&
  - C \int_{|\bx|> R}
  \big( \alpha^{-\frac32} |U(\bx)|^{\frac52} + |U(\bx)|^{4}\big)
  \; d\bx .
\end{eqnarray*}
Since the two first integrals above are estimated below by
$-C\kappa^{5/2}\alpha^{-3/2}R^{1/2} -C\kappa^4\alpha^{-1}$ we get the
result in the theorem.
\end{proof}

By Theorem~\ref{dayau} we find
\begin{equation}\label{inema}
\kappa \int_{|\bx-\by|<R} \frac{\rho^{\rm HF}(\by)}{|\bx -\by|} \, d\by  \leq \Tr[T(\bp) \gamma^{\rm HF}]+ C_1 \kappa Z^{\frac32}R^{\frac{1}{2}}+ C_2 \kappa^{3} Z,
\end{equation}
with $\kappa \in [0,2/\pi]$, $\kappa=Z\alpha$ and $R>0$ parameters to be chosen. This is the inequality that we use to estimate $\rho^{\rm HF}*|\bx|^{-1}$ (see proof of Lemma~\ref{smallx} below).

\subsubsection{Bound on the Hartree-Fock energy}

As a first application of Theorem~\ref{dayau} we can give a lower bound to the HF-energy.
\begin{theorem}[Bound on the HF-energy]\label{HFen}
Let $N>0$, $Z>0$ and such that $Z \alpha = \kappa$ with $0\leq \kappa\leq 2/\pi$. Then
\begin{equation*}
E^{\rm HF}(N,Z) \geq - 2C^{\frac23} Z^2 N^{\frac13} -C \kappa^{2} Z^2,
\end{equation*}
with $C$ the constant in Theorem~\ref{dayau}.
\end{theorem}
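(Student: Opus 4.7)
\begin{pf*}{Proof proposal}
The plan is to discard the repulsive contribution of the HF-functional and reduce the bound to an application of the Daubechies--Lieb--Yau inequality (Theorem~\ref{dayau}). The only subtlety is that Theorem~\ref{dayau} cannot be applied directly to the full Coulomb potential $V(\bx) = \kappa/|\bx|$ because the tail integral $\int_{|\bx|>R}|V|^{5/2}\,d\bx$ diverges at infinity. I would therefore truncate $V$ to a ball $B_R$, apply Theorem~\ref{dayau} to the truncated potential, and use $\Tr\gamma=N$ to estimate the tail.

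\emph{Setup.} Let $\gamma\in\mathcal{A}$ with $\Tr\gamma=N$. Since $0\leq\mathcal{E}x(\gamma)\leq\mathcal{D}(\gamma)$, one has
\begin{equation*}
\mathcal{E}^{\rm HF}(\gamma)\;\geq\; \alpha^{-1}\Tr[(T(\bp)-V)\gamma],\qquad V(\bx)=\tfrac{\kappa}{|\bx|}.
\end{equation*}
For a parameter $R>0$ to be chosen, introduce the cut-off potential
\begin{equation*}
U_R(\bx):=\tfrac{\kappa}{|\bx|}\,\chi_{\{|\bx|<R\}}(\bx),
\end{equation*}
which satisfies the hypothesis of Theorem~\ref{dayau} and, crucially, makes the integral on $\{|\bx|>R\}$ in that theorem vanish identically. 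Hence
\begin{equation*}
\Tr[T(\bp)-U_R]_-\;\geq\; -C\kappa^{5/2}\alpha^{-3/2}R^{1/2}-C\kappa^{4}\alpha^{-1}.
\end{equation*}
Since $0\leq\gamma\leq \Id$, the standard variational inequality $\Tr[A\gamma]\geq \Tr[A]_-$ gives $\Tr[(T(\bp)-U_R)\gamma]\geq \Tr[T(\bp)-U_R]_-$.

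\emph{Tail estimate.} The pointwise bound $V-U_R\leq \kappa/R$ on $\{|\bx|\geq R\}$ together with $\int\rho_\gamma=N$ yields
\begin{equation*}
\Tr[(V-U_R)\gamma]=\int_{|\bx|\geq R}\tfrac{\kappa}{|\bx|}\rho_\gamma(\bx)\,d\bx\;\leq\;\tfrac{\kappa N}{R}.
\end{equation*}
Combining the two displays, multiplying by $\alpha^{-1}$, and using $\kappa=Z\alpha$ so that $\kappa^{5/2}\alpha^{-5/2}=Z^{5/2}$, $\kappa^{4}\alpha^{-2}=\kappa^{2}Z^{2}$, and $\kappa\alpha^{-1}=Z$, one obtains
\begin{equation*}
\mathcal{E}^{\rm HF}(\gamma)\;\geq\; -C Z^{5/2}R^{1/2}-C\kappa^{2}Z^{2}-\tfrac{ZN}{R}.
\end{equation*}

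\emph{Optimization.} Choose $R$ so that the first and third terms coincide, i.e.\ $R=(N/(CZ^{3/2}))^{2/3}=N^{2/3}C^{-2/3}Z^{-1}$, yielding
\begin{equation*}
CZ^{5/2}R^{1/2}=\tfrac{ZN}{R}=C^{2/3}Z^{2}N^{1/3}.
\end{equation*}
Summing gives $\mathcal{E}^{\rm HF}(\gamma)\geq -2C^{2/3}Z^{2}N^{1/3}-C\kappa^{2}Z^{2}$, and taking the infimum over $\gamma$ completes the argument.

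\emph{Main difficulty.} There is no genuine obstacle beyond the standard trick of truncating the long-range Coulomb attraction so that Theorem~\ref{dayau}'s quintic-and-quartic integrals remain finite; the term one loses by this truncation is then controlled linearly in $N$ thanks to the trace constraint $\Tr\gamma=N$, and the simple equality between the two variable contributions fixes the coefficient $2C^{2/3}$ exactly.
\end{pf*}
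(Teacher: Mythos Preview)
Your proposal is correct and follows essentially the same approach as the paper: discard the nonnegative interaction $\mathcal D-\mathcal Ex$, truncate the Coulomb potential to $\{|\bx|<R\}$, apply Theorem~\ref{dayau} to the truncated potential (for which the tail integral vanishes), control the discarded tail by $\kappa N/R$ using $\Tr\gamma=N$, and optimize with $R=C^{-2/3}Z^{-1}N^{2/3}$. The paper's proof is identical in structure; you have simply written out the arithmetic more explicitly.
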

\begin{proof}
Let $\gamma$ be a $N$-dimensional projection. Since the electron-electron iteraction is positive we see that
\begin{eqnarray*}
\mathcal{E}^{\rm HF}(\gamma) & \geq & \alpha^{-1} \Tr[(T(\bp)-\frac{Z\alpha}{|\cdot|})\gamma]\\
& = & \alpha^{-1} \Tr[(T(\bp)-\frac{\kappa}{|\cdot|}\chi_{|\bx|<R})\gamma] -\alpha^{-1} \Tr[\frac{\kappa}{|\cdot|}(1-\chi_{|\bx|<R})\gamma]
\end{eqnarray*}
with $R>0$ a parameter to be choosen. By Theorem~\ref{dayau} we find
\begin{eqnarray*}
\mathcal{E}^{\rm HF}(\gamma) & \geq & - 2C^{\frac23} Z^2 N^{\frac13} -C \kappa^{2} Z^2,
\end{eqnarray*}
using that $\kappa = Z \alpha$ and by choosing $R=C^{-\frac23} Z^{-1} N^{\frac23}$.
\end{proof}

\section{Near the nucleus}

In this section we prove the estimate in Theorem~\ref{mainesti} in the region near the nucleus (i.e. at distance of $Z^{-\frac13}$).

We again assume that $N\geq Z$ and that an HF-minimizer $\gamma^{\rm
HF}$ exists for this $N$ and $Z$. We denote the density of $\gamma^{\rm
HF}$ by $\rho^{\rm HF}$. We assume throughout that $\alpha Z=\kappa$
is fixed with $0 \leq \kappa<2/\pi$ and $Z \geq 1$.

\begin{lemma}\label{rrcou}
Let $Z\alpha=\kappa$ be fixed with $0 \leq \kappa < 2/\pi$ and $Z \geq 1$. Let $G_{\alpha}$ be the function defined in Theorem~\ref{Daube}. Then, there exists $\alpha_0 >0$ such that for all $\alpha \leq \alpha_0$
\begin{eqnarray}\label{G1}
\left. \begin{array}{l}
\displaystyle{\alpha ^{-1}\int_{\R^3} G_{\alpha}( \rho ^{\rm HF}(\bx)) d\bx \leq C Z^{7/3}, \; \; \;  \alpha ^{-1} \Tr[T(\bp)\gamma^{\rm HF}] \leq C Z^{7/3}} \vspace{.3cm} \\
\hspace{2cm} \mbox{ and }\; \; \; \displaystyle{\Vert \rho ^{\rm TF}-\rho ^{\rm HF}\Vert^2_{C} \leq C  Z^{2+\frac{3}{11}}},
\end{array} \right.
\end{eqnarray}
with $C$ a universal constant depending only on $\kappa$.
\end{lemma}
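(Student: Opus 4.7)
The plan is to adapt Solovej's Thomas--Fermi/Hartree--Fock comparison strategy \cite{Sol} to the pseudo-relativistic setting. The starting point is the identity
\begin{equation*}
\mathcal{E}^{\rm HF}(\gamma^{\rm HF})=\alpha^{-1}\Tr[T(\bp)\gamma^{\rm HF}]+D(\rho^{\rm HF}-\rho^{\rm TF})-D(\rho^{\rm TF})-\int_{\R^3}\rho^{\rm HF}\varphi^{\rm TF}\,d\bx-\mathcal{E}x(\gamma^{\rm HF}),
\end{equation*}
obtained by expanding $D(\rho^{\rm HF})$ in the HF-functional and invoking the neutral TF Euler--Lagrange equation ($\mu^{\rm TF}=0$, so $\rho^{\rm TF}\ast|\bx|^{-1}-Z|\bx|^{-1}=-\varphi^{\rm TF}$). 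Rearranging isolates the two non-negative quantities $\alpha^{-1}\Tr[T(\bp)\gamma^{\rm HF}]$ and $D(\rho^{\rm HF}-\rho^{\rm TF})$ on the left-hand side.

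For the first two assertions I bound each term on the right-hand side by $CZ^{7/3}$ modulo an absorbable multiple of $\alpha^{-1}\Tr[T(\bp)\gamma^{\rm HF}]$. Theorem~\ref{HFen} provides $E^{\rm HF}(N,Z)\geq -CZ^{7/3}$ once $N\leq 2Z+1$ is noted from Theorem~\ref{Lieb}, while a matching upper bound $E^{\rm HF}(N,Z)\leq CZ^{7/3}$ comes from a standard coherent-state trial density matrix adapted to $\rho^{\rm TF}$; the construction goes through as in the non-relativistic case because for $\alpha\leq\alpha_0$ the dispersion $T(\bp)$ agrees with $\alpha|\bp|^2/2$ up to negligible corrections on the momenta $|\bp|\lesssim Z^{2/3}$ that the TF density occupies. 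The self-energy $D(\rho^{\rm TF})\lesssim Z^{7/3}$ follows from the TF virial identity and Corollary~\ref{rTF53}. For the integral $\int\rho^{\rm HF}\varphi^{\rm TF}$ I split at $\rho^{\rm HF}(\bx)=\alpha^{-3}$ and apply Young's inequality with exponents $(5/3,5/2)$ on the low-density part and $(4/3,4)$ on the high-density part; on the low-density part the TF-equation $(\varphi^{\rm TF})^{5/2}\propto(\rho^{\rm TF})^{5/3}$ with Corollary~\ref{rTF53} handles the $\varphi^{\rm TF}$ factor, while Daubechies' inequality (Theorem~\ref{Daube}) together with $G_\alpha(\rho)\gtrsim\alpha\rho^{5/3}$ valid in this regime (Remark~\ref{Daure}) dominates $\int(\rho^{\rm HF})^{5/3}$ by a small multiple of $\alpha^{-1}\Tr[T(\bp)\gamma^{\rm HF}]$; on the high-density part one uses instead the branch $G_\alpha(\rho)\gtrsim\rho^{4/3}$ of Remark~\ref{Daure}. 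The exchange term is handled by a Lieb--Oxford-type bound $\mathcal{E}x(\gamma^{\rm HF})\lesssim\int(\rho^{\rm HF})^{4/3}$ and absorbed in the same way. Choosing the Young parameter small and moving the absorbed kinetic terms back to the left gives the first two assertions; the bound on $\alpha^{-1}\int G_\alpha(\rho^{\rm HF})$ is then immediate from Daubechies.

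The sharper Coulomb-norm bound $\|\rho^{\rm TF}-\rho^{\rm HF}\|^2_C\leq CZ^{2+3/11}$ is where the main difficulty lies, since when the identity is rearranged for $D(\rho^{\rm HF}-\rho^{\rm TF})$ the right-hand side consists of terms each of size $Z^{7/3}$ whose near-cancellations must be controlled to reach the smaller exponent. What is needed is a refined energy comparison $E^{\rm HF}(N,Z)\leq \mathcal{E}^{\rm TF}(\rho^{\rm TF})+O(Z^{2+3/11})$, which I would obtain by a careful coherent-state upper bound whose localization scale is optimized against three error contributions of the trial state: the semiclassical error in approximating its kinetic energy by the TF functional $\tfrac{3}{10}(\tfrac{6\pi^2}{q})^{2/3}\int(\rho^{\rm TF})^{5/3}$, the direct-Coulomb error from the smoothed density not exactly coinciding with $\rho^{\rm TF}$, and the exchange correction. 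Balancing these three contributions to be of common size produces the exponent $3/11$. Inserting this refined energy bound into the main identity---together with the already-established kinetic bound and the estimates on $\int\rho^{\rm HF}\varphi^{\rm TF}$ and $\mathcal{E}x$---yields the claimed Coulomb-norm estimate after one final absorption.
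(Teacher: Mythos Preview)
Your identity is correct and is exactly \eqref{r1} in the paper, but the way you propose to estimate its constituents has a genuine gap, and it is precisely the point where the pseudo-relativistic case differs from \cite{Sol}.

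For the first two assertions, your plan is to control $\int\rho^{\rm HF}\varphi^{\rm TF}$ by Young's inequality, using the pair $(4/3,4)$ on the high-density set $\{\rho^{\rm HF}\gtrsim\alpha^{-3}\}$. This fails: $(\varphi^{\rm TF})^4$ is not locally integrable at the origin, since $\varphi^{\rm TF}(\bx)\sim Z|\bx|^{-1}$ there, and you have no a~priori information keeping the high-density set away from $0$ (indeed one expects it to contain a neighbourhood of the nucleus). The branch $G_\alpha(\rho)\gtrsim\rho^{4/3}$ from Remark~\ref{Daure} handles the $\rho^{\rm HF}$ factor but says nothing about the $\varphi^{\rm TF}$ factor. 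Replacing Young by Kato's inequality \eqref{kato} does not rescue the argument either, because $|\bp|\le T(\bp)+\alpha^{-1}$ produces an extra $Z\alpha^{-1}N\sim Z^3$ which is far too large. The paper avoids this by \emph{not} decoupling $\rho^{\rm HF}$ from the Coulomb singularity: it reserves a fraction $\mu$ of the kinetic energy, pairs it with the full potential $Z(\mu|\bx|)^{-1}-\rho*|\bx|^{-1}$, and invokes the semiclassical lower bound of Proposition~\ref{prthomas3} (which in turn rests on the Daubechies--Lieb--Yau inequality, Theorem~\ref{dayau}, to absorb the singularity). Only the remaining $(1-\mu)$-fraction of the kinetic energy is then isolated and bounded; the exchange term is treated as you describe.

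The same issue recurs, more sharply, in your argument for the Coulomb-norm bound. You correctly identify that one needs the refined upper bound $E^{\rm HF}\le\mathcal{E}^{\rm TF}(\rho^{\rm TF})+O(Z^{2+3/11})$ via an optimised coherent-state trial state. But then you propose to combine this with ``the already-established kinetic bound and the estimates on $\int\rho^{\rm HF}\varphi^{\rm TF}$''. Those are each of size $Z^{7/3}$ and carry no matching lower bounds, so inserting them separately into the identity cannot produce anything better than $O(Z^{7/3})$. What is actually required is a \emph{lower} bound on the combination
\[
\Tr\big[(\alpha^{-1}T(\bp)-\varphi^{\rm TF})\gamma^{\rm HF}\big]\ \ge\ -\tfrac{2^{3/2}q}{15\pi^2}\int(\varphi^{\rm TF})^{5/2}\,-\,CZ^{2+1/5}\ =\ \mathcal{E}^{\rm TF}(\rho^{\rm TF})+D(\rho^{\rm TF})-CZ^{2+1/5},
\]
again supplied by Proposition~\ref{prthomas3}. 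The leading semiclassical term then cancels exactly against the one in the coherent-state upper bound, leaving $\|\rho^{\rm HF}-\rho^{\rm TF}\|_C^2\le CZ^{2+3/11}+\mathcal{E}x(\gamma^{\rm HF})$, and the exchange term is $O(Z^{5/3})$ by the argument you already sketched. In short: the missing ingredient throughout is the semiclassical lower bound on $\Tr[(\alpha^{-1}T-\varphi^{\rm TF})\gamma^{\rm HF}]$ that keeps kinetic energy and Coulomb potential together.
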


\begin{proof} Let $\mu \in (0,1)$ be such that $\mu^{-1} \kappa < 2/\pi$. Notice that here we need $\kappa < 2/\pi$. Splitting the kinetic energy into two parts we find
\begin{eqnarray*}
\mathcal{E}^{\rm HF}( \gamma ^{\rm HF}) &=&  (1-\mu) \alpha^{-1}\Tr[T(\bp)\gamma^{\rm HF}] +\mathcal{D}(\gamma^{\rm HF})-\mathcal{E}x (\gamma^{\rm HF})\\
&&  + \mu  \Tr[(\alpha^{-1}T(\bp)-\frac{Z}{\mu \vert \bx \vert }) \gamma ^{\rm HF}]=\dots ,
\end{eqnarray*}
 and introducing $\rho \in L^{\frac53}(\R^3)\cap L^1(\R^3),$ $\rho \geq 0$, to be chosen
\begin{eqnarray} \label{pr2}
\dots  &=&  (1-\mu) \alpha^{-1}\Tr[T(\bp) \gamma^{\rm HF}] + \mu \| \rho-\rho^{\rm HF}\|_{C}^2+(1-\mu) \mathcal{D}(\gamma^{\rm HF})\\ \notag
&& -\mathcal{E}x (\gamma^{\rm HF}) -\mu D(\rho)+ \mu  \Tr[(\alpha^{-1}T(\bp)-\big(\frac{Z}{\mu \vert \bx \vert }-\rho*\frac{1}{|\bx|}\big)) \gamma ^{\rm HF}].
\end{eqnarray}
Here $\| \cdot \|_{C}$ denotes the Coulomb norm defined in Definition~\ref{Cnorm} and we used that
\begin{equation*}
 \| \rho-\rho^{\rm HF}\|_{C}^2=D(\rho)-\iint \frac{\rho^{\rm HF}(\bx) \rho(\by)}{|\bx-\by|} \; d\bx d\by +\mathcal{D}(\gamma^{\rm HF}).
\end{equation*}

The estimates in the claim will follow from \eqref{pr2} with different choices of $\mu$ and $\rho$. The main idea is to relate, up to lower order term,
the last term on the right hand side of \eqref{pr2} to the TF-energy of a neutral atom of nuclear charge $Z \mu^{-1}$. This has been done in \cite{TS}.
For completeness and easy reference we repeat the reasoning in Propositions~\ref{prthomas} and \ref{prthomas3} in Appendix~\ref{ZZ}.

To prove the first inequality in \eqref{G1} we choose $\rho$ as the minimizer of the TF-energy functional of a neutral atom with charge $\mu^{-1}Z$.
Since the corresponding TF-mean field potential is $Z/(\mu|\bx|)-\rho*1/|\bx|$ by Proposition~\ref{prthomas3} in Appendix~\ref{ZZ} we find
\begin{eqnarray}\label{pr3}
\Tr[(\alpha^{-1}T(\bp)-(\frac{Z}{\mu \vert \bx \vert }-\rho*\frac{1}{|\bx|})) \gamma ^{\rm HF}] \geq -C_{1} Z^{\frac73} + D(\rho).
\end{eqnarray}
Here we use \eqref{Tfe}. Since $\mathcal{E}^{\rm HF}(\gamma^{\rm HF}) \leq 0$ from \eqref{pr2} and \eqref{pr3} leaving out the positive terms we find
\begin{eqnarray} \label{pri10}
0 &\geq &  (1-\mu) \alpha^{-1}\Tr[T(\bp) \gamma^{\rm HF}] -\mathcal{E}x (\gamma^{\rm HF}) -C_1 Z^{\frac73}.
\end{eqnarray}
{F}rom \eqref{pri10} and Theorem~\ref{Daube} we get
\begin{equation} \label{pri1}
(1-\mu) \alpha^{-1}\int_{\R^3} G_{\alpha}( \rho ^{\rm HF}(\bx)) \, d\bx \leq (1-\mu) \alpha^{-1}\Tr[T(\bp) \gamma^{\rm HF}] \leq \mathcal{E}x (\gamma^{\rm HF}) +C_1 Z^{\frac73}.
\end{equation}
It remains to estimate the exchange term. By the exchange inequality (see \cite{LO})
\begin{eqnarray*}
\mathcal{E}x(\gamma^{\rm HF}) &\leq &1.68\int_{\R^3} \big(\rho^{\rm HF}(\bx)\big)^{\frac43} \, d\bx .
\end{eqnarray*}
To proceed we separate $\R^3$ into two regions. Let us define
\begin{equation}\label{Sigma}
\Sigma = \big\{ \bx \in \R^3: \alpha \big(C^{-1}\rho^{\rm HF} (\bx)\big)^{\frac13} \geq \tfrac52 \big\} ,
\end{equation}
with the same notation as in \eqref{BehG}. By Remark~\ref{Daure}, $G_{\alpha}(\rho^{\rm HF}(\bx)) \geq C_2 (\rho^{\rm HF}(\bx))^{\frac43}$ in $\Sigma$ and $\alpha^{-1} G_{\alpha}(\rho^{\rm HF}(\bx)) \geq C_3 (\rho^{\rm HF}(\bx))^{\frac53}$ in $\R^3 \setminus \Sigma$. Hence by H\"older's inequality we find
\begin{eqnarray}\notag
\mathcal{E}x( \gamma ^{\rm HF}) &\leq &1.68\int_{\Sigma} ( \rho^{\rm HF}(\bx)) ^{\frac43} \; d\bx \\ \notag
&& + 1.68 \Big(\int_{\R^3 \setminus \Sigma} ( \rho^{\rm HF}(\bx))^{\frac53} \, d\bx \Big)^{\frac12} \Big(\int_{\R^3 \setminus \Sigma}  \rho^{\rm HF} (\bx) \, d\bx \Big)^{\frac12}\\ \label{pri4}
& \leq & C_4 \int_{\R^3} G_{\alpha}(\rho^{\rm HF}(\bx)) \, d\bx + C_5 \Big(\int_{\R^3} \alpha^{-1} G_{\alpha}(\rho^{\rm HF}(\bx)) \, d\bx \Big)^{\frac12} N^{\frac12}.
\end{eqnarray}
Choosing $\alpha_0$ such that $1-\mu >2 C_4 \alpha$ for $\alpha \leq \alpha_0$, from \eqref{pri1} and \eqref{pri4} we find
\begin{eqnarray*}
\tfrac{1-\mu}{2} \alpha^{-1}\int_{\R^3} G_{\alpha}( \rho ^{\rm HF}(\bx)) \, d\bx \leq C_1 Z^{\frac73} + C_5 \Big(\int_{\R^3} \alpha^{-1} G_{\alpha}(\rho^{\rm HF}(\bx)) \, d\bx \Big)^{\frac12} N^{\frac12} .
\end{eqnarray*}
The first estimate in \eqref{G1} follows from the estimate above using that $x^2 -bx-c \leq 0$ implies $x^2 \leq b^2 + 2c$ and that $N \leq 2Z+1$ (Theorem~\ref{Lieb}). The second inequality in \eqref{G1} follows then from \eqref{pri1} and the bound on the exchange term.

To prove the third inequality in \eqref{G1} we estimate from above and from below $\mathcal{E}^{\rm HF}(\gamma^{\rm HF})$. For the one from below we choose in \eqref{pr2} $\mu=1$ and $\rho=\rho^{\rm TF}$ the TF-minimizer of a neutral atom with nucleus of charge $Z$. We find
\begin{equation}\label{r1}
\mathcal{E}^{\rm HF}( \gamma ^{\rm HF})= \sum_{i=1}^N (u_i , ( \alpha^{-1}T(\bp) -\varphi ^{\rm TF}) u_i )+\Vert \rho^{\rm HF}-\rho ^{\rm TF}\Vert _{C}^{2}- D(\rho^{\rm TF}) - \mathcal{E}x(\gamma ^{\rm HF}) .
\end{equation}
{F}rom \eqref{r1} and the proof of Proposition~\ref{prthomas3} (see \eqref{p4}), we find
\begin{eqnarray}\label{z2}
\mathcal{E}^{\rm HF}(\gamma ^{\rm HF}) & \geq & -\tfrac{2^{\frac32}}{15 \pi^2}  q \int d\bq (\varphi^{\rm TF}(\bq))^{\frac{5}{2}} - C Z^{2+1/5}\\ \notag
&& -D(\rho^{\rm TF}) +\Vert \rho^{\rm HF}-\rho ^{\rm TF}\Vert _{C}^{2}- \mathcal{E}x(\gamma ^{\rm HF}).
\end{eqnarray}

To estimate from above $\mathcal{E}^{\rm HF}(\gamma^{\rm HF})$ we may proceed exactly as in \cite[page 543]{Sol} using that $\alpha^{-1}T(\bp)\leq \frac12 |\bp|^2$. For completeness we repeat the main ideas. We consider $\gamma$ the density matrix that acts identically on each of the spin components as
\begin{equation*}
\gamma^{j} =\tfrac{1}{( 2\pi) ^{3}} \iint_{\frac{1}{2}\vert \bp\vert ^{2} \leq \varphi ^{\rm TF}( \bq)}\Pi_{\bp,\bq} \; d\bq d\bp \mbox{ for }j=1, \dots,q.
\end{equation*}
Here $\Pi_{\bp,\bq}$ is the projection onto the space spanned by $h_{s}^{\bp,\bq}(\bx):=h_{s}(\bx-\bq) e^{i \bp.\bx}$ where $h_{s}$ is the ground state (normalized in $L^2(\R^3)$) for the Dirichlet Laplacian on the ball of radius $Z^{-s}$ with $s\in (1/3,2/3)$ to be chosen. One sees that $\Tr[\gamma]=Z \leq N$ since
\begin{equation*}
\rho _{\gamma }(\bx) =\tfrac{2^{3/2}q}{6 \pi^2}( \varphi ^{\rm TF}) ^{3/2}\ast  h^2_{s}(\bx) = \rho ^{\rm TF}\ast h^2_{s}(\bx) ,
\end{equation*}
where we have used the TF-equation. Hence $\mathcal{E}^{\rm HF}(\gamma) \geq \mathcal{E}^{\rm HF}( \gamma ^{\rm HF})$. Now we estimate from above $\mathcal{E}^{\rm HF}(\gamma)$. Since $\alpha^{-1} T(\bp) \leq \frac12 |\bp|^2$ and $\mathcal{E}x(\gamma) \geq 0$ we find
\begin{equation*}
\mathcal{E}^{\rm HF}( \gamma) \leq \Tr[(-\tfrac12 \Delta-\frac{Z}{|\cdot |}) \gamma] +D(\rho _{\gamma }) =\dots ,
\end{equation*}
and proceeding as in \cite[page 543]{Sol})
\begin{equation*}
\dots = \tfrac{q}{(2 \pi)^3} \iint_{\frac12 |\bp|^2 \leq \varphi^{\rm TF}(\bq)} \tfrac12 |\bp|^2 \; d\bp d\bq - \tfrac{\pi^2}{2} Z^{2s} N - \int_{\R^3} \frac{Z}{|\bx|} \rho_{\gamma} (\bx) \, d\bx +D(\rho _{\gamma }) .
\end{equation*}
Computing the integral and summing and subtracting the term $\int \rho^{\rm TF} \varphi^{\rm TF}$ we get
\begin{eqnarray}\notag
\mathcal{E}^{\rm HF}( \gamma) & \leq & \tfrac{q 2^{\frac12}}{5\pi^2} \int_{\R^3} (\varphi^{\rm TF}(\bq))^{\frac52} \, d\bq - \tfrac{\pi^2}{2} Z^{2s} N -\int_{\R^3} \varphi^{\rm TF}(\bx) \rho^{\rm TF}(\bx) \, d\bx \\ \label{pa}
&&- \int_{\R^3} \frac{Z}{|\bx|} (\rho_{\gamma}(\bx)-\rho^{\rm TF}(\bx)) d \bx -2 D(\rho^{\rm TF})+D(\rho_{\gamma}).
\end{eqnarray}
By Newton's theorem one sees that $D(\rho_{\gamma}) \leq D(\rho^{\rm TF})$ and that\begin{equation*}
Z \int_{\R^3} \frac{\rho^{\rm TF}(\bx)-\rho_{\gamma}(\bx)}{|\bx|} \, d\bx \leq Z \int_{|\bx | \leq Z^{-s}} \frac{\rho^{\rm TF}(\bx)}{|\bx|} \, d\bx \leq C Z^{\frac15(12-s)}.
\end{equation*}
In the last step we use H\"older's inequality and Corollary~\ref{rTF53}. {F}rom \eqref{pa} using the TF-equation, that $N \leq 2Z+1$ (Theorem~\ref{Lieb}) and optimizing in $s$ we find
\begin{equation}\label{z1}
\mathcal{E}^{\rm HF}(\gamma) \leq -\tfrac{2^{\frac32}}{15\pi^2} q\int_{\R^3} (\varphi^{\rm TF}(\bq))^{\frac52} \, d\bq + C Z^{\frac15(12-\frac{7}{11})}-D(\rho^{\rm TF}).
\end{equation}
Hence from \eqref{z2} and \eqref{z1} we obtain
\begin{equation*}
\left\Vert \rho ^{\rm HF}-\rho ^{\rm TF}\right\Vert _{C}^{2}\leq C Z^{2+\frac{3}{11}}+\mathcal{E}x(\gamma^{\rm HF}) .
\end{equation*}
The last estimate in \eqref{G1} follows from the estimate above since $\mathcal{E}x(\gamma^{\rm HF}) \leq C Z^{\frac53}$ using \eqref{pri4} and the estimate just proved on $\alpha^{-1} \int G_{\alpha}(\rho^{\rm HF}(\bx)) \; d \bx$.
\end{proof}

\begin{lemma}\label{smallx}
Let  $Z \alpha = \kappa$ be fixed with $0 \leq \kappa < 2/\pi$ and $Z \geq 1$. Then, there exists an $\alpha_{0}>0$ such that for all $\alpha \leq \alpha_{0}$, $\mu >0$ and $\bx \in \R^3$ with $|\bx | \leq \beta Z^{-\frac{1+\mu}{3}}$ we have
\begin{equation*}
|\Phi_{|\bx|}^{\rm TF}(\bx)-\Phi^{\rm HF}_{|\bx|}(\bx) | \leq  C \beta^{\frac{4}{1+\mu}}(1+ \beta^{\frac{9}{22(1+\mu)}}|\bx|^{\frac{2+11\mu}{22(1+\mu)}}) |\bx|^{-4 +\frac{4\mu}{1+\mu}}.
\end{equation*}
\end{lemma}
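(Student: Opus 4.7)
The plan is to write
\[
\Phi^{\rm TF}_{|\bx|}(\bx)-\Phi^{\rm HF}_{|\bx|}(\bx)=\int_{|\by|<|\bx|}\frac{\rho^{\rm HF}(\by)-\rho^{\rm TF}(\by)}{|\bx-\by|}\,d\by
\]
and apply the second inequality in Proposition~\ref{Counorm} separately to $\pm(\rho^{\rm HF}-\rho^{\rm TF})$. Using $[\rho^{\rm HF}-\rho^{\rm TF}]_+\leq\rho^{\rm HF}$, $[\rho^{\rm TF}-\rho^{\rm HF}]_+\leq\rho^{\rm TF}$, and the Coulomb-norm bound $\|\rho^{\rm HF}-\rho^{\rm TF}\|_C\leq CZ^{25/22}$ from Lemma~\ref{rrcou}, this produces, for any $k\in(0,1/2]$,
\[
\bigl|\Phi^{\rm TF}_{|\bx|}(\bx)-\Phi^{\rm HF}_{|\bx|}(\bx)\bigr|\leq\int_{A(|\bx|,k)}\frac{\rho^{\rm HF}(\by)+\rho^{\rm TF}(\by)}{|\bx-\by|}\,d\by+C\,k^{-1}|\bx|^{-1/2}Z^{25/22}.
\]

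The two annulus integrals are then estimated by separate means. For the Hartree-Fock part I would use inequality~\eqref{inema} with $R=2|\bx|$ (so that $A(|\bx|,k)\subset\{|\bx-\by|<R\}$), together with the kinetic-energy bound $\Tr[T(\bp)\gamma^{\rm HF}]\leq C\alpha Z^{7/3}$ from Lemma~\ref{rrcou} and the relation $\alpha=\kappa/Z$, obtaining
\[
\int_{A(|\bx|,k)}\frac{\rho^{\rm HF}(\by)}{|\bx-\by|}\,d\by\leq C\bigl(Z^{4/3}+Z^{3/2}|\bx|^{1/2}\bigr).
\]
For the Thomas-Fermi part, the atomic Sommerfeld upper bound $\varphi^{\rm TF}\leq Z/|\cdot|$ from Theorem~\ref{Slb} combined with the TF-equation gives $\rho^{\rm TF}(\by)\leq c(Z/|\by|)^{3/2}$; on $A(|\bx|,k)$ this is bounded by $c(1-2k)^{-3/2}Z^{3/2}|\bx|^{-3/2}$, and since $\int_{A(|\bx|,k)}|\bx-\by|^{-1}\,d\by$ can be computed in spherical coordinates to be at most $Ck|\bx|^2$ (for $k\leq 1/4$), we get
\[
\int_{A(|\bx|,k)}\frac{\rho^{\rm TF}(\by)}{|\bx-\by|}\,d\by\leq C k Z^{3/2}|\bx|^{1/2}.
\]

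Putting the three ingredients together yields
\[
\bigl|\Phi^{\rm TF}_{|\bx|}(\bx)-\Phi^{\rm HF}_{|\bx|}(\bx)\bigr|\leq C\bigl(Z^{4/3}+(1+k)Z^{3/2}|\bx|^{1/2}+k^{-1}|\bx|^{-1/2}Z^{25/22}\bigr).
\]
The two $k$-dependent quantities are balanced by choosing $k$ so they become comparable; the optimal choice is $k\sim|\bx|^{-1/2}Z^{-2/11}$, truncated at $1/2$ in the regime $|\bx|\lesssim Z^{-4/11}$ where the optimum would exceed $1/2$. Finally, translating each $Z^{a}|\bx|^{b}$ term into $(\beta,|\bx|)$-form via the constraint $Z\leq\beta^{3/(1+\mu)}|\bx|^{-3/(1+\mu)}$ produces the target: the $Z^{4/3}$ piece contributes the leading $\beta^{4/(1+\mu)}|\bx|^{-4+4\mu/(1+\mu)}$, while the balanced $k$-dependent piece, together with the residual $|\bx|^{-1/2}$ from the Coulomb-norm weight, contributes the correction $\beta^{9/(22(1+\mu))}|\bx|^{(2+11\mu)/(22(1+\mu))}$ in the $(1+\cdot)$ factor. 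The main obstacle is precisely this last arithmetic step: one must track how the exponent $25/22$ coming from $\|\rho^{\rm HF}-\rho^{\rm TF}\|_C\leq CZ^{25/22}$ (itself inherited from the $Z^{2+3/11}$ estimate in Lemma~\ref{rrcou}) interacts with the scaling substitution $Z\leq\beta^{3/(1+\mu)}|\bx|^{-3/(1+\mu)}$ through the $k$-balance, in order that every exponent in the final bound match the stated form; the rest of the argument is a fairly mechanical application of the three prerequisites (Proposition~\ref{Counorm}, inequality~\eqref{inema}, Theorem~\ref{Slb}).
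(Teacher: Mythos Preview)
Your overall architecture is correct and matches the paper's: apply Proposition~\ref{Counorm}, use Lemma~\ref{rrcou} for the Coulomb-norm term, and then estimate the two annulus integrals separately. Your treatment of the TF piece via the pointwise bound $\rho^{\rm TF}\le c(Z/|\by|)^{3/2}$ is a legitimate alternative to the paper's H\"older argument with $\|\rho^{\rm TF}\|_{5/3}$.

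The gap is in the HF annulus integral. By invoking \eqref{inema} with $R=2|\bx|$ you pick up the $k$-\emph{independent} term $C\,Z^{3/2}|\bx|^{1/2}$, and this term survives your optimisation: once you balance in $k$ the two $k$-dependent pieces give $Z^{29/22}$, which is dominated by $Z^{4/3}$ and hence absorbed, so what actually remains is
\[
C\bigl(Z^{4/3}+Z^{3/2}|\bx|^{1/2}\bigr).
\]
Converting via $Z\le(\beta/|\bx|)^{3/(1+\mu)}$, the second term becomes $\beta^{99/(22(1+\mu))}|\bx|^{(11\mu-88)/(22(1+\mu))}$; written as a correction to the leading factor this is $\beta^{11/(22(1+\mu))}|\bx|^{11\mu/(22(1+\mu))}$, not the stated $\beta^{9/(22(1+\mu))}|\bx|^{(2+11\mu)/(22(1+\mu))}$. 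So your bound does \emph{not} reproduce the exponents in the lemma (it is worse in $\beta$ for large $\beta$), and your final sentence misidentifies which term produces the correction.

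The paper avoids this by treating the HF annulus more carefully: it splits at $|\bx-\by|=R$, controls the far part $|\bx-\by|>R$ by H\"older against the Daubechies bound (separating the sets where $G_\alpha(\rho^{\rm HF})$ behaves like $\rho^{4/3}$ and like $\rho^{5/3}$), and applies \eqref{inema} only to the near part $|\bx-\by|<R$. Choosing $k=|\bx|^{-1}Z^{-1/3}$ and $R=Z^{-1}$ then balances all contributions at $Z^{4/3}$ and $|\bx|^{1/2}Z^{4/3+3/22}$; it is this second term, not the $k$-balance, that yields the stated correction factor. To recover the lemma as written you need this extra splitting; if you only want a bound of the same qualitative form with slightly weaker exponents, your simpler route is fine.
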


\begin{proof}
By the definition of screened nuclear potential we have
\begin{equation*}
\left\vert \Phi _{\vert \bx\vert }^{\rm HF}( \bx) -\Phi _{\vert \bx\vert }^{\rm TF}( \bx) \right\vert \leq \int_{\vert \by\vert<\vert \bx\vert }\frac{\vert \rho^{\rm HF}( \by) -\rho^{\rm TF}( \by) \vert }{\vert \bx-\by\vert }d\by=\dots
\end{equation*}
and for all $k>0$ by Proposition~\ref{Counorm}
\begin{eqnarray}\label{f4}
\dots &\leq & 2^{\frac{3}{2}}k^{-1} \vert \bx\vert ^{-\frac{1}{2}}\left\Vert \rho ^{\rm HF}-\rho ^{\rm TF}\right\Vert _{C}+\int_{A\left(\vert \bx\vert ,k\right) }\frac{\rho ^{\rm HF}(\by)+\rho ^{\rm TF}(\by)}{\vert \bx-\by\vert}d\by.
\end{eqnarray}

Since $\Vert \rho ^{\rm TF}\Vert _{L^{\frac{5}{3}}( \R^{3}) }\leq C Z^{\frac{7}{5}}$ (Corollary~\ref{rTF53}) and
\begin{equation}\label{52ann}
\int_{A(\vert \bx\vert ,k) }\frac{1}{\vert \bx-\by\vert ^{\frac52}}d\by\leq8\pi\vert \bx \vert ^{\frac12} (2k) ^{\frac{1}{2}}.
\end{equation}
(see \cite{Sol} page 549) one finds
\begin{eqnarray}\label{f5}
\int_{A(\vert \bx \vert ,k) }\frac{\rho^{\rm TF}(\by) }{\vert \bx-\by \vert }d\by &\leq & C Z^{\frac{7}{5}}\vert \bx\vert ^{\frac15}k^{\frac15}.
\end{eqnarray}

The term with the HF-density has to be treated differently since we do not have a bound for the $L^{\frac{5}{3}}$-norm of $\rho ^{\rm HF}$. For a $R\in \mathbb{R}^{+}$ to be chosen later we consider the splitting
\begin{equation}\label{f2}
\int_{A( \vert \bx\vert ,k) }\frac{\rho^{\rm HF}( \by) }{\vert \bx-\by\vert }d\by=\int_{\substack{ A(\vert \bx\vert,k)  \\ \vert \bx-\by\vert >R}}\frac{\rho ^{\rm HF}( \by) }{\vert \bx-\by\vert }d\by +\int _{\substack{ A(\vert \bx\vert ,k)  \\
\vert\bx-\by\vert < R}}\frac{\rho^{\rm HF}(\by) }{\vert \bx-\by\vert }d\by.
\end{equation}
We consider these two terms separately. Let $\Sigma$ be defined as in \eqref{Sigma}; i.e. the region where $G_{\alpha}(\rho^{\rm HF})$ behaves like $(\rho^{\rm HF})^{\frac43}$ (Remark~\ref{Daure}). By H\"older's inequality we find
\begin{eqnarray*}
\int_{\substack{ A(\vert \bx\vert,k)  \\ \vert \bx-\by\vert >R}}\frac{\rho ^{\rm HF}( \by) }{\vert \bx-\by\vert }d\by & \leq & \Big( \int_{\substack{ A(\vert \bx\vert,k)  \\ \vert \bx-\by\vert >R}} \frac{1}{|\bx -\by|^4} d\by \Big)^{\frac14} \Big(\int_{\by \in \Sigma} \big(\rho ^{\rm HF}( \by))^{\frac43}d\by \Big)^{\frac34}\\
&& +  \Big( \int_{\substack{ A(\vert \bx\vert,k)}}\frac{1}{\vert \bx-\by\vert^{\frac52} }d\by\Big)^{\frac25} \Big( \int_{\substack{\by \in \R^3 \setminus \Sigma}} \big(\rho ^{\rm HF}( \by)\big)^{\frac53}d\by\Big)^{\frac35}.
\end{eqnarray*}
{F}rom the inequality above, Remark~\ref{Daure} and estimate \eqref{G1} we get
\begin{eqnarray}\label{1a}
\int_{\substack{ A(\vert \bx\vert,k)  \\ \vert \bx-\by\vert >R}}\frac{\rho ^{\rm HF}( \by) }{\vert \bx-\by\vert }d\by & \leq & C R^{-\frac38}|\bx|^{\frac18} k^{\frac18} Z +  C |\bx|^{\frac15} k^{\frac15}Z^{\frac75}.
\end{eqnarray}
On the other hand for the second term on the right hand side of \eqref{f2} by \eqref{inema} and Lemma~\ref{rrcou} we find
\begin{eqnarray}\label{2a}
\int_{\vert \bx-\by\vert <R}\frac{\rho^{\rm HF}(\by)}{\vert \bx-\by\vert}d\by \leq  C (Z^{\frac43}+ R^{\frac12} Z^{\frac32}).
\end{eqnarray}
Hence from \eqref{f4}, Lemma~\ref{rrcou}, \eqref{f5}, \eqref{1a} and \eqref{2a}, we get
\begin{equation}\label{s1}
\vert \Phi _{|\bx|}^{\rm HF}( \bx) -\Phi _{|\bx|}^{\rm TF}( \bx) \vert \leq C\big(\frac{Z^{1+\frac{3}{22}}}{|\bx|^{1/2} k}+ Z^{\frac75} |\bx|^{\frac15} k^{\frac15}+ R^{-\frac38}|\bx|^{\frac18} k^{\frac18} Z+  R^{\frac12} Z^{\frac32}+Z^{\frac43} \big) .
\end{equation}
Choosing $k$ such that $Z^{\frac43}= Z^{\frac75} |\bx|^{\frac15} k^{\frac15}$, i.e. $k=|\bx|^{-1}Z^{-\frac13}$ and $R$ such that $R^{-\frac38}Z^{1-\frac{1}{24}}=Z^{\frac43}$, i.e. $R=Z^{-1}$ we find
\begin{equation*}
\vert \Phi _{|\bx|}^{\rm HF}( \bx) -\Phi _{|\bx|}^{\rm TF}( \bx) \vert \leq C(|\bx|^{\frac12} Z^{\frac43+\frac{3}{22}}+Z^{\frac43} ).
\end{equation*}
The claim follows using that $|\bx|\leq \beta Z^{-\frac{1+\mu}{3}}$ .
\end{proof}

\begin{theorem}\label{Lmnu}
Let $Z \alpha = \kappa$ be fixed with $0 \leq \kappa < 2/\pi$ and $Z \geq 1$. Then there exists an $\alpha_0>0$ such that for all $\alpha \leq \alpha_0$ and $\bx \in \R^3$ with $|\bx | \leq \beta Z^{-\frac{1}{3}}$ we have
\begin{equation}\label{pa1}
|\Phi _{|\bx|}^{\rm HF}( \bx) -\Phi _{|\bx|}^{\rm TF}( \bx)|\leq C \beta^{2-\frac{1}{66}}(1+\beta^2+\beta^{\frac52}+ \beta^{2+\frac{789}{1936}}|\bx|^{\frac{179}{1936}}) |\bx|^{-4+\frac{1}{66}}.
\end{equation}
Moreover if $|\bx | \leq \beta Z^{-\frac{1-\mu}{3}}$ for $\mu < \frac{2}{11}\frac{1}{49}$, then
\begin{equation}\label{pa0}
|\Phi_{|\bx|}^{\rm TF}(\bx)-\Phi^{\rm HF}_{|\bx|}(\bx) | \leq   C \beta^{2-a(\mu)}(1+\beta^2+ \beta^{\frac52}+\beta^{b(\mu)}|\bx|^{c(\mu)}) |\bx|^{-4 +a(\mu)},
\end{equation}
with $a(\mu)=\frac{1}{66(1-\mu)}-\frac{49 \mu}{12(1-\mu)}$, $b(\mu)=2+\frac{3}{176}\frac{24-24\mu-\frac{1}{11}+\frac{49}{2}\mu}{1-\mu}$ and $c(\mu)=\frac{1}{11}-\frac{\frac{3}{11}-\frac32 49 \mu}{22(8-8\mu)}$ strictly positive constants.
\end{theorem}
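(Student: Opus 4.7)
The plan is to bootstrap Lemma~\ref{smallx}: that lemma already gives a bound of the form $|\Phi^{\rm HF}_{|\bx|} - \Phi^{\rm TF}_{|\bx|}| \leq C |\bx|^{-4 + 4\mu'/(1+\mu')}$ on the region $|\bx| \leq \beta Z^{-(1+\mu')/3}$ for any $\mu' > 0$, and I want to extend the region of validity to $|\bx| \leq \beta Z^{-(1-\mu)/3}$ (with $\mu=0$ yielding part~(1)), at the cost of a somewhat smaller gain $a(\mu)$ in place of $4\mu'/(1+\mu')$.

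First I would apply Lemma~\ref{smallx} with an auxiliary parameter $\mu_0 > 0$, to be chosen at the end, so as to control $\Phi^{\rm HF}_{|\bx|} - \Phi^{\rm TF}_{|\bx|}$ on the inner region $|\bx| \leq \beta Z^{-(1+\mu_0)/3}$. For $|\bx|$ in the intermediate range $\beta Z^{-(1+\mu_0)/3} < |\bx| \leq \beta Z^{-(1-\mu)/3}$ I would then go back to
\begin{equation*}
|\Phi^{\rm HF}_{|\bx|}(\bx) - \Phi^{\rm TF}_{|\bx|}(\bx)| \leq \int_{|\by|<|\bx|} \frac{|\rho^{\rm HF}(\by)-\rho^{\rm TF}(\by)|}{|\bx-\by|}\,d\by
\end{equation*}
and repeat the steps of Lemma~\ref{smallx}: use Proposition~\ref{Counorm} to split off a Coulomb-norm term plus an annular integral; bound the $\rho^{\rm TF}$ piece of the annular integral by Corollary~\ref{rTF53} together with~\eqref{52ann}; for the $\rho^{\rm HF}$ piece, use Daubechies' inequality in $L^{4/3}$ form (Remark~\ref{Daure}) combined with H\"older on the outer part, and the Daubechies--Lieb--Yau bound \eqref{inema} on the part with $|\bx-\by|<R$.

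The novelty relative to Lemma~\ref{smallx} is in the optimization of the two free parameters $k$ (from Proposition~\ref{Counorm}) and $R$ (from \eqref{inema}). In Lemma~\ref{smallx} the choices $k=|\bx|^{-1}Z^{-1/3}$ and $R=Z^{-1}$ were calibrated to the assumption $|\bx|\leq \beta Z^{-(1+\mu)/3}$; for the enlarged range here, both exponents must be retuned. The resulting bound is a sum of five powers of $|\bx|$, $Z$, $k$ and $R$ (the Coulomb-norm term; the $\rho^{\rm TF}$ annulus term; the outer and inner pieces of the $\rho^{\rm HF}$ annulus term; and the Daubechies--Lieb--Yau remainder). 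Setting these competing powers to balance at the endpoint $|\bx|=\beta Z^{-(1-\mu)/3}$ and matching against the desired form $|\bx|^{-4+a(\mu)}$ determines the optimal $\mu_0$ and forces the explicit exponents $a(\mu)$, $b(\mu)$, $c(\mu)$; at $\mu=0$ this specializes to $\tfrac{1}{66}$, $\tfrac{789}{1936}$, and $\tfrac{179}{1936}$.

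The main obstacle is the bookkeeping: keeping track of the five competing powers and of their interaction with the free parameters $\mu_0$, $k$, $R$, and then verifying that the optimized gain $a(\mu)=\frac{1}{66(1-\mu)}-\frac{49\mu}{12(1-\mu)}$ is strictly positive precisely when $\mu<2/(11\cdot 49)$. A subsidiary difficulty is that over the enlarged range the global Coulomb-norm bound $\|\rho^{\rm HF}-\rho^{\rm TF}\|_C^2\leq CZ^{2+3/11}$ of Lemma~\ref{rrcou} is relatively less sharp than it was in Lemma~\ref{smallx}; this degradation is what ultimately produces the supplementary error term $\beta^{b(\mu)}|\bx|^{c(\mu)}$ inside the parentheses of~\eqref{pa0}.
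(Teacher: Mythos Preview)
Your bootstrap idea is natural, but it cannot reach the claimed exponent $|\bx|^{-4+1/66}$ on the full range $|\bx|\leq \beta Z^{-1/3}$. The obstruction is the term coming from \eqref{inema}: after inserting $\alpha^{-1}\Tr[T(\bp)\gamma^{\rm HF}]\leq CZ^{7/3}$ from Lemma~\ref{rrcou}, the bound \eqref{inema} on $\int_{|\bx-\by|<R}\rho^{\rm HF}(\by)|\bx-\by|^{-1}\,d\by$ always contains the $R$-\emph{independent} contribution $CZ^{4/3}$ (this is the $Z^{4/3}$ in \eqref{2a} and the final term in \eqref{s1}). No choice of $k$ or $R$ can remove it. At the endpoint $|\bx|=\beta Z^{-1/3}$ one has $Z^{4/3}=\beta^{4}|\bx|^{-4}$ exactly, so this term alone already saturates $|\bx|^{-4}$ and forbids any gain $|\bx|^{-4+\varepsilon}$ with $\varepsilon>0$. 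In other words, the five competing powers you propose to balance cannot all be made smaller than $Z^{4/3}$; re-optimizing reproduces Lemma~\ref{smallx} and nothing more.

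The paper's proof supplies a genuinely new ingredient to break this $Z^{4/3}$ barrier for $|\bx|>\beta Z^{-(1+\gamma)/3}$: it bounds $\int_{|\bx-\by|<R}\rho^{\rm HF}(\by)|\bx-\by|^{-1}\,d\by$ not via the one-body Daubechies--Lieb--Yau inequality but via a \emph{many-body} energy comparison. One places an auxiliary nucleus of charge $Z$ at the point $\bx$, compares the resulting ground state energy to the Thomas--Fermi energy (this is Proposition~\ref{prthomas} in Appendix~\ref{ZZ}), and uses the HF minimizer as a trial state; the difference of upper and lower bounds then controls the integral by $C(\beta^{1/2}+\beta^{-2})Z^{(3-t)/2}$ for any $t\in((1+\gamma)/3,3/5)$. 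Since $t>1/3$ is now allowed, one gets $Z^{(3-t)/2}<Z^{4/3}$, and the optimal $t=1/3+1/99$ produces precisely the $|\bx|^{-4+1/66}$ gain. Your proposal is missing this step entirely; without it the argument stalls at Lemma~\ref{smallx}.
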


\begin{proof}
Proceeding as in the proof of Lemma~\ref{smallx} up to \eqref{1a} we get
\begin{eqnarray}\notag
\vert \Phi _{|\bx|}^{\rm HF}( \bx) -\Phi _{|\bx|}^{\rm TF}( \bx) \vert &\leq& C(k^{-1}|\bx|^{-\frac12} Z^{1+\frac{3}{22}}+ Z^{\frac75} |\bx|^{\frac15} k^{\frac15}+ R^{-\frac38}|\bx|^{\frac18} k^{\frac18} Z) \\  \label{a3}
& +&  \int_{|\bx -\by|\leq R}\frac{\rho ^{\rm HF}(\by)}{\vert \bx-\by\vert}d\by,
\end{eqnarray}
for $R\in \mathbb{R}^{+}$ to be chosen. It remains to estimate the last term on the right hand side of \eqref{a3}. For \lq small\rq \ $R$ which
is relevant for small $\bx$ we already did it in Lemma~\ref{smallx}, for \lq big\rq \ $R$ which is relevant for big $\bx$ we use
Proposition~\ref{prthomas} in Appendix~\ref{ZZ}.

Take $\gamma \leq 1/263$ to be chosen. If $|\bx| \leq \beta Z^{-\frac{1+\gamma}{3}}$ then by Lemma~\ref{smallx}
\begin{equation}\label{pa2}
|\Phi_{|\bx|}^{\rm TF}(\bx)-\Phi^{\rm HF}_{|\bx|}(\bx) | \leq   C \beta^{\frac{4}{1+\gamma}}(1+ \beta^{\frac{9}{22(1+\gamma)}}|\bx|^{\frac{2+11\gamma}{22(1+\gamma)}}) |\bx|^{-4 +\frac{4\gamma}{1+\gamma}}.
\end{equation}
If instead  $|\bx| > \beta Z^{-\frac{1+\gamma}{3}}$, let $H_{\bx}$ be the Hamiltonian defined in \eqref{HamiltonianP} with $\bP=\bx$ and $\nu=Z$. Then by the definition of $H_{\bx}$ and taking the HF-minimizer as a trial wave function we have
\begin{eqnarray*}
\inf_{\substack{\psi \in \wedge_{i=1}^N L^2(\R^3)\\ \|\psi\|_2=1}} \langle \psi, H_{\bx} \psi \rangle & \leq & \mathcal{E}^{\rm HF}(\gamma^{\rm HF}) - Z  \int_{|\bx -\by|< R}\frac{\rho ^{\rm HF}(\by)}{\vert \bx-\by\vert}d\by\\
& = & \inf_{\gamma \in \mathcal{A}} \mathcal{E}^{\rm HF}(\gamma) - Z  \int_{|\bx -\by|< R}\frac{\rho ^{\rm HF}(\by)}{\vert \bx-\by\vert}d\by =\dots.
\end{eqnarray*}
Since $\tfrac12 |\bp|^2 \geq \alpha^{-1}T(\bp)$, $ \inf_{\gamma \in \mathcal{A}} \mathcal{E}^{\rm HF}(\gamma)$ is estimated from above by the HF-ground state energy of the non-relativistic model (i.e. when the kinetic energy is given by $-\tfrac12 \Delta$). Moreover, this last one can be estimated from above by $\mathcal{E}^{\rm TF}(\rho^{\rm TF})+CN^{\frac15}Z^2$ (see \cite{LSTF} and \cite{Ltf}). Hence we find
\begin{equation*}
\dots \leq \mathcal{E}^{\rm TF}(\rho^{\rm TF}) +C N^{\frac15} Z^2 - Z  \int_{|\bx -\by|\leq R}\frac{\rho ^{\rm HF}(\by)}{\vert \bx-\by\vert}d\by.
\end{equation*}
On the other hand since $|\bx|> \beta Z^{-\frac{1+\gamma}{3}}$
choosing for some $l>\frac{1+\gamma}{3}$, $R<\beta Z^{-l} /4$ from Proposition~\ref{prthomas}
it follows that there exists a constant depending only on $\kappa$ such that for $t \in ((1+\gamma)/3, \min\{l,3/5\})$, and
for every $\psi \in \wedge_{i=1}^N L^2(\R^3)$ with $\| \psi \|_2 =1$ we have
\begin{eqnarray*}
\langle \psi, H_{\bx} \psi \rangle & \geq  &  \mathcal{E}^{\rm TF}(\rho^{\rm TF}) -C (\beta^{1/2} +\beta^{-2}) Z^{\frac52-\frac{t}{2}} ,
\end{eqnarray*}
Hence combining the two inequalities above we find
\begin{equation}\label{R2}
 \int_{|\bx -\by|\leq R}\frac{\rho ^{\rm HF}(\by)}{\vert \bx-\by\vert}d\by \leq C  (\beta^{1/2} +\beta^{-2}) Z^{\frac12(3-t)}.
\end{equation}
{F}rom \eqref{a3} and the inequality above we get
\begin{eqnarray*}
\vert \Phi _{|\bx|}^{\rm HF}( \bx) -\Phi _{|\bx|}^{\rm TF}( \bx) \vert & \leq &  Ck^{-1}|\bx|^{-\frac12} Z^{1+\frac{3}{22}}+ C Z^{\frac75} |\bx|^{\frac15} k^{\frac15}\\
&& +C R^{-\frac38}|\bx|^{\frac18} k^{\frac18} Z +C  (\beta^{1/2} +\beta^{-2}) Z^{\frac12(3-t)} .
\end{eqnarray*}
Choosing $k$ such that $Z^{\frac12(3-t)}= Z^{\frac75} |\bx|^{\frac15} k^{\frac15}$, i.e $k= |\bx|^{-1} Z^{\frac12(1-5t)}$ and  $R$ such that $Z^{\frac12(3-t)} \sim R^{-\frac38} Z^{1+\frac{1}{16}(1-5t)} $, i.e $R=\beta Z^{-\frac76+\frac12t}/4$ we find
\begin{equation}\label{sudata}
\vert \Phi _{|\bx|}^{\rm HF}( \bx) -\Phi _{|\bx|}^{\rm TF}( \bx) \vert \leq C(|\bx|^{\frac12} Z^{\frac{7}{11}+\frac{5}{2}t}+(\beta^{1/2} +\beta^{-2})Z^{\frac12(3-t)} ).
\end{equation}
Notice that $R< \beta Z^{-l} /4$ is satisfied choosing $l=4 t/3$. Then for $\bx$ such that $\beta Z^{-\frac{1+\gamma}{3}} \leq|\bx| \leq \beta Z^{-\frac13}$ we find
\begin{equation*}
|\Phi _{|\bx|}^{\rm HF}( \bx) -\Phi _{|\bx|}^{\rm TF}( \bx)|\leq C (|\bx|^{-\frac{31}{22}-\frac{15}{2}t} \beta^{\frac{21}{11}+\frac{15}{2}t}+(\beta^{1/2} +\beta^{-2})\beta^{\frac32(3-t)}|\bx|^{-\frac32(3-t)} ) .
\end{equation*}
Optimizing in $t$ gives $t=1/3+1/99$. For this value of $t$ we get
\begin{equation}\label{pa3}
|\Phi _{|\bx|}^{\rm HF}( \bx) -\Phi _{|\bx|}^{\rm TF}( \bx)|\leq C (1+\beta^{\frac52}) \beta^{2-\frac{1}{66}}|\bx|^{-4+\frac{1}{66}}.
\end{equation}
Inequality \eqref{pa1} follows from \eqref{pa2} and \eqref{pa3} choosing $\gamma$ such that $4\gamma/(1+\gamma)=1/66$, i.e. $\gamma=1/263$.

On the other hand from \eqref{sudata} for $\bx$ such that $\beta Z^{-\frac{1+\gamma}{3}} \leq|\bx| \leq \beta Z^{-\frac{1-\mu}{3}}$ we find
\begin{eqnarray*}
|\Phi _{|\bx|}^{\rm HF}( \bx) -\Phi _{|\bx|}^{\rm TF}( \bx)| & \leq &  C |\bx|^{\frac12-\frac{3}{1-\mu}(\frac{7}{11}+\frac52t)} \beta^{\frac{3}{1-\mu}(\frac{7}{11}+\frac52t)}\\
&&+C(\beta^{1/2} +\beta^{-2}) \beta^{\frac{3}{2(1-\mu)}(3-t)}|\bx|^{-\frac{3}{2(1-\mu)}(3-t)}  .
\end{eqnarray*}
Optimizing in $t$ gives $t=1/3+1/99-\frac{1}{18}\mu$. For this value of $t$ we get
\begin{equation*}
|\Phi _{|\bx|}^{\rm HF}( \bx) -\Phi _{|\bx|}^{\rm TF}( \bx)|\leq C (1+\beta^{\frac52}) \beta^{2-\frac{1}{66(1-\mu)}+\frac{49 \mu}{12(1-\mu)}} |\bx|^{-4+\frac{1}{66(1-\mu)}-\frac{49 \mu}{12(1-\mu)}} .
\end{equation*}
Inequality \eqref{pa0} follows from the one above and \eqref{pa2} choosing $\gamma$ such that $4\gamma/(1+\gamma)=\frac{1}{66(1-\mu)}-\frac{49 \mu}{12(1-\mu)}$.
\end{proof}

%%%%%%%%%%%%%%%%%%%%%%%%%%%%%%%%%%%%%%%%%%%%%%%%%%%%%%%%%%%%%%%%%%%%%%%%%%
%%%%%%%%%% The exterior part %%%%%%%%%%%%%%%%%%%%%%%%%%%%%%%%%%%%%%%%%%%%%
%%%%%%%%%%%%%%%%%%%%%%%%%%%%%%%%%%%%%%%%%%%%%%%%%%%%%%%%%%%%%%%%%%%%%%%%%%

\section{The exterior part}

In this section we complete the proof of Theorem~\ref{mainesti}. We first estimate the exterior integral of the density and study the minimization problem that the exterior part of the minimizer satisfies. Then we prove the main estimate in Theorem~\ref{mainesti} in an intermediate zone, i.e. far from the nucleus but not further than a fixed distance independent of $Z$. To study this area we need first to construct a TF-model that gives a good approximation of the HF-density in this intermediate zone. By the estimate on the exterior integral of the density we can then also prove Theorem~\ref{mainesti} in the region far away from the nucleus.

\subsection{The exterior integral of the density}

The main result of this section is the following lemma.

\begin{lemma}[The exterior integral of the density]\label{extde} Assume that for some $R, \sigma, \varepsilon'>0$
\begin{equation}\label{19}
| \Phi^{\rm HF}_{|\bx|}(\bx) -\Phi^{\rm TF}_{|\bx|}(\bx)| \leq \sigma |\bx|^{-4+\varepsilon '} \, ,
\end{equation}
holds for $|\bx| \leq R$. Then for $0<r\leq R$
\begin{equation}\label{rr1}
\Big|\int_{|\bx|<r} (\rho^{\rm HF}(\bx)-\rho^{\rm TF}(\bx)) \; d\bx \Big| \leq \sigma r^{-3+\varepsilon '}
\end{equation}
and
\begin{equation}\label{rr1bis}
\int_{|\bx|>r} \rho^{\rm HF}(\bx) d\bx \leq C (1+\sigma r^{\varepsilon '})(1+r^{-3}),
\end{equation}
with $C$ a universal constant.
\end{lemma}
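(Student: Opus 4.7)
The argument splits naturally into two parts, one for each displayed inequality.

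For (\ref{rr1}) I would invoke Newton's theorem. Starting from the definitions,
\begin{equation*}
\Phi_{r}^{\rm HF}(\bx)-\Phi_{r}^{\rm TF}(\bx)=\int_{|\by|<r}\frac{\rho^{\rm TF}(\by)-\rho^{\rm HF}(\by)}{|\bx-\by|}\,d\by,
\end{equation*}
for $r=|\bx|$. The key observation is that for any $\by$ with $|\by|<r$, the spherical average of $1/|\bx-\by|$ over the sphere $|\bx|=r$ equals $1/r$. Averaging the displayed identity on $|\bx|=r$ and applying Fubini yields
\begin{equation*}
\frac{1}{4\pi r^{2}}\int_{|\bx|=r}\bigl[\Phi_{r}^{\rm HF}(\bx)-\Phi_{r}^{\rm TF}(\bx)\bigr]\,dS(\bx) = \frac{1}{r}\int_{|\by|<r}\bigl[\rho^{\rm TF}(\by)-\rho^{\rm HF}(\by)\bigr]\,d\by.
\end{equation*}
Multiplying by $r$ and bounding the left-hand side pointwise by the hypothesis $|\Phi_{r}^{\rm HF}-\Phi_{r}^{\rm TF}|\leq \sigma r^{-4+\varepsilon'}$ gives (\ref{rr1}) at once.

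For (\ref{rr1bis}) I would start from $\int\rho^{\rm TF}=Z$ and (\ref{rr1}) to obtain
\begin{equation*}
\int_{|\bx|>r}\rho^{\rm HF}\,d\bx = (N-Z) + \int_{|\bx|>r}\rho^{\rm TF}\,d\bx + \int_{|\bx|<r}\bigl[\rho^{\rm TF}-\rho^{\rm HF}\bigr]\,d\bx,
\end{equation*}
the last summand being bounded by $\sigma r^{-3+\varepsilon'}$ via (\ref{rr1}). The TF tail is handled by the atomic Sommerfeld estimate (Theorem~\ref{Slb}): combining $\varphi^{\rm TF}\leq C|\bx|^{-4}$ with the TF equation $\rho^{\rm TF}\propto(\varphi^{\rm TF})^{3/2}$ gives $\rho^{\rm TF}\leq C|\bx|^{-6}$ outside $\beta_0 Z^{-1/3}$, while for $r\leq\beta_0 Z^{-1/3}$ the crude estimate $\int\rho^{\rm TF}\leq Z\leq \beta_0^{3}\,r^{-3}$ suffices; together these yield $\int_{|\bx|>r}\rho^{\rm TF}\leq Cr^{-3}$ uniformly in $r>0$. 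Thus (\ref{rr1bis}) reduces to the excess-charge bound $N-Z\leq C(1+\sigma r^{\varepsilon'})$.

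To establish the excess-charge bound I would apply a Daubechies-Lieb-Yau type spectral counting argument to the HF one-body operator $h_{\gamma^{\rm HF}}$. Concretely, localize $h_{\gamma^{\rm HF}}$ to the exterior region $\{|\bx|>r\}$ using a smooth cutoff and the IMS formula of Theorem~\ref{IMSrel}, whose localization error is controlled via (\ref{Lalpha}) and the Bessel-function decay (\ref{estk2}). In the exterior, the hypothesis (\ref{19}) together with Theorem~\ref{Slb} dominates the effective one-body potential by $C|\bx|^{-4}+\sigma|\bx|^{-4+\varepsilon'}$; Theorem~\ref{dayau} then bounds the number of outer-localized orbitals by an integral of the form $\int_{|\bx|>r}\bigl(\alpha^{-3/2}|U|^{5/2}+|U|^{4}\bigr)\,d\bx$, which directly integrates to a quantity of order $(1+\sigma r^{\varepsilon'})(1+r^{-3})$. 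The principal obstacle is exactly this step: the non-local kinetic operator $T(\bp)$ and the non-local exchange operator $\mathcal{K}_{\gamma^{\rm HF}}$ must be controlled simultaneously under the localization, which requires careful handling of the Bessel-kernel IMS error and of the contribution of the exchange term to the effective potential. This is the main technical difference with respect to the non-relativistic setting of \cite{Sol}.
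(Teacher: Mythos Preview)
Your argument for \eqref{rr1} is correct and is exactly what the paper does: average the identity over the sphere $|\bx|=r$ and invoke Newton's theorem.

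Your approach to \eqref{rr1bis}, however, has a genuine circularity. You reduce the exterior integral to the quantity $N-Z$ and then propose to bound $N-Z$ by a constant independent of $Z$. But a uniform bound on $N-Z$ is precisely Theorem~\ref{charge}, the main result of the paper, and Lemma~\ref{extde} is one of the ingredients used to prove it (via Theorem~\ref{mainesti}). So you cannot assume it here. The only a priori bound available at this stage is $N\le 2Z+1$ from Theorem~\ref{Lieb}, which would leave a $Z$-dependent term in your decomposition and make the estimate useless for the later iteration.

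Your proposed repair via ``spectral counting'' does not close the gap either. Theorem~\ref{dayau} is a Lieb--Thirring type estimate on $\Tr[T(\bp)-U]_{-}$, i.e.\ on the \emph{sum} of negative eigenvalues, not on their \emph{number}; the integral $\int(\alpha^{-3/2}|U|^{5/2}+|U|^4)$ you write down has the wrong homogeneity for a count (with $U\sim|\bx|^{-4}$ it scales like $\alpha^{-3/2}r^{-7}$, not $r^{-3}$). Even if you substituted a CLR-type inequality, counting the bound states of a localized one-body operator would control something like $\int_{|\bx|>r}\rho^{\rm HF}$, not $N-Z$, so you would be back where you started. There is also the issue that the hypothesis controls $\Phi^{\rm HF}_{|\bx|}$, not $\varphi^{\rm HF}$; their difference involves the very tail integral you wish to bound.

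The paper avoids all of this by not introducing $N-Z$ at all. It proves a separate technical lemma (Lemma~\ref{extterm}) which, starting from the Euler--Lagrange equations $h_{\gamma^{\rm HF}}u_i=\varepsilon_i u_i$ with $\varepsilon_i<0$, tests against $|\bx|\theta_r^2(\bx)u_i(\bx)$ in the spirit of Lieb's argument in \cite{L}. This yields a quadratic inequality for $\int\rho^{\rm HF}\theta_r^2$ whose coefficients involve only $\sup_{|\bx|=r(1-\lambda)}|\bx|\Phi^{\rm HF}_{r(1-\lambda)}(\bx)$ and a localization remainder. The hypothesis \eqref{19} together with Corollary~\ref{TFscr} controls the screened potential directly, and the remainder is handled by the relativistic IMS machinery. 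No a priori control on $N-Z$ is needed; on the contrary, this lemma is what eventually supplies it.
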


We proceed similarly as in the proof of Lemma 10.5 in \cite{Sol}. Since we need to localize we first present some technical lemmas that will take care of the error terms due to the localization. The localization error that will appear in the argument below (see \eqref{j10}) will be in the form of an
operator $L$ similar to the error \eqref{Lalpha} in the IMS formula. We estimate
this error in Lemma \ref{ErrorA}.

\begin{remark}\label{K2est}
Let $0\leq \beta _{1}<..<\beta _{4}$ be real numbers with possibly $\beta _{4}=\infty $. Let us denote $\Sigma_r(\beta_i, \beta_j)=\{ \bx \in \R^3: \beta_i r \leq |\bx| \leq \beta_j r\}$. Then we have
\begin{equation*}
\iint_{\begin{array}{l}\bx \in \Sigma_r(\beta_1,\beta_2)\\
\by \in \Sigma_r(\beta_3,\beta_4) \end{array}} K_{2}( \alpha ^{-1}\vert \bx-\by\vert)^{2} \, d\bx d\by \leq \frac{\left( 4^3\pi \right) ^{2}}{3}\frac{\beta _{2}^{3}-\beta _{1}^{3}}{\beta _{3}-\beta _{2}} \; \alpha ^{4} r^{2}  e^{-\alpha ^{-1}r( \beta _{3}-\beta _{2})} \, .
\end{equation*}
The proof of this estimate is given in Appendix~\ref{tech}.
\end{remark}

\begin{lemma}\label{ErrorA}
Let $r>0$ and $\lambda ,\nu \in ( 0,1) $.  Let $\chi_{-}$ be the characteristic function of $B_{r(1-\nu)}(0)$ and $\chi _{0}$ be the characteristic function of the sector $\{ \bx \in \R^3: r(1-\nu)<|\bx| <r (1+\nu)/(1-\lambda)\}$. Let $\eta $ be a Lipschitz function such that $0 \leq \eta(\bx) \leq 1$ for all $\bx \in \R^3$,  $\eta (\bx)\equiv 0$ if $\vert \bx\vert \leq r$, $\eta (\bx) \equiv 1$ if $\vert \bx\vert \geq r(1-\lambda)^{-1}$ and $\Vert \nabla \eta\Vert _{\infty }$ is bounded. Let $L$ denote the operator with integral kernel
\begin{equation}\label{LL}
L(\bx,\by)=\frac{\alpha^{-2}}{4\pi ^{2}}\frac{( \eta(\bx) -\eta(\by))(\eta( \bx) \vert \bx\vert -\eta(\by) \vert \by\vert) }{\vert\bx-\by\vert^{2}}K_{2}( \alpha ^{-1}\vert \bx-\by\vert) .
\end{equation}

Then for every function $f\in L^{2}( \R^{3}) $ we have
\begin{equation*}
\alpha^{-1} |( f,Lf )| \leq  3 \, D(\eta, \lambda, r) \, \| \chi_{0}f\|_2^2 + D(\eta, \lambda, r) e^{-\tfrac12\alpha ^{-1}r\nu}  \| \chi_{-}f\|_2^2 + \alpha^{-1}|( f, Q f )| ,
\end{equation*}
with $D(\eta, \lambda, r):= \Vert \nabla \eta \Vert _{\infty }\Big( \frac{\Vert \nabla \eta \Vert _{\infty }r}{1-\lambda }+1\Big) $ and $Q$ a positive semi-definite operator such that
\begin{equation*}
\Tr[Q] \leq  C D(\eta, \lambda, r) \alpha^{-1} r^2 e^{-\tfrac12 \alpha^{-1}r \nu},
\end{equation*}
with $C$ depending only on $\lambda$ and $\nu$.
\end{lemma}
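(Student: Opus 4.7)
My plan is to decompose the bilinear form using $f = \chi_- f + \chi_0 f + \chi_+ f$, where $\chi_+ := 1 - \chi_- - \chi_0$ is the indicator of the exterior region $\{|\bx| \geq r(1+\nu)/(1-\lambda)\}$, and to handle each nonvanishing block either by a Schur estimate or by Young's inequality. The residual contributions involving the exterior factor $\chi_+$ will be absorbed into the positive semi-definite operator $Q$.

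I would first establish the pointwise kernel estimate
\begin{equation*}
|L(\bx,\by)| \leq \tfrac{\alpha^{-2}}{4\pi^2} D(\eta,\lambda,r)\, K_2(\alpha^{-1}|\bx-\by|).
\end{equation*}
The factor $|\eta(\bx)-\eta(\by)|$ is controlled by $\|\nabla\eta\|_\infty|\bx-\by|$; for the factor $|\eta(\bx)|\bx| - \eta(\by)|\by||$ I would set $g(\bx) := \eta(\bx)|\bx|$ and compute $\nabla g = |\bx|\nabla\eta + \eta(\bx)\bx/|\bx|$, bounded by $\|\nabla\eta\|_\infty r/(1-\lambda)$ on $\supp(\nabla\eta) \subset \{r \leq |\bx| \leq r/(1-\lambda)\}$ plus $|\eta|\leq 1$. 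I would also record that $L(\bx,\by) = 0$ whenever $\eta(\bx)=\eta(\by)=0$ (in particular on $\supp\chi_-\times\supp\chi_-$) or $|\bx|,|\by| \geq r/(1-\lambda)$ (in particular on $\supp\chi_+\times\supp\chi_+$, where $\eta\equiv 1$ kills the first factor).

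Consequently $(f,Lf) = (\chi_0 f, L\chi_0 f) + 2\Real(\chi_0 f, L\chi_- f) + 2\Real(\chi_0 f, L\chi_+ f) + 2\Real(\chi_- f, L\chi_+ f)$. The diagonal $A_0$-block is treated by the Schur test: the pointwise bound together with $\int K_2(\alpha^{-1}|\bz|)\,d\bz = 6\pi^2\alpha^3$ gives $|(\chi_0 f, L\chi_0 f)| \leq \tfrac{3\alpha D}{2}\|\chi_0 f\|_2^2$, contributing $\tfrac{3D}{2}\|\chi_0 f\|_2^2$ after the $\alpha^{-1}$-factor. For each of the three cross blocks I would apply Young's inequality in the form $2|(\chi_j f, L\chi_k f)| \leq \epsilon\|\chi_j f\|_2^2 + \epsilon^{-1}(f, \chi_k L^* L \chi_k f)$ with the single choice $\epsilon := \alpha D e^{-\alpha^{-1}r\nu/2}$. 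After $\alpha^{-1}$-scaling, each $\epsilon\|\chi_j f\|_2^2$-contribution becomes $De^{-\alpha^{-1}r\nu/2}\|\chi_j f\|_2^2$: the two with $j=0$ are absorbed into the $3D\|\chi_0 f\|_2^2$-budget (once $\alpha_0$ is small enough that $e^{-\alpha^{-1}r\nu/2} \leq 3/4$), and the single one with $j=-$ matches the stated coefficient of $\|\chi_- f\|_2^2$.

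The main obstacle is the residual positive-operator piece. Setting
\begin{equation*}
Q := \epsilon^{-1}\bigl(\chi_- L^* L \chi_- + 2\, \chi_+ L^* L \chi_+\bigr),
\end{equation*}
which is manifestly positive semi-definite, the combined right-hand sides of the three Young estimates fit inside $\alpha^{-1}(f,Qf)$, and $\Tr[Q] = \epsilon^{-1}(\|L\chi_-\|_\mathrm{HS}^2 + 2\|L\chi_+\|_\mathrm{HS}^2)$. For both Hilbert-Schmidt norms the support constraints force annular gaps of width $\geq r\nu$ (width $r\nu$ for the $\chi_-$-block, because nonvanishing kernel requires $|\bx|>r>|\by|$; width $r\nu/(1-\lambda)$ for the $\chi_+$-block, because nonvanishing requires $|\bx|<r/(1-\lambda)<|\by|$), and Remark~\ref{K2est} applied with these gaps produces $\|L\chi_\pm\|_\mathrm{HS}^2 \leq C(\lambda,\nu) D^2 r^2 e^{-\alpha^{-1}r\nu}$. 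Substituting the choice of $\epsilon$ yields $\Tr[Q] \leq C(\lambda,\nu) D \alpha^{-1} r^2 e^{-\alpha^{-1}r\nu/2}$, as required. The delicate point is the dual role of $\epsilon$: it must simultaneously match the announced coefficient $De^{-\alpha^{-1}r\nu/2}$ in front of $\|\chi_- f\|_2^2$ and keep $\Tr[Q]$ at order $\alpha^{-1}$, and this is achieved exactly by $\epsilon = \alpha D e^{-\alpha^{-1}r\nu/2}$.
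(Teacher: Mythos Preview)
Your approach is essentially the paper's: decompose via $\chi_-,\chi_0,\chi_+$, use the Schur test on the diagonal $\chi_0$-block, Young's inequality on the cross blocks, and Remark~\ref{K2est} for the Hilbert--Schmidt estimates giving $\Tr[Q]$. The pointwise kernel bound and the support/gap observations are correct.

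The one discrepancy is the constant in front of $\|\chi_0 f\|_2^2$. With a single $\epsilon=\alpha D e^{-\alpha^{-1}r\nu/2}$ for all three cross blocks you obtain $\tfrac{3D}{2}+2De^{-\alpha^{-1}r\nu/2}$, which is $\le 3D$ only under the extra hypothesis $e^{-\alpha^{-1}r\nu/2}\le 3/4$; the lemma as stated has no such assumption (and $r,\alpha$ are independent parameters here). The paper avoids this by (i) grouping the $(-,0)$ and $(-,+)$ blocks into one application of Young, putting the $\varepsilon_1\|\cdot\|^2$ on the $\chi_-$ side with $\varepsilon_1=\alpha D e^{-\alpha^{-1}r\nu/2}$, and (ii) using a \emph{different}, non-exponentially-small parameter $\varepsilon_2=\tfrac{3\alpha}{2}D$ for the $(0,+)$ block. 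This yields exactly $\tfrac{3\alpha D}{2}+\varepsilon_2=3\alpha D$ for the $\chi_0$-coefficient and exactly $\varepsilon_1$ for the $\chi_-$-coefficient, with $Q=\varepsilon_1^{-1}Q_1+\varepsilon_2^{-1}Q_2$ still satisfying the stated trace bound (the $\varepsilon_2^{-1}Q_2$ piece uses the larger gap $r\nu/(1-\lambda)$). Adopting this two-parameter grouping removes your parenthetical smallness assumption and matches the stated constants exactly.
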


\begin{proof}
As a first step we decompose the operator $L$. We introduce a third cut-off function $\chi _{+}$ such that $1= \chi_{-}(\bx)+\chi_{0}(\bx)+\chi_{+}(\bx)$ for all $\bx \in \R^3$. We decompose the operator $L$ with respect to these characteristic functions
as follows:
\begin{equation*}
L =\chi _{-}L (\chi _{0}+ \chi_{+})+(\chi _{0}+\chi_{+})L\chi _{-}+\chi _{0}L\chi _{+}+\chi _{+}L\chi _{0}+\chi _{0}L\chi _{0}.
\end{equation*}
We proceed similarly as in \cite[Proof of Theorem 2.6 (Localization error)]{SSS}. For $\Gamma_1, \Gamma_2$ bounded operators from $(\Gamma_1 -\Gamma_2)(\Gamma_1 -\Gamma_2)^* \geq 0$ it follows
\begin{equation}\label{G12}
\Gamma_1 \Gamma_2^* +\Gamma_2  \Gamma_1^* \leq \Gamma_1 \Gamma_1^* + \Gamma_2 \Gamma_2^* .
\end{equation}
We are going to use several times this inequality with different choices of $\Gamma_1$ and $\Gamma_2$.

As a first choice we consider $\Gamma_1= \sqrt{\varepsilon_1} \chi_{-}$ and $\Gamma_2 = 1/\sqrt{\varepsilon_1} (\chi_{0}+\chi_{+}) L \chi_{-}$ with $\varepsilon_1>0$ to be chosen. Using \eqref{G12} we get
\begin{equation}\label{Ober1}
|(f, (\chi _{-}L(\chi _{0}+\chi_{+})+(\chi _{0}+\chi_{+})L\chi _{-}) f) | \leq \varepsilon_1 \| \chi_{-} f\|_{2}^2 + \frac{1}{\varepsilon_1} ( f, Q_{1}f ) ,
\end{equation}
with $Q_{1}= (\chi_{0}+\chi_{+})L \chi_{-}^2 L (\chi_{0}+\chi_{+})$. We estimate now the trace of $Q_{1}$. By the definition of $\eta, \chi_{-},\chi_{0}$ and $\chi_{+}$ it follows that
\begin{equation*}
\Tr[Q_1] = \int_{|\bx| \leq r(1-\nu)} \int_{|\by| \geq r} L^2(\bx, \by) \, d\bx d\by \leq \tfrac{(16)^2}{3\pi^{2}}  \tfrac{(1-\nu)^3}{\nu} D(\eta, \lambda,r)^2 r^2 e^{-\alpha^{-1}r \nu}.
\end{equation*}
In the last step we use the definition of $L$, Remark~\ref{K2est} and the definition of the constant $D(\eta, \lambda,r)$ given in the statement of the lemma.

Now we choose  $\Gamma_1= \sqrt{\varepsilon_2} \chi_0$ and $\Gamma_2 = 1/\sqrt{\varepsilon_2} \chi_{+} L \chi_{0}$ with $\varepsilon_2>0$ to be chosen. Proceeding as above we get
\begin{equation}\label{Ober2}
| (f, (\chi _{+}L\chi _{0}+\chi _{0}L\chi _{+}) f) | \leq \varepsilon_2 \| \chi_0 f\|_{2}^2 + \frac{1}{\varepsilon_2} (f, Q_{2} f ) ,
\end{equation}
with $Q_{2}= \chi_{+} L \chi^2_{0} L \chi_{+} $ and such that
\begin{equation*}
\Tr[Q_{2}] \leq \tfrac{(16)^2}{3 \pi ^{2}} \tfrac{1-(1-\nu)^3(1-\lambda)^3}{\nu(1-\lambda)^2}  D(\eta, \lambda,r)^2 \; r^2 e^{-\alpha^{-1}r \frac{\nu}{1-\lambda}}.
\end{equation*}

It remains to study the term $\chi _{0}L\chi _{0}$. This one has to be treated differently. By Schwartz's inequality one gets
\begin{equation}\label{Ober3}
|( f,\chi _{0}L\chi _{0}f)| \leq \tfrac{3 \alpha}{2}   D(\eta, \lambda,r) \int_{\R^{3}}\chi _{0}(\bx) \vert f(\bx) \vert^{2} ,
\end{equation}
since $\int_{\R^3} |L(\bx,\by)| d\bx d\by \leq \tfrac{3 \alpha}{2}  D(\eta, \lambda,r)$.

The claim follows from \eqref{Ober1}, \eqref{Ober2} and \eqref{Ober3} choosing $\varepsilon _{1} =   D(\eta, \lambda,r) \alpha e^{-\tfrac12 \alpha ^{-1}r\nu }$, $\varepsilon _{2} =\tfrac{3\alpha}{2}  D(\eta, \lambda,r)$ and with $Q:= \frac{1}{\varepsilon_1} Q_{1}+\frac{1}{\varepsilon_2}Q_{2}$.
\end{proof}

\begin{definition}[The localization function]\label{defloc}
Fix $0<\lambda <1$ and let $G :\R^{3}\rightarrow  \R$ be given by
\begin{equation*}
G( \bx ) :=\left\{
\begin{array}{ll}
0 & \text{if }\vert \bx\vert \leq 1, \\
\frac{\pi }{2}(\vert \bx\vert -1) \frac{1}{( 1-\lambda)^{-1}-1} & \text{if }1\leq \vert \bx\vert \leq ( 1-\lambda)^{-1}, \\
\frac{\pi }{2} & \text{if }( 1-\lambda )^{-1}\leq \vert \bx\vert .
\end{array}
\right.
\end{equation*}
Let $r>0$ and define the outside localization function $\theta_{r}(\bx) :=\sin(G( \frac{\vert \bx\vert }{r})).$
\end{definition}

\begin{remark}
{F}rom the definition it follows that $\Vert \nabla \theta _{r}\Vert _{\infty }\leq \frac{\pi}{2}\frac{1-\lambda }{\lambda }r^{-1}$.
\end{remark}

\begin{lemma}\label{extterm}
For all $r>0$ and $\lambda ,\nu \in( 0,1) $ the density $\rho ^{\rm HF}$ of the minimizer satisfies
\begin{eqnarray*}
\int_{\vert \bx\vert >r( 1-\lambda)^{-1}}\rho^{\rm HF}( \bx) d\bx & \leq & 1 +\tfrac{2}{\lambda}+ 2 \sup_{|\bx|=r(1-\lambda)}|\bx|\Phi_{r(1-\lambda)}^{\rm HF}(\bx) + \mathcal{R}^{\frac12}
\end{eqnarray*}
with
\begin{equation*}
\mathcal{R} =  6 D(\lambda) r^{-1} \int_{r(1-\nu)<|\bx| < r\frac{1+\nu}{1-\lambda}} \rho^{\rm HF}(\bx) \; d\bx +2 D(\lambda)( r^{-1}N + C r \alpha^{-2}) e^{-\tfrac12 \alpha^{-1}r \nu},
\end{equation*}
with $D(\lambda):=(1+ \pi /(2 \lambda (1-\lambda))) \pi/ (2 \lambda)$ and $C=C(\lambda, \nu)$.
\end{lemma}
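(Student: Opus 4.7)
The plan is to adapt Solovej's non-relativistic argument \cite[Lem.~10.5]{Sol} to the pseudo-relativistic setting, the main new ingredient being a polarised version of the IMS identity of Theorem~\ref{IMSrel} used in conjunction with Lemma~\ref{ErrorA}. The starting point is the HF Euler equation $h_{\gamma^{\rm HF}} u_i = \varepsilon_i u_i$ with $\varepsilon_i<0$ (Theorem~\ref{HF}). Testing with the non-negative multiplier $A = \eta\,|\bx|\,\eta$, where $\eta$ is the cutoff from Lemma~\ref{ErrorA}, yields the master inequality
\begin{equation*}
0 \;\geq\; \sum_i \varepsilon_i(u_i, A u_i) \;=\; \sum_i (u_i, A\, h_{\gamma^{\rm HF}} u_i).
\end{equation*}
Expanding $h_{\gamma^{\rm HF}} = T(\bp) - Z\alpha/|\bx| + \rho^{\rm HF}\ast|\bx|^{-1} - \mathcal{K}_{\gamma^{\rm HF}}$, splitting the $\by$-convolution at $|\by| = r(1-\lambda)$, and invoking Newton's theorem for $|\by| < r(1-\lambda) \leq |\bx|$ (since $\eta$ vanishes on $|\bx|<r$), the inner part produces the screened-potential integrand $\eta^2(\bx)\,\rho^{\rm HF}(\bx)\,|\bx|\,\Phi^{\rm HF}_{r(1-\lambda)}(\bx)$. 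The exterior-exterior Coulomb contribution is symmetrised via the reversed triangle inequality $(|\bx|+|\by|)/|\bx-\by|\geq 1$, producing a term quadratic in $\int \eta\,\rho^{\rm HF}$; combined with the elementary geometric bound $|\bx|/|\bx-\by|\leq 1/\lambda$ valid on the mixed region $\{|\bx|\geq r,\;|\by|\leq r(1-\lambda)\}$, this is the source of the $1 + 2/\lambda$ constants.

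For the kinetic contribution I will use the polarised IMS identity
\begin{equation*}
2\,\mathrm{Re}\bigl(\phi_1 f, T(\bp)\phi_2 f\bigr) - 2\bigl(f, T(\bp)\,\phi_1\phi_2\, f\bigr) \;=\; -\alpha^{-1}\bigl(f, L_{\phi_1,\phi_2} f\bigr),
\end{equation*}
where $L_{\phi_1,\phi_2}$ has the Bessel kernel $\tfrac{\alpha^{-2}}{4\pi^2}\tfrac{(\phi_1(\bx)-\phi_1(\by))(\phi_2(\bx)-\phi_2(\by))}{|\bx-\by|^2}K_2(\alpha^{-1}|\bx-\by|)$. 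Choosing $\phi_1 = \eta$ and $\phi_2 = \eta|\bx|$ reproduces precisely the operator $L$ of Lemma~\ref{ErrorA}. Lemma~\ref{ErrorA} applied to the sum $\gamma^{\rm HF} = \sum_i |u_i\rangle\langle u_i|$ then bounds $\alpha^{-1}\sum_i |(u_i, L u_i)|$ by exactly the two ingredients of $\mathcal{R}$: the transition-annulus integral of $\rho^{\rm HF}$ and the exponentially small tail (the latter via $\Tr[Q\gamma^{\rm HF}]\leq \Tr[Q]$ since $\gamma^{\rm HF}\leq \mathrm{Id}$). The remaining diagonal piece $\mathrm{Re}(\eta f, T(\bp)\eta|\bx| f)$ will be controlled by Cauchy-Schwarz combined with the arithmetic-geometric mean. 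Finally, the exchange contribution $\sum_i(u_i, A\mathcal{K}_{\gamma^{\rm HF}} u_i) = \iint A(\bx)\,\Tr[M(\bx)M(\by)]/|\bx-\by|\,d\bx d\by$, where $M(\bx):=\sum_\sigma u_i(\bx,\sigma)u_j(\bx,\sigma)^*$ is a positive semidefinite matrix in the orbital indices, enters the master inequality with a favourable sign and is simply dropped.

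Assembling the contributions produces an inequality of the schematic form $(X - A)^2 \leq \mathcal{R}$, where $X := \int \eta\,\rho^{\rm HF}$ majorises $\int_{|\bx|>r(1-\lambda)^{-1}}\rho^{\rm HF}$ and $A := 1 + 2/\lambda + 2\sup_{|\bx|=r(1-\lambda)}|\bx|\Phi^{\rm HF}_{r(1-\lambda)}(\bx)$; completing the square then yields the stated bound. The principal obstacle is the derivation of the polarised IMS identity and the verification that its kernel matches exactly that of Lemma~\ref{ErrorA}: this requires careful manipulation of the integral representation of $T(\bp)$ via $K_2$ together with the symmetrisation formula for bilinear forms. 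A secondary bookkeeping issue is ensuring that the geometric Coulomb symmetrisation extracts the clean factor $1 + 2/\lambda$ without spurious dependence on $Z$, which depends on the fact that all $Z$-dependence is absorbed into the $|\bx|\Phi^{\rm HF}$ term on the right-hand side.
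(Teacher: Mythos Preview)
Your overall architecture matches the paper's: test the HF Euler equations against $\eta|\bx|\eta$, treat the Coulomb terms via Newton's theorem and the symmetrisation $(|\bx|+|\by|)/|\bx-\by|\geq 1$ exactly as in \cite{Sol}, identify the commutator piece of the kinetic energy with the operator $L$ of Lemma~\ref{ErrorA}, and close with a quadratic inequality in $X=\int\eta^2\rho^{\rm HF}$. The polarised IMS identity you write down is correct (up to a sign) and indeed produces precisely the kernel \eqref{LL}.

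The gap is in your treatment of what you call ``the remaining diagonal piece $\Real(\eta f, T(\bp)\eta|\bx|f)$''. After your polarised IMS step this term is exactly $\Real(|\bx|v, T(\bp)v)$ with $v=\eta u_i$, and in the master inequality it must be bounded \emph{from below}. You propose to control it by Cauchy--Schwarz and the arithmetic--geometric mean, but those tools only give two-sided bounds of the form $|\Real(|\bx|v, Tv)|\leq \varepsilon(v,Tv)+\varepsilon^{-1}(|\bx|v,T|\bx|v)$, which introduces uncontrolled kinetic-energy terms on the wrong side and destroys the argument. The paper instead invokes Lieb's operator inequality from \cite{L}, namely that $|\bx|T(\bp)+T(\bp)|\bx|\geq 0$ as a quadratic form (valid for the pseudo-relativistic $T(\bp)$ since $\eta u_i\in H^1$ by Theorem~\ref{HF}(3)), which gives $\Real(|\bx|v,T(\bp)v)\geq 0$ outright. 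This positivity is the essential non-relativistic input carried over from \cite{L}; without it the quadratic inequality you want does not close. Once you replace your Cauchy--Schwarz step by this positivity, the rest of your outline coincides with the paper's proof (the final inequality is of the form $x^2-Bx-C\leq 0$ rather than $(x-A)^2\leq \mathcal{R}$, but both yield $x\leq B+\sqrt{C}$).
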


\begin{proof}
Let $\gamma ^{\rm HF}$ be the minimizer. By the variational principle, $\gamma^{\rm HF}$ is a projection onto the subspace spanned by $u_{1},\dots ,u_{N}$. These functions $u_{i}$ satisfy the Euler Lagrange equations $h_{\gamma^{\rm HF}} u_{i}=\varepsilon _{i}u_{i}$, $\varepsilon _{i}<  0,$ for $i=1,\dots ,N$, with $h_{\gamma^{\rm HF}}$ defined in \eqref{euler}.

Given $\eta $ a function in $C^{1}( \R^{3}) $ with support away from zero, we find
\begin{equation*}
0 \geq \sum_{i=1}^{N}\varepsilon _{i}\int_{\R^{3}}\vert u_{i}( \bx)\vert^{2}\vert \bx\vert \eta ^{2}( \bx) d\bx = \sum_{i=1}^{N}\int_{\R^{3}}u_{i}(\bx) ^{\ast}\vert \bx\vert \eta^{2}( \bx) \, h_{\gamma^{\rm HF}}u_{i}( \bx) d\bx .
\end{equation*}
Since $ \eta T(\bp)u_i \in L^{2}(\R^3)$ (Theorem~\ref{HF}, (3)), using the Euler-Lagrange equations and treating all the terms, except the kinetic energy, as in \cite[Formula (63)]{Sol} we get
\begin{eqnarray}
0 &\geq &\alpha ^{-1}\sum_{i=1}^{N} (u_i \eta |\cdot|, \eta T(\bp) u_i ) - Z \int_{\R^{3}}\rho^{\rm HF}(\bx) \eta^{2}(\bx) d\bx \notag \\
&&+\int_{\R^{3}}\int_{\R^{3}}\left[ \rho ^{\rm HF}( \bx) \rho ^{\rm HF}(\by) -\Tr_{\C^{q}}\vert \gamma ^{\rm HF}( \bx,\by )\vert ^{2} \right] \frac{\vert \by\vert( 1-\eta^{2}(\bx)) \eta^{2}( \by) }{\vert\bx-\by\vert} \, d\bx d\by \notag \\
&&+\tfrac{1}{2}\Big( \int_{\R^{3}}\rho ^{\rm HF}(\bx) \eta^{2}(\bx) d\bx\Big) ^{2}-\tfrac{1}{2}\int_{\R^{3}}\rho^{\rm HF}(\bx) \eta^{2}(\bx) d\bx .  \label{for2}
\end{eqnarray}

Now we look at the kinetic energy term. For each $i \in \{1, \dots ,N \}$ we may write
\begin{equation}
\Real (u_i \eta |\cdot|, \eta T(\bp) u_i ) = \Real (u_i \eta |\cdot|, T(\bp) (\eta u_i) ) + \Real (u_i \eta |\cdot|,[ \eta,  T(\bp)] u_i ) , \label{comm1}
\end{equation}
where $[A,B]$ denotes the commutator of the operators $A$ and $B$. The first term on the right hand side of \eqref{comm1} is non-negative by the result of Lieb in \cite{L}. Notice that here we may use that $\eta u_i \in H^{1}(\R^3)$ (see Theorem~\ref{HF}, (3)).

Hence, from  \eqref{for2} and \eqref{comm1} we find
\begin{eqnarray}
0 &\geq &\alpha ^{-1}\sum_{i=1}^{N} \Real (u_i \eta |\cdot|, [\eta ,T(\bp)] u_i ) - Z \int_{\R^{3}}\rho^{\rm HF}(\bx) \eta^{2}(\bx) d\bx \notag \\
&&+\int_{\R^{3}}\int_{\R^{3}}\left[ \rho ^{\rm HF}( \bx) \rho ^{\rm HF}(\by) -\Tr_{\C^{q}}\vert \gamma ^{\rm HF}( \bx,\by )\vert ^{2} \right] \frac{\vert \by\vert( 1-\eta^{2}(\bx)) \eta^{2}( \by) }{\vert\bx-\by\vert} \, d\bx d\by \notag \\
&&+\tfrac{1}{2}\Big( \int_{\R^{3}}\rho ^{\rm HF}(\bx) \eta^{2}(\bx) d\bx\Big) ^{2}-\tfrac{1}{2}\int_{\R^{3}}\rho^{\rm HF}(\bx) \eta^{2}(\bx) d\bx .  \label{for3}
\end{eqnarray}

By a density argument we may choose $\eta =\theta _{r}$ the localization function defined in Definition~\ref{defloc}. Reasoning as on page 541 of \cite{Sol},
we get
\begin{eqnarray}
0 &\geq &\alpha ^{-1}\sum_{i=1}^{N} \Real (u_i \eta |\cdot|, [\eta ,T(\bp)] u_i ) \notag +\tfrac{1}{2}\Big( \int_{\R^{3}}\rho ^{\rm HF}(\bx) \eta^{2}(\bx) d\bx\Big) ^{2}\\
&& -\Big( \tfrac{1}{2} +\tfrac{1}{\lambda}+ \sup_{|\bx|=r(1-\lambda)}|\bx|\Phi_{r(1-\lambda)}^{\rm HF}(\bx) \Big)\int_{\R^{3}}\rho^{\rm HF}(\bx) \eta^{2}(\bx) d\bx .  \label{for4}
\end{eqnarray}

It remains to estimate the first term on the right hand side of \eqref{for4}. With the same arguments used in the proof of the IMS formula, it can be rewritten as
\begin{equation} \label{j10}
\alpha^{-1} \sum_{i=1}^{N} \Real (u_i \eta |\cdot|, [\eta ,T(\bp)] u_i ) = - \alpha^{-1} \sum_{i=1}^{N} (u_i, L u_i ),
\end{equation}
where $L$ is the operator defined in \eqref{LL}. Using Lemma~\ref{ErrorA} and since $\Vert \nabla \eta \Vert_{\infty }=\Vert \nabla \theta_{r} \Vert_{\infty }\leq \pi /\left( 2\lambda r\right) $ we find, with $D(\lambda)$ defined as in the statement,
\begin{eqnarray}
\alpha^{-1} \Big|\sum_{i=1}^{N} (u_i, L u_i ) \Big| & \leq & 3 D(\lambda) r^{-1} \|\chi_0 \rho^{\rm HF} \|_{1} + D(\lambda) r^{-1} e^{-\tfrac12 \alpha^{-1}r \nu}  \| \chi_{-}\rho^{\rm HF} \|_{1} \notag \\
&& + C D(\lambda) r \alpha^{-2}e^{-\tfrac12 \alpha^{-1}r \nu} ,
\label{sopra}
\end{eqnarray}
where $\chi_{0}, \chi_{-}$ and $C$ are as defined in the statement of Lemma~\ref{ErrorA}. Hence combining \eqref{for4} with \eqref{sopra}, using the definition of $\chi_{0}$ and that $\| \chi_{-} \rho^{\rm HF} \|_{1} \leq N$ we have
\begin{eqnarray*}
0 &\geq & - 3 D(\lambda) r^{-1} \int_{r(1-\nu)<|\bx| < r\frac{1+\nu}{1-\lambda}} \rho^{\rm HF}(\bx) \; d\bx - D(\lambda) r^{-1} e^{-\tfrac12 \alpha^{-1}r \nu}  N  \\
&& - C D(\lambda) r \alpha^{-2}e^{-\tfrac12 \alpha^{-1}r \nu} +\tfrac{1}{2}\Big(\int_{\R^{3}}\rho ^{\rm HF}(\bx) \eta^{2}(\bx) d\bx\Big) ^{2}\\
&& -\Big( \tfrac{1}{2} +\tfrac{1}{\lambda}+ \sup_{|\bx|=r(1-\lambda)}|\bx|\Phi_{r(1-\lambda)}^{\rm HF}(\bx) \Big)\int_{\R^{3}}\rho^{\rm HF}(\bx) \eta^{2}(\bx) d\bx .
\end{eqnarray*}
The claim follows using  that $x^{2}-Bx-C\leq 0$ implies $x\leq B+\sqrt{C}$.
\end{proof}

\begin{proof}[Proof of Lemma~\ref{extde}] We proceed as in \cite[page 551]{Sol}. The first estimate follows directly from the equality
\begin{equation*}
\int_{|\bx|<r} (\rho^{\rm HF}(\bx)-\rho^{\rm TF}(\bx)) \; d\bx =\tfrac{1}{4 \pi}r \int_{S^2} (\Phi^{\rm HF}_{r}(r\omega) -\Phi^{\rm TF}_{r}(r\omega) ) \; d\omega,
\end{equation*}
and \eqref{19}. To prove \eqref{rr1bis} we use Lemma~\ref{extterm}. We first notice that for $0<\beta<\gamma$ and $\gamma$ such that $r \gamma \leq R$
\begin{eqnarray} \notag
\int_{r\beta < |\by|< r\gamma} \rho^{\rm HF}(\by) \; d\by & \leq &  \Big| \int_{|\by|< r\gamma} (\rho^{\rm HF}(\by) -\rho^{\rm TF}(\by))\; d\by\Big| \\
&&\hspace{-1cm} +\Big| \int_{|\by|< r\beta} ( \rho^{\rm HF}(\by) -\rho^{\rm TF}(\by)) \; d\by \Big| + \int_{|\by|> r\beta} \rho^{\rm TF}(\by) \; d\by  \notag \\
& \leq &  C r^{-3} \beta^{-3} (1+\sigma r^{\varepsilon'}). \label{stanca}
\end{eqnarray}
Here we used  \eqref{rr1} and that by the TF-equation and \eqref{Tfpou}
\begin{equation*}
\int_{|\by|> r\beta} \rho^{\rm TF}(\by) \; d\by \leq \tfrac{3^4 2 \pi^2}{q^2} \beta^{-3} r^{-3}.
\end{equation*}
Since $\int_{\vert \bx\vert >r}\rho^{\rm HF}\leq \int_{\vert \bx\vert > 2r/3}\rho^{\rm HF}$ to prove the claim we estimate this second integral. By Lemma~\ref{extterm} with $r$ replaced by $r/2$, $\lambda=\tfrac14$ and $\nu=\tfrac12$ we get
\begin{equation*}
\int_{\vert \bx\vert > 2r/3}\rho^{\rm HF}( \bx) d\bx \leq 9+ \tfrac{3}{4}r \sup_{|\bx|=3r/8}\Phi_{3r/8}^{\rm HF}(\bx) + \mathcal{R}^{\frac12},
\end{equation*}
with $\mathcal{R}$ defined as in the statement of Lemma~\ref{extterm}. By \eqref{19} and Corollary~\ref{TFscr} we find
\begin{equation*}
 \sup_{|\bx|=3r/8}\Phi_{3r/8}^{\rm HF}(\bx) \leq C \sigma r^{-4+\varepsilon'} + \sup_{|\bx|=3r/8}\Phi_{3r/8}^{\rm TF}(\bx) \leq C(1+\sigma r^{\varepsilon'}) r^{-4}.
\end{equation*}
Moreover, from \eqref{stanca} with $\beta=1/4$ and $\gamma=1$, since $N < 2Z+1$ and the boundness of $\mathbb{R}^{+} \ni x \mapsto x^p e^{-x}$ for all $p>0$, we find
\begin{equation*}
\mathcal{R} \leq C (r^{-4}(1+\sigma r^{\varepsilon'}) +r^{-1}).
\end{equation*}
The claim follows directly.
\end{proof}

%%%%%%%%%%%%%%%%%%%%%%%%%%%%%%%%%%%%%%%%%%%%%%%%%%%%%%%%%%
%%%%%%%%%%%%% Separating inside from outside %%%%%%%%%%%%%
%%%%%%%%%%%%%%%%%%%%%%%%%%%%%%%%%%%%%%%%%%%%%%%%%%%%%%%%%%

\subsection{Separating the inside from the outside}

We consider the exterior part of the minimizer, i.e. the density matrix
\begin{equation}\label{exteriordensity}
\gamma^{\rm HF}_r:= \theta_r \gamma^{\rm HF} \theta_r,
\end{equation}
with $\theta_r$ as defined in Definition~\ref{defloc}. This density matrix almost minimizes a new energy functional where there is no exchange term. Indeed sufficiently far away from the nucleus the electrons are far apart and hence their mutual interaction is small.

We define an auxiliary energy functional on $\mathcal{A}$ (see \eqref{A}) given by
\begin{equation}\label{ea}
\mathcal{E}^{A}(\gamma ):=\Tr[ (\alpha^{-1} T(\bp) -\Phi_{r}^{\rm HF}) \gamma ]+ D(\rho_{\gamma}).
\end{equation}

\begin{theorem}\label{EA}
Let $r>0$ and $\lambda, \nu \in (0,1)$. Let $\chi_{r}^{+}$ denote the characteristic function of $\R^3 \setminus B_{r}(0)$. The density matrix $\gamma^{\rm HF}_r$ defined in \eqref{exteriordensity} satisfies
\begin{equation*}
\mathcal{E}^{A}(\gamma^{\rm HF}_r ) \leq \Big\{\mathcal{E}^{A}(\gamma): \gamma \in \mathcal{A}, \supp(\rho_{\gamma}) \subset \R^3 \setminus B_r(0), \|\rho_{\gamma}\|_1 \leq \|\rho^{\rm HF} \chi_r\|_1\Big\} + \mathcal{R},
\end{equation*}
where
\begin{eqnarray*}
\mathcal{R} & = &  (\tfrac{\pi}{2 \lambda} +\tfrac{C}{\lambda^2}r^{-1})r^{-1} \int_{r(1-\lambda)(1-\nu)\leq |\bx|} \rho^{\rm HF}(\bx) \; d\bx+c' \alpha^{-2}(1+\alpha r^{-2}) e^{-\tfrac12 \alpha^{-1}rd}\\
&& + \mathcal{E}x(\gamma^{\rm HF}_r) + C \int_{r(1-\lambda)\leq |\bx| \leq \frac{r}{1-\lambda}} \Big[\big( \Phi_{r(1-\lambda)}^{\rm HF}(\bx)\big)^{\frac52} +\alpha^3 \big( \Phi_{r(1-\lambda)}^{\rm HF}(\bx)\big)^{4}\Big]\; d\bx ,
\end{eqnarray*}
and $c', d$ are positive constants depending only on $\nu$ and $\lambda$.
\end{theorem}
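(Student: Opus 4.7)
My plan is to compare the HF-energy of the minimizer $\gamma^{\rm HF}$ with that of a spliced trial state $\tilde\gamma$ that coincides with a localized interior part of $\gamma^{\rm HF}$ inside $B_r(0)$ and with the competitor $\gamma$ outside. By Lieb's variational principle (Theorem~\ref{varprin}) one has $\mathcal{E}^{\rm HF}(\gamma^{\rm HF})\leq\mathcal{E}^{\rm HF}(\tilde\gamma)$, and after cancellation of the purely interior contributions this inequality reduces to the desired bound $\mathcal{E}^A(\gamma^{\rm HF}_r)\leq\mathcal{E}^A(\gamma)+\mathcal{R}$.

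\textbf{Localization via a partition of unity and IMS.} I would introduce the inner cutoff $\chi_r^-(\bx):=\cos(G(|\bx|/r))$ so that $(\chi_r^-)^2+\theta_r^2\equiv 1$, set $\gamma^{\rm in}_r:=\chi_r^-\gamma^{\rm HF}\chi_r^-$, and take $\tilde\gamma:=\gamma^{\rm in}_r+\gamma$ (with a negligible padding at infinity, if necessary, to reach trace $N$). The IMS formula of Theorem~\ref{IMSrel} yields
\[
\alpha^{-1}\Tr[T(\bp)\gamma^{\rm HF}]=\alpha^{-1}\Tr[T(\bp)\gamma^{\rm in}_r]+\alpha^{-1}\Tr[T(\bp)\gamma^{\rm HF}_r]-\alpha^{-1}\Tr[L\gamma^{\rm HF}],
\]
with $L$ a localization-error operator of the form \eqref{LL}. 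Applying Lemma~\ref{ErrorA} to $\alpha^{-1}\Tr[L\gamma^{\rm HF}]$ delivers the annular $r^{-1}\int_{r(1-\lambda)(1-\nu)\leq|\bx|}\rho^{\rm HF}$ contribution and the exponentially small remainder $c'\alpha^{-2}(1+\alpha r^{-2})e^{-\alpha^{-1}rd/2}$ appearing in $\mathcal{R}$.

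\textbf{Energy accounting and assembly of $\Phi_r^{\rm HF}$.} Expand $\mathcal{E}^{\rm HF}(\tilde\gamma)$ using $\rho_{\tilde\gamma}=\rho_{\gamma^{\rm in}_r}+\rho_\gamma$ with $\rho_{\gamma^{\rm in}_r}=(\chi_r^-)^2\rho^{\rm HF}$, and perform the parallel expansion of $\mathcal{E}^{\rm HF}(\gamma^{\rm HF})$, splitting $\rho^{\rm HF}=(\chi_r^-)^2\rho^{\rm HF}+\theta_r^2\rho^{\rm HF}$ in the nuclear attraction and in the direct Coulomb integral. The crucial rearrangement
\[
Z\int\frac{\rho_\gamma(\bx)}{|\bx|}d\bx-\iint\frac{\rho_\gamma(\bx)\rho_{\gamma^{\rm in}_r}(\by)}{|\bx-\by|}d\bx\,d\by=\int\Phi_r^{\rm HF}(\bx)\rho_\gamma(\bx)d\bx+\mathcal{R}_{\rm ann},
\]
assembles the nuclear attraction and the interior--exterior Coulomb interaction into the screened potential $\Phi_r^{\rm HF}$, with an annular remainder $\mathcal{R}_{\rm ann}$ supported where $(\chi_r^-)^2$ differs from the characteristic function of $B_r$; the analogous identity is used for $\rho_{\gamma^{\rm HF}_r}$. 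The purely interior pieces (depending only on $\gamma^{\rm in}_r$ and $(\chi_r^-)^2\rho^{\rm HF}$) then cancel between the two sides of $\mathcal{E}^{\rm HF}(\gamma^{\rm HF})\leq\mathcal{E}^{\rm HF}(\tilde\gamma)$; the exchange $\mathcal{E}x(\gamma^{\rm HF}_r)\geq 0$, which is absent from $\mathcal{E}^A$, migrates into $\mathcal{R}$ with its natural positive sign, while $-\mathcal{E}x(\tilde\gamma)\leq 0$ is simply dropped from the upper bound.

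\textbf{Annular remainder and main obstacle.} To control $\mathcal{R}_{\rm ann}$ (and its counterpart on the $\gamma^{\rm HF}_r$-side) I would apply the Daubechies--Lieb--Yau inequality (Theorem~\ref{dayau}, cf. \eqref{inema}) to the one-body operator $\alpha^{-1}T(\bp)-\Phi^{\rm HF}_{r(1-\lambda)}$ restricted to the annulus $r(1-\lambda)\leq|\bx|\leq r/(1-\lambda)$: this produces exactly the integral $C\int[(\Phi^{\rm HF}_{r(1-\lambda)})^{5/2}+\alpha^3(\Phi^{\rm HF}_{r(1-\lambda)})^4]\,d\bx$ in the statement of $\mathcal{R}$. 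The main obstacle will be the careful bookkeeping that makes precisely $\Phi_r^{\rm HF}$ (and not some shifted variant) emerge on both sides of the comparison, with matched annular remainders that can be jointly controlled by a single Daubechies--Lieb--Yau estimate; a secondary technical point is checking that $\tilde\gamma$ is a genuine density matrix of the correct trace, which is precisely why the partition of unity $(\chi_r^-)^2+\theta_r^2\equiv 1$ (rather than a naive characteristic-function split) is the right choice of cutoffs.
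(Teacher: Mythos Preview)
Your overall strategy---sandwich $\mathcal{E}^{\rm HF}(\gamma^{\rm HF})$ between an expression involving the localized interior piece plus $\mathcal{E}^{A}(\gamma^{\rm HF}_r)$ (lower bound) and the same interior piece plus $\mathcal{E}^{A}(\gamma)$ (upper bound), then cancel the interior---is exactly what the paper does. But the paper carries it out with a \emph{three}-function partition of unity $\theta_r^2+\theta_0^2+\theta_-^2=1$, where $\theta_-:=(1-\theta_{r(1-\lambda)}^2)^{1/2}$ is supported in $\{|\bx|\leq r\}$ and $\theta_0:=(\theta_{r(1-\lambda)}^2-\theta_r^2)^{1/2}$ lives on the annulus $r(1-\lambda)\leq|\bx|\leq r/(1-\lambda)$. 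Your two-function split $\chi_r^-,\theta_r$ misses this middle piece, and that creates two concrete problems.

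First, the upper bound. Because $\theta_-$ is supported in $\{|\bx|\leq r\}$, the paper's trial state $\gamma^{\rm HF}_-+\gamma$ is automatically a density matrix (the summands have disjoint spatial supports), and the upper bound $\mathcal{E}^{\rm HF}(\gamma^{\rm HF})\leq \mathcal{E}^{\rm HF}(\gamma^{\rm HF}_-)+\mathcal{E}^{A}(\gamma)$ is clean. Your $\chi_r^-$ is supported out to $|\bx|\leq r/(1-\lambda)$, so $\gamma^{\rm in}_r$ and $\gamma$ overlap on the annulus $r\leq|\bx|\leq r/(1-\lambda)$; you would have to argue separately that $\gamma^{\rm in}_r+\gamma\leq\Id$, which need not hold.

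Second, and more seriously, the Lieb--Thirring step. In the lower bound the paper arrives (after IMS and the localization-error estimate) at
\[
\mathcal{E}^{\rm HF}(\gamma^{\rm HF})\geq \mathcal{E}^{\rm HF}(\gamma^{\rm HF}_-)+\mathcal{E}^{A}(\gamma^{\rm HF}_r)-\mathcal{E}x(\gamma^{\rm HF}_r)+\Tr\big[(\alpha^{-1}T(\bp)-\Phi^{\rm HF}_{r(1-\lambda)})\gamma_0^{\rm HF}\big]-(\text{loc.\ errors}),
\]
and the annular integral in $\mathcal{R}$ comes from bounding the $\gamma_0^{\rm HF}$-term from below by the generalized Lieb--Thirring inequality (Theorem~\ref{DaubeV}, i.e.\ \eqref{Dau}), \emph{not} Theorem~\ref{dayau}. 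The point is that $\gamma_0^{\rm HF}$ carries its own share of kinetic energy, which is what makes a Lieb--Thirring bound applicable. In your two-fold split all the kinetic energy is already distributed between $\gamma^{\rm in}_r$ and $\gamma^{\rm HF}_r$; your ``annular remainder $\mathcal{R}_{\rm ann}$'' is a pure potential term, and there is nothing to feed into Theorem~\ref{dayau} or \ref{DaubeV}. So the mechanism you propose for producing $\int(\Phi^{\rm HF}_{r(1-\lambda)})^{5/2}+\alpha^3(\Phi^{\rm HF}_{r(1-\lambda)})^4$ does not go through as stated. The fix is precisely to introduce the intermediate cutoff $\theta_0$ and give the annulus its own density matrix with kinetic energy.
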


\begin{proof}
We proceed as in \cite[pages 532-6]{Sol}. The first step of the proof is a localization. Once again we have to treat carefully the localization error coming from the kinetic energy. This is the main difference with \cite{Sol}. For completeness we repeat the main ideas of the reasoning.

We consider the following partition of unity of $\R^3$: $1=\theta_r^2(\bx)+\theta_{0}^2(\bx)+\theta_{-}^2(\bx) $ with $\theta_r$ defined as in Definition~\ref{defloc} and
\begin{equation*}
\theta_{0}(\bx):= \big(\theta_{r(1-\lambda)}^2(\bx) - \theta_r^2(\bx)\big)^{\frac12} \mbox{ and }\theta_{-}(\bx):= \big(1 - \theta_{r(1-\lambda)}^2(\bx)\big)^{\frac12}.
\end{equation*}
Associated to this partition of unity we define
\begin{equation*}
\gamma^{\rm HF}_{0}:= \theta_{0} \gamma^{\rm HF} \theta_{0} \mbox{ and }\gamma^{\rm HF}_{-}:= \theta_{-} \gamma^{\rm HF} \theta_{-}.
\end{equation*}

We prove the claim by showing that for all density matrices $\gamma \in \mathcal{A}$ such that $\supp(\rho_{\gamma}) \subset \R^3 \setminus B_r(0)$ and $\|\rho_{\gamma}\|_1 \leq \|\rho^{\rm HF} \chi_r^{+}\|_1 $ it holds that
\begin{equation}\label{Ea}
\mathcal{E}^{A}(\gamma^{\rm HF}_r ) + \mathcal{E}^{\rm HF}(\gamma^{\rm HF}_{-} ) - \mathcal{R} \leq \mathcal{E}^{\rm HF}(\gamma^{\rm HF}) \leq \mathcal{E}^{A}(\gamma )+ \mathcal{E}^{\rm HF}(\gamma^{\rm HF}_{-} ).
\end{equation}

The proof of the upper bound in \eqref{Ea} is as in \cite[page 533]{Sol}.

To prove the lower bound as a first step we localize. By Theorem~\ref{IMSrel} we find
\begin{equation}\label{guancia}
\alpha^{-1} \Tr[T(\bp)\gamma^{\rm HF}]=  \alpha^{-1} \Tr[T(\bp) (\gamma_{r}^{\rm HF}+\gamma_{0}^{\rm HF}+\gamma_{-}^{\rm HF} )]-\alpha^{-1}\sum_{i=1}^N (u_i, (L_{r}+L_{0}+L_{-}) u_i),
\end{equation}
where $L_{r}, L_{0}$ and $L_{-}$ are defined as the $L_{i}$'s in \eqref{Lalpha}.

We first estimate the error term. The procedure is similar to the one used in the proof of Lemma~\ref{ErrorA}. We introduce three cut-off functions: $\chi_{-}$ be the characteristic function of $B_{r(1-\lambda)(1-\nu)}(0)$, $\chi_{r}$ the characteristic function of $\R^3 \setminus B_{r\frac{1+\nu}{1-\lambda}}(0)$ and $\chi_{0}$ defined by $\chi_{0}(\bx)=1-\chi_{r}(\bx)-\chi_{-}(\bx)$ for all $\bx \in \R^3$. Notice that $\chi_{-}$ and $\chi_{r}$ are the characteristic functions of sets where $\theta_{-},\theta_{0}$ and $\theta_{r}$ are constants. For $k \in \{-,0,r\}$ we have the following splitting
\begin{equation*}
L_{k} = \chi_{-} L_{k} (\chi_{0}+\chi_{r})+ (\chi_0+\chi_{r}) L_{k} \chi_{-} + \chi_{r} L_{k} \chi_{0}+ \chi_0 L_{k} \chi_{r} + \chi_{0} L_{k} \chi_{0},
\end{equation*}
and proceeding as in the proof of Lemma~\ref{ErrorA} with $\varepsilon_{1,k}, \varepsilon_{2,k}$ to be chosen we find
\begin{eqnarray*}
(f,L_{k} f) & \leq  & \varepsilon_{1,k} \|\chi_{-} f\|_2^2 + \varepsilon_{1,k}^{-1} (f, Q_{1}f)+  \varepsilon_{2,k} \|\chi_0 f\|_2^2 + \varepsilon_{2,k}^{-1} (f,Q_{2} f)\\
&& + \tfrac{3\alpha}{2} \|\nabla \theta_{k}\|_{\infty}^2 \|\chi_0 f\|_2^2.
\end{eqnarray*}
with operators $Q_1$ and $Q_2$ being positive semi-definite operators with
\begin{eqnarray*}
\Tr[Q_1] & \leq  & \tfrac{(16)^2}{3\pi^2} \tfrac{(1-\lambda)^2(1-\nu)^3}{\nu} \|\nabla \theta_{k}\|_{\infty}^4 r^2 e^{-\alpha^{-1}r\nu(1-\lambda)} \\
\Tr[Q_2] & \leq  & \tfrac{(16)^2}{3\pi^2} \tfrac{1}{\nu(1-\lambda)^2} \|\nabla \theta_{k}\|_{\infty}^4 r^2 e^{-\alpha^{-1}r\frac{\nu}{1-\lambda}}.
\end{eqnarray*}
Choosing then
\begin{eqnarray*}
\varepsilon_{2,k} = \tfrac{3\alpha}{2} \|\nabla \theta_{k}\|_{\infty}^2  \mbox{ and }
\varepsilon_{1,k} = \alpha  \|\nabla \theta_{k}\|_{\infty}^2  e^{-\tfrac12 \alpha^{-1}r\nu(1-\lambda)},
\end{eqnarray*}
since $ (\|\nabla \theta_r\|_{\infty}^2+\|\nabla \theta_{0}\|_{\infty}^2+\|\nabla \theta_{-}\|_{\infty}^2 ) \leq 3\pi^2/(4\lambda^2) r^{-2}$ and $\| \rho^{\rm HF}\chi_{-}\|_{1} \leq N$ we get
\begin{eqnarray*}
\alpha^{-1} \sum_{i=1}^{N}(u_{i},(L_{r}+L_{0}+L_{-}) u_{i}) & \leq  & \tfrac{3\pi^2}{4\lambda^2} \, r^{-2} \| \rho^{\rm HF}\chi_0\|_{1} + \tfrac{3\pi^2}{4 \lambda^2} r^{-2} e^{-\tfrac12 \alpha^{-1}r\nu (1-\lambda)} N \\
&+& c \alpha^{-2} e^{-\tfrac12 \alpha^{-1}r\nu(1-\lambda)} .
\end{eqnarray*}
Here $c$ is a constant that depends only on $\nu$ and $\lambda$.

Hence from \eqref{guancia}, the inequality above and since $N \leq 2Z+1$ we find
\begin{eqnarray*}
\mathcal{E}^{\rm HF}(\gamma^{\rm HF})& \geq & \Tr\Big[\Big(\alpha^{-1}T(\bp)-\frac{Z}{|\cdot|}\Big)(\gamma_{r}^{\rm HF}+\gamma_{0}^{\rm HF}+\gamma_{-}^{\rm HF})\Big]+\mathcal{D}(\gamma^{\rm HF})\\
&&- \mathcal{E}x(\gamma^{\rm HF}) - \tfrac{3\pi^2}{4\lambda^2} \, r^{-2} \| \rho^{\rm HF}\chi_0\|_{1} - c' \alpha^{-2}(1+\alpha r^{-2}) e^{-\tfrac12 \alpha^{-1}rd} .
\end{eqnarray*}
The constants $c',d$ depend only on $\lambda$ and $\nu$. Proceeding as in \cite{Sol} we get
\begin{eqnarray*}
\mathcal{E}^{\rm HF}(\gamma^{\rm HF})& \geq & \mathcal{E}^{\rm HF}(\gamma^{\rm HF}_{-})+\mathcal{E}^{A}(\gamma^{\rm HF}_{r})- \mathcal{E}x(\gamma^{\rm HF}_{r})- c' \alpha^{-2}(1+\alpha r^{-2}) e^{-\tfrac12 \alpha^{-1}rd} \\
&& + \Tr\Big[\Big(\alpha^{-1}T(\bp)-\Phi^{\rm HF}_{r(1-\lambda)}(\cdot)\Big)\gamma_{0}^{\rm HF}\Big]\\
&& -( \tfrac{\pi}{2\lambda}+ \tfrac{3\pi^2}{4\lambda^2}r^{-1})r^{-1} \int_{|\bx|\geq r(1-\lambda)(1-\nu)}\rho^{\rm HF}(\bx) \; d\bx .
\end{eqnarray*}
The claim follows using Theorem~\ref{DaubeV}.
\end{proof}

%%%%%%%%%%%%%%%%%%%%%%%%%%%%%%%%%%%%%%%%%%%%%%%%%%%%%%%%%%%%%%%%%%%%%%%%
%%%%%%%%%%%%%%%%%%%Outside Thomas Fermi %%%%%%%%%%%%%%%%%%%%%%%%%%%%%%%%%%%
%%%%%%%%%%%%%%%%%%%%%%%%%%%%%%%%%%%%%%%%%%%%%%%%%%%%%%%%%%%%%%%%%%%%%%%%

\subsection{Comparing with an Outside Thomas Fermi}\label{Sotf}

At this point we introduce an \lq\lq Outside Thomas Fermi\rq\rq : a TF-energy functional whose minimizer approximates the HF-density at a certain distance from the nucleus.

Let $r>0$ such that
\begin{equation}\label{inter}
| \Phi_{|\bx|}^{\rm HF}(\bx) -\Phi_{|\bx|}^{\rm TF}(\bx)|  \leq \sigma |\bx|^{-4 + \varepsilon'},
\end{equation}
for all $|\bx| \leq r$ for some $\sigma >0$ and $\varepsilon ' >0$. Let $V_r$ be the potential defined by
\begin{equation}\label{Vr}
V_r(\bx) = \chi_r^{+}(\bx) \Phi^{\rm HF}_{r}(\bx) = \left\{
\begin{array}{ll}
0 & \mbox{ if }|\bx | <r ,\\
\Phi^{\rm HF}_{r}(\bx) & \mbox{ if } |\bx| \geq r.
\end{array} \right.
\end{equation}
Here and in the following $\chi_r^{+}(\bx):=1-\chi_r(\bx)$, $\bx \in \R^3$, where $\chi_r$ is the characteristic function of the ball of radius $r$ centered at $0$. Notice that $V_r \in L^{\frac52}(\R^3) + L^{\infty}(\R^3)$ with
\begin{equation*}
\inf \{ \| W\|_{\infty} : V_r-W \in L^{\frac52}(\R^3) \} =0 .
\end{equation*}
Let $\mathcal{E}^{\rm OTF}_{r}$ be the TF-functional $\mathcal{E}^{\rm TF}_{V_r}$ corresponding to the potential $V_r$ defined in \eqref{Vr}. Let $\rho_r^{\rm OTF}$ be the unique minimizer of $\mathcal{E}^{\rm OTF}_r$ under the condition
\begin{equation*}
\int_{\R^3} \rho(\bx) d\bx \leq \int_{|\by| \geq r} \rho^{\rm HF}(\by) d\by ,
\end{equation*}
(see Theorem~\ref{ExTF}). Then $\rho_r^{\rm OTF}$ is solution to the OTF-equation
\begin{equation}\label{OTFequation}
\tfrac12 \left(\tfrac{6\pi^2}{q}\right)^{\frac23} (\rho_r^{\rm OTF})^{\frac23}= [\varphi_r^{\rm OTF}-\mu_r^{\rm OTF} ]_{+} \, ,
\end{equation}
where
\begin{equation*}
\varphi_r^{\rm OTF}(\bx)= V_r(\bx) -\int_{\R^3} \frac{\rho^{\rm OTF}_{r}(\by)}{|\bx -\by|} d\by ,
\end{equation*}
is the OTF-mean field potential and $\mu_r^{\rm OTF}$ is the corresponding chemical potential. {F}rom \eqref{OTFequation} (and $\mu_r^{\rm OTF}\geq 0$) we see that the support of $\rho^{\rm OTF}_r$ is contained in $\R^3 \setminus B_r(0)$.

In the intermediary zone instead of comparing directly $\Phi^{\rm HF}_{|\bx|}$ and $\Phi_{|\bx|}^{\rm TF}$ we compare first the HF-density with the OTF-density and then the OTF-density with the TF-density. When comparing the TF and OTF there is no difference with the non-relativistic case and for brevity we refer for the proofs to \cite{Sol}.

We start by studying the behavior of the minimizer and mean field potential of the OTF. The proof of the following bounds is in \cite[page 557-558]{Sol} in the case $q=2$ and it can be directly generalised to the other values of $q$.
\begin{lemma}[\protect{\cite[Lem.12.1]{Sol}}]\label{est0ext}
For all $\by \in \R^3$ we have
\begin{equation*}
\varphi^{\rm TF}(\by) \leq 3^4 2^{-1}q^{-2} \pi^2 |\by|^{-4} \mbox{ and } \rho^{\rm TF}(\by) \leq 3^5 2^{-1} q^{-2} \pi |\by|^{-6}.
\end{equation*}
Let $\beta_0$ be as defined in Theorem~\ref{Slb}, then for all $|\by| \geq \beta_0 Z^{-\frac13}$ we have
\begin{equation*}
\varphi^{\rm TF}(\by) \geq C |\by|^{-4} \mbox{ and } \rho^{\rm TF}(\by) \geq C |\by|^{-6}.
\end{equation*}
With $r, \sigma, \varepsilon ' $ such that \eqref{inter} holds and $\sigma r^{\varepsilon '} \leq 1$  we have for all $|\by | \geq r$
\begin{equation*}
\rho_{r}^{\rm OTF} (\by) \leq C r^{-6} \mbox{ and } \varphi_r^{\rm OTF}(\by ) \leq |V_r(\by )| \leq C r^{-4}.
\end{equation*}
\end{lemma}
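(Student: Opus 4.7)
The proof splits naturally into the three assertions, each of which follows by combining the TF-equation with results already established in the excerpt.

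The plan for the first pair of bounds (valid for all $\by\in\R^3$) is immediate: the upper bound on $\varphi^{\rm TF}$ is just the right-hand member of Theorem~\ref{Slb}. Since for the neutral Coulomb case $\mu^{\rm TF}=0$, the TF-equation of Theorem~\ref{ExTF} reduces to $\tfrac{1}{2}(6\pi^2/q)^{2/3}(\rho^{\rm TF})^{2/3}=\varphi^{\rm TF}$, whence $\rho^{\rm TF}=2^{3/2}(q/(6\pi^2))(\varphi^{\rm TF})^{3/2}$. Substituting the already proved bound on $\varphi^{\rm TF}$ yields precisely the stated constant $3^5\cdot 2^{-1}q^{-2}\pi$.

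For the second pair of bounds I would invoke Corollary~\ref{TFsotto}, which in the regime $|\by|\geq\beta_0 Z^{-1/3}$ gives $\varphi^{\rm TF}(\by)\geq \tfrac{3^4\pi^2}{2q^2}(1+a Z^{-\zeta/3}|\by|^{-\zeta})^{-2}|\by|^{-4}$. The assumption $|\by|\geq\beta_0 Z^{-1/3}$ forces $Z^{-\zeta/3}|\by|^{-\zeta}\leq \beta_0^{-\zeta}$, so the prefactor is controlled from below by the universal constant $(1+a\beta_0^{-\zeta})^{-2}$. The matching lower bound on $\rho^{\rm TF}$ again follows from the TF-equation exactly as in the first step.

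For the OTF estimates the key observation is that $V_r=\chi_r^+\Phi_r^{\rm HF}$ agrees with $\Phi_r^{\rm HF}$ on $\{|\bx|>r\}$, where $\Phi_r^{\rm HF}$ is harmonic (the integral defining $\Phi_r^{\rm HF}$ uses only charges in $\{|\by|<r\}$ and $Z/|\bx|$ is harmonic off the origin) and tends to $0$ at infinity (its behaviour is $(Z-\int_{|\by|<r}\rho^{\rm HF})/|\bx|$). The maximum principle then gives $|\Phi_r^{\rm HF}(\bx)|\leq \sup_{|\by|=r}|\Phi_r^{\rm HF}(\by)|$ for $|\bx|\geq r$. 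On the sphere $|\by|=r$ we bound $\Phi_r^{\rm HF}$ by the triangle inequality: Corollary~\ref{TFscr} yields $0\leq \Phi_r^{\rm TF}(\by)\leq\tfrac{3^4\cdot 2\pi^2}{q^2}r^{-4}$ (non-negativity coming from the TF-equation with $\mu^{\rm TF}=0$), while the hypothesis \eqref{inter} together with $\sigma r^{\varepsilon'}\leq 1$ gives $|\Phi_r^{\rm HF}(\by)-\Phi_r^{\rm TF}(\by)|\leq r^{-4}$. Hence $|V_r(\by)|\leq C r^{-4}$ for all $|\by|\geq r$. The remaining two OTF bounds are then immediate: $\varphi_r^{\rm OTF}(\bx)=V_r(\bx)-\rho_r^{\rm OTF}\ast|\cdot|^{-1}(\bx)\leq V_r(\bx)\leq |V_r(\bx)|\leq Cr^{-4}$, and the OTF-equation \eqref{OTFequation} combined with $\mu_r^{\rm OTF}\geq 0$ yields $\rho_r^{\rm OTF}(\by)\leq C[\varphi_r^{\rm OTF}(\by)]_+^{3/2}\leq C' r^{-6}$.

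The only step that is not purely algebraic is the reduction, via the maximum principle, of the bound on $V_r$ over the whole exterior $\{|\bx|\geq r\}$ to a pointwise bound on the sphere $\{|\bx|=r\}$ — which is exactly where the hypothesis \eqref{inter} can be invoked. Verifying harmonicity of $\Phi_r^{\rm HF}$ on $\{|\bx|>r\}$ and its decay at infinity is the only place where one has to think beyond simply quoting a previous result; both properties are elementary given that $\rho^{\rm HF}\in L^1$ and contributes through charges strictly inside $B_r$.
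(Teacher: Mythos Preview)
Your proposal is correct and follows the same line of argument that the paper implicitly relies on (the paper does not prove this lemma but cites \cite[pp.~557--558]{Sol}, whose proof proceeds exactly as you do: the TF-bounds come from the Sommerfeld estimates and the TF-equation, and the $V_r$ bound comes from harmonicity of $\Phi_r^{\rm HF}$ in the exterior combined with \eqref{inter} on the sphere $|\by|=r$ and Corollary~\ref{TFscr}). Your handling of constants, the use of $\sigma r^{\varepsilon'}\leq 1$, and the passage from $\varphi_r^{\rm OTF}$ to $\rho_r^{\rm OTF}$ via the OTF-equation are all accurate.
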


\begin{lemma}[\protect{\cite[Lem.12.2]{Sol}}]
With $r, \sigma, \varepsilon ' $ such that \eqref{inter} holds for all $|\bx| \leq r$ we have
\begin{equation*}
\int_{|\by| \geq r} (\rho^{\rm TF}(\by) -\rho^{\rm HF}(\by)) d\by \leq \sigma r^{-3+\varepsilon'}.
\end{equation*}
\end{lemma}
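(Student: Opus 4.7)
\begin{pf*}{Proof plan}
The idea is to use Newton's theorem to convert the hypothesis on the screened potentials into a bound on the interior difference of densities, and then use that $\int\rho^{\rm TF}=Z\leq N=\int\rho^{\rm HF}$ to pass to the exterior integral.

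First, from the definitions of $\Phi^{\rm HF}_{r}$ and $\Phi^{\rm TF}_{r}$ (see Definition~\ref{screened}) the nuclear Coulomb term $Z/|\bx|$ cancels, so for every $\bx\in\R^3$ with $|\bx|=r$ one has
\begin{equation*}
\Phi^{\rm HF}_{r}(\bx) - \Phi^{\rm TF}_{r}(\bx) = -\int_{|\by|<r}\frac{\rho^{\rm HF}(\by)-\rho^{\rm TF}(\by)}{|\bx-\by|}\,d\by.
\end{equation*}
Averaging over the sphere of radius $r$ and invoking Newton's theorem (i.e.\ $\frac{1}{4\pi}\int_{S^2}|r\omega-\by|^{-1}\,d\omega = 1/r$ whenever $|\by|<r$), I obtain
\begin{equation*}
\int_{|\by|<r}\!(\rho^{\rm HF}(\by)-\rho^{\rm TF}(\by))\,d\by = -\frac{r}{4\pi}\int_{S^2}\!\bigl(\Phi^{\rm HF}_{r}(r\omega)-\Phi^{\rm TF}_{r}(r\omega)\bigr)\,d\omega,
\end{equation*}
which by the hypothesis \eqref{inter} applied at $|\bx|=r$ is bounded in absolute value by $\sigma r^{-3+\varepsilon'}$.

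Next I rewrite the quantity to be estimated. Since $\rho^{\rm TF}$ corresponds to a neutral atom, $\int_{\R^3}\rho^{\rm TF}=Z$ (Theorem~\ref{TfCou}), whereas $\int_{\R^3}\rho^{\rm HF}=N$. Splitting into the two regions gives
\begin{equation*}
\int_{|\by|\geq r}\!\bigl(\rho^{\rm TF}(\by)-\rho^{\rm HF}(\by)\bigr)\,d\by = (Z-N) + \int_{|\by|<r}\!\bigl(\rho^{\rm HF}(\by)-\rho^{\rm TF}(\by)\bigr)\,d\by.
\end{equation*}
Under the standing assumption $N\geq Z$, the first summand is non-positive, so combining with the estimate from the previous paragraph yields
\begin{equation*}
\int_{|\by|\geq r}\!\bigl(\rho^{\rm TF}(\by)-\rho^{\rm HF}(\by)\bigr)\,d\by \leq \sigma r^{-3+\varepsilon'},
\end{equation*}
as claimed. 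There is no serious obstacle here: the only input beyond Newton's theorem is the pointwise hypothesis on the screened-potential difference, and the neutrality of the atomic TF-minimizer. The same argument (without the sign-of-$Z-N$ step) also gives the absolute-value version \eqref{rr1} of Lemma~\ref{extde}, which is in fact how I would structure the writeup: prove the two-sided bound for $\int_{|\by|<r}(\rho^{\rm HF}-\rho^{\rm TF})$ first and then derive the one-sided exterior bound as an immediate corollary using $N\geq Z$.
\end{pf*}
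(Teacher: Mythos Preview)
Your proof is correct and follows essentially the same approach as the paper. The paper does not spell out a proof for this particular lemma (it is cited from \cite{Sol}), but the identical Newton's-theorem identity you use appears verbatim in the proof of Lemma~\ref{extde} to obtain \eqref{rr1}, and the passage from the interior two-sided bound to the exterior one-sided bound via $N\geq Z$ is exactly the intended step.
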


For $\bx \in \R^3$ with $|\bx| > r$ we may write
\begin{equation}\label{A123}
\Phi_{|\bx|}^{\rm HF}(\bx) - \Phi_{|\bx|}^{\rm TF}(\bx) = \mathcal{A}_1(r,\bx)+ \mathcal{A}_2(r,\bx)+ \mathcal{A}_3(r,\bx),
\end{equation}
where
\begin{eqnarray*}
\mathcal{A}_1(r,\bx)&=& \varphi_r^{\rm OTF}(\bx) - \varphi^{\rm TF}(\bx) , \\
\mathcal{A}_2(r,\bx)&=& \int_{|\by|>|\bx|} \frac{\rho_r^{\rm OTF}(\by)-\rho^{\rm TF}(\by)}{|\bx-\by|} \, d\by \,
\end{eqnarray*}
and
\begin{equation*}
\mathcal{A}_3(r,\bx)= \int_{r<|\by|<|\bx|} \frac{\rho_r^{\rm OTF}(\by)-\rho^{\rm HF}(\by)}{|\bx-\by|} \, d\by .
\end{equation*}

\subsubsection{Estimate on $\mathcal{A}_1$ and $\mathcal{A}_2$}

\begin{lemma}[\protect{\cite[Lem.12.4]{Sol}}]\label{mu=0}
Let $N \geq Z$. Given $\varepsilon ', \sigma >0$ there exists a constant $D>0$ such that for all $r$ with $\beta_0 Z^{-\frac13} \leq r \leq D$ for which \eqref{inter} holds for all $|\bx| \leq r$, then $\mu_r^{\rm OTF} =0 $ and
\begin{equation*}
\tfrac{3^4 \pi^2}{2q^2} |\bx|^{-4}(1+ a r^{\zeta} |\bx|^{-\zeta} )^{-2} \leq \varphi_r^{\rm OTF}(\bx) \leq \tfrac{3^4 \pi^2}{2q^2} |\bx|^{-4}(1+ A r^{\zeta} |\bx|^{-\zeta} ) \; \mbox{ for }|\bx|>r ,
\end{equation*}
where $a, A$ are universal constants and $\zeta= (-7+\sqrt{73})/2$.
\end{lemma}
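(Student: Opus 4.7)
The statement has two parts: $\mu_r^{\rm OTF}=0$ and Sommerfeld-type bounds on $\varphi_r^{\rm OTF}$. I would handle them separately, using Theorem~\ref{Somm} for the second part after the vanishing of the chemical potential is established.

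For $\mu_r^{\rm OTF}=0$, I would exploit the harmonicity of $V_r$ outside $B_r(0)$. By Theorem~\ref{ExTF}, $\mu_r^{\rm OTF}=0$ as soon as the mass constraint $N':=\int_{|\by|\geq r}\rho^{\rm HF}(\by)\,d\by$ is at least the OTF neutral charge, that is, the total mass of the minimizer in the problem with $\mu=0$. Since $V_r$ is harmonic on $|\bx|>r$, vanishes at infinity, and coincides with $\Phi_r^{\rm HF}$ there, Newton's theorem identifies the flux of $-\nabla V_r$ through a large sphere as $Z_{\rm eff}(r):=Z-\int_{|\by|<r}\rho^{\rm HF}(\by)\,d\by$. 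Integrating the OTF-equation \eqref{OTFequation} with $\mu=0$ against $1$ and using a Green's identity argument on $\varphi_r^{\rm OTF}=V_r-\rho_r^{\rm OTF}*|\cdot|^{-1}$ yields that this neutral charge equals precisely $Z_{\rm eff}(r)$. Because $N\geq Z$, one has
\[
N'=N-\int_{|\by|<r}\rho^{\rm HF}(\by)\,d\by\;\geq\; Z-\int_{|\by|<r}\rho^{\rm HF}(\by)\,d\by = Z_{\rm eff}(r),
\]
so the constraint does not bind and $\mu_r^{\rm OTF}=0$.

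With $\mu_r^{\rm OTF}=0$, I would apply Theorem~\ref{Somm} to $U=V_r$ and $R=r$, the hypotheses on $V_r$ having just been verified. The non-degeneracy condition $\mu_r^{\rm OTF}<\liminf_{r'\searrow r}\inf_{|\bx|=r'}\varphi_r^{\rm OTF}(\bx)$ reduces to strict positivity at the boundary, which I would obtain from
\[
\Phi_r^{\rm HF}(\bx)\geq \Phi_r^{\rm TF}(\bx)-\sigma r^{-4+\varepsilon'}\geq \varphi^{\rm TF}(\bx)-\sigma r^{-4+\varepsilon'}\geq C r^{-4}
\]
at $|\bx|=r$, where the first step uses \eqref{inter}, the second uses $\Phi_r^{\rm TF}\geq \varphi^{\rm TF}$, and the last uses Corollary~\ref{TFsotto} together with $r\geq\beta_0 Z^{-1/3}$. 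The convolution term $\rho_r^{\rm OTF}*|\cdot|^{-1}$ is controlled by Lemma~\ref{est0ext} ($\rho_r^{\rm OTF}\leq C r^{-6}$ on $|\by|\geq r$), and one chooses $D$ so small that $\sigma r^{\varepsilon'}$ is dominated by the leading $r^{-4}$. The same boundary comparison, translated into the definitions of $a(r)$ and $A(r,0)$ in Theorem~\ref{Somm} (both measure deviations of $\varphi_r^{\rm OTF}|_{|\bx|=r}$ from $\frac{3^4\pi^2}{2q^2}r^{-4}$, rescaled by $r^\zeta$), gives $a(r), A(r,0)\leq C r^\zeta$ with universal constants, producing precisely the factors $(1+a\,r^\zeta|\bx|^{-\zeta})^{-2}$ in the lower bound and $1+A\,r^\zeta|\bx|^{-\zeta}$ in the upper bound. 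The auxiliary lower option $\nu(\mu_U^{\rm TF})|\bx|^{-1}$ of Theorem~\ref{Somm} is vacuous here since $\mu_r^{\rm OTF}=0$.

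The main obstacle is the first step, namely making the identification of the OTF neutral charge with $Z_{\rm eff}(r)$ rigorous. While morally this is just Newton plus Gauss, standard Lieb–Simon TF theory is most cleanly formulated for nuclear potentials arising from a nonnegative measure, whereas $-\Delta V_r$ is a sign-indefinite surface measure on $|\bx|=r$, and $V_r$ itself need not be pointwise nonnegative. I would bypass this issue by arguing directly at the level of Euler–Lagrange variations of $\mathcal{E}_{V_r}^{\rm TF}$: if $\int\rho_r^{\rm OTF}<N'$ strictly, the Euler–Lagrange equations force $\mu_r^{\rm OTF}=0$; otherwise a competitor obtained by moving mass to infinity (where $\varphi_r^{\rm OTF}-\mu$ would become negative) contradicts minimality. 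This is the analogue of the argument given in \cite[Lem.~12.4]{Sol} in the non-relativistic setting; no new relativistic ingredient is needed beyond the construction of $V_r$ from $\rho^{\rm HF}$.
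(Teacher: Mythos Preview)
Your outline matches the route taken in \cite{Sol} (to which the paper simply refers for the proof): first argue $\mu_r^{\rm OTF}=0$ via the inequality $N'-Z_{\rm eff}(r)=N-Z\ge 0$, then feed the boundary data of $\varphi_r^{\rm OTF}$ into Theorem~\ref{Somm}. The structure is right, but there is a genuine gap that sits behind both steps.

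Both the hypothesis $\mu_r^{\rm OTF}<\liminf_{r'\searrow r}\inf_{|\bx|=r'}\varphi_r^{\rm OTF}(\bx)$ of Theorem~\ref{Somm} and the very definition of $a(r)$ require a \emph{quantitative lower bound} $\varphi_r^{\rm OTF}(\bx)\ge c\,r^{-4}$ on $|\bx|=r^+$ with a universal $c>0$. Your proposed route---$\Phi_r^{\rm HF}\ge C_1 r^{-4}$ from \eqref{inter} and Corollary~\ref{TFsotto}, together with $\rho_r^{\rm OTF}\le C r^{-6}$ from Lemma~\ref{est0ext}---only yields $\rho_r^{\rm OTF}*|\cdot|^{-1}(\bx)\le C_2 r^{-4}$ at $|\bx|=r$ (split the convolution at scale $s\sim r$ and use $\int\rho_r^{\rm OTF}\le N'\le Cr^{-3}$). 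You then need $C_1>C_2$, which is a comparison of two universal constants that you have not carried out; taking $D$ small only kills the $\sigma r^{\varepsilon'}$ correction, not this competition. In \cite{Sol} this is handled by a more careful comparison with the actual TF mean field: writing $\varphi_r^{\rm OTF}-\varphi^{\rm TF}=[\Phi_r^{\rm HF}-\Phi_r^{\rm TF}]+\int_{|\by|>r}(\rho^{\rm TF}-\rho_r^{\rm OTF})/|\bx-\by|$ on $|\bx|=r$, and then showing the second integral is $o(r^{-4})$ (this is where the work is), so that the known lower bound on $\varphi^{\rm TF}$ from Corollary~\ref{TFsotto} transfers to $\varphi_r^{\rm OTF}$.

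The same missing positivity undermines your $\mu_r^{\rm OTF}=0$ argument. Without $\varphi_r^{\rm OTF}\ge 0$ on $|\bx|>r$ one cannot identify the neutral OTF mass with $Z_{\rm eff}(r)$: a priori the $\mu=0$ minimizer could have $\int\rho_r^{\rm OTF}>Z_{\rm eff}(r)$ with $\varphi_r^{\rm OTF}$ negative near infinity, and then $N'\ge Z_{\rm eff}(r)$ does not force $N'\ge N_c$. Your proposed fix---``move mass to infinity where $\varphi_r^{\rm OTF}-\mu<0$ to contradict minimality''---does not work: removing mass from $\mathrm{supp}(\rho_r^{\rm OTF})$ costs $+\epsilon\mu$ by the Euler--Lagrange condition, while placing it at infinity contributes essentially zero, so the energy \emph{increases} and no contradiction results. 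In \cite{Sol} the vanishing of $\mu_r^{\rm OTF}$ is obtained together with the positivity/Sommerfeld analysis, not independently of it.
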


\begin{lemma}[\protect{\cite[Lem.12.5]{Sol}}]\label{A1A2}
Let $N \geq Z$. Given $\varepsilon ', \sigma >0$ there exists a constant $D>0$ depending only on $\varepsilon'$, $\sigma$ such that for all $r$ with $\beta_0 Z^{-\frac13} \leq r \leq D$ for which \eqref{inter} holds for $|\bx| \leq r$, then for all $|\bx| \geq r$
\begin{equation*}
|\mathcal{A}_1(r,\bx)| \leq C |\bx|^{-4-\zeta} r^{\zeta} \; \; \mbox{ and } \;
|\mathcal{A}_2(r,\bx)| \leq C |\bx|^{-4-\zeta} r^{\zeta},
\end{equation*}
with $\zeta = (-7 + \sqrt{73})/2$ and $C$ a universal constant.
\end{lemma}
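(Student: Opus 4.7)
The plan is to apply Sommerfeld's asymptotics to both $\varphi^{\rm TF}$ and $\varphi_r^{\rm OTF}$ and then compare. By Lemma~\ref{mu=0} we have $\mu_r^{\rm OTF} = 0$ and
\[
\tfrac{3^4\pi^2}{2q^2}|\bx|^{-4}(1+ar^\zeta|\bx|^{-\zeta})^{-2} \leq \varphi_r^{\rm OTF}(\bx) \leq \tfrac{3^4\pi^2}{2q^2}|\bx|^{-4}(1+Ar^\zeta|\bx|^{-\zeta}),
\]
whereas by Theorem~\ref{Slb} and Corollary~\ref{TFsotto}
\[
\tfrac{3^4\pi^2}{2q^2}|\bx|^{-4}(1+aZ^{-\zeta/3}|\bx|^{-\zeta})^{-2} \leq \varphi^{\rm TF}(\bx) \leq \tfrac{3^4\pi^2}{2q^2}|\bx|^{-4}.
\]
The crucial observation is that $r \geq \beta_0 Z^{-1/3}$ forces $Z^{-\zeta/3} \leq \beta_0^{-\zeta} r^\zeta$, so the $Z$-dependent Sommerfeld correction for $\varphi^{\rm TF}$ is dominated by the $r$-dependent correction for $\varphi_r^{\rm OTF}$, and both correction parameters are uniformly bounded on $\{|\bx|\ge r\}$.

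For $\mathcal{A}_1 = \varphi_r^{\rm OTF} - \varphi^{\rm TF}$, I pair the upper bound for one with the lower bound for the other and vice versa. Using the elementary estimate $|(1+x)^{-2}-1|\leq (2+M)x$ on the bounded range $x\in [0,M]$, both one-sided bounds on the difference reduce to a multiple of $|\bx|^{-4}(r^\zeta + Z^{-\zeta/3})|\bx|^{-\zeta}$, and absorbing $Z^{-\zeta/3}$ into $r^\zeta$ via the inequality above gives $|\mathcal{A}_1(r,\bx)| \leq Cr^\zeta|\bx|^{-4-\zeta}$ with a universal constant $C$.

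For $\mathcal{A}_2$, on $\{|\by|>r\}$ both densities solve the TF equation with zero chemical potential, so each is a fixed $q$-dependent multiple of the $3/2$-power of the corresponding mean field potential. The Lipschitz-type inequality $|a^{3/2}-b^{3/2}| \leq \tfrac{3}{2}\max(a,b)^{1/2}|a-b|$, together with the pointwise $O(|\by|^{-4})$ bounds on both potentials and the estimate on $\mathcal{A}_1$ just proved, yields the pointwise density bound $|\rho_r^{\rm OTF}(\by) - \rho^{\rm TF}(\by)| \leq Cr^\zeta|\by|^{-6-\zeta}$ for $|\by|>r$. Newton's theorem then reduces $\int_{|\by|>|\bx|}(\cdots)/|\bx-\by|\,d\by$ to $4\pi\int_{|\bx|}^\infty t^{-5-\zeta}\,dt = \tfrac{4\pi}{4+\zeta}|\bx|^{-4-\zeta}$, giving the claimed bound on $\mathcal{A}_2$. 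The only subtlety is that $V_r$ (and hence $\varphi_r^{\rm OTF}$, $\rho_r^{\rm OTF}$) need not be radially symmetric, but Lemma~\ref{mu=0} provides pointwise bounds and Newton's theorem is applied only to the radial dominating function $|\by|^{-6-\zeta}$, so this causes no difficulty. The main technical work is thus absorbed into Lemma~\ref{mu=0}; the constant $D$ enters here only to ensure the Sommerfeld regime of that lemma, which is why it depends only on $\varepsilon'$ and $\sigma$.
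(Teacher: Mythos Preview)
Your proposal is correct and follows essentially the same approach as the paper, which cites \cite[p.~558--564]{Sol}: compare $\varphi_r^{\rm OTF}$ and $\varphi^{\rm TF}$ through their Sommerfeld asymptotics (Lemma~\ref{mu=0} and Corollary~\ref{TFsotto}), absorb the $Z^{-\zeta/3}$ correction into $r^\zeta$ via $r\ge\beta_0 Z^{-1/3}$, and then pass from the potential bound to the density bound through the TF equation and integrate. One small point worth making explicit: the pointwise $O(|\by|^{-4})$ bound on $\varphi_r^{\rm OTF}$ that you need for the Lipschitz step comes from the upper inequality in Lemma~\ref{mu=0} (since $r^\zeta|\by|^{-\zeta}\le 1$ on $|\by|\ge r$), not from the cruder $\varphi_r^{\rm OTF}\le Cr^{-4}$ in Lemma~\ref{est0ext}.
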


The proof of the previous lemmas is in \cite[p. 558-564]{Sol}.

\subsubsection{Estimate on $\| \chi_{r}^{+} \rho^{\rm HF} -\rho_{r}^{\rm OTF}\|_{C}$ and $\Tr[T(\bp) \gamma_{r}^{\rm HF}]$}

\begin{lemma}\label{extGest}
Let $G_{\alpha}$ be the function defined in Theorem~\ref{Daube} and $\rho_r^{\rm HF}(\bx)$ be the one-particle density of the density matrix $\gamma_r^{\rm HF}$ defined in \eqref{exteriordensity}. Let $Z \alpha= \kappa $ fixed, $0 \leq \kappa < 2/\pi$ and $Z \geq 1$.

Given constants $\varepsilon', \sigma >0$ there exists $D< \frac45$ such that for all $r$ with $\beta_0 Z^{-\frac13} \leq r \leq D$ for which \eqref{inter} holds for $|\bx|\leq r$, it follows that
\begin{equation*}
\alpha^{-1} \int_{\R^3} G_{\alpha}(\rho_r^{\rm HF}(\bx)) \; d\bx \leq \alpha^{-1} \Tr[T(\bp)\gamma_{r}^{\rm HF}] \leq 2 \mathcal{R}+ Cr^{-7} + C r^{-4} \int_{\R^3} \rho^{\rm HF}_r(\bx) \; d\bx ,
\end{equation*}
with $C$ a universal positive constant and $\mathcal{R}$ as defined in Theorem~\ref{EA}.
\end{lemma}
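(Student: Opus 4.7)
The first inequality is immediate from Daubechies' inequality (Theorem~\ref{Daube}) applied to the density matrix $\gamma_{r}^{\rm HF}$, whose one-particle density is $\rho_{r}^{\rm HF}$.

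For the second inequality the strategy is to exploit the near-minimality of $\gamma_{r}^{\rm HF}$ for the auxiliary functional $\mathcal{E}^{A}$ established in Theorem~\ref{EA}. From the definition \eqref{ea} of $\mathcal{E}^{A}$ I would first rewrite
\begin{equation*}
\alpha^{-1}\Tr[T(\bp)\gamma_{r}^{\rm HF}] \;=\; \mathcal{E}^{A}(\gamma_{r}^{\rm HF}) + \Tr[\Phi_{r}^{\rm HF}\gamma_{r}^{\rm HF}] - D(\rho_{r}^{\rm HF}).
\end{equation*}
Since $D(\rho_{r}^{\rm HF})\geq 0$, and since $\rho_{r}^{\rm HF}=\theta_{r}^{2}\rho^{\rm HF}$ is supported in $\{|\bx|\geq r\}$ where $\Phi_{r}^{\rm HF}$ coincides with $V_{r}$ and satisfies $|V_{r}(\bx)|\leq C r^{-4}$ by Lemma~\ref{est0ext} (the hypothesis $\sigma r^{\varepsilon'}\leq 1$ being enforced by choosing $D$ small enough), I would estimate the potential term by
\begin{equation*}
\Tr[\Phi_{r}^{\rm HF}\gamma_{r}^{\rm HF}] \;=\; \int_{|\bx|\geq r} V_{r}(\bx)\,\rho_{r}^{\rm HF}(\bx)\,d\bx \;\leq\; C r^{-4}\int \rho_{r}^{\rm HF}(\bx)\,d\bx.
\end{equation*}

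It remains to show that $\mathcal{E}^{A}(\gamma_{r}^{\rm HF})\leq \mathcal{R}+C r^{-7}$, and here I would apply Theorem~\ref{EA} with a trial density matrix $\tilde{\gamma}\in\mathcal{A}$ whose density closely approximates $\rho_{r}^{\rm OTF}$. Using $\alpha^{-1}T(\bp)\leq\tfrac{1}{2}|\bp|^{2}$, the coherent-state construction from the proof of Lemma~\ref{rrcou} (cf.\ \cite[p.~543]{Sol}) applied with $\varphi_{r}^{\rm OTF}$ in place of $\varphi^{\rm TF}$ produces $\tilde{\gamma}$ supported in $\R^{3}\setminus B_{r}(0)$, with $\int\rho_{\tilde{\gamma}}\leq\int_{|\by|\geq r}\rho^{\rm HF}$ and with (nonrelativistic) kinetic energy controlled by a constant times $\int(\rho_{r}^{\rm OTF})^{5/3}$. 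From Lemma~\ref{est0ext} we have $\rho_{r}^{\rm OTF}(\by)\leq C|\by|^{-6}$ for $|\by|\geq r$, so
\begin{equation*}
\int\bigl(\rho_{r}^{\rm OTF}(\bx)\bigr)^{5/3}\,d\bx \;\leq\; C\int_{|\bx|\geq r}|\bx|^{-10}\,d\bx \;\leq\; C r^{-7}.
\end{equation*}
For the remaining combination $D(\rho_{\tilde{\gamma}})-\int V_{r}\rho_{\tilde{\gamma}}$, multiplying the OTF-equation \eqref{OTFequation} by $\rho_{r}^{\rm OTF}$ and using $\mu_{r}^{\rm OTF}=0$ (Lemma~\ref{mu=0}) gives $D(\rho_{r}^{\rm OTF})-\int V_{r}\rho_{r}^{\rm OTF}\leq 0$; this extends to $\rho_{\tilde\gamma}$ at the price of errors that are again of order $r^{-7}$. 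Combining with Theorem~\ref{EA} yields $\mathcal{E}^{A}(\gamma_{r}^{\rm HF})\leq \mathcal{R}+C r^{-7}$, and insertion into the identity above produces the claim, the factor $2\mathcal{R}$ absorbing the small losses from the approximation $\rho_{\tilde{\gamma}}\neq\rho_{r}^{\rm OTF}$.

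The main technical obstacle is the support constraint $\supp\rho_{\tilde{\gamma}}\subset \R^{3}\setminus B_{r}(0)$ required in Theorem~\ref{EA}: the coherent-state construction convolves $\rho_{r}^{\rm OTF}$ with a bump of small length $\ell$, so $\rho_{\tilde{\gamma}}$ may leak into $B_{r}$ by $\ell$. I would handle this by truncating $\varphi_{r}^{\rm OTF}$ to $|\bx|\geq r+\ell$ before applying the construction and choosing $\ell$ small enough that the induced loss, controlled by $|\{r\leq|\bx|\leq r+\ell\}|\cdot Cr^{-6}\sim C\ell\, r^{-4}$, stays within the error budget of $\mathcal{R}+Cr^{-7}$. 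As a safety net, the trivial choice $\tilde\gamma=0$ already gives $\mathcal{E}^{A}(\gamma_{r}^{\rm HF})\leq\mathcal{R}$ and hence the sharper bound $\mathcal{R}+Cr^{-4}\int\rho_{r}^{\rm HF}$, which by itself implies the statement.
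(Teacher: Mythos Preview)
Your final ``safety net'' is already a complete and correct proof, and it is considerably simpler than the paper's. Writing $\alpha^{-1}\Tr[T(\bp)\gamma_{r}^{\rm HF}]=\mathcal{E}^{A}(\gamma_{r}^{\rm HF})+\int V_{r}\rho_{r}^{\rm HF}-D(\rho_{r}^{\rm HF})$, dropping $D(\rho_{r}^{\rm HF})\geq 0$, using $|V_{r}|\leq Cr^{-4}$ from Lemma~\ref{est0ext}, and invoking Theorem~\ref{EA} with the trivial trial $\gamma=0$ gives $\alpha^{-1}\Tr[T(\bp)\gamma_{r}^{\rm HF}]\leq \mathcal{R}+Cr^{-4}\int\rho_{r}^{\rm HF}$, which is even stronger than the stated bound. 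The entire middle portion of your proposal---building a nontrivial trial $\tilde\gamma$ from $\rho_{r}^{\rm OTF}$---is unnecessary for this lemma, as you yourself observe.

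The paper takes a different route: it splits the kinetic energy as $\tfrac12+\tfrac12$, keeps one half, and bounds the other half of $\mathcal{E}^{A}(\gamma_{r}^{\rm HF})$ from \emph{below} via a coherent-state argument that compares $\Tr[(\alpha^{-1}T(\bp)-V_{\rho})\gamma_{r}^{\rm HF}]$ to the TF energy of a neutral atom of effective charge $2r\sup_{|\by|=r}\Phi_{r}^{\rm HF}(\by)$. This yields $\mathcal{E}^{A}(\gamma_{r}^{\rm HF})\geq \tfrac12\alpha^{-1}\Tr[T(\bp)\gamma_{r}^{\rm HF}]-Cr^{-7}-Cr^{-4}\|\rho_{r}^{\rm HF}\|_{1}$, and only then does the paper invoke $\mathcal{E}^{A}(\gamma_{r}^{\rm HF})\leq\mathcal{R}$ via $\gamma\equiv 0$. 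This explains the factor $2$ on $\mathcal{R}$ and the $Cr^{-7}$ in the statement. Your direct approach avoids the coherent-state lower bound entirely; the paper's machinery, however, is a warm-up for the sharper two-sided estimates needed in Lemma~\ref{extrCou}, where a genuine lower bound on $\mathcal{E}^{A}(\gamma_{r}^{\rm HF})$ in terms of $\mathcal{E}^{\rm OTF}(\rho_{r}^{\rm OTF})$ is unavoidable.
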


\begin{proof}The first inequality follows directly from Theorem~\ref{Daube}. To prove the second inequality we proceed as in Lemma~\ref{rrcou}. In this case we are interested only in the exterior part of the minimizer. Hence, instead of considering the HF-energy functional we consider the auxiliary functional $\mathcal{E}^{A}$, defined in \eqref{ea}, applied to the \lq\lq exterior part of the minimizer\rq\rq  $\gamma_r^{\rm HF}$.

Splitting the kinetic energy in two terms we find
\begin{equation}\label{neapol}
\mathcal{E}^A(\gamma^{\rm HF}_r) \geq \tfrac12 \alpha^{-1} \Tr[T(\bp) \gamma^{\rm HF}_{r}] + D(\rho_r^{\rm HF}) + \tfrac12 \Tr[ (\alpha^{-1} T(\bp) -2 \Phi_{r}^{\rm HF}) \gamma_r^{\rm HF} ].
\end{equation}
Since $\Phi^{\rm HF}_r(\bx)$ is harmonic for $|\bx|>r$ and going to zero at infinity
\begin{equation*}
\Phi^{\rm HF}_r(\bx) \leq \frac{r}{|\bx|} \sup_{|\by|=r} \Phi_r^{\rm HF}(\by) \; \mbox{ for }|\bx| >r.
\end{equation*}
Hence, since $\supp(\rho^{\rm HF}_r) \subset \R^3 \setminus B_r(0)$ we find
\begin{equation*}
\Tr[ (\alpha^{-1} T(\bp) -2 \Phi_{r}^{\rm HF}) \gamma_r^{\rm HF} ]\geq \Tr[ (\alpha^{-1} T(\bp)
-\frac{2 r}{|\cdot|} \sup_{|\by|=r}\Phi^{\rm HF}_r(\by)) \gamma_r^{\rm HF} ] =\dots .
\end{equation*}
Adding and subtracting $2 D(\rho, \rho_{r}^{\rm HF})$ for $\rho \in L^1(\R^3) \cap L^{\frac53}(\R^3)$, $\rho \geq 0$, to be chosen
\begin{eqnarray}\label{rit0}
\dots &=& \Tr[ (\alpha^{-1} T(\bp) - V_{\rho}) \gamma_r^{\rm HF} ] - \int_{\R^3} \int_{\R^3}\frac{\rho_r^{\rm HF}(\bx)\rho(\by)}{|\bx-\by|} \; d\bx d\by .
\end{eqnarray}
where for simplicity of notation here and in the following $V_{\rho}$ is defined as $V_{\rho}(\bx):= \frac{2 r}{|\bx|} \sup_{|\by|=r}\Phi^{\rm HF}_r(\by)
 - \rho* \frac{1}{|\bx|}$.

{F}rom \eqref{rit0}, \eqref{neapol} and the definition of the Coulomb norm and scalar product (Definition~\ref{Cnorm}) we find
\begin{eqnarray} \notag
\mathcal{E}^A(\gamma^{\rm HF}_r) & \geq & \tfrac12 \alpha^{-1} \Tr[T(\bp) \gamma^{\rm HF}_{r}]+ \tfrac12 D(\rho^{\rm HF}_r)+\tfrac12 \|\rho_r^{\rm HF}
 -\rho \|_{C}^2\\
&&  - \tfrac12 D(\rho) + \tfrac12 \Tr[ (\alpha^{-1} T(\bp) -V_{\rho}) \gamma_r^{\rm HF} ]  \label{neapo}\\ \notag
&\geq & \tfrac12 \alpha^{-1}  \Tr[T(\bp) \gamma^{\rm HF}_{r}]+ \tfrac12 \sum_{i=1}^N (\theta_r u_i,(\alpha^{-1} T(\bp)-V_{\rho}) \theta_r u_i)
- \tfrac12 D(\rho),\end{eqnarray}
denoting by $u_i$ the HF-orbitals.

We now choose $\rho$ as the minimizer of the TF-energy functional of a neutral atom with Coulomb potential and nuclear charge
$2r \sup_{|\by|=r}\Phi^{\rm HF}_r(\by)$. Then $V_{\rho}$ is the corresponding TF-mean field potential and we see that the last
two terms on the right hand side of \eqref{neapo} are like the ones in the claim of Proposition~\ref{prthomas3}.
The only difference is due to the presence of the localization function $\theta_r$. We now prove that these terms give the
TF-energy modulo lower order terms. The method is the same as that of Proposition~\ref{prthomas3}.
We repeat the main steps since in this case the scaling depends on $r$. Notice that since $r> \beta_0 Z^{-\frac13}$
the contribution is coming only from the \lq\lq outer zone\rq\rq .

Let $g \in C^{\infty}_0(\R^3)$ be spherically symmetric, normalized in $L^2(\R^3)$ and with support in $B_1(0)$. Let us define $g_r(\bx):= r^{-3} g(\bx r^{-2}) \mbox{ and }\psi_r:=g_r^2$. Since $V_\rho$ is subharmonic on $|\bx|>0$, we see from the support properties of $\psi_r$ and $\theta_r$ that
\begin{equation*}
\sum_{i=1}^N (\theta_r u_i,(\alpha^{-1} T(\bp)-V_{\rho}) \theta_r u_i) \geq  \sum_{i=1}^N (\theta_r u_i, (\alpha^{-1} T(\bp)-V_{\rho}*\psi_r) \theta_r u_i )= \dots .
\end{equation*}
For $\bp, \bq \in \R^3$ we define the coherent states $g_r^{\bp,\bq}(\bx):= g_r(\bx-\bq)e^{i \bp \cdot \bx}$. By the formulas \eqref{coheformulas} and \eqref{Lcohe} with $L_{\bq}$ the operator defined in the equation below \eqref{Lcohe} we get
\begin{eqnarray}\notag
\dots & =&  \tfrac{1}{(2\pi)^3} \alpha^{-1} \int_{\R^3} \int_{\R^3} d\bp d\bq \;  (T(\bp)-\alpha V_{\rho}(\bq)) \sum_{i=1}^N \sum_{j=1}^q |(\theta_r u_i^j, g_r^{\bp,\bq})|^2 \\ \label{neap}
&& - \, \alpha^{-1} \sum_{i=1}^N  \int_{\R^3} \int_{\R^3} d\bx d\bq \,  \overline{(\theta_r u_i)}(\bx) (L_{\bq} \theta_r u_i)(\bx) \, ,
\end{eqnarray}
where $u_i^j$ denotes the $j$-th spin component of the orbital $u_i$. By the choice of the function $g_r$ and with the same arguments that led to \eqref{br2} in the appendix we find
\begin{eqnarray}\notag
&& \alpha^{-1} \sum_{i=1}^N \int_{\R^3} \int_{\R^3} d\bx d\bq \,  \overline{(\theta_r u_i)}(\bx) (L_{\bq} \theta_r u_i)(\bx) \\
& \leq &3 \sum_{i=1}^N \|\theta_r u_i\|_2^2 \| \nabla g_r \|_{\infty}^2 Vol(\supp(g_r)) \leq C r^{-4}\|\rho^{\rm HF}_r\|_1.  \label{adv1}
\end{eqnarray}

In the first term on the right hand side of \eqref{neap} the integrand is zero if $|\bq| < \frac14 r^{2}$ since in this case $\supp(\theta_r) \cap \supp(g_r^{\bq,\bp}) = \emptyset$ (by the choice $D<4/5$). To estimate it further from below we consider only the negative part of the integrand
\begin{eqnarray}\notag
&& \tfrac{1}{(2\pi)^3} \alpha^{-1} \int_{\R^3} \int_{\R^3} d\bp d\bq \;  (T(\bp)-\alpha V_{\rho}(\bq)) \sum_{i=1}^N \sum_{j=1}^q |(\theta_r u_i^j, g_r^{\bp,\bq})|^2\\
&\geq & \tfrac{q}{(2\pi)^3} \alpha^{-1} \iint_{\substack{|\bq|>\frac14 r^2\\T(\bp) \leq \alpha V_{\rho}(\bq)}} d\bp d\bq \;  (T(\bp)-\alpha V_{\rho}(\bq)) \, , \label{adv2}
\end{eqnarray}
where we have used that $0 \leq  \sum_{i=1}^N |(\theta_r u_i^j, g_r^{\bp,\bq})|^2 \leq 1$ (Bessel's inequality). We split the domain of integration in $\bp$ as follows\begin{equation*}
\{ \bp \in \R^3: T(\bp) \leq \alpha V_{\rho}(\bq)\} =  \Sigma_1 \cup \Sigma_2
\end{equation*}
with $\Sigma_1,\Sigma_2$ disjoint and $\Sigma_1=\{ \bp \in \R^3: \tfrac12 |\bp|^2 \leq V_{\rho}(\bq)\}$. We treat these two contributions separately. We have
\begin{equation*}
\alpha^{-1} \iint_{\substack{|\bq|>\frac14 r^2\\ \bp \in \Sigma_2}} d\bp d\bq \;  (T(\bp)-\alpha V_{\rho}(\bq)) \geq - \iint_{\substack{|\bq|>\frac14 r^2\\\bp \in \Sigma_2}} d\bp d\bq \;  [V_{\rho}(\bq)]_{+} = \dots
\end{equation*}
and computing the integral, using that $(1+x)^{\frac32} \leq 1+\frac32 x +\frac38 x^2$
\begin{equation}\label{rit1}
\dots \geq - C  \int_{|\bq|>\frac14 r^2} d\bq \;  (\alpha^2 [V_{\rho}(\bq)]^{\frac72}_{+}+\alpha^4 [V_{\rho}(\bq)]_{+}^{\frac92}) \geq  -C \alpha^2 r^{-\frac{23}{2}} -C \alpha^4 r^{-\frac{33}{2}}.
\end{equation}
In the last step we used that $[V_{\rho}(\bq)]_{+} \leq 2 \frac{r}{|\bq|} \sup_{|\bx|=r} \Phi^{\rm HF}_{r}(\bx)$ and that by the hypothesis and Corollary~\ref{TFscr}
\begin{equation}\label{r3}
r\sup_{|\bx|=r} \Phi^{\rm HF}_{r}(\bx) \leq C r^{-3},
\end{equation}
choosing $D$ such that $\sigma r^{\varepsilon'}\leq 1$.

Since $T(\bp) \geq \frac12 \alpha |\bp|^2 -\frac18 \alpha^3 |\bp|^4$ we find
\begin{eqnarray}\notag
&& \alpha^{-1} \iint_{\substack{|\bq|>\frac14 r^2\\ \bp \in \Sigma_1}} d\bp d\bq \;  (T(\bp)-\alpha V_{\rho}(\bq))\\
& \geq & \iint_{\substack{|\bq|>\frac14 r^2\\ \tfrac{1}{2}|\bp|^2 \leq V_{\rho}(\bq)}} d\bp d\bq \;  (\frac{1}{2}|\bp|^2-V_{\rho}(\bq)) -\tfrac18 \alpha^2 \iint_{\substack{|\bq|>\frac14 r^2\\ \frac{1}{2}|\bp|^2 \leq V_{\rho}(\bq)}} d\bp d\bq \;  |\bp|^4 . \label{rit2}
\end{eqnarray}
Computing the last integral we find
\begin{equation}\label{rit1a}
\alpha^2 \iint_{\substack{|\bq|>\frac14 r^2\\ \frac{1}{2}|\bp|^2 \leq V_{\rho}(\bq)}} d\bp d\bq \;  |\bp|^4 \leq C \alpha^2  r^{-1} (2r\sup_{|\bx|=r} \Phi^{\rm HF}_{r}(\bx))^{\frac72} \leq C \alpha^2 r^{-\frac{23}{2}}.
\end{equation}
While for the first term on the right hand side of \eqref{rit2}, computing the integral with respect to $\bp$, we get
\begin{eqnarray*}
\iint_{\substack{|\bq|>\frac14 r^2\\ \frac{1}{2}|\bp|^2 \leq V_{\rho}(\bq)}} d\bp d\bq \;  (\tfrac{1}{2}|\bp|^2-V_{\rho}(\bq)) = - 4\pi \tfrac{2^{\frac52}}{15} \int_{|\bq|>\frac14 r^2} d\bq \; [V_{\rho}(\bq)]_{+}^{\frac52}.
\end{eqnarray*}
Hence collecting together \eqref{neap}, \eqref{adv1}, \eqref{adv2} \eqref{rit1}, \eqref{rit1a} and the inequality above we find
\begin{equation*}
\Tr[ (\alpha^{-1} T(\bp)-V_{\rho}) \gamma_r^{\rm HF} ]\geq - \tfrac{2^{\frac32}q}{15 \pi^2 } \int_{\R^3} d\bx \; [V_{\rho}(\bx)]_{+}^{\frac52} - C r^{-4}\|\rho^{\rm HF}_r\|_1  -C r^{-\frac{11}{2}}=\dots.
\end{equation*}
since $\beta_0 Z^{-\frac13}\leq r$ implies $\beta_0 \alpha^{\frac13} \leq \kappa^{\frac13} r $. {F}rom the TF-equation that $\rho$ satisfies it follows that
\begin{eqnarray*}
\dots & = & \tfrac{3}{10}(\tfrac{6 \pi^2}{q})^{\frac23} \int_{\R^3} d\bx \; \rho(\bx)^{\frac53} -\int_{\R^3} \rho(\bx) V_{\rho}(\bx) \; d\bx - C r^{-4}\|\rho^{\rm HF}_r\|_1  -C r^{-\frac{11}{2}}\\
& = & \mathcal{E}^{\rm TF}(\rho) +D(\rho) - C r^{-4}\|\rho^{\rm HF}_r\|_1  -C r^{-\frac{11}{2}}.
\end{eqnarray*}

Hence from \eqref{neapo} and the inequality above we get using \eqref{Tfe} and  \eqref{r3}
\begin{eqnarray*}
\mathcal{E}^A(\gamma^{\rm HF}_r) & \geq &  \tfrac12 \alpha^{-1} \Tr[T(\bp)\gamma^{\rm HF}] - C r^{-7} - C r^{-4}\|\rho^{\rm HF}_r\|_1.
\end{eqnarray*}
The claim follows since $\mathcal{E}^{A}(\gamma^{\rm HF}_r) \leq \mathcal{R}$ by the result of Theorem~\ref{EA} considering as a trial density matrix $\gamma \equiv 0$.
\end{proof}

\begin{lemma}\label{auxi}
Let $N' \in \mathbb{N}$ and $Z \alpha= \kappa $ be fixed, $0 \leq \kappa < 2/\pi$ and $Z \geq 1$. Let $e_{j}$ be the first $N'$ negative eigenvalues of the operator $\alpha^{-1}T(\bp)-\varphi_{r}^{\rm OTF}$ acting on functions with support on $\{\bx \in \R^3 : |\bx| \geq r\}$.

Given constants $\varepsilon', \sigma >0$ there exists $D< 4/5$ such that for all $r$ with $\beta_0 Z^{-\frac13} \leq r \leq D$ for which \eqref{inter} holds for $|\bx|\leq r$, for all $\mu \in (0,1)$ and $s<r$ we have
\begin{eqnarray*}
\sum_{j=1}^{N'} e_{j} & \geq & -(\tfrac{2}{1-\mu})^{\frac32} \tfrac{1}{15 \pi^2} \int_{|\bq|>r} [\varphi_{r}^{\rm OTF}(\bq)]_{+}^{\frac52} \; d\bq - C r^{-8} s \mu^{-\frac32}- C \mu^{-3} r^{-5} s\\
&&-C (1-\mu)^{-\frac72}r^{-5} -C (1-\mu) s^{-2}N',
\end{eqnarray*}
with $C$ a positive constant.
\end{lemma}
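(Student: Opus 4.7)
The strategy is a coherent-state semiclassical lower bound, mirroring the phase-space analysis already carried out in the proof of Lemma~\ref{extGest}. By the standard variational characterization
\begin{equation*}
\sum_{j=1}^{N'}e_j=\inf\bigl\{\Tr[(\alpha^{-1}T(\bp)-\varphi_r^{\rm OTF})\gamma]:0\leq\gamma\leq I,\ \Tr\gamma\leq N',\ \supp\rho_\gamma\subset\{|\bx|\geq r\}\bigr\},
\end{equation*}
it suffices to lower-bound $\Tr[(\alpha^{-1}T(\bp)-\varphi_r^{\rm OTF})\gamma]$ for an arbitrary admissible $\gamma$. I would split the kinetic term as $\alpha^{-1}T(\bp)=(1-\mu)\alpha^{-1}T(\bp)+\mu\alpha^{-1}T(\bp)$. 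The identity $\alpha^{-1}T(\bp)=|\bp|^2(\sqrt{1+\alpha^2|\bp|^2}+1)^{-1}$ together with $\sqrt{1+x}\leq 1+x/2$ yields $(1-\mu)\alpha^{-1}T(\bp)\geq\tfrac{1-\mu}{2}|\bp|^2$ on the low-momentum set $\{\alpha^2|\bp|^2\leq 4\mu/(1-\mu)^2\}$, while the high-momentum quartic remainder is absorbed by the second piece $\mu\alpha^{-1}T(\bp)$ via the Daubechies--Lieb--Thirring inequality~\eqref{Dau} applied with an effective potential bounded by $Cr^{-4}$ via Lemma~\ref{est0ext}, producing the $C(1-\mu)^{-7/2}r^{-5}$ error.

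To control the potential I would pass to coherent states. Let $g\in C_0^\infty(\R^3)$ be spherically symmetric with $\|g\|_2=1$ and $\supp g\subset B_1(0)$, set $g_s(\bx)=s^{-3/2}g(\bx/s)$, and introduce $g_s^{\bp,\bq}(\bx)=g_s(\bx-\bq)e^{i\bp\cdot\bx}$. Since $V_r$ is harmonic on $\{|\bx|>r\}$ and $\rho_r^{\rm OTF}\geq 0$, one has $\Delta\varphi_r^{\rm OTF}=4\pi\rho_r^{\rm OTF}\geq 0$ on $\{|\bx|>r\}$, so $\varphi_r^{\rm OTF}(\bx)\leq(\varphi_r^{\rm OTF}*g_s^2)(\bx)$ for $|\bx|\geq r+s$. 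Because $s<r$ and $\supp\rho_\gamma\subset\{|\bx|\geq r\}$, replacing $\varphi_r^{\rm OTF}$ by its mollification on $\supp\rho_\gamma$ incurs only an annulus error supported in $\{r\leq|\bx|<r+s\}$; combining the bound $\varphi_r^{\rm OTF}\leq Cr^{-4}$ with elementary estimates on the density of $\gamma$ in this annulus yields the two contributions $Cr^{-8}s\mu^{-3/2}$ and $C\mu^{-3}r^{-5}s$.

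With the potential mollified, the exact coherent-state identity
\begin{equation*}
\Tr\bigl[\bigl(\tfrac{1-\mu}{2}(-\Delta)-\varphi_r^{\rm OTF}*g_s^2\bigr)\gamma\bigr]=\frac{1}{(2\pi)^3}\iint\bigl(\tfrac{1-\mu}{2}|\bp|^2-\varphi_r^{\rm OTF}(\bq)\bigr)\,m_\gamma(\bp,\bq)\,d\bp\,d\bq-\tfrac{1-\mu}{2}\|\nabla g_s\|_2^2\,\Tr\gamma,
\end{equation*}
with $m_\gamma(\bp,\bq):=\langle g_s^{\bp,\bq},\gamma g_s^{\bp,\bq}\rangle\in[0,1]$ (the convolution on the left being absorbed by integrating out $\bp$), combined with the pointwise bound $h\,m_\gamma\geq -[h]_-$ for $h(\bp,\bq)=\tfrac{1-\mu}{2}|\bp|^2-\varphi_r^{\rm OTF}(\bq)$, reduces matters to the phase-space integral $-(2\pi)^{-3}\iint[h]_-\,d\bp\,d\bq$. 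The change of variables $\bp'=\sqrt{1-\mu}\,\bp$ and the elementary identity $\int_{|\bp'|^2\leq 2V}(V-\tfrac12|\bp'|^2)\,d\bp'=\tfrac{4\pi\cdot 2^{5/2}}{15}V^{5/2}$ reproduce exactly the main term $-(\tfrac{2}{1-\mu})^{3/2}(15\pi^2)^{-1}\int_{|\bq|>r}[\varphi_r^{\rm OTF}]_+^{5/2}$, while $\|\nabla g_s\|_2^2\leq Cs^{-2}$ together with $\Tr\gamma\leq N'$ produces the final $C(1-\mu)s^{-2}N'$ term.

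The principal technical difficulty is the $\mu$-bookkeeping: the kinetic split has to be sharp enough that the relativistic quartic correction does not degrade the semiclassical constant $(\tfrac{2}{1-\mu})^{3/2}/(15\pi^2)$, yet the Daubechies estimate on the $\mu\alpha^{-1}T(\bp)$-piece and the annulus estimates must simultaneously fit the stated inverse powers of $\mu$, $(1-\mu)$, $s$, $N'$, and $r$.
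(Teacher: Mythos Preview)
Your overall architecture---split the kinetic energy as $(1-\mu)\alpha^{-1}T(\bp)+\mu\alpha^{-1}T(\bp)$, treat the main piece by coherent states and the remainder by the Daubechies--Lieb--Thirring inequality---is exactly the paper's, but the error bookkeeping is scrambled, and one step is a genuine gap.

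First, the annulus error. You write that ``elementary estimates on the density of $\gamma$ in this annulus'' give $Cr^{-8}s\mu^{-3/2}$ and $C\mu^{-3}r^{-5}s$. But $\gamma$ is an \emph{arbitrary} admissible density matrix with $\Tr\gamma\leq N'$; nothing prevents all of its mass from sitting in $\{r\leq|\bx|<r+s\}$, so there is no a~priori bound on $\int_{\text{annulus}}\rho_\gamma$ and hence none on $\int_{\text{annulus}}(\varphi_r^{\rm OTF}-\varphi_r^{\rm OTF}*g_s^2)\rho_\gamma$. In the paper these two terms come from reserving the piece $\mu\alpha^{-1}T(\bp)$ precisely to control the potential difference: one applies Theorem~\ref{DaubeV} (Remark~\ref{DaubeR}) to $\mu\alpha^{-1}T(\bp)-[\varphi_r^{\rm OTF}-\varphi_r^{\rm OTF}*g_s^2]_+$, and the thin support of the positive part yields $\|[\varphi_r^{\rm OTF}-\varphi_r^{\rm OTF}*g_s^2]_+\|_{5/2}^{5/2}\leq Cr^{-8}s$ and $\|\cdot\|_4^4\leq Cr^{-14}s$, which after the $\mu$-rescaling give exactly $C\mu^{-3/2}r^{-8}s$ and $C\mu^{-3}\alpha^3r^{-14}s\leq C\mu^{-3}r^{-5}s$. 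Since you have already spent the $\mu$-piece on the ``quartic remainder'', nothing is left for this step.

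Second, the $(1-\mu)^{-7/2}r^{-5}$ term does not arise from any Daubechies estimate on $\mu\alpha^{-1}T(\bp)$ (that would produce negative powers of $\mu$, not of $1-\mu$). In the paper it comes from the relativistic correction inside the coherent-state phase-space integral for the $(1-\mu)$-piece: after restricting to the region $(1-\mu)\alpha^{-1}T(\bp)\leq\varphi_r^{\rm OTF}(\bq)$ one compares $T(\bp)$ with $\tfrac12\alpha|\bp|^2$, and the excess region $\Sigma_2$ together with the quartic term $\tfrac18\alpha^3|\bp|^4$ integrate to $C(1-\mu)^{-7/2}\alpha^2r^{-11}\leq C(1-\mu)^{-7/2}r^{-5}$, using $\alpha\leq Cr^3$ from $r\geq\beta_0 Z^{-1/3}$. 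Related to this, you cannot pass to the non-relativistic coherent-state identity with $\tfrac{1-\mu}{2}(-\Delta)$ at the operator level, because $\alpha^{-1}T(\bp)\leq\tfrac12|\bp|^2$ goes the wrong way for a lower bound; the paper keeps $T(\bp)$ in the coherent-state formula \eqref{Lcohe} (whose localization error gives the $C(1-\mu)s^{-2}N'$ term) and only replaces $T(\bp)$ by $\tfrac12\alpha|\bp|^2-\tfrac18\alpha^3|\bp|^4$ after descending to the phase-space integral over the negative region.
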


\begin{proof}
Let $f_{j}$ be the eigenfunctions (normalized in $L^2(\R^3,\C^q)$) corresponding to the eigenvalues $e_{j}$, $j=1, .., N'$. Let $g \in C_{0}^{\infty}(\R^3)$ with support in $B_{1}(0)$ and define $g_{s}(\bx)=s^{-\frac32} g(\bx/s)$ for a positive parameter $s$, $s<r$. We then write for $\mu \in (0,1)$
\begin{equation*}
\sum_{j=1}^{N'} e_{j}= \sum_{j=1}^{N'} (f_j, (\alpha^{-1} T(\bp) - \varphi_{r}^{\rm OTF}) f_{j}) = \mathcal{B}_1+ \mathcal{B}_2,
\end{equation*}
where
\begin{eqnarray*}
\mathcal{B}_1 & = & \sum_{j=1}^{N'} (f_j, ((1-\mu)\alpha^{-1} T(\bp) - \varphi_{r}^{\rm OTF}*g_{s}^2) f_{j}) ,\\
\mathcal{B}_2 & = & \sum_{j=1}^{N'} (f_j, (\mu \alpha^{-1} T(\bp) - \varphi_{r}^{\rm OTF}+\varphi_{r}^{\rm OTF}*g_{s}^2) f_{j}) .
\end{eqnarray*}
We estimate these two terms separately. Considering for $\bp, \bq \in \R^3$ the coherent states $g_{s}^{\bp,\bq}(\bx):= e^{i\bp.\bx} g_{s}(\bx-\bq)$ using \eqref{coheformulas} and \eqref{Lcohe}, we find
\begin{eqnarray}\notag
\mathcal{B}_1 & = & \tfrac{1}{(2\pi)^3} \iint ((1-\mu)\alpha^{-1}T(\bp)-\varphi_{r}^{\rm OTF}(\bq)) \sum_{j=1}^N |(f_j, g_s^{\bp,\bq})|^2\; d\bq d\bp \\
&& - \,  (1-\mu) \alpha^{-1} \sum_{j=1}^{N'} \int_{\R^3} \int_{\R^3} d\bx  d\bq  \overline{f_{j}(\bx)} (L_{\bq}f_{j})(\bx) \, .\label{bici5}
\end{eqnarray}
Estimating the error term as done in \eqref{ce} and previous inequalities we get
\begin{equation*}
(1-\mu) \alpha^{-1} \sum_{j=1}^{N'} \int_{\R^3} \int_{\R^3} d\bx  d\bq  \overline{f_{j}(\bx)} (L_{\bq}f_{j})(\bx) \leq C (1-\mu) s^{-2} N'.
\end{equation*}
Since we are interested in an estimate from below and $\varphi_{r}^{\rm OTF}(\bq) \leq 0 $ for $|\bq| < r$, from \eqref{bici5} we find
\begin{eqnarray}\notag
\mathcal{B}_1 & \geq & \tfrac{1}{(2\pi)^3} \iint_{|\bq|>r} ((1-\mu)\alpha^{-1}T(\bp)-\varphi_{r}^{\rm OTF}(\bq)) \sum_{j=1}^N |(f_j, g_s^{\bp,\bq})|^2 \; d\bq d\bp\\
&&  -C (1-\mu) s^{-2} N' . \label{bici6}
\end{eqnarray}
We estimate now the first term on the right hand side of \eqref{bici6}. Considering only the negative part of the integrand and since $\sum_{j=1}^{N'} |(f_j, g_s^{\bp,\bq})|^2 \leq 1$ we get
\begin{eqnarray*}
&& \tfrac{1}{(2\pi)^3} \iint_{|\bq|>r} ((1-\mu)\alpha^{-1}T(\bp)-\varphi_{r}^{\rm OTF}(\bq)) \sum_{j=1}^{N'} |(f_j, g_s^{\bp,\bq})|\; d\bq d\bp\\
& \geq & \tfrac{1}{(2\pi)^3} \iint_{\substack{|\bq|>r,\\ (1-\mu)\alpha^{-1}T(\bp)\leq \varphi^{\rm OTF}_{r}(\bq)}} ((1-\mu)\alpha^{-1}T(\bp)-\varphi_{r}^{\rm OTF}(\bq)) \; d\bp d\bq.
\end{eqnarray*}
Now we split the domain of integration in $\bp$ as follows
\begin{equation*}
\{ \bp \in \R^3: \alpha^{-1}(1-\mu) T(\bp) \leq \varphi_{r}^{\rm OTF}(\bq)\} =  \Sigma_1 \cup \Sigma_2 ,
\end{equation*}
with $\Sigma_1,\Sigma_2$ disjoint and $\Sigma_1=\{ \bp \in \R^3: (1-\mu) |\bp|^2/2 \leq \varphi_{r}^{\rm OTF}(\bq)\}$. We treat these two contributions separately. Then
\begin{eqnarray*}
&& \tfrac{1}{(2\pi)^3} \iint_{\substack{|\bq|>r,\\ \bp \in \Sigma_{2}}} ((1-\mu)\alpha^{-1} T(\bp)-\varphi_{r}^{\rm OTF}(\bq)) d\bp d\bq \\
& \geq & - \tfrac{1}{(2\pi)^3} \iint_{\substack{|\bq|>r,\\ \bp \in \Sigma_{2}}} [\varphi_{r}^{\rm OTF}(\bq)]_{+} d\bp d\bq = \dots
\end{eqnarray*}
and since in the domain of integration
$$\tfrac{2}{1-\mu} [\varphi_{r}^{\rm OTF}(\bq)]_{+} \leq |\bp|^2 \leq \tfrac{2}{1-\mu} [\varphi_{r}^{\rm OTF}(\bq)]_{+}( 1+\tfrac{1}{2(1-\mu)}\alpha^2 [\varphi_{r}^{\rm OTF}(\bq)]_{+})\, $$
we get
\begin{eqnarray}\notag
\dots & \geq & -  \tfrac{C}{(1-\mu)^{\frac52}} \alpha^2  \int_{|\bq|>r} d\bq \;  ([\varphi^{\rm OTF}_{r}(\bq)]^{\frac72}_{+}+\tfrac{\alpha^2}{8(1-\mu)} [\varphi^{\rm OTF}_{r}(\bq)]_{+}^{\frac92})\\ \label{bici7}
& \geq & - \tfrac{C}{(1-\mu)^{\frac52}} \alpha^2 (r^{-11}+\tfrac{\alpha^2}{1-\mu} r^{-15}),
\end{eqnarray}
using Lemma~\ref{mu=0} in the last step.

Since $\sqrt{1+t^2}\geq 1+(1/2)t^2-(1/8)t^4$, we get
\begin{eqnarray*}
&&  \tfrac{1}{(2\pi)^3} \iint_{\substack{|\bq|>r,\\ \bp \in \Sigma_{1}}} ((1-\mu)\alpha^{-1}T(\bp)-\varphi_{r}^{\rm OTF}(\bq)) d\bp d\bq \\
& \geq &\tfrac{1}{(2\pi)^3} \iint_{\substack{|\bq|>r,\\ \bp \in \Sigma_{1}}} ((1-\mu)\tfrac12 |\bp|^2-\varphi_{r}^{\rm OTF}(\bq) -\tfrac18 (1-\mu) \alpha^2 |\bp|^4) d\bp d\bq .
\end{eqnarray*}
The last term gives by Lemma~\ref{mu=0}
\begin{equation}\label{bc8}
\alpha^2 \iint_{\substack{|\bq|>r\\  \bp \in \Sigma_1}} d\bp d\bq \;  |\bp|^4 =\alpha^{2} \tfrac{4\pi}{7} \int_{|\bq|>r} d\bq \; (\tfrac{2}{1-\mu})^{\frac72} [\varphi_{r}^{\rm OTF}(\bq)]_{+}^{\frac72} \leq C \alpha^2 (\tfrac{2}{1-\mu})^{\frac72} r^{-11}.
\end{equation}
While for the other terms computing the integral with respect to $\bp$, we get
\begin{equation}\label{bc9}
\tfrac{1}{(2\pi)^3} \iint_{\substack{|\bq|>r,\\ \bp \in \Sigma_{1}}} ((1-\mu)\tfrac12 |\bp|^2-\varphi_{r}^{\rm OTF}(\bq)) d\bp d\bq = -(\tfrac{2}{1-\mu})^{\frac32} \tfrac{1}{15 \pi^2} \int_{|\bq|>r} d\bq \; [\varphi_r^{\rm OTF}(\bq)]^{\frac52}_{+}.
\end{equation}

For the term $\mathcal{B}_2$ using Theorem~\ref{DaubeV} and Remark~\ref{DaubeR} we find
\begin{equation*}
\mathcal{B}_2 \geq -C q ( \mu^{-\frac32} \| [\varphi_{r}^{\rm OTF} -\varphi_{r}^{\rm OTF}*g_s^2]_{+}\|_{\frac52}^{\frac52} + \alpha^3 \mu^{-3} \| [\varphi_{r}^{\rm OTF} -\varphi_{r}^{\rm OTF}*g_s^2]_{+}\|_{4}^{4} ).
\end{equation*}
{F}rom the choice of $g_s$ it follows that $\varphi_{r}^{\rm OTF} -\varphi_{r}^{\rm OTF}*g_s^2 \leq V_{r} -V_{r}*g_s^2 $ and the term $V_{r} -V_{r}*g_s^2$ is non-zero only for $r-s \leq |\bx| \leq r+s$. Hence by Lemma~\ref{est0ext} and since $s<r$\begin{equation} \label{bc10}
\| [\varphi_{r}^{\rm OTF} -\varphi_{r}^{\rm OTF}*g_s^2]_{+}\|_{\frac52}^{\frac52} \leq \int_{r-s \leq |\bx| \leq r+s} [V_r(\bx) -V_r* g^2 (\bx)]_{+}^{\frac52} d \bx \leq C r^{-8}s,
\end{equation}
and similarly $ \| [\varphi_{r}^{\rm OTF} -\varphi_{r}^{\rm OTF}*g_s^2]_{+}\|_{4}^{4} \leq C r^{-14} s$. The claim follows from \eqref{bici6},  \eqref{bici7}, \eqref{bc8}, \eqref{bc9} and \eqref{bc10} using that $\beta_0 \alpha^{\frac13} \leq \kappa^{\frac13} r$.
\end{proof}

\begin{lemma}\label{extrCou}
Let $G_{\alpha}$ be the function defined in Theorem~\ref{Daube} and $\rho_r^{\rm HF}(\bx)$ the one-particle density of the density matrix $\gamma_r^{\rm HF}$ defined in \eqref{exteriordensity}. Let $Z \alpha= \kappa $ be fixed, $0 \leq \kappa <2/\pi$ and $Z \geq 1$.

There exists $\alpha_0>0$ such that given $\varepsilon', \sigma >0$ there exists $D< 1/4$ such that for all $\alpha \leq \alpha_0$ and $r$ with $\beta_0 Z^{-\frac13} \leq r \leq D$ for which \eqref{inter} holds for $|\bx| \leq r$, we have
\begin{eqnarray} \label{rrcx}
\begin{array}{ll}
\displaystyle{\| \chi_r^{+} \rho^{\rm HF} - \rho^{\rm OTF}_{r}\|_{C}  \leq C r^{-\frac72+\frac16}} &  \mbox{ and } \vspace{.1cm}\\
\displaystyle{\alpha^{-1} \int_{\R^3} G_{\alpha}(\chi_{r}^{+} \rho^{\rm HF}(\bx)) d \bx \leq C r^{-7}}, &
\displaystyle{\alpha^{-1} \Tr[T(\bp) \gamma^{\rm HF}_{r}] \leq C r^{-7}},
\end{array}
\end{eqnarray}
with $C$ a universal positive constant.
\end{lemma}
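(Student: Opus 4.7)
The plan is to derive all three estimates from a single quadratic-completion identity. Using the OTF-equation $V_r = \varphi_r^{\rm OTF} + \rho_r^{\rm OTF}*|\cdot|^{-1}$ together with the support property $\rho_{\gamma_r^{\rm HF}} \equiv 0$ on $B_r(0)$ (which makes $\Tr[\Phi_r^{\rm HF}\gamma_r^{\rm HF}] = \Tr[V_r\gamma_r^{\rm HF}]$), one rewrites
\[
\mathcal{E}^A(\gamma_r^{\rm HF}) = \Tr\bigl[(\alpha^{-1}T(\bp)-\varphi_r^{\rm OTF})\gamma_r^{\rm HF}\bigr] + \|\chi_r^{+}\rho^{\rm HF}-\rho_r^{\rm OTF}\|_C^2 - D(\rho_r^{\rm OTF}).
\]
Combined with the TF identity $\mathcal{E}_r^{\rm OTF}(\rho_r^{\rm OTF}) = -\tfrac{2^{3/2}q}{15\pi^2}\int[\varphi_r^{\rm OTF}]_+^{5/2} - D(\rho_r^{\rm OTF})$, valid because $\mu_r^{\rm OTF}=0$ by Lemma~\ref{mu=0}, this isolates $\|\chi_r^{+}\rho^{\rm HF}-\rho_r^{\rm OTF}\|_C^2$ as the discrepancy between $\mathcal{E}^A(\gamma_r^{\rm HF})$ and the OTF energy minus the negative-eigenvalue sum of $\alpha^{-1}T(\bp)-\varphi_r^{\rm OTF}$.

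For the Coulomb-norm bound I will combine an upper estimate for $\mathcal{E}^A(\gamma_r^{\rm HF})$ with a lower estimate for the kinetic-potential trace. The upper estimate comes from Theorem~\ref{EA} applied to a coherent-state trial density matrix $\gamma^{\rm trial}_j = (2\pi)^{-3}\iint_{\tfrac12|\bp|^2 \leq \varphi_r^{\rm OTF}(\bq)}\Pi_{\bp,\bq}\,d\bp d\bq$, built (as in the proof of Lemma~\ref{rrcou}) from Dirichlet ground states on balls of radius $o(r)$ with centers $\bq$ satisfying $|\bq|>r$, so that $\mathcal{E}^A(\gamma^{\rm trial}) \leq \mathcal{E}_r^{\rm OTF}(\rho_r^{\rm OTF}) +$ coherent-state errors; choosing $\lambda \sim r^{2/3}$ in $\mathcal{R}$ makes the annulus integral $\int[\Phi_{r(1-\lambda)}^{\rm HF}]^{5/2} \sim \lambda r^{-7}$ balance the Coulomb tail $\sim \lambda^{-2} r^{-5}$, giving $\mathcal{R} \leq Cr^{-19/3}$. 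The lower estimate uses $\Tr[A\gamma_r^{\rm HF}] \geq q\sum_{e_j<0}e_j$ (the factor $q$ arising because $A$ is diagonal in spin and each scalar diagonal block of $\gamma_r^{\rm HF}$ is $\leq \Id$) together with Lemma~\ref{auxi} applied with $\mu=r^{1/3}$, $s=r^{11/6}$, and $N' \leq Cr^{-3}$ via Lemma~\ref{extde}; these choices are tuned so that the leading correction $\mu\int[\varphi_r^{\rm OTF}]_+^{5/2}$ and the localization errors $r^{-8}s\mu^{-3/2}$ and $s^{-2}N'$ all land at $r^{-20/3} = r^{-7+1/3}$. Subtracting the common TF integral then yields the claimed bound on $\|\chi_r^{+}\rho^{\rm HF}-\rho_r^{\rm OTF}\|_C^2$.

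For the two kinetic-energy bounds I follow the self-consistent approach already displayed in Lemma~\ref{extGest}. Set $X := \alpha^{-1}\Tr[T(\bp)\gamma_r^{\rm HF}]$. The exchange term appearing in $\mathcal{R}$ is controlled by splitting $\R^3$ into the Daubechies region $\Sigma$ (on which $G_\alpha(\rho) \sim \rho^{4/3}$) and its complement, exactly as in the treatment of $\mathcal{E}x$ inside Lemma~\ref{rrcou}, giving
\[
\mathcal{E}x(\gamma_r^{\rm HF}) \leq C\alpha X + C X^{1/2}\Bigl(\int \rho_r^{\rm HF}\Bigr)^{1/2} \leq C\alpha X + Cr^{-3/2}X^{1/2}
\]
after invoking Lemma~\ref{extde} to bound the outer integral of $\rho^{\rm HF}$ by $Cr^{-3}$. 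Inserting this into Theorem~\ref{EA} and then into the conclusion of Lemma~\ref{extGest} produces $X \leq Cr^{-7} + C\alpha X + Cr^{-3/2}X^{1/2}$, which for $\alpha \leq \alpha_0$ sufficiently small implies $X \leq Cr^{-7}$; the bound on $\alpha^{-1}\int G_\alpha(\chi_r^{+}\rho^{\rm HF})$ then drops out of the first inequality in Lemma~\ref{extGest} (equivalently, from Theorem~\ref{Daube}). The main obstacle is parameter tracking: one must verify that the coherent-state trial density matrix produces a density supported in $\R^3 \setminus B_r(0)$ with mass at most $\|\chi_r^{+}\rho^{\rm HF}\|_1$, and that the $\lambda$-dependent constants hidden in $\mathcal{R}$ remain benign when $\lambda$ is taken of order $r^{2/3}$, so that every error contribution collapses cleanly to the announced $r^{-7+1/3}$ scale.
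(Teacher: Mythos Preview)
Your overall strategy matches the paper's: upper bound on $\mathcal{E}^A(\gamma_r^{\rm HF})$ via a coherent-state trial in Theorem~\ref{EA}, lower bound via Lemma~\ref{auxi}, and the kinetic bounds via the self-consistent loop through Lemma~\ref{extGest}. Your parameter choices $\mu \sim r^{1/3}$, $s = r^{11/6}$ for the lower bound are exactly the paper's. However, there are two concrete gaps.

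First, your quadratic-completion identity is mis-stated. The one-particle density of $\gamma_r^{\rm HF} = \theta_r \gamma^{\rm HF} \theta_r$ is $\rho_r^{\rm HF} = \theta_r^2 \rho^{\rm HF}$, not $\chi_r^{+}\rho^{\rm HF}$; these differ on the annulus $r \leq |\bx| \leq r/(1-\lambda)$ where $0 < \theta_r < 1$. So the identity genuinely gives $\|\rho_r^{\rm HF} - \rho_r^{\rm OTF}\|_C^2$, and you must separately bound $\|\chi_r^{+}\rho^{\rm HF} - \rho_r^{\rm HF}\|_C$. The paper does this via Hardy--Littlewood--Sobolev followed by a H\"older split (into the Daubechies region $\Sigma$ and its complement) of $\int_{r \leq |\bx| \leq r/(1-\lambda)} (\rho^{\rm HF})^{6/5}$; that step \emph{uses} the already-established bound $\alpha^{-1}\int G_\alpha(\chi_r^{+}\rho^{\rm HF}) \leq Cr^{-7}$, which is why the paper proves the two kinetic estimates first (with the crude choice $\lambda = 1/2$, $r \to r/2$). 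Balancing this annulus term against the $\mathcal{R}$ contribution then forces the final choice $\lambda = r^{5/7}$, not $r^{2/3}$.

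Second, your trial density matrix does not automatically have density supported in $\{|\bx| \geq r\}$: since the Dirichlet ground states $h_s$ have radius $s > 0$, the smeared density $\rho_\gamma = \rho_r^{\rm OTF} * |h_s|^2$ leaks into $\{r - s < |\bx| < r\}$ and is therefore not admissible in Theorem~\ref{EA}. The paper fixes this by invoking \cite[Lemma~8.5]{Sol}, which produces a modified $\tilde{\gamma}$ with $\rho_{\tilde{\gamma}} \leq \rho_\gamma$ and $\supp(\rho_{\tilde{\gamma}}) \subset \{|\bx| \geq r\}$, at the cost of extra error terms $\int_{|\bx| \leq r/(1-\lambda')} [V_r]_+^{5/2}$ and $(\lambda' r)^{-2}\int_{|\bx| \leq r/(1-\lambda')}\rho_\gamma$ controlled by a separate parameter $\lambda'$ (taken as $r^{2/3}$; this is where that exponent actually enters). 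You flag this support issue as ``the main obstacle'' but do not resolve it.
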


\begin{proof}
The idea of the proof is the same as that of Lemma~\ref{rrcou}. In this case we are interested only in the exterior part of the minimizer. Hence, instead of considering the HF-energy functional we estimate from above and below the auxiliary one $\mathcal{E}^{A}$, defined in \eqref{ea}, applied on the \lq\lq exterior part of the minimizer\rq\rq  $\gamma_r^{\rm HF}$.

\textit{Step I. Estimate from above on $\mathcal{E}^{A}(\gamma^{\rm HF}_r)$.} Let us consider $\gamma$ the density matrix that acts identically on the spin components and on each as \begin{equation*}
\gamma^{j}=\tfrac{1}{(2 \pi)^3} \iint_{\tfrac12 |\bp|^2\leq\varphi_{r}^{\rm OTF}(\bq)} \; \Pi_{\bp,\bq} \; d\bp d\bq ,
\end{equation*}
where $j \in \{1, \dots,q\}$ is the spin index, $\Pi_{\bp,\bq}$ is the projection onto the space spanned by $h_s^{\bp,\bq}(\bx)=h_s(\bx-\bq) e^{i \bp .\bx }$ where $h_s$ is the ground state for the Dirichlet Laplacian on the ball of radius $s$ for $0<s<r$. By the OTF-equation \eqref{OTFequation} and since $\mu_{r}^{\rm OTF}=0$ (see Lemma~\ref{mu=0}) we see that $\rho_{\gamma}(\bx)=\rho^{\rm OTF}_{r}*|h_{s}|^2(\bx)$. Moreover, by Lemma~\ref{mu=0}
\begin{equation}\label{bici2}
\Tr[-\tfrac12 \Delta \gamma ]=\tfrac{3}{10} (\tfrac{6\pi^2}{q})^{\frac23} \int_{\R^3} (\rho_{r}^{\rm OTF}(\bx))^{\frac53} \; d\bx + C s^{-2} r^{-3}.
\end{equation}
Since $[\Phi_{r}^{\rm HF}]_{+} \in L^{\frac52}_{loc}(\R^3)$, by \cite[Lemma 8.5]{Sol} for $\lambda' \in (0,1)$ we may find $\tilde{\gamma}$ such that $\supp{(\rho_{\tilde{\gamma}})} \subset \{ \bx: |\bx| \geq r\}$, $\rho_{\tilde{\gamma}}(\bx)\leq \rho_{\gamma}(\bx)$ for $\bx \in \R^3$ and
\begin{eqnarray}\notag
\Tr[ (-\tfrac12 \Delta -\Phi_{r}^{\rm HF}) \tilde{\gamma}] & \leq & \Tr[ (-\tfrac12 \Delta -\chi_r^{+}\Phi_{r}^{\rm HF}) \gamma] + L_{1} \int_{|\bx|\leq \frac{r}{1-\lambda'}} [V_{r}(\bx)]_{+}^{\frac52} \; d\bx\\ \label{bike1}
&& + \tfrac12 (\tfrac{\pi}{2\lambda'r})^2 \int_{|\bx| \leq \frac{r}{1-\lambda'}} \rho_{\gamma}(\bx) \; d\bx .
\end{eqnarray}
Since $\int \rho_{\tilde{\gamma}}\leq \int \rho_{\gamma} = \int \rho_{r}^{\rm OTF} \leq \int \chi_{r}^{+} \rho^{\rm HF}$ we may choose $\tilde{\gamma}$ as a trial density matrix in Theorem~\ref{EA} and we find for $\lambda, \nu $ to be chosen
\begin{equation*}
\mathcal{E}^{A}(\gamma_{r}^{\rm HF}) \leq \mathcal{E}^{A}(\tilde{\gamma})+\mathcal{R} \leq \Tr[(-\tfrac12 \Delta -\Phi_{r}^{\rm HF}) \tilde{\gamma}] +\mathcal{R} + D(\rho_{\tilde{\gamma}}) ,
\end{equation*}
since $\alpha^{-1} T(\bp) \leq \frac12 |\bp|^2$. Notice that $\mathcal{R}$ depends on $\lambda$ and $\nu$. {F}rom \eqref{bike1} it follows that
\begin{eqnarray}\notag
\mathcal{E}^{A}(\gamma_{r}^{\rm HF})  & \leq & \Tr[ (-\tfrac12 \Delta -\chi_r^{+}\Phi_{r}^{\rm HF}) \gamma] + L_{1} \int_{|\bx|\leq \frac{r}{1-\lambda'}} [V_{r}(\bx)]_{+}^{\frac52} \; d\bx\\ \label{Kop1}
&& + \tfrac12 (\tfrac{\pi}{2\lambda'r})^2 \int_{|\bx| \leq \frac{r}{1-\lambda'}} \rho_{\gamma}(\bx) \; d\bx +\mathcal{R} + D(\rho_{\tilde{\gamma}}).
\end{eqnarray}
{F}rom the OTF-equation \eqref{OTFequation} and Lemma~\ref{mu=0} we get
\begin{equation*}
\int_{|\bx| \leq \frac{r}{1-\lambda'}} \rho_{\gamma}(\bx) \; d\bx \leq \int_{|\bx| \leq \frac{2-\lambda'}{1-\lambda'}r} \rho_{r}^{\rm OTF} (\bx) \; d\bx \leq C r^{-3}.
\end{equation*}
While since $V_{r}(\by) \leq Cr^{-4}$ (Lemma~\ref{est0ext}) and is non-zero only for $|\by|>r$
\begin{equation*}
 \int_{|\bx|\leq \frac{r}{1-\lambda'}} [V_{r}(\bx)]_{+}^{\frac52} \; d\bx \leq C r^{-7} \tfrac{\lambda'}{(1-\lambda')^3}.
\end{equation*}
Hence, from \eqref{bici2} and \eqref{Kop1} and the inequalities above we find choosing $\lambda' = r^{\frac23}$
\begin{eqnarray*}
\mathcal{E}^{A}(\gamma_{r}^{\rm HF}) &\leq& \tfrac{3}{10} (\tfrac{6\pi^2}{q})^{\frac23} \int_{\R^3} (\rho_{r}^{\rm OTF}(\bx))^{\frac53} \; d\bx -\int_{\R^3} V_r(\bx) \rho_{\gamma}(\bx) \; d \bx  + C s^{-2} r^{-3} \\
&& + C r^{-7+\frac23} +\mathcal{R} + D(\rho_{\tilde{\gamma}}) = \dots .
\end{eqnarray*}
Here we used that $\lambda' \leq 1/2$ which follows by the bound on $D$. Since $\rho_{\tilde{\gamma}} \leq \rho_{\gamma}$, $D(\rho_{\tilde{\gamma}}) \leq D(\rho_{\gamma})$. Moreover by Newton's Theorem $D(\rho_{\gamma}) \leq D(\rho_{r}^{\rm OTF})$. Hence we get
\begin{eqnarray}\notag
\dots &\leq& \mathcal{E}^{\rm OTF}(\rho^{\rm OTF}_{r}) +\int_{\R^3} V_r(\bx) (\rho^{\rm OTF}_{r}(\bx)- \rho_{\gamma}(\bx)) \; d \bx  + C s^{-2} r^{-3} \\ \label{bici3}
&& + C r^{-7+\frac23} +\mathcal{R} .
\end{eqnarray}
We study now the second term on the right hand side of \eqref{bici3}. Since $\rho_{\gamma}=\rho^{\rm OTF}*|h_s|^2$, rewriting
\begin{equation*}
\int_{\R^3} V_r(\bx) (\rho^{\rm OTF}_{r}(\bx)- \rho_{\gamma}(\bx)) \; d \bx = \int_{\R^3} \rho^{\rm OTF}_{r}(\bx) (V_r(\bx)-V_{r}*|h_s|^2(\bx)) \; d \bx .
\end{equation*}
Since $s<r$, $V_r$ is harmonic on $|\bx|>r$ and $\rho_{r}^{\rm OTF}$  vanishes for $|\bx|<r$ one sees that the integrand on the right hand side of the equation above is non-zero only for $r < |\bx| <r+s$. Hence by Lemma~\ref{est0ext}\begin{equation*}
\int_{\R^3} V_r(\bx) (\rho^{\rm OTF}_{r}(\bx)- \rho_{\gamma}(\bx)) \; d \bx \leq \int_{r<|\bx|<r+s} \rho^{\rm OTF}_{r}(\bx) V_r(\bx)\; d \bx \leq C r^{-8}s .
\end{equation*}
Choosing $s=r^{\frac53}$ we find from \eqref{bici3} that
\begin{equation}\label{fi2}
\mathcal{E}^{A}(\gamma^{\rm HF}_{r}) \leq \mathcal{E}^{\rm OTF}(\rho^{\rm OTF}_{r}) + C r^{-7+\frac23} +\mathcal{R} .
\end{equation}
It remains to estimate $\mathcal{R}$. {F}rom Lemma~\ref{extde}, choosing $\lambda, \nu \leq 1/2$ and $D$ such that $\sigma r^{\varepsilon'}\leq 1$ we find
\begin{equation*}
(\tfrac{\pi}{2\lambda r} +\tfrac{C}{\lambda^2r^2}) \int_{|\bx| \geq r(1-\lambda)(1-\nu)} \rho^{\rm HF}(\bx) \; d\bx \leq C r^{-5} \lambda^{-2}.
\end{equation*}
By Lemma~\ref{est0ext}, \eqref{Vr} and since $\lambda \leq 1/2$ we get
\begin{equation*}
\int_{r(1-\lambda)\leq |\bx| \leq\frac{r}{1-\lambda}} (\Phi_{r(1-\lambda)}^{\rm HF}(\bx))^{\frac52} \; d\bx \leq C r^{-7} \lambda,
\end{equation*}
and similarly
\begin{equation*}
\alpha^3 \int_{r(1-\lambda)\leq |\bx| \leq\frac{r}{1-\lambda}} (\Phi_{r(1-\lambda)}^{\rm HF}(\bx))^{4} \; d\bx \leq C r^{-4} \lambda,
\end{equation*}
since $r \geq \beta_0 Z^{-\frac13}$ implies $\alpha r^{-3} \leq \beta_{0}^{-3} \kappa$. Hence from the expression of $\mathcal{R}$ and the boundness of $t^p e^{-t}$ for $t>0$, we find
\begin{equation}\label{bici4}
\mathcal{R} \leq \mathcal{E}x(\gamma^{\rm HF}_r) + C r^{-5} \lambda^{-2} + C r^{-7}\lambda.
\end{equation}
We estimate now the exchange term. By the exchange inequality (\cite{LO} or \cite[Th.6.4]{Sol}) and proceeding as in \eqref{pri4} we find by Lemma~\ref{extde} and Lemma~\ref{extGest}
\begin{eqnarray*}
\mathcal{E}x(\gamma^{\rm HF}_{r}) & \leq & C \int_{\R^3} G_{\alpha}(\rho^{\rm HF}_r(\bx)) d\bx + Cr^{-\frac32} \Big(\alpha^{-1} \int_{\R^3} G_{\alpha}(\rho_{r}^{\rm HF}(\bx)) d\bx \Big)^{\frac12} \\
& \leq &  C \alpha \mathcal{R} + C \alpha r^{-7} + Cr^{-\frac32} ( \mathcal{R}+r^{-7})^{\frac12} .
\end{eqnarray*}
Hence choosing $\alpha_0$ such that $1-C\alpha \geq 1/2$ for all $\alpha \leq \alpha_0$ we get from the inequality above and \eqref{bici4}
\begin{equation*}
\tfrac12 \mathcal{R}\leq Cr^{-\frac32} ( \mathcal{R}+r^{-7})^{\frac12} + C r^{-5} \lambda^{-2} + C r^{-7}\lambda \, ,
\end{equation*}
that gives
\begin{equation}\label{fi3}
\mathcal{R} \leq C(r^{-5} \lambda^{-2} + \lambda r^{-7}) \, .
\end{equation}
The second two inequalities in \eqref{rrcx} follow from the estimate above and  lemmas~\ref{extde} and \ref{extGest} choosing $\lambda =1/2$ and replacing $r$ with $r/2$.

\textit{Step II. Estimate from below on $\mathcal{E}^{A}(\gamma^{\rm HF}_{r})$.} Adding and subtracting $D(\rho^{\rm OTF}_{r})$ and $\Tr[\rho^{\rm OTF}_{r}*\frac{1}{|\cdot|} \gamma_{r}^{\rm HF}]$ we write
\begin{equation}\label{fi1}
\mathcal{E}^{A}(\gamma_{r}^{\rm HF})=\Tr[(\alpha^{-1}T(\bp)-\varphi_{r}^{\rm OTF})\gamma^{\rm HF}_{r}]+\| \rho^{\rm OTF}_{r} -\rho^{\rm HF}_{r}\|_{C}^2 -D(\rho_{r}^{\rm OTF}),
\end{equation}
using that $V_r = \Phi^{\rm HF}_{r}$ on the support of $\rho_{r}^{\rm HF}$. The first term on the right hand side of \eqref{fi1} is estimated from below by the sum of the first $N'$ eigenvalues of the operator $\alpha^{-1} T(\bp) -\varphi_{r}^{\rm OTF}$ acting on the functions with support on $\{ \bx : |\bx| \geq r\}$. Here $N'$ denotes the smallest integer bigger than $\Tr[\gamma_{r}^{\rm HF}]$. Hence by Lemma~\ref{auxi} we find for $\mu \in (0,1)$ and $s <r$
\begin{eqnarray*}
\mathcal{E}^{A}(\gamma_{r}^{\rm HF})&  \geq & -(\tfrac{2}{1-\mu})^{\frac32} \tfrac{q}{15 \pi^2} \int_{\R^3} [\varphi_{r}^{\rm OTF}(\bq)]_{+}^{\frac52} \; d\bq - C r^{-8} s \mu^{-\frac32}- C \mu^{-3} r^{-5} s\\
&&-C (1-\mu)^{-\frac72}r^{-5} -C (1-\mu) s^{-2} \Big(\int_{\R^3} \rho_{r}^{\rm HF}(\bx) \; d\bx +1 \Big)  \\
&& +\| \rho^{\rm OTF}_{r} -\rho^{\rm HF}_{r}\|_{C}^2 -D(\rho_{r}^{\rm OTF}) =\dots ,
\end{eqnarray*}
Notice the factor $q$ due to spin. Choosing $D$ such that $\sigma r^{\varepsilon'} \leq 1$, by lemmas~\ref{extde} and \ref{mu=0} we find
\begin{equation*}
\int_{\R^3} \rho_{r}^{\rm HF}(\bx) \; d\bx \leq C r^{-3} \mbox{ and }  \int_{\R^3} [\varphi_{r}^{\rm OTF}(\bq)]_{+}^{\frac52} \; d\bq \leq C r^{-7}.
\end{equation*}
Hence considering $\mu \leq 1/2$
\begin{eqnarray*}
\dots &  \geq & -2^{\frac32} \tfrac{q}{15 \pi^2} \int_{\R^3} [\varphi_{r}^{\rm OTF}(\bq)]_{+}^{\frac52} \; d\bq -C r^{-7} - C r^{-8} s \mu^{-\frac32}- C \mu^{-3} r^{-5} s\\
&& -C s^{-2} r^{-3}+\| \rho^{\rm OTF}_{r} -\rho^{\rm HF}_{r}\|_{C}^2 -D(\rho_{r}^{\rm OTF}) =\dots .
\end{eqnarray*}
By the OTF-equation \eqref{OTFequation} and since $\rho_{r}^{\rm OTF}$ has support where $\varphi_{r}^{\rm OTF}\geq 0$ we find
\begin{equation*}
\dots = \mathcal{E}^{\rm OTF}(\rho_{r}^{\rm OTF}) - C r^{-7+\frac13}+\| \rho^{\rm OTF}_{r} -\rho^{\rm HF}_{r}\|_{C}^2 ,
\end{equation*}
choosing $\mu=\frac12 r^{-\frac25}s^{\frac25}$ and $s=r^{\frac{11}{6}}$.

Hence combining the inequality above with \eqref{fi2} and \eqref{fi3} we find
\begin{equation}\label{fi4}
\| \rho_{r}^{\rm OTF}-\rho_{r}^{\rm HF}\|_{C}^2 \leq C r^{-7+\frac13} + C (r^{-5} \lambda^{-2} +\lambda r^{-7}).
\end{equation}
We study now $\| \chi_{r}^{+} \rho^{\rm HF} - \rho_{r}^{\rm HF}\|_{C}$. By Hardy-Littlewood-Sobolev inequality we find
\begin{equation}\label{fi5}
\| \chi_{r}^{+} \rho^{\rm HF} - \rho_{r}^{\rm HF}\|_{C} \leq  C \| \chi_{r}^{+} \rho^{\rm HF} - \rho_{r}^{\rm HF}\|_{\frac65} \leq C \Big( \int_{r \leq |\bx| \leq \frac{r}{1-\lambda}} \rho^{\rm HF}(\bx)^{\frac65} \; d\bx  \Big)^{\frac56}.
\end{equation}
To estimate the last term in \eqref{fi5} we are going to use the second estimate in \eqref{rrcx} that we have just proved. With $\Sigma$ defined as in \eqref{Sigma} we find by H\"older's inequality
\begin{eqnarray*}
\int_{r \leq |\bx| \leq \frac{r}{1-\lambda}} \rho^{\rm HF}(\bx)^{\frac65} \; d\bx & \leq & \Big( \int_{\substack{r \leq |\bx|,\\ \bx \in \Sigma}} \rho^{\rm HF}(\bx)^{\frac43} \; d\bx  \Big)^{\frac{9}{10}} \Big( \int_{r \leq |\bx| \leq \frac{r}{1-\lambda}} 1 \; d\bx  \Big)^{\frac{1}{10}}\\
&&+\Big( \int_{\substack{r \leq |\bx|,\\ \bx \in \R^3 \setminus \Sigma}} \rho^{\rm HF}(\bx)^{\frac53} \; d\bx  \Big)^{\frac{18}{25}}\Big( \int_{r \leq |\bx| \leq \frac{r}{1-\lambda}} 1 \; d\bx  \Big)^{\frac{7}{25}}\\
& \leq & C r^{-\frac{33}{10}} \lambda^{\frac{1}{10}} + C r^{-\frac{21}{5}} \lambda^{\frac{7}{25}}.
\end{eqnarray*}
{F}rom the estimate above, \eqref{fi4} and \eqref{fi5} it then follows
\begin{eqnarray*}
\| \chi_{r}^{+} \rho^{\rm HF} - \rho_{r}^{\rm OTF}\|_{C} & \leq & \| \chi_{r}^{+} \rho^{\rm HF} - \rho_{r}^{\rm HF}\|_{C} + \| \rho_{r}^{\rm HF} -\rho_{r}^{\rm OTF}\|_{C} \\
&\hspace{-.5cm} \leq & \hspace{-.5cm}  C r^{-\frac72+\frac16} + C (r^{-5} \lambda^{-2} +\lambda r^{-7})^{\frac12} + C( r^{-\frac{11}{4}} \lambda^{\frac{1}{12}} + r^{-\frac72} \lambda^{\frac{7}{30}}),
\end{eqnarray*}
that gives the claim choosing $\lambda = r^{\frac57}$
\end{proof}

\subsubsection{Estimate on $\mathcal{A}_3$}

\begin{lemma}\label{Gx}
Let $G_{\alpha}$ be the function defined in Theorem~\ref{Daube}. Let $Z \alpha= \kappa$ fixed, $0 \leq \kappa < 2/\pi$ and $Z \geq 1$.

There exists $\alpha_0>0$ such that given $\varepsilon', \sigma >0$ there exists a constant $D< 1/4$ depending only on $\varepsilon'$ and $\sigma$ such that if \eqref{inter} holds  for all $|\bx| \leq D$, then for all $\alpha \leq \alpha_0$
\begin{equation*}
\alpha^{-1} \int_{|\by| \geq |\bx|} G_{\alpha}(\rho^{\rm HF}(\by)) d \by \leq C |\bx|^{-7} \mbox{ for all }|\bx| \leq D,
\end{equation*}
with $C$ a universal positive constant.
\end{lemma}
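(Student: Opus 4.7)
\begin{pf*}{Proof sketch}
The plan is to reduce this to the two global $L^{1}$ bounds on $G_{\alpha}(\rho^{\rm HF})$ already available: Lemma~\ref{rrcou}, which gives a uniform bound in terms of $Z^{7/3}$, and Lemma~\ref{extrCou}, which gives the sharper bound $Cr^{-7}$ for the density truncated outside the ball of radius $r$. The argument then splits according to whether $|\bx|$ is below or above the TF length scale $\beta_{0}Z^{-1/3}$.

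First I would record the preliminary observation that $G_{\alpha}\geq 0$ on $[0,\infty)$ and $G_{\alpha}(0)=0$, both of which follow at once from Remark~\ref{Daure} (or directly from the explicit form of $g$). Consequently, for any measurable $A\subset\R^{3}$,
\begin{equation*}
\int_{A}G_{\alpha}(\rho^{\rm HF}(\by))\,d\by=\int_{\R^{3}}G_{\alpha}\bigl(\chi_{A}(\by)\,\rho^{\rm HF}(\by)\bigr)\,d\by,
\end{equation*}
which is what allows us to identify the integral appearing in Lemma~\ref{extrCou} with an integral over $\{|\by|>r\}$.

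Now fix $\alpha_{0}$ and $D$ to be the minimum of the constants supplied by Lemma~\ref{rrcou} and Lemma~\ref{extrCou} (both depend only on $\kappa$, $\varepsilon'$, $\sigma$). Assume $|\bx|\leq D$ and $\alpha\leq\alpha_{0}$. In the regime $|\bx|\leq\beta_{0}Z^{-1/3}$, I would simply use monotonicity and Lemma~\ref{rrcou} to write
\begin{equation*}
\alpha^{-1}\int_{|\by|\geq|\bx|}G_{\alpha}(\rho^{\rm HF}(\by))\,d\by\leq\alpha^{-1}\int_{\R^{3}}G_{\alpha}(\rho^{\rm HF}(\by))\,d\by\leq CZ^{7/3}\leq C\beta_{0}^{-7}|\bx|^{-7},
\end{equation*}
the last inequality because $|\bx|^{-7}\geq\beta_{0}^{-7}Z^{7/3}$ in this range.

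In the complementary regime $\beta_{0}Z^{-1/3}\leq|\bx|\leq D$, I would apply Lemma~\ref{extrCou} with $r=|\bx|$; the hypothesis \eqref{inter} on $|\by|\leq r$ is automatic since we assume it on the larger set $|\by|\leq D$. Using the preliminary observation above,
\begin{equation*}
\alpha^{-1}\int_{|\by|>|\bx|}G_{\alpha}(\rho^{\rm HF}(\by))\,d\by=\alpha^{-1}\int_{\R^{3}}G_{\alpha}\bigl(\chi_{|\bx|}^{+}\rho^{\rm HF}(\by)\bigr)\,d\by\leq C|\bx|^{-7}.
\end{equation*}
Combining the two cases yields the stated bound. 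Since all the substantive work is already packaged in Lemmas~\ref{rrcou} and \ref{extrCou}, there is no real obstacle here; the statement is essentially a corollary recording the pointwise-in-$|\bx|$ consequence of those integral bounds.
\end{pf*}
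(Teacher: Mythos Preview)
Your proof is correct and follows the same approach as the paper: split at $|\bx|=\beta_{0}Z^{-1/3}$, apply Lemma~\ref{rrcou} below that scale and Lemma~\ref{extrCou} (with $r=|\bx|$) above it. One minor slip: from $|\bx|^{-7}\geq\beta_{0}^{-7}Z^{7/3}$ you should conclude $Z^{7/3}\leq\beta_{0}^{7}|\bx|^{-7}$, so the constant is $C\beta_{0}^{7}$ rather than $C\beta_{0}^{-7}$, but this is absorbed into $C$ anyway.
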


\begin{proof}
If $|\bx| < \beta_{0} Z^{-\frac13}$ we find by Lemma~\ref{rrcou}
\begin{equation*}
\alpha^{-1} \int_{|\by| > |\bx|} G_{\alpha}(\rho^{\rm HF}(\by)) d \by \leq \alpha^{-1} \int_{\R^3} G_{\alpha}(\rho^{\rm HF}(\by)) d \by \leq C Z^{\frac73} \leq C |\bx|^{-7}.
\end{equation*}
While if $D \geq |\bx| \geq \beta_0 Z^{-\frac13}$ the claim follows from the second estimate in \eqref{rrcx}.
\end{proof}

\begin{lemma}\label{A3}
Let $Z \alpha= \kappa$ fixed, $0 \leq \kappa < 2/\pi$, $Z \geq 1$ and $0< \mu < \frac{1}{109}$.

 There exists $\alpha_0$ such that given $\varepsilon', \sigma >0$ there exists a constant $D< 1/4$ depending only on $\varepsilon'$ and $\sigma$ such that  for all $\alpha \leq \alpha_0$ and for all $r$ with $\beta_0 Z^{-\frac{1-\mu}{3}} \leq r \leq D$ for which \eqref{inter} holds for $|\bx|\leq r$, then for all $\bx$ with $|\bx| \geq r$
\begin{equation*}
| \mathcal{A}_{3}(r, \bx)| \leq C  \Big(\frac{|\bx|}{r}\Big)^{\frac{1}{12}} r^{-4+\frac{3 \mu}{1-\mu}} ,
\end{equation*}
with $C>0$ a universal constant.
\end{lemma}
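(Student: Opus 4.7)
The plan is to adapt the strategy of Lemma~\ref{smallx} to the exterior region, systematically replacing every global estimate on $\rho^{\rm HF}$ by its exterior counterpart from Lemmas~\ref{Gx} and~\ref{extrCou}. First, since both $\rho_r^{\rm OTF}$ and $\chi_r^+\rho^{\rm HF}$ vanish on $B_r(0)$, one can write $\mathcal{A}_3(r,\bx)$ as an integral over $\{|\by|<|\bx|\}$ of $(\rho_r^{\rm OTF}-\chi_r^+\rho^{\rm HF})/|\bx-\by|$. Applying the second inequality of Proposition~\ref{Counorm} to $\pm(\rho_r^{\rm OTF}-\chi_r^+\rho^{\rm HF})$ and using the Coulomb-norm bound $\|\chi_r^+\rho^{\rm HF}-\rho_r^{\rm OTF}\|_C\leq Cr^{-7/2+1/6}$ from Lemma~\ref{extrCou}, I would obtain, for every $k>0$,
\begin{equation*}
|\mathcal{A}_3(r,\bx)|\leq \int_{A(|\bx|,k)}\frac{\rho^{\rm HF}(\by)+\rho_r^{\rm OTF}(\by)}{|\bx-\by|}\,d\by+Ck^{-1}|\bx|^{-1/2}r^{-7/2+1/6}.
\end{equation*}

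The OTF contribution on the annulus is handled by the pointwise bound $\rho_r^{\rm OTF}(\by)\leq Cr^{-6}$ of Lemma~\ref{est0ext} combined with the elementary estimate $\int_{A(|\bx|,k)}|\bx-\by|^{-1}d\by\leq Ck^2|\bx|^2$, yielding a contribution of order $k^2|\bx|^2 r^{-6}$. For the HF contribution, I would split $A(|\bx|,k)$ at $|\bx-\by|=R$ with $R<r/4$ (so that $|\by|\geq r/2$ on the near set). The far part is treated by the $\Sigma$/$\R^3\setminus\Sigma$ Hölder argument leading to~\eqref{1a} in the proof of Lemma~\ref{smallx}, but with the exterior Daubechies estimate $\alpha^{-1}\int_{|\by|\geq r}G_\alpha(\rho^{\rm HF}(\by))d\by\leq Cr^{-7}$ of Lemma~\ref{Gx} replacing the global one; this produces contributions of order $R^{-3/8}|\bx|^{1/8}k^{1/8}r^{-21/4}$ and $|\bx|^{1/5}k^{1/5}r^{-21/5}$.

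The near part is controlled via the Daubechies--Lieb--Yau inequality Theorem~\ref{dayau} applied with the potential $\kappa|\cdot-\bx|^{-1}\chi_{|\cdot-\bx|<R}$ and the exterior density matrix $\gamma_r^{\rm HF}$, together with the exterior kinetic bound $\alpha^{-1}\Tr[T(\bp)\gamma_r^{\rm HF}]\leq Cr^{-7}$ from Lemma~\ref{extrCou}. This produces
\begin{equation*}
\int_{|\bx-\by|<R}\frac{\rho_r^{\rm HF}(\by)}{|\bx-\by|}\,d\by\leq C\alpha\kappa^{-1}r^{-7}+CZ^{3/2}R^{1/2}+C\kappa^2 Z,
\end{equation*}
and the hypothesis $r\geq\beta_0 Z^{-(1-\mu)/3}$, rewritten as $Z\leq(\beta_0/r)^{3/(1-\mu)}$ and $\alpha\kappa^{-1}=Z^{-1}\leq(r/\beta_0)^{3/(1-\mu)}$, converts the dominant term into $Cr^{-7+3/(1-\mu)}=Cr^{-4+3\mu/(1-\mu)}$, exactly the target exponent; for $\mu<1/109$ the other two errors are strictly better. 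Taking $R$ of order $r$ and then optimizing $k$ to balance the four contributions (Coulomb-norm, OTF volume, and the two Hölder terms) yields the claimed bound, the factor $(|\bx|/r)^{1/12}$ arising from the optimal choice of $k$.

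The main obstacle is that Theorem~\ref{dayau} applied against $\gamma_r^{\rm HF}$ controls $\int\rho_r^{\rm HF}/|\bx-\by|$ rather than $\int\rho^{\rm HF}/|\bx-\by|$, and the cutoff $\theta_r$ does not equal $1$ on the full near region. The fix is to split the near set according to whether $|\by|\geq r(1-\lambda)^{-1}$ (where $\rho_r^{\rm HF}=\rho^{\rm HF}$) or not (a thin radial shell of width $\lambda r/(1-\lambda)$ intersected with $B_R(\bx)$), and to absorb the shell contribution into the same Hölder estimate used in the far region. It is crucial here to use the exterior kinetic bound of Lemma~\ref{extrCou} rather than the global bound of Lemma~\ref{rrcou}: the latter would only give an exponent of $-4-4\mu/(1-\mu)$, too weak to continue the iteration of Theorem~\ref{mainesti}.
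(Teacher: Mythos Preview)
Your overall architecture is right and matches the paper's: Coulomb-norm estimate via Proposition~\ref{Counorm} and Lemma~\ref{extrCou}, annulus splitting at $|\bx-\by|=R$, exterior Daubechies bounds for the far piece, and Theorem~\ref{dayau} for the near piece. But there is a genuine gap in the near-singularity step.

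You apply Theorem~\ref{dayau} with the fixed coupling $\kappa$, obtaining error terms $Z^{3/2}R^{1/2}$ and $\kappa^{2}Z$. You then write the hypothesis $r\geq\beta_{0}Z^{-(1-\mu)/3}$ as ``$Z\leq(\beta_{0}/r)^{3/(1-\mu)}$''. This inequality is backwards: the hypothesis gives $Z\geq(\beta_{0}/r)^{3/(1-\mu)}$, i.e.\ only a \emph{lower} bound on $Z$. Your deduction $Z^{-1}\leq(r/\beta_{0})^{3/(1-\mu)}$ for the leading term is correct, but there is no upper bound on $Z$ in terms of $r$, so the terms $Z^{3/2}R^{1/2}$ and $\kappa^{2}Z$ cannot be controlled by any power of $r$ (for fixed $r$ they diverge as $Z\to\infty$). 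Taking $R$ of order $r$ does not help; no choice of $R$ kills the $Z$-dependence of $\kappa^{2}Z$.

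The paper's remedy is to apply Theorem~\ref{dayau} with a \emph{variable} coupling $\nu$ subject to $\nu\alpha\leq 2/\pi$, and then to choose $\nu=\tfrac12(\beta_{0}/r)^{3/(1-\mu)}$, which depends on $r$ but not on $Z$; the constraint $\nu\alpha\leq 2/\pi$ holds precisely because of the lower bound $Z\geq(\beta_{0}/r)^{3/(1-\mu)}$. The resulting error terms $\nu^{-1}r^{-7}$, $\nu^{3/2}R^{1/2}$, $\nu^{3}\alpha^{2}$ are then pure functions of $r$, $R$, and $\alpha$, and one uses the \emph{upper} bound $\alpha\leq\kappa(r/\beta_{0})^{3/(1-\mu)}$ (which \emph{does} follow from the hypothesis) to close. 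The paper also takes $R=\alpha^{2}|\bx|^{-1/12}r^{-5/18}$, not $R\sim r$, to balance the far-region H\"older term against the near-region term.

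On the cutoff issue you raise at the end: your shell-splitting fix is workable but unnecessary. The paper simply uses $\gamma^{\rm HF}_{r/2}$ (localization at radius $r/2$ with $\lambda=1/2$), so that $\theta_{r/2}\equiv 1$ on $\{|\by|\geq r\}$ and hence $\chi_{r}^{+}\rho^{\rm HF}\leq\rho^{\rm HF}_{r/2}$ pointwise; the exterior kinetic bound of Lemma~\ref{extrCou} applies to $\gamma^{\rm HF}_{r/2}$ with the same exponent.
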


\begin{proof}
We proceed similarly as in Theorem~\ref{Lmnu}. By the formula for $\mathcal{A}_{3}$, Proposition~\ref{Counorm} and Lemma~\ref{extrCou} we get
\begin{equation}\label{fr1}
| \mathcal{A}_{3}(r, \bx)| \leq \int_{A(|\bx|,k)} \chi_{r}^{+}(\by) \frac{|\rho_{r}^{\rm OTF}(\by)-\rho^{\rm HF}(\by)|}{|\bx -\by|} \; d\by + C k^{-1} |\bx|^{-\frac12} r^{-\frac72+\frac16} .
\end{equation}
By H\"older's inequality, Lemma~\ref{mu=0}, the OTF-equation \eqref{OTFequation} and \eqref{52ann} we find
\begin{equation}\label{fr2}
 \int_{A(|\bx|,k)} \frac{\rho_{r}^{\rm OTF}(\by)}{|\bx -\by|} \; d\by \leq C r^{-\frac{21}{5}} |\bx|^{\frac15} k^{\frac15}.
\end{equation}
Once again, to estimate $\int_{A(|\bx|,k)} \frac{\chi_{r}^{+}(\by)\rho^{\rm HF}(\by)}{|\bx -\by|} \; d\by $ we have to proceed differently than in \cite[Lem.12.7]{Sol} since $\rho^{\rm HF}$ is not in $L^{\frac53}( \R^3)$. We consider the following splitting
\begin{equation}\label{sol1}
\int_{A(|\bx|,k)} \chi_{r}^{+}(\by) \frac{\rho^{\rm HF}(\by)}{|\bx -\by|} \; d\by =\int_{\substack{A(|\bx|,k)\\|\bx-\by|>R,|\by|>r}} \frac{\rho^{\rm HF}(\by)}{|\bx -\by|} \; d\by + \int_{\substack{|\by|>r,\\|\bx-\by|<R}} \frac{\rho^{\rm HF}(\by)}{|\bx -\by|} \; d\by,
\end{equation}
for $R>0$ to be chosen. By H\"older's inequality, Theorem~\ref{Daube}, Remark~\ref{Daure}, \eqref{52ann} and Lemma~\ref{extrCou} we get
\begin{equation}\label{hint}
\int_{\substack{A(|\bx|,k)\\|\bx-\by|>R,|\by|>r}} \frac{\rho^{\rm HF}(\by)}{|\bx -\by|} \; d\by \leq C \alpha^{\frac34} R^{-\frac38} |\bx|^{\frac18} k^{\frac18} r^{-\frac{21}{4}} + C r^{-\frac{21}{5}} |\bx|^{\frac15} k^{\frac15}.
\end{equation}
It remains to study the second term on the right hand side of \eqref{sol1}. Let $\nu \in \R^{+}$ be such that $\nu \alpha \leq 2/\pi$. We consider the density matrix $\gamma^{\rm HF}_{r/2}$ defined in \eqref{exteriordensity} with $\lambda = 1/2$. {F}rom Theorem~\ref{dayau} it follows that for $\bx$ such that $|\bx|\geq r$
\begin{equation*}
\Tr[(\alpha^{-1} T(\bp) -\frac{\nu}{|\cdot-\bx|} \chi_{B_{R}(\bx)}(\cdot)) \gamma^{\rm HF}_{r/2}] \geq - C (\nu^{\frac52} R^{\frac12} + \nu^{4} \alpha^{2}).
\end{equation*}
Hence we find
\begin{eqnarray*}
\nu \int_{|\by -\bx|<R} \chi_{r}^{+}(\by) \frac{\rho^{\rm HF}(\by)}{|\bx -\by|} d \by & \leq & \nu \int_{|\by -\bx|<R} \frac{\rho_{r/2}^{\rm HF}(\by)}{|\bx -\by|} d \by \\
& \leq & \Tr[\alpha^{-1} T(\bp) \gamma^{\rm HF}_{r/2}] + C (\nu^{\frac52} R^{\frac12} + \nu^{4} \alpha^{2})
\end{eqnarray*}
 and by Lemma~\ref{extrCou}
\begin{equation}\label{hfex}
\int_{|\by -\bx|<R} \chi_{r}^{+}(\by) \frac{\rho^{\rm HF}(\by)}{|\bx -\by|} d \by \leq C \nu^{-1} r^{-7}+ C(\nu^{\frac32} R^{\frac12}+ \nu^{3} \alpha^{2}).
\end{equation}
Hence from \eqref{fr1}, \eqref{fr2}, \eqref{hint} and \eqref{hfex} it follows that
\begin{eqnarray*}
| \mathcal{A}_{3}(r, \bx)| & \leq & C \nu^{-1} r^{-7}+ C(\nu^{\frac32} R^{\frac12}+ \nu^{3} \alpha^{2})+ C \alpha^{\frac34} R^{-\frac38} |\bx|^{\frac18} k^{\frac18} r^{-\frac{21}{4}}\\
&& + C r^{-\frac{21}{5}} |\bx|^{\frac15} k^{\frac15} + C k^{-1} |\bx|^{-\frac12} r^{-\frac72+\frac16} .
\end{eqnarray*}
So choosing $\nu = 1/2 (\beta_{0}r^{-1})^{\frac{3}{1-\mu}}$ (that gives $\nu \alpha <2/\pi$), $k$ such that $r^{-\frac{21}{5}} |\bx|^{\frac15} k^{\frac15}= k^{-1} |\bx|^{-\frac12} r^{-\frac72+\frac16}$, i.e. $k=|\bx|^{-\frac{7}{12}} r^{\frac{13}{18}}$ and $R$ such that $\alpha^{\frac34} R^{-\frac38} |\bx|^{\frac18 \frac{5}{12}} r^{-\frac{21}{4}+\frac18\frac{13}{18}} = r^{-4-\frac{1}{18}} |\bx|^{\frac{1}{12}}$, i.e. $R=\alpha^2 |\bx|^{-\frac{1}{12}}r^{-\frac{5}{18}}$
\begin{equation*}
| \mathcal{A}_{3}(r, \bx)| \leq C (r^{-4+\frac{3 \mu}{1-\mu}}+|\bx|^{-\frac{1}{24}} r^{-\frac{5}{36}-\frac{9}{2(1-\mu)}} \alpha+ r^{-\frac{9}{1-\mu}} \alpha^{2}+ |\bx|^{\frac{1}{12}} r^{-4-\frac{1}{18}}) .
\end{equation*}
Finally since $r^{-1}\alpha^{\frac{1-\mu}{3}} \leq \beta_{0}^{-1}\kappa^{\frac{1-\mu}{3}}$, the claim follows for $|\bx| \geq r$ and $\mu <1/(109)$.
\end{proof}

\subsection{The intermediate region} Here we prove the main estimate in Theorem~\ref{mainesti} up to a fixed distance independent of $Z$.

\begin{lemma}[Iterative step]\label{Iterstep}

Let $Z \alpha =\kappa$ fixed with $0 \leq \kappa <2/\pi$. Consider $\mu=\frac{1}{11}\frac{1}{49}$ and assume $N \geq Z \geq 1$.

Then there exists $\alpha_{0}>0$ such that for all $\delta, \varepsilon', \sigma>0$ with $\delta < \delta_{0}$, where $\delta_{0}$ is some universal constant, there exists constants $\varepsilon_{2}, C_{\phi}'>0$ depending only on $\delta$ and a constant $D=D(\varepsilon', \sigma)>0$ depending only on $\varepsilon', \sigma$ with the following property. For all $\alpha \leq \alpha_{0}$ and $R_{0}<D$ satisfying that $\beta_{0}Z^{-\frac{1-\mu}{3}} \leq R_{0}^{1+\delta}$ and that \eqref{inter}
holds for all $|\bx| \leq R_{0}$, there exists $R_{0}'>R_{0}$ such that
\begin{equation*}
|\Phi^{\rm HF}_{|\bx|}(\bx)-\Phi^{\rm TF}_{|\bx|}(\bx)| \leq C_{\Phi}' |\bx|^{-4+\varepsilon_{2}}
\end{equation*}
for all $\bx$ with $R_{0} < |\bx| < R_{0}'$.
\end{lemma}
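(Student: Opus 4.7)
The plan is to exploit the decomposition \eqref{A123} of $\Phi^{\rm HF}_{|\bx|}-\Phi^{\rm TF}_{|\bx|}$ into $\mathcal{A}_{1}+\mathcal{A}_{2}+\mathcal{A}_{3}$ with the inner radius $r:=R_{0}^{1+\delta}$, and to combine the a priori bounds supplied by Lemmas~\ref{A1A2} and \ref{A3} to extract a fresh estimate on a shell $(R_{0},R_{0}')$ just beyond $R_{0}$.

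First I would verify that the choice $r=R_{0}^{1+\delta}$ meets the hypotheses of Lemmas~\ref{A1A2} and \ref{A3} with the fixed $\mu=1/(11\cdot49)$, which lies below the threshold $1/109$ of Lemma~\ref{A3}. The assumption $\beta_{0}Z^{-(1-\mu)/3}\le R_{0}^{1+\delta}=r$ guarantees $r\ge\beta_{0}Z^{-(1-\mu)/3}\ge\beta_{0}Z^{-1/3}$; the inequality $r\le R_{0}\le D$ is immediate if $D$ is taken smaller than the thresholds appearing in those two lemmas (both depending only on $\varepsilon'$ and $\sigma$), and \eqref{inter} on $\{|\bx|\le R_{0}\}$ trivially restricts to $\{|\bx|\le r\}$. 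Applying the two lemmas then yields, for $|\bx|>r$,
\begin{equation*}
|\mathcal{A}_{1}(r,\bx)|+|\mathcal{A}_{2}(r,\bx)|\le C\,|\bx|^{-4-\zeta}\,R_{0}^{(1+\delta)\zeta},\quad |\mathcal{A}_{3}(r,\bx)|\le C\,|\bx|^{1/12}\,R_{0}^{-(1+\delta)(49/12-3\mu/(1-\mu))},
\end{equation*}
with universal constants.

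Next I would try to control each summand by $\tfrac12 C_{\Phi}'|\bx|^{-4+\varepsilon_{2}}$ on the target shell. Writing the first bound as $C|\bx|^{-4}(r/|\bx|)^{\zeta}$ and testing at $|\bx|=R_{0}$ reduces the first inequality to $R_{0}^{\delta\zeta-\varepsilon_{2}}\le C_{\Phi}'/(2C)$; to make this hold uniformly in $R_{0}<1$ one must impose $\varepsilon_{2}\le\delta\zeta$, after which $C_{\Phi}'\ge 2C$ suffices. Rearranging the second inequality gives the natural endpoint
\begin{equation*}
R_{0}':=\bigl(\tfrac{C_{\Phi}'}{2C}\bigr)^{1/(49/12-\varepsilon_{2})}R_{0}^{(1+\delta)(49/12-3\mu/(1-\mu))/(49/12-\varepsilon_{2})},
\end{equation*}
and the requirement $R_{0}'>R_{0}$ (still with $C_{\Phi}'\ge 2C$) translates to $\varepsilon_{2}\le(1+\delta)\tfrac{3\mu}{1-\mu}-\tfrac{49\delta}{12}$. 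A monotonicity check in $|\bx|$ shows that once each bound holds at the appropriate endpoint it holds throughout $[R_{0},R_{0}']$. I would therefore set $\varepsilon_{2}(\delta):=\min\bigl\{\delta\zeta,\;(1+\delta)\tfrac{3\mu}{1-\mu}-\tfrac{49\delta}{12}\bigr\}$ and $C_{\Phi}':=4C$, both depending only on $\delta$.

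The hard part will be reconciling the two constraints on $\varepsilon_{2}$: the $\mathcal{A}_{1}+\mathcal{A}_{2}$ error wants $r$ as small as possible (large $\delta$), while the $\mathcal{A}_{3}$ error wants $r$ as large as possible (small $\delta$). The second upper bound on $\varepsilon_{2}$ is positive precisely when $\delta<\delta_{0}:=\tfrac{3\mu/(1-\mu)}{49/12-3\mu/(1-\mu)}$, which is a universal positive constant only because the chosen value $\mu=1/(11\cdot49)$ makes the loss exponent $3\mu/(1-\mu)=3/538$ much smaller than the $\mathcal{A}_{3}$ price $49/12$. For any $\delta\in(0,\delta_{0})$ the prescription above produces a positive $\varepsilon_{2}$, a universal $C_{\Phi}'$, and $R_{0}'>R_{0}$, yielding the stated estimate on the shell.
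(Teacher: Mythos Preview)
Your proof is correct and follows essentially the same route as the paper: set $r=R_{0}^{1+\delta}$, apply Lemmas~\ref{A1A2} and \ref{A3} to the decomposition \eqref{A123}, and balance the two errors on a shell beyond $R_{0}$. The only cosmetic difference is that the paper fixes $R_{0}':=R_{0}^{1-\delta}$ at the outset and then reads off the exponent via the two-sided bound $|\bx|^{2\delta/(1-\delta)}\le r/|\bx|\le |\bx|^{\delta}$, whereas you solve for $R_{0}'$ from the $\mathcal{A}_{3}$ inequality; both lead to the same smallness condition on $\delta$ and the same $\varepsilon_{2}$ up to the bookkeeping.
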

\begin{proof}
Let $D>0$ depending on $\sigma, \varepsilon'$ be the smaller of the values of $D$ occurring in Lemma~\ref{A1A2} and Lemma~\ref{A3}.
Given $\delta>0$. We consider $R_{0}<D$ satisfying $\beta_{0}Z^{-\frac{1-\mu}{3}}\leq R_{0}^{1+\delta}$ and such that \eqref{inter} holds for all $|\bx| \leq R_{0}$.

Set $R_{0}'=R_{0}^{1-\delta}$ and $r=R_{0}^{1+\delta}$. Then we have $\beta_{0}Z^{-\frac13} \leq \beta_{0}Z^{-\frac{1-\mu}{3}} \leq r \leq R_{0}<D$ we can therefore apply Lemma~\ref{A1A2} and Lemma~\ref{A3}. {F}rom \eqref{A123} we obtain that for all $|\bx| \geq r$ and all $\alpha \leq \alpha_{0}$
\begin{equation*}
|\Phi^{\rm HF}_{|\bx|}(\bx)-\Phi^{\rm TF}_{|\bx|}(\bx)| \leq  C |\bx|^{-4-\zeta} r^{\zeta} +  C  \Big(\frac{|\bx|}{r}\Big)^{\frac{1}{12}} r^{-4+\frac{3 \mu}{1-\mu}}.
\end{equation*}
Since for $R_{0}<|\bx|<R_{0}'$ we have
\begin{equation*}
|\bx|^{\frac{2\delta}{1-\delta}}\leq \frac{r}{|\bx|} \leq |\bx|^{\delta}
\end{equation*}
and thus
\begin{equation*}
|\Phi^{\rm HF}_{|\bx|}(\bx)-\Phi^{\rm TF}_{|\bx|}(\bx)| \leq  C |\bx|^{-4+\delta \zeta} + C  |\bx|^{-4+3\frac{ \mu}{1-\mu}} |\bx|^{-\frac{\delta}{1-\delta}(8+\frac16-\frac{6 \mu}{1-\mu})}.
\end{equation*}
Hence choosing $\delta_0$ sufficiently small there are $C_{\Phi}'$ and $\varepsilon_{2}$ such that the claim holds.
\end{proof}

\begin{lemma}\label{<D}
Let $Z \alpha =\kappa$ fixed with $0 \leq \kappa< 2/\pi$. Assume $N \geq Z \geq 1$.

Then there exist universal constants $\alpha_{0}$, $\varepsilon \in (0,4)$ and $D, C_{\Phi}>0$, $D <1/4$, such that for all $\alpha \leq \alpha_{0}$ and $\bx$ with $|\bx| \leq D$ we have\begin{equation*}
|\Phi^{\rm HF}_{|\bx|}(\bx)-\Phi^{\rm TF}_{|\bx|}(\bx)| \leq  C_{\Phi} |\bx|^{-4+\varepsilon}.
\end{equation*}
\end{lemma}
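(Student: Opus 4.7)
The plan is to bootstrap the initial estimate of Theorem~\ref{Lmnu} by iterating Lemma~\ref{Iterstep} finitely many times, arranging matters so that the constants stabilize after the first iteration. Without loss of generality I may assume $Z$ is large: for $Z$ in any bounded range the bound holds trivially with a sufficiently large $C_{\Phi}$, since $N \leq 2Z + 1$ by Theorem~\ref{Lieb}.

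First I would fix a parameter $\mu_{0}$ in the open interval $(\mu, 2/(11 \cdot 49))$, where $\mu := 1/(11 \cdot 49)$ is the value hard-coded in Lemma~\ref{Iterstep}. Applying \eqref{pa0} with parameter $\mu_{0}$, and absorbing the factor $|\bx|^{c(\mu_{0})}$ (which stays bounded on the range of validity), produces universal constants $\sigma_{0}, \varepsilon'_{0} > 0$ and a universal $\beta > 0$ such that \eqref{inter} holds on $|\bx| \leq R^{(0)} := \beta Z^{-(1-\mu_{0})/3}$. Next I would choose $\delta \in (0, \delta_{0})$ small enough that $(1-\mu_{0})(1+\delta) \leq 1 - \mu$; since $\mu_{0} > \mu$, such a $\delta$ exists. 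With $\beta$ sufficiently large, this choice guarantees the compatibility hypothesis $\beta_{0} Z^{-(1-\mu)/3} \leq (R^{(0)})^{1+\delta}$ required by Lemma~\ref{Iterstep}.

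Applying Lemma~\ref{Iterstep} once with $R_{0} = R^{(0)}$ yields universal constants $\varepsilon_{2}(\delta), C'_{\Phi}(\delta)$ and extends \eqref{inter} to the band $R^{(0)} < |\bx| < R^{(1)} := (R^{(0)})^{1-\delta}$. Setting $\tilde\sigma := \max(\sigma_{0}, C'_{\Phi})$ and $\tilde\varepsilon := \min(\varepsilon'_{0}, \varepsilon_{2})$, the combined estimate \eqref{inter} holds on $|\bx| \leq R^{(1)}$ with these stabilized constants. Let $D_{\ast} := D(\tilde\sigma, \tilde\varepsilon)$ denote the corresponding threshold from Lemma~\ref{Iterstep}. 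I then iterate: at each step $k$ with $R^{(k)} < D_{\ast}$, Lemma~\ref{Iterstep} extends the estimate to $|\bx| \leq R^{(k+1)} := (R^{(k)})^{1-\delta}$ with the same universal constants (its output depends only on $\delta$). Since $R^{(k)} = (R^{(0)})^{(1-\delta)^{k}} \to 1 > D_{\ast}$ as $k \to \infty$, the iteration terminates after finitely many steps with $R^{(k)} \geq D_{\ast}$, at which point I set $D := D_{\ast}$, $\varepsilon := \tilde\varepsilon$, $C_{\Phi} := \tilde\sigma$.

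The main subtlety is the first iteration: the compatibility hypothesis $\beta_{0} Z^{-(1-\mu)/3} \leq R_{0}^{1+\delta}$ forces $R_{0} \gtrsim Z^{-(1-\mu)/(3(1+\delta))}$, strictly larger than the natural scale $Z^{-1/3}$ of \eqref{pa1}. The additional slack in \eqref{pa0}, reaching up to $Z^{-(1-\mu_{0})/3}$ for $\mu_{0} \in (\mu, 2/(11 \cdot 49))$, is exactly what closes the gap, provided $\delta$ is chosen small enough to satisfy $(1-\mu_{0})(1+\delta) \leq 1 - \mu$. After this initial matching, nothing new happens: the output constants in Lemma~\ref{Iterstep} depend only on $\delta$, so no further inflation occurs over the finitely many subsequent iterations.
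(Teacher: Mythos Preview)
Your proposal is correct and follows essentially the same strategy as the paper: seed the estimate with Theorem~\ref{Lmnu} and then bootstrap via Lemma~\ref{Iterstep}. The only cosmetic differences are that the paper handles the initial compatibility condition by halving the exponent $a$ (with a $Z$-dependent $\beta$) rather than by taking $\mu_{0}>\mu$, and it phrases the iteration as a supremum/contradiction argument on the set $\mathcal{M}$ rather than as an explicit finite loop; neither difference is substantive.
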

\begin{proof}
We fix $\mu=\frac{1}{11} \frac{1}{49}$ as in Lemma~\ref{Iterstep}. Since $\mu < \frac{2}{11}\frac{1}{49}$, by Theorem~\ref{Lmnu} we know that there exists constants $a,b,c >0$ such that for all $|\bx| \leq \beta Z^{-\frac{1-\mu}{3}}$
\begin{equation}\label{u1}
|\Phi^{\rm HF}_{|\bx|}(\bx)-\Phi^{\rm TF}_{|\bx|}(\bx)| \leq  C (1+\beta^{2}+\beta^{5/2}+\beta^{b} |\bx|^{c}) \beta^{2-a} |\bx|^{-4+a}.
\end{equation}
We first show that we may choose $\delta$ small enough such that if we choose $\tilde{R}^{1+\delta}=\beta_{0} Z^{-\frac{1-\mu}{3}}$ we have for all $|\bx|<\tilde{R}$ that
\begin{equation}\label{b3}
|\Phi^{\rm HF}_{|\bx|}(\bx)-\Phi^{\rm TF}_{|\bx|}(\bx)| \leq  C_{\Phi}'' |\bx|^{-4+\frac{a}{2}}.
\end{equation}
Let $\beta>0$ be such that $(\beta Z^{-\frac{1-\mu}{3}})^{1+\delta} =\beta_{0}Z^{-\frac{1-\mu}{3}}$, i.e. $\beta^{1+\delta}=\beta_{0} Z^{\delta\frac{1-\mu}{3}}$. Hence from \eqref{u1} we find for all $|\bx| \leq \beta Z^{-\frac{1-\mu}{3}}$
\begin{equation*}
|\Phi^{\rm HF}_{|\bx|}(\bx)-\Phi^{\rm TF}_{|\bx|}(\bx)| \leq C (1+\beta^2+\beta^{5/2}+\beta^{b} |\bx|^{c}) \beta^{2-\frac{a}{2}} Z^{-\frac{a}{2}\frac{1-\mu}{3}} |\bx|^{-4+\frac{a}{2}} \, ,
\end{equation*}
and by the choice of $\beta$ (and $\beta_{0} < 1$)
\begin{eqnarray*}
|\Phi^{\rm HF}_{|\bx|}(\bx)-\Phi^{\rm TF}_{|\bx|}(\bx)| &  \leq & C (1+Z^{2\frac{\delta}{1+\delta}\frac{1-\mu}{3}} + Z^{\frac52\frac{\delta}{1+\delta}\frac{1-\mu}{3}}+ Z^{\frac{\delta}{1+\delta}\frac{1-\mu}{3}(b+c)} Z^{-c\frac{1-\mu}{3}}) \\
&& \hspace{2cm} Z^{(2-\frac{a}{2})\frac{1-\mu}{3}\frac{\delta}{1+\delta}} Z^{-\frac{a}{2}\frac{1-\mu}{3}} |\bx|^{-4+\frac{a}{2}}.
\end{eqnarray*}
Hence if $\delta$ is small enough we may choose a universal constant $C_{\Phi}''$ such that \eqref{b3} holds.

Let now $\delta$ be small enough so that we may apply Lemma~\ref{Iterstep}. This give constant $\varepsilon_{2}$ and $C_{\Phi}'$ (depending only on $\delta$) and for all $\sigma, \varepsilon'>0$ a constant $D < 1/4$. Now choose $\sigma= \max\{C_{\Phi}', C_{\Phi}''\}$ and $\varepsilon'=\min\{a/2, \varepsilon_{2}\}$. Now $\sigma, \varepsilon'$ and $D$ are universal constants. To prove the claim we shall prove that for all $|\bx| \leq D$
\begin{equation}\label{b4}
|\Phi^{\rm HF}_{|\bx|}(\bx)-\Phi^{\rm TF}_{|\bx|}(\bx)| \leq \sigma |\bx|^{-4+\varepsilon'}.
\end{equation}
We have to prove that $D$ belongs to the set
\begin{equation*}
\mathcal{M}= \{ 0 <R \leq 1/4: \mbox{ Inequality }\eqref{b4} \mbox{ holds for all }|\bx | \leq R\}.
\end{equation*}
We reason by contradiction. If this was not true then $D>R_{0} =\sup \mathcal{M}$ and in particular $R_{0}< 1/4$. {F}rom \eqref{b3} and the choice of $\sigma$ and $\varepsilon'$ it follows that either $\tilde{R}>1/4$ or $\tilde{R} \in \mathcal{M}$. In the first case then $R_{0}=\sup \mathcal{M}=1/4>D$ that contradicts our hypothesis. On the other hand if $\tilde{R} \in \mathcal{M}$, then $R_{0}^{1+\delta}\geq \tilde{R}^{1+\delta}= \beta_{0}Z^{-\frac{1-\mu}{3}}$. It then follows from Lemma~\ref{Iterstep} that there exists $R_{0}' \in \mathcal{M}$ with $R_{0}'>R_{0}$. This contradicts also our hypothesis.
\end{proof}

\subsection{The outer zone and proof of Theorem~\ref{mainesti}}

The proof of Theorem~\ref{mainesti} follows directly from Lemma~\ref{<D} and the following result.

\begin{lemma}\label{>D}
Let $Z \alpha =\kappa$, $0 \leq \kappa <2/\pi$. Assume $N \geq Z \geq 1$. Let $D,\varepsilon$ and $C_{\Phi}$ be the constants introduced in Lemma~\ref{<D}.

Then there exist $\alpha_{0}>0$ and a universal constant $C_{M}>0$ such that for all $\alpha \leq \alpha_{0}$ and $\bx$ with $|\bx| \geq D$ we have
\begin{equation*}
|\Phi^{\rm HF}_{|\bx|}(\bx)-\Phi^{\rm TF}_{|\bx|}(\bx)| \leq C_{M}.
\end{equation*}
\end{lemma}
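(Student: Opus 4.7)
The strategy is to split the difference into a globally harmonic piece that can be controlled by the maximum principle and a remainder piece that is bounded by a direct Coulomb estimate. For $|\bx| \geq D$ I write
\begin{equation*}
\Phi^{\rm HF}_{|\bx|}(\bx)-\Phi^{\rm TF}_{|\bx|}(\bx) \; = \; F_1(\bx) + F_2(\bx),
\end{equation*}
where $F_1(\bx):=\Phi^{\rm HF}_D(\bx)-\Phi^{\rm TF}_D(\bx)$ and $F_2(\bx):=-\int_{D\leq|\by|<|\bx|}(\rho^{\rm HF}(\by)-\rho^{\rm TF}(\by))|\bx-\by|^{-1}\,d\by$. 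The bound on $F_1$ is the easy step: $F_1$ is harmonic on $\R^3\setminus\overline{B_D(0)}$ (both inner densities are supported in $\overline{B_D(0)}$), and it decays like $1/|\bx|$ at infinity (the total inner charges $\int_{|\by|<D}\rho^{\rm HF}$ and $\int_{|\by|<D}\rho^{\rm TF}$ differ by a universal constant thanks to Lemma~\ref{extde}). On the boundary $|\bx|=D$ we have $F_1(\bx)=\Phi^{\rm HF}_{|\bx|}(\bx)-\Phi^{\rm TF}_{|\bx|}(\bx)$, which by Lemma~\ref{<D} is bounded by $C_\Phi D^{-4+\varepsilon}$. The maximum principle then yields $|F_1(\bx)|\leq C_\Phi D^{-4+\varepsilon}$ for all $|\bx|\geq D$, a universal constant.

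The bulk of the argument is the estimate on $F_2$. I bound it by $\int_{|\by|\geq D}\rho^{\rm HF}(\by)|\bx-\by|^{-1}\,d\by+\int_{|\by|\geq D}\rho^{\rm TF}(\by)|\bx-\by|^{-1}\,d\by$. The TF contribution is easy: using the spherical symmetry of $\rho^{\rm TF}$ together with the tail bound $\int_{|\by|\geq D}\rho^{\rm TF}\leq CD^{-3}$ (which follows from Corollary~\ref{TFscr}), Newton's theorem gives a universal bound $\leq CD^{-4}$ on this integral for $|\bx|\geq D$. For the HF contribution I apply Proposition~\ref{Counorm} with $s=D/3$:
\begin{equation*}
\int_{|\by|\geq D}\frac{\rho^{\rm HF}(\by)}{|\bx-\by|}\,d\by \;\leq\; \int_{|\by-\bx|<D/3}\frac{\rho^{\rm HF}(\by)}{|\bx-\by|}\,d\by + \sqrt{2}\,(D/3)^{-1/2}\|\chi_D^+\rho^{\rm HF}\|_C.
\end{equation*}
The Coulomb norm is controlled via Lemma~\ref{extrCou}: $\|\chi_D^+\rho^{\rm HF}\|_C \leq \|\chi_D^+\rho^{\rm HF}-\rho_D^{\rm OTF}\|_C + \|\rho_D^{\rm OTF}\|_C$, where the first summand is universally bounded by Lemma~\ref{extrCou}, and the second is controlled by Hardy--Littlewood--Sobolev together with the pointwise bound $\rho_D^{\rm OTF}\leq CD^{-6}$ of Lemma~\ref{est0ext} and the $L^1$ bound on $\rho_D^{\rm OTF}$ furnished by Lemma~\ref{extde} applied at $r=D$.

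For the local integral, I use the key observation that for $|\bx|\geq D$ and $|\by-\bx|<D/3$ one has $|\by|\geq 2D/3$, hence (for a suitable choice of parameter $\lambda$ in the outer localization) $\rho^{\rm HF}(\by)=\rho^{\rm HF}_{D/2}(\by)$ on the relevant ball. Then the Daubechies--Lieb--Yau inequality (Theorem~\ref{dayau}), applied with $U(\by)=(\alpha/|\by-\bx|)\chi_{|\by-\bx|<D/3}$, so that the effective parameter $\kappa=\alpha\leq 2/\pi$, gives
\begin{equation*}
\alpha\!\int_{|\by-\bx|<D/3}\frac{\rho^{\rm HF}_{D/2}(\by)}{|\bx-\by|}\,d\by \;\leq\; \Tr[T(\bp)\gamma^{\rm HF}_{D/2}] + C\alpha\sqrt{D/3} + C\alpha^3.
\end{equation*}
By Lemma~\ref{extrCou}, $\Tr[T(\bp)\gamma^{\rm HF}_{D/2}]\leq C\alpha D^{-7}$, and dividing by $\alpha$ yields a universal upper bound on the local integral. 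Combining everything gives $|F_2|\leq C$ universally, and so $|\Phi^{\rm HF}_{|\bx|}(\bx)-\Phi^{\rm TF}_{|\bx|}(\bx)|\leq C_M$ as claimed.

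The main obstacle I anticipate is the bookkeeping needed to guarantee that every intermediate estimate is truly universal. In particular, Lemma~\ref{extrCou} requires $\beta_0 Z^{-1/3}\leq D$, and for bounded $Z$ this forces us to shrink $\alpha_0$ so that $Z=\kappa/\alpha$ is sufficiently large. One must also verify that the choice $\nu=\alpha$ in Theorem~\ref{dayau} is the right scaling: universal choices like $\nu=1$ would produce $\alpha^{-1}$ divergences from the $\nu^4\alpha^{-1}$ term, while scaling $\nu$ with $\alpha$ converts every error term into $O(1)$ (or better) for $\alpha\leq\alpha_0$.
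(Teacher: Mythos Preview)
Your proof is correct and follows essentially the same architecture as the paper: split into the harmonic piece $\Phi^{\rm HF}_D-\Phi^{\rm TF}_D$ (bounded via the maximum principle and Lemma~\ref{<D}) plus a Coulomb remainder supported in $\{|\by|\ge D\}$, and control the dangerous local HF integral via the Daubechies--Lieb--Yau inequality together with the kinetic-energy bound $\alpha^{-1}\Tr[T(\bp)\gamma^{\rm HF}_{D/2}]\le C D^{-7}$ from Lemma~\ref{extrCou}. The differences are cosmetic: the paper handles the tail by the cruder splitting $|\bx-\by|\lessgtr D/4$ (giving $\tfrac{4}{D}\int_{|\by|>D}\rho$ for the far part) rather than your Coulomb-norm route via Proposition~\ref{Counorm}, and it treats the TF term by H\"older rather than Newton's theorem. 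The paper chooses the coupling $\nu=\beta_0^3 D^{-3}$ in the DLY step, which balances the error terms to $O(D^{-4})$; your choice gives $O(D^{-7})$, which is worse but still universal, so it makes no difference here.

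One remark on your closing paragraph: your own computation actually uses (in the convention of \eqref{hfex}) $\nu=1$, i.e.\ the potential $U=\alpha|\cdot-\bx|^{-1}$ so that the $\kappa$ in Theorem~\ref{dayau} equals $\alpha$. This does \emph{not} produce an $\alpha^{-1}$ divergence, because the $\kappa^4\alpha^{-1}$ term becomes $\alpha^3$ and the leading contribution $\alpha^{-1}\Tr[T(\bp)\gamma^{\rm HF}_{D/2}]$ is exactly what Lemma~\ref{extrCou} bounds. So your worry there is based on a confusion between the theorem's $\kappa$ and the auxiliary $\nu$ of \eqref{hfex}; the argument you wrote is fine as it stands.
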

\begin{proof} Here $C_{i}$, $i=1, \dots, 6$ denote positive universal constants. We write
\begin{equation}\label{or1}
|\Phi^{\rm HF}_{|\bx|}(\bx)-\Phi^{\rm TF}_{|\bx|}(\bx)| \leq |\Phi^{\rm HF}_{D}(\bx)-\Phi^{\rm TF}_{D}(\bx)|+ \int_{D<|\by|<|\bx|} \frac{\rho^{\rm TF}(\by)+\rho^{\rm HF}(\by)}{|\bx-\by|} \, d\by .
\end{equation}
Since $\Phi_{D}^{\rm HF}(\bx)-\Phi^{\rm TF}_{D}(\bx)$ is harmonic for $|\bx|>D$ and tends to zero at infinity we have by Lemma~\ref{<D}
\begin{equation}\label{or2}
 |\Phi^{\rm HF}_{D}(\bx)-\Phi^{\rm TF}_{D}(\bx)| \leq \sup_{|\bx|=D}  |\Phi^{\rm HF}_{D}(\bx)-\Phi^{\rm TF}_{D}(\bx)| \leq C_{\phi} D^{-4+\varepsilon}.
\end{equation}
For the second term on the right hand side of \eqref{or1} we write
\begin{eqnarray}\notag
&& \int_{D<|\by|<|\bx|} \frac{\rho^{\rm TF}(\by)+\rho^{\rm HF}(\by)}{|\bx-\by|} \, d\by\\ \label{o3}
& \leq &  \int_{\substack{|\bx-\by|<D/4\\|\by|>D}} \frac{\rho^{\rm TF}(\by)+\rho^{\rm HF}(\by)}{|\bx-\by|} \, d\by+ \frac{4}{D}\int_{D<|\by|} (\rho^{\rm TF}(\by)+\rho^{\rm HF}(\by)) \, d\by.
\end{eqnarray}
By Lemma~\ref{extde}, Lemma~\ref{<D}, estimate \eqref{Tfpou} and the TF-equation we find
\begin{equation}\label{or4}
\int_{D<|\by|} (\rho^{\rm TF}(\by)+\rho^{\rm HF}(\by)) \, d\by \leq C_{1}(1+C_{\Phi}D^{\varepsilon}) (1+D^{-3})+ C_{1} D^{-3}.
\end{equation}
It remains to estimate the first term on the right hand side of \eqref{o3}. By H\"older's inequality, estimate \eqref{Tfpou} and the TF-equation we get
\begin{equation}\label{or5}
\int_{\substack{|\bx-\by|<D/4\\|\by|>D}} \frac{\rho^{\rm TF}(\by)}{|\bx-\by|} \, d\by \leq C_{2} \Big(\int_{|\by|>D} (\rho^{\rm TF}(\by))^{\frac53} \, d\by \Big)^{\frac35} D^{\frac15} \leq C_{3} D^{-4}.
\end{equation}
To estimate the term with the HF-density we use Theorem~\ref{dayau}. Let $\gamma^{\rm HF}_{D}$ be the exterior HF-density matrix as defined in \eqref{exteriordensity} with $r=D/2$ and $\lambda=1/2$. Then by Theorem~\ref{dayau} with $\nu=\beta_{0}^3 D^{-3}$
\begin{equation*}
\alpha^{-1}\Tr[(T(\bp)-\frac{\nu \alpha}{|\bx -\cdot|} \chi_{B_{\frac{D}{4}}(\bx)} (\cdot)) \gamma^{\rm HF}_{D/2}] \geq -C_{4} (D^{\frac12} \nu^{\frac52}+\nu^{4} \alpha^2),
\end{equation*}
and thus
\begin{equation*}
\int_{|\bx-\by|<D/4} \frac{\rho_{D/2}^{\rm HF}(\by)}{|\bx-\by|} \, d\by \leq  C_{5} D^{3} \alpha^{-1} \Tr[T(\bp) \gamma^{\rm HF}_{D/2}] +C_{6} D^{-4},
\end{equation*}
Here we use that $D>2\beta_{0} Z^{-\frac13}$ (for $\alpha \leq \alpha_0$) and $D<1/4$. By Lemma~\ref{extrCou} we conclude
\begin{equation}\label{or6}
\int_{|\bx-\by|<D/4} \chi_{D}^{+}(\by) \frac{\rho^{\rm HF}(\by)}{|\bx-\by|} \, d\by \leq \int_{|\bx-\by|<D/4} \frac{\rho_{D/2}^{\rm HF}(\by)}{|\bx-\by|} \, d\by \leq C_{7}D^{-4}.
\end{equation}
The claim follows collecting together formula \eqref{or1} to formula \eqref{or6}.
\end{proof}

\section{Proofs of Theorems~\ref{charge}, \ref{radius}, \ref{Ionenergy} and \ref{thpot}}\label{secproofs}

In this section we always assume the following: $Z \alpha = \kappa$ with $0 \leq \kappa <2/\pi$ and $N \geq Z \geq 1$.

\begin{proof}[Proof of Theorem~\ref{charge}]
Assume that a HF-minimizer exists with $\int \rho^{\rm HF}=N$. Let $\rho^{\rm TF}$ be the minimizer of the TF-energy functional of the neutral atom with nuclear charge $Z$. Then for $R>0$ to be chosen
\begin{equation}\label{ref0}
N = \int_{|\bx|<R} \rho^{\rm TF}(\bx) \; d\bx + \int_{|\bx|<R} (\rho^{\rm HF}(\bx) -\rho^{\rm TF}(\bx)) \; d\bx + \int_{|\bx|>R} \rho^{\rm HF}(\bx) \; d\bx .
\end{equation}
By Theorem~\ref{mainesti} we know that there exist universal positive constants $\varepsilon, \alpha_{0}, C_{M}$ and $C_{\Phi}$ such that for all $\alpha \leq \alpha_{0}$ and $\bx \in \R^3$
\begin{equation}\label{ref}
|\Phi^{\rm HF}_{|\bx|}(\bx)-\Phi^{\rm TF}_{|\bx|}(\bx)| \leq C_{\Phi} |\bx|^{-4+\varepsilon} + C_{M} .
\end{equation}
Let $Z_{0}$ be such that $Z_{0}\alpha_{0}=\kappa$. Then $\alpha \leq \alpha_{0}$ corresponds to $Z \geq Z_{0}$.  Let us choose $R$ such that $C_{\Phi} R^{-4+\varepsilon} = C_{M}$. Then from \eqref{ref0}, \eqref{ref} and Lemma~\ref{extde} for all $Z \geq Z_{0}$ we find
\begin{equation*}
N \leq \int_{|\bx|<R} \rho^{\rm TF}(\bx) \; d\bx + 2 C_{\Phi} R^{-3+\varepsilon}+C  (1+C_{\Phi} R^{\varepsilon}) (R^{-3}+1) < Z+ \tilde{Q}.
\end{equation*}
The claim follows choosing $Q = \max \{\tilde{Q},Z_{0}+1\}$.
\end{proof}

\begin{proof}[Proof of Theorem~\ref{radius}]
Let $\rho^{\rm HF}$ be the density of the HF-minimizer in the neutral case $N=Z$. We have
\begin{eqnarray*}
\Big| \int_{|\bx|>R} (\rho^{\rm HF}(\bx)-\rho^{\rm TF}(\bx)) d\bx\Big| & = & \Big| \int_{|\bx|<R} (\rho^{\rm HF}(\bx)-\rho^{\rm TF}(\bx)) d\bx\Big|\\
& = & \Big| \frac{R}{4\pi} \int_{S^2} d \omega (\Phi^{\rm HF}_{R}(R \omega)-\Phi_{R}^{\rm TF}(R \omega)) \Big|\\
& \leq & C_{\Phi} R^{-3+\varepsilon}+C_{M} R ,
\end{eqnarray*}
where in the last step we have used Theorem~\ref{mainesti}. Notice that for $Z$ sufficiently big $\alpha \leq \alpha_0$ where $\alpha_0$ is the constant given in Theorem~\ref{mainesti}. By the TF-equation, Theorem~\ref{Slb} we then find
\begin{equation*}
3^4\frac{2 \pi^2}{q^2} R^{-3}- C_{\Phi}R^{-3+\varepsilon}-C_{M}R \leq \int_{|\bx|>R} \rho^{\rm HF}(\bx) d\bx \leq 3^4\frac{2 \pi^2}{q^2} R^{-3}+ C_{\Phi}R^{-3+\varepsilon}+C_{M}R ,
\end{equation*}
from which the claim follows directly by the definition of HF-radius.
\end{proof}

\begin{proof}[Proof of Theorem~\ref{Ionenergy}] Since $E^{\rm HF}(Z-1,Z) \geq E^{\rm HF}(Z,Z)$ the ionization energy is bounded from below by zero. If $Z$ is smaller than a universal constant then we can also bound the ionization energy with a universal constant using Theorem~\ref{HFen}.

It remains to estimate from above the ionization energy when $Z$ is larger than a universal constant. We first construct a density matrix $\gamma$ such that $\Tr[\gamma] \leq Z-1$. Let $\theta_{-}:= (1-\theta_{r(1-\lambda)}^2)^{\frac12}$ for $r, \lambda$ positive parameters and $\theta_{r}$ defined in Definition~\ref{defloc}. We consider the density matrix $\gamma^{\rm HF}_{-}:= \theta_{-} \gamma^{\rm HF} \theta_{-}$ where $\gamma^{\rm HF}$ is the HF-minimizer in the neutral case. By an opportune choice of $r$ we will then have $\Tr[\gamma^{\rm HF}_{-}] \leq Z-1$. Indeed,
\begin{equation*}
\Tr[\gamma^{\rm HF}_{-}]= \int_{\R^3} \rho^{\rm HF}(\bx) \, d\bx -\int_{\R^3} \theta_{r(1-\lambda)}^2(\bx) \rho^{\rm HF}(\bx) \, d\bx \leq Z -\int_{|\bx|>r} \rho^{\rm HF}(\bx) \, d\bx .
\end{equation*}
We now choose $\lambda=\frac12$. Let $R>0$ be such that $C_{M}=C_{\Phi} R^{-4+\varepsilon}$ where $C_{M}, C_{\Phi}, \varepsilon$ are the constants in Theorem~\ref{mainesti}. Then $R$ is a universal constant. We consider $Z$ large enough so that $\beta_{0} Z^{-\frac13}<R$ where $\beta_{0}$ is the constant in Theorem~\ref{Slb}. This gives that $Z$ has to be larger than some universal constant. For $r$ such that $\beta_{0} Z^{-\frac13}<r<R$ by Theorem~\ref{mainesti} we find
\begin{equation*}
|\Phi^{\rm HF}_{|\bx|}(\bx)-\Phi^{\rm TF}_{|\bx|}(\bx)| \leq 2 C_{\Phi} |\bx|^{-4+\varepsilon} \mbox{ for all }|\bx| \leq r.
\end{equation*}
Since $\int \rho^{\rm TF}= \int \rho^{\rm HF}$, by the choice of $r$ and Lemma~\ref{extde} we get
\begin{eqnarray}\notag
\int_{|\bx|>r} \rho^{\rm HF}(\bx) \, d\bx & = & \int_{|\bx|>r} \rho^{\rm TF}(\bx) \, d\bx + \int_{|\bx|<r} (\rho^{\rm TF}(\bx)-\rho^{\rm HF}(\bx)) \, d\bx \\  \label{Cop1}
&\geq & \int_{|\bx|>r} \rho^{\rm TF}(\bx) \, d\bx -2 C_{\Phi} r^{-3+\varepsilon} \geq C r^{-3}-2 C_{\Phi} r^{-3+\varepsilon} .
\end{eqnarray}
In the last step we used the TF-equation, Corollary~\ref{TFsotto} and that $r> \beta_{0} Z^{-\frac13}$. Finally, it follows from \eqref{Cop1} by choosing $r$ sufficiently small that $\int_{|\bx|>r} \rho^{\rm HF} >1$ and hence that $\Tr[\gamma^{\rm HF}_{-}] \leq Z-1$. We may choose $r$ sufficiently small by taking $Z$ large enough. Notice that $r$ can be choosen universally and so $Z$ has to be larger than some universal constant.

By the last estimate in the proof of Theorem~\ref{EA} we find
\begin{equation*}
\mathcal{E}^{\rm HF}(\gamma^{\rm HF}_{-}) \leq \mathcal{E}^{\rm HF}(\gamma^{\rm HF}) -\mathcal{E}^{A}(\gamma^{\rm HF}_{r}) + \mathcal{R},
\end{equation*}
with $\mathcal{R}$ and $\gamma^{\rm HF}_{r}$ as defined in the statement of Theorem~\ref{EA}. Since $\mathcal{E}^{\rm HF}(\gamma^{\rm HF}_{-}) \geq E^{\rm HF}(Z-1,Z)$ and $\mathcal{E}^{\rm HF}(\gamma^{\rm HF}) = E^{\rm HF}(Z,Z)$ it remains to prove that $-\mathcal{E}^{A}(\gamma^{\rm HF}_{r}) + \mathcal{R}$ is bounded from above by some universal constant. Here we use repeteadly that $r$ is a universal constant. By estimate \eqref{fi3} we see that $\mathcal{R} \leq C r^{-7}$ a universal constant. To estimate from below $\mathcal{E}^{A}(\gamma^{\rm HF}_{r})$ we first leave out the kinetic energy term and the direct term since these are positive. Moreover, since $\Phi^{\rm HF}_{r}$ is harmonic for $|\bx|>r$ and tends to zero at infinity we see that
\begin{equation*}
\Phi^{\rm HF}_{r}(\bx) \leq \frac{r}{|\bx|} \sup_{|\by|=r} \Phi^{\rm HF}_{r}(\by) \leq \frac{r}{|\bx|} \sup_{|\by|=r} \Phi^{\rm TF}_{r}(\by) + \frac{r}{|\bx|} \sup_{|\by|=r} | \Phi^{\rm TF}_{r}(\by)- \Phi^{\rm HF}_{r}(\by) |,
\end{equation*}
which is bounded by $C'/|\bx|$, $C'$ a universal constant, by Theorem~\ref{mainesti} and Corollary~\ref{TFscr}. It then follows that
\begin{equation*}
\mathcal{E}^{A}(\gamma^{\rm HF}_{r}) \geq - \Tr[\frac{C'}{|\cdot|}\gamma^{\rm HF}_{r}]  \geq - \frac{C'}{r}\int_{|\bx|>r} \rho^{\rm HF}(\bx) \, d\bx ,
\end{equation*}
that is bounded from below by a universal constant using Lemma~\ref{extde}.
\end{proof}

\begin{proof}[Proof of Theorem~\ref{thpot}]
Let $\alpha_0$ be the constant appearing in Theorem~\ref{mainesti} and $Z_0$ be such that $\alpha_0 Z_0=\kappa$.
The claim follows directly for $Z \leq Z_0$ since both functions are bounded for $|\bx|$ large, while for $|\bx|$ small the functions are bounded by a constant times $|\bx|^{-1}$.

The case $Z > Z_0$ corresponds to $\alpha <\alpha_0$ and for such values of $\alpha$ we can use the result in Theorem~\ref{mainesti}. We separate the case small $\bx$, intermediate $\bx$ and large $\bx$. Once again, comparing with the proof in the non-relativistic case (\cite{Sol}) we have to do an extra splitting for small $\bx$.

By the definition of the mean field potential and Proposition~\ref{Counorm} we find
\begin{equation*}
|\varphi^{\rm TF}(\bx)-\varphi^{\rm HF}(\bx)| \leq \int_{|\bx -\by|<s} (\rho^{\rm TF}(\by)+\rho^{\rm HF}(\by)) \Big(\frac{1}{|\bx -\by|}-\frac{1}{s}\Big) + \frac{\sqrt{2}}{s^{\frac12}} \| \rho^{\rm TF}-\rho^{\rm HF}\|_{C}.
\end{equation*}
Since $\rho^{\rm TF}$ is bounded in $L^{\frac53}$-norm, we find using H\"older's inequality, Corollary~\ref{rTF53} and Lemma~\ref{rrcou} that
\begin{equation}\label{m0}
|\varphi^{\rm TF}(\bx)-\varphi^{\rm HF}(\bx)| \leq \int_{|\bx -\by|<s} \rho^{\rm HF}(\by) \Big(\frac{1}{|\bx -\by|}-\frac{1}{s}\Big) + C ( s^{\frac15} Z^{\frac75} + s^{-\frac12} Z^{1+\frac{3}{22}}).
\end{equation}
For the integral with the HF-density we need to split the region where the HF-density is bounded in $L^{\frac43}$-norm from the one where it is bounded in $L^{\frac53}$-norm. Proceeding as in the proof of Lemma~\ref{smallx} (from \eqref{f2} to \eqref{2a} replacing the integrals on $A(|\bx|,k)$ with integrals on $|\bx -\by|<s$) using the results of  Lemma~\ref{rrcou} we get with $R \in (0,s)$ to be chosen
\begin{equation}\label{m2}
\int_{|\bx -\by|<s} \rho^{\rm HF}(\by) \Big(\frac{1}{|\bx -\by|}-\frac{1}{s}\Big) \leq C ( Z^{\frac75} s^{\frac15}+ R^{-\frac14} (\alpha Z^{\frac73})^{\frac34}  + Z^{\frac43}+R^{\frac12} Z^{\frac32}).
\end{equation}
Recall that $Z\alpha=\kappa$ is fixed. Choosing $s$ such that $Z^{\frac75} s^{\frac15}=Z^{\frac43}$ (i.e. $s=Z^{-\frac13}$) and $R$ such that $R^{-\frac14}Z= R^{\frac12} Z^{\frac{3}{2}}$ (i.e $R=Z^{-\frac23}$; notice that $R<s$) we get from \eqref{m0} and \eqref{m2}
\begin{equation*}
|\varphi^{\rm TF}(\bx)-\varphi^{\rm HF}(\bx)| \leq C (Z^{\frac43}+ Z^{\frac{7}{6}}).
\end{equation*}
The claim follows from this inequality for $\bx \in \R^3$ such that $|\bx|\leq \beta_{0} Z^{-\frac{1+\gamma}{3}}$ for $\gamma>0$. We consider $\gamma < \frac{1}{263}$.

If $|\bx| \geq \beta_{0} Z^{-\frac{1+\gamma}{3}}$ then proceeding as for very small $\bx$ and as in the proof of Theorem~\ref{Lmnu} up to inequality \eqref{R2} we get for $t \in (\frac{1+\gamma}{3},\frac35)$, $l>t$ and $R<\beta_{0} Z^{-l}$
\begin{equation*}
|\varphi^{\rm TF}(\bx)-\varphi^{\rm HF}(\bx)| \leq C(s^{\frac15} Z^{\frac75} + s^{-\frac12} Z^{1+\frac{3}{22}} + R^{-\frac38} s^{\frac18} Z  + Z^{\frac12(3-t)} ).
\end{equation*}
Here we have also used that $Z \alpha$ is a constant. So choosing $s$ such that $s^{\frac15} Z^{\frac75}= Z^{\frac12(3-t)}$ (i.e. $s= Z^{\frac12-\frac52t}$), $R$ such that $R^{-\frac38} Z^{1+\frac{1}{16}-\frac{5}{16}t}=Z^{\frac12(3-t)}$ (i.e. $R= Z^{-\frac76+\frac12 t}$) and optimizing in $t$ (i.e. $t=\frac13+\frac43\frac{1}{77}$) we obtain
\begin{equation}\label{m3}
|\varphi^{\rm TF}(\bx)-\varphi^{\rm HF}(\bx)| \leq C Z^{\frac43-\frac23\frac{1}{77}}.
\end{equation}
Notice that $t>\frac{1+\gamma}{3}$, $R<s$ by the choice of $t$ and that $R$ satisfies the condition $R<\beta_{0} Z^{-l}$, $l>t$, for $Z$ sufficiently big. The claim then follows from \eqref{m3} for $\bx \in \R^3$ such that $|\bx|^{1+\delta} \leq \beta_{0} Z^{-\frac13}$ for $\delta < \frac{1}{153}$. We fix $\delta = \frac12 \frac{1}{153}$.

We turn now to study intermediate $\bx$. Let $D \leq 1$ be such that $C_{M} \leq C_{\Phi}D^{-4+\varepsilon}$ with $C_{M}, C_{\Phi}, \varepsilon$ the constants in Theorem~\ref{mainesti}. Then for all $\bx$ such that $|\bx| \leq D$
\begin{equation*}
|\Phi^{\rm HF}_{|\bx|}(\bx)-\Phi^{\rm TF}_{|\bx|}(\bx)| \leq 2 C_{\Phi} |\bx|^{-4+\varepsilon}.
\end{equation*}
Moreover we choose $D$ such that Lemma~\ref{A1A2} holds. Let $\bx$ be such that $\beta_{0} Z^{-\frac13} \leq |\bx|^{1+\delta} \leq D^{\frac{1+\delta}{1+\mu}}$ with $0< \mu \leq \delta$. We set $r=|\bx|^{1+\mu}$. Then $\beta_{0} Z^{-\frac13} \leq r \leq D$. We write $\varphi^{\rm TF}(\bx)-\varphi^{\rm HF}(\bx)=\varphi^{\rm TF}(\bx)-\varphi_{r}^{\rm OTF}(\bx)+\varphi_{r}^{\rm OTF}(\bx)-\varphi^{\rm HF}(\bx)$ with $\varphi_{r}^{\rm OTF}$ the mean field potential of the OTF-problem defined in Subsection~\ref{Sotf}. By  the choice of $r$ and $D$ and Lemma~\ref{A1A2} we get since $|\bx|\geq r=|\bx|^{1+\mu}$
\begin{equation}\label{pi1}
|\varphi^{\rm TF}(\bx)-\varphi_{r}^{\rm OTF}(\bx)| \leq C |\bx|^{-4-\zeta} r^{\zeta} \, ,
\end{equation}
for $|\bx| \geq r$ with $\zeta = (7+\sqrt{73})/2$. For the other two terms we see
\begin{equation*}
\varphi^{\rm HF}(\bx)-\varphi^{\rm OTF}_{r}(\bx)= \int \frac{\rho^{\rm OTF}_{r}(\by)-\chi_{r}^{+}(\by)\rho^{\rm HF}(\by)}{|\bx -\by|} \, d\by ,
\end{equation*}
and proceeding as for small $\bx$ with the Coulomb-norm estimate Proposition \ref{Counorm}, by Lemma~\ref{extrCou} and inequality \eqref{hfex}
\begin{equation*}
|\varphi^{\rm HF}(\bx)-\varphi^{\rm OTF}_{r}(\bx)| \leq C \Big(\frac{s^{\frac15}}{r^{\frac{21}{5}}} + \frac{r^{-\frac72+\frac16}}{s^{\frac12}} + R^{-\frac14} (\alpha r^{-7})^{\frac34}  +\nu^{-1} r^{-7} +\nu^{\frac32} R^{\frac12} +\nu^3 \alpha^2 \Big).
\end{equation*}
Choosing $\nu = \beta_{0}^3 r^{-3\frac{1+\delta}{1+\mu}}$, so that $\nu \alpha \leq \kappa < 2/\pi$, $s$ such that $s^{\frac15}r^{-\frac{21}{5}} = r^{-\frac72+\frac16}s^{-\frac12}$ (i.e. $s=r^{1+\frac{5}{21}}$), and choosing $R$ such that the two terms where it appears are equal (i.e. $R=r^{2+9\frac{\delta-\mu}{1+\mu}}$; notice that $R<s$) we get
\begin{equation*}
|\varphi^{\rm HF}(\bx)-\varphi^{\rm OTF}_{r}(\bx)| \leq C ( r^{-4+\frac{1}{21}}+r^{-4+3\frac{\delta-\mu}{1+\mu}}),
\end{equation*}
since $\alpha r^{-3\frac{1+\delta}{1+\mu}}$ is bounded and $r \leq 1$. Collecting together the inequality above and \eqref{pi1} and using that $r=|\bx|^{1+\mu}$ the claim follows for $ \beta_{0} Z^{-\frac13} \leq|\bx|^{1+\delta} \leq D^{\frac{1+\delta}{1+\mu}}$. We fix $\mu =\delta/2$.

It remains to study the case of large $\bx$, i.e. $|\bx| \geq D^{\frac{1+\delta}{1+\mu}}$ with $D, \delta, \mu$ universal constants. For simplicity of notation we fix the universal constant $A:=  D^{\frac{1+\delta}{1+\mu}}$. We first notice that
\begin{equation*}
\varphi^{\rm HF}(\bx) -\varphi^{\rm TF}(\bx)= \Phi^{\rm HF}_{|\bx|}(\bx) - \Phi^{\rm TF}_{|\bx|}(\bx) + \int_{|\by| > |\bx|} \frac{\rho^{\rm TF}(\by)-\rho^{\rm HF}(\by)}{|\bx -\by|}  d\by .
\end{equation*}
The difference of the first two terms is bounded by a universal constant for $|\bx| \geq A$ by the result in Theorem~\ref{mainesti}. To estimate the last integral we split it as follows
\begin{eqnarray*}
\int_{|\by| > |\bx|} \frac{|\rho^{\rm TF}(\by)-\rho^{\rm HF}(\by)|}{|\bx -\by|} d\by & \leq & \int_{\substack{|\by| > |\bx|\\|\bx -\by|<1}} \frac{\rho^{\rm TF}(\by)}{|\bx -\by|} d\by +\int_{\substack{|\by| > |\bx|\\|\bx -\by|<1}} \frac{\rho^{\rm HF}(\by)}{|\bx -\by|} d\by\\
&& +\int_{|\by| > |\bx|} (\rho^{\rm TF}(\by) +\rho^{\rm HF}(\by)) \,  d\by .
\end{eqnarray*}
Since $|\bx| \geq A$ the third term on the right hand side is bounded by a universal constant by Lemma~\ref{extde} (for $\rho^{\rm HF}$) and Corollary~\ref{TFsotto} (for $\rho^{\rm TF}$). We estimate the first term by H\"older's inequality and Corollary~\ref{rTF53}. We get a bound on the second term proceeding as in \eqref{hfex} (using Theorem~\ref{dayau}) and choosing $\nu=\frac12$ and $R=1$. We obtain
\begin{equation*}
\int_{\substack{|\by| > |\bx|\\|\bx -\by|<1}} \frac{\rho^{\rm TF}(\by)+\rho^{\rm HF}(\by)}{|\bx -\by|} d\by \leq C ( A^{-\frac{21}{5}} + A^{-7} +\alpha^2) .
\end{equation*}
Then there exists a universal contant $A'$ such that $|\varphi^{\rm HF}(\bx) -\varphi^{\rm TF}(\bx)|  \leq A'$ for $|\bx| \geq A$.
\end{proof}

%%%%%%%%%%%%%%%%%%%%%%%%%%%%%%%%%%%%%%%%%%%%%%%%%%%%%%%%%%%%%%%%%%%%%%
%%%%%%%%%%%%% Appendix %%%%%%%%%%%%%%%%%%%%%%%%%%%%%%%%%%%%%%%%%%%%%%%
%%%%%%%%%%%%%%%%%%%%%%%%%%%%%%%%%%%%%%%%%%%%%%%%%%%%%%%%%%%%%%%%%%%%%%

\renewcommand{\thesection}{}

\appendix\renewcommand{\thesection}{\Alph{section}}
\setcounter{lemma}{0}
\renewcommand{\thetheorem}{\Alph{section}.\arabic{theorem}}
\setcounter{equation}{0}
\renewcommand{\theequation}{\Alph{section}\arabic{equation}}

\section{Technical lemmas}\label{tech}

{\bf Proof of \eqref{BehG}}
By the definition of the function $G_{\alpha}$ the inequalities in \eqref{BehG} are equivalent to the following ones
\begin{equation}\label{natale}
\tfrac35 t^4 \min\{\tfrac25 t,1\} \leq g(t)-\tfrac83 t^3 \leq 2t^4 \min \{ \tfrac25 t ,1\} \mbox{ for }t \geq 0 .
\end{equation}
As before we use the substitution $t=\alpha (\rho/C)^{\frac13}$.

The estimates in \eqref{natale} follow directly from the study of the function $g$ separating the cases $t<\frac52$ and $ t \geq \frac52$.

{\bf Proof of Remark~\ref{K2est}}
Using the estimate on $K_2$ given in \eqref{estk2} we find
\begin{eqnarray*}
&& \iint_{\begin{array}{l}\bx \in \Sigma_r(\beta_1,\beta_2)\\
\by \in \Sigma_r(\beta_3,\beta_4) \end{array}} K_{2}( \alpha ^{-1}\vert \bx-\by\vert)^{2} \, d\bx d\by \\
& \leq &(16)^2\alpha ^{4} \iint_{\begin{array}{l}\bx \in \Sigma_r(\beta_1,\beta_2)\\ \by \in \Sigma_r(\beta_3,\beta_4) \end{array}} \frac{ e^{-\alpha^{-1}\vert \bx-\by\vert}}{|\bx -\by|^4} \;  d\bx d\by \\
 &\leq &(16)^2 \alpha ^{4} e^{-\alpha ^{-1}r(\beta _{3}-\beta_{2})} 4 \pi \int_{r(\beta_3 -\beta_2)}^{\infty} \rho^{-2}d\rho \; \int_{\Sigma_{r}(\beta_1, \beta_2)} d \bx ,
\end{eqnarray*}
since $|\bx -\by| \geq (\beta_3-\beta_2)r$. The claim follows computing the two integrals.

\subsection{Fourier transform\label{AppFou}}

In the present sub-section we present our notation for the Fourier transform (as in \cite{RS2}). Given $f\in L^{2}( \R^{3})$ we denote its Fourier transform by
\begin{equation*}
\hat{f}( \bp) =\mathcal{F}( f) (\bp) :=\tfrac{1}{( 2\pi)^{\frac{3}{2}}}\int_{\R^{3}}e^{i\bp\cdot x} f( \bx) d\bx.
\end{equation*}
Let $f,g\in L^{2}( \R^{3}) $. The following formulas hold:
\begin{enumerate}
\item $\mathcal{F}( f\ast g) ( \bp) =(2\pi) ^{\frac{3}{2}}\hat{f}( \bp) \hat{g}(\bp) ;$
\item $\mathcal{F}( fg) ( \bp ) =( 2\pi)^{-\frac{3}{2}}( \hat{f}\ast \hat{g}) ( \bp ) ;$
\item if $g( \bx ) =e^{-\lambda \vert \bx\vert ^{2}}$ then $\hat{g}( \bp ) =( 2\lambda ) ^{-\frac{3}{2}}e^{-\vert \bp \vert ^{2}/(4\lambda ) };$
\item $\vert \bx\vert ^{-\alpha }=\pi ^{\frac{\alpha }{2}}( \Gamma ( \frac{\alpha }{2}) )^{-1}\int_{0}^{+\infty }e^{-\pi\vert \bx\vert ^{2}\lambda}\lambda ^{\frac{\alpha }{2}-1}d\lambda $ for $0< \alpha <n$ (see \cite[page 130]{LL}).
\end{enumerate}
Moreover,
\begin{equation*}
\mathcal{F}\Big(\frac{f(\bx)}{|\bx|}\Big) ( \mathbf{k}) =\tfrac{1}{ 2\pi^{2}}\int_{\R^3}\frac{\hat{f}(\bp) }{|\mathbf{k}-\bp|^{2}}d\bp.
\end{equation*}

\section{Large $Z$-behavior of the energy}\label{ZZ}

In \cite{TS} the author studies the large $Z$-behavior of the ground state energy for problem \eqref{Hamiltonian}. In this work we are going to use the
same construction in several points (Lemmas~\ref{rrcou}, \ref{extGest}, Theorem~\ref{Lmnu}, ....) and with, in certain cases, a slightly different
Hamiltonian. For convenience we repeat here the main ideas of the proof. We do it as it is needed in the proof of Theorem~\ref{Lmnu} since in this case the proof is more involved. We remark that in our proof we use a localisation less than in \cite{TS}. Thanks to Theorem \ref{dayau} and \cite[Theorem 2.8]{SSS} it is sufficient to consider the region near the nuclei and the one far away from the nuclei. There is no need for an intermediate region.

\begin{proposition}\label{prthomas}
Let $Z\alpha=\kappa$ be fixed with $0\leq \kappa<2/\pi$ and $Z \geq
1$. Let us consider $\bP \in \mathbb{R}^3$, with $|\bP| \geq \beta
Z^{-\frac{1+\mu}{3}}$ for $\beta >0$ and $\mu \in (0,4/5)$. Let $Z
\geq \nu>0$ and $R>0$ be such that $R<\beta Z^{-l}/4$ for some $
\frac{1+\mu}{3}<l$. Moreover, let $\rho^{\rm TF}$ denote the
minimizer of the TF-energy functional of a neutral atom with nucleus
of charge $Z$. Consider the Hamiltonian
\begin{equation}\label{HamiltonianP}
H_{\bP}:= \sum_{i=1}^N \big(\alpha^{-1} T(\bp_i) -\frac{Z}{|\bx_i|} -\frac{\nu}{|\bx_i -\bP|} \chi_{B_R(\bP)}(\bx_i)\big) +\sum_{i<j} \frac{1}{|\bx_i -\bx_j|},
\end{equation}
acting on $\wedge_{i=1}^N L^2(\R^3; \C^q)$.

Then for all $t \in (\frac{1+\mu}{3},\min\{l,\frac{3}{5}\})$ and $\psi \in \wedge_{i=1}^N L^2(\R^3)$, with $\| \psi \|_2 =1$,
\begin{equation*}
\langle \psi, H_{\bp} \psi \rangle \geq \mathcal{E}^{\rm
TF}(\rho^{\rm TF}) - C (\beta^{\frac12}+\beta^{-2})
Z^{\frac52-\frac12 t},
\end{equation*}
with $C$ depending only on $q$ and $\kappa$.
\end{proposition}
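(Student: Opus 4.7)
The plan is a Thomas-Fermi lower bound \`a la Lieb-Simon, adapted to the pseudo-relativistic kinetic energy and to the extra Coulomb singularity at $\bP$. Let $\rho^{\rm TF}$ be the neutral $Z$-atom TF-minimizer (so $\mu^{\rm TF}=0$ and $Z|\bx|^{-1}-\rho^{\rm TF}*|\bx|^{-1}=\varphi^{\rm TF}$). First I would reduce the many-body problem to a one-body operator by handling the electron-electron repulsion through positivity of $\|\rho_\psi-\rho^{\rm TF}\|_C^2$ together with the Lieb-Oxford inequality, which gives
\[\langle\psi,H_\bP\psi\rangle\ \ge\ \sum_{i=1}^N\langle\psi,h(\bx_i)\psi\rangle\ -\ D(\rho^{\rm TF})\ -\ 1.68\int\rho_\psi^{4/3}\,d\bx,\]
with $h=\alpha^{-1}T(\bp)-\varphi^{\rm TF}(\bx)-\nu\chi_{B_R(\bP)}(\bx)/|\bx-\bP|$. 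The Lieb-Oxford exchange is absorbed into a small fraction $\eta\alpha^{-1}\sum_iT(\bp_i)$ of the kinetic energy by Daubechies' inequality (Theorem~\ref{Daube}) plus the splitting $\{\alpha(\rho_\psi/C)^{1/3}\gtrless 5/2\}$, exactly as in the proof of Lemma~\ref{rrcou}, at a cost $O(Z^{5/3})\ll Z^{5/2-t/2}$.

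For the remaining one-body piece $\sum_i\langle\psi,h(\bx_i)\psi\rangle\ge -q\,\Tr[h_-]$ I would separate the contributions of the two singular nuclei by IMS localization (Theorem~\ref{IMSrel}) with a single pair of cutoffs $(\chi,\tilde\chi)$: $\chi$ equal to $1$ on a slightly enlarged $B_R(\bP)$ and $\tilde\chi=\sqrt{1-\chi^2}$ supported outside $B_{2R}(\bP)$. Since $R<\beta Z^{-l}/4\ll|\bP|$, on $\supp\chi$ the TF-mean field $\varphi^{\rm TF}$ is bounded by $C|\bP|^{-4}$ and can be absorbed into $U_P:=\nu\chi_{B_R(\bP)}(\bx)/|\bx-\bP|$; the localization kernel decays exponentially (controlled as in Remark~\ref{K2est}) and contributes only lower-order terms. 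To the near-$\bP$ block I would apply the Daubechies-Lieb-Yau inequality (Theorem~\ref{dayau}) centered at $\bP$ with subcritical coupling $\alpha\nu\le\kappa<2/\pi$; the extra potential vanishes outside $B_R(\bP)$, so the exterior integrals in Theorem~\ref{dayau} vanish and the contribution is $\lesssim\nu^{5/2}R^{1/2}+\alpha^2\nu^4$, which after inserting $\nu\le Z$, $\alpha=\kappa/Z$ and $R\le\beta Z^{-l}/4$ with $l$ just above $t$ becomes the announced $\beta^{1/2}Z^{5/2-t/2}+O(Z^2)$.

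On the complementary block $\tilde\chi[\alpha^{-1}T(\bp)-\varphi^{\rm TF}]\tilde\chi$ I would follow the coherent state analysis of Lemma~\ref{auxi}: split $\alpha^{-1}T(\bp)=(1-\mu_1)\alpha^{-1}T(\bp)+\mu_1\alpha^{-1}T(\bp)$, smear $\varphi^{\rm TF}$ by convolution with the density $g_s^2$ of a coherent state of scale $s$, and use the $\mu_1$-piece together with Theorem~\ref{DaubeV} to absorb the smearing correction; the subcritical Coulomb singularity of $\varphi^{\rm TF}$ at the origin is treated by a second application of Theorem~\ref{dayau} at the origin with coupling $\kappa<2/\pi$, contributing only $O(Z^2)$. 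The coherent-state resolution of identity together with $T(\bp)\ge\tfrac12\alpha|\bp|^2-\tfrac18\alpha^3|\bp|^4$ (as in the proof of Lemma~\ref{auxi}) yields
\[q\,\Tr\bigl[\tilde\chi(\alpha^{-1}T(\bp)-\varphi^{\rm TF})\tilde\chi\bigr]_-\ \le\ \tfrac{2^{3/2}q}{15\pi^2}\int[\varphi^{\rm TF}]^{5/2}\,d\bq\ +\ O\bigl(\beta^{-2}Z^{5/2-t/2}\bigr),\]
the error being dominated by the coherent-state localization cost $\sim N/s^2$ with $s$ optimized around $\beta Z^{-t/2}$. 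Combining with the TF identity $\mathcal{E}^{\rm TF}(\rho^{\rm TF})=-\tfrac{2^{3/2}q}{15\pi^2}\int[\varphi^{\rm TF}]^{5/2}-D(\rho^{\rm TF})$ gives the claimed inequality. The main obstacle is to balance the four free parameters $(\eta,\mu_1,s,R)$ so that both the near-$\bP$ Daubechies-Lieb-Yau error and the coherent-state localization error sit below $Z^{5/2-t/2}$; the upper bound $t<3/5$ is what is needed for the relativistic correction $\alpha^3|\bp|^4$ to $T(\bp)$ to remain subleading, while $t>(1+\mu)/3$ ensures $R\ll|\bP|$ so the IMS truly separates the two singular nuclei.
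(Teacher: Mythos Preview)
Your strategy---reduce to a one-body operator, separate the two Coulomb singularities by IMS localization, control the singularities via the Daubechies--Lieb--Yau inequality, and recover the TF energy from the outer zone by coherent states---is the right one and matches the paper's architecture. There are, however, two concrete gaps in your execution that would prevent the argument from closing as stated.

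\textbf{Localization scale near $\bP$.} You cut off on ``a slightly enlarged $B_R(\bP)$'' with $\tilde\chi$ supported outside $B_{2R}(\bP)$. The IMS error from Theorem~\ref{IMSrel} is not exponentially small; the bound one actually gets (as in the paper's estimate \eqref{aa}) is $C\|\nabla\chi\|_\infty^2 N\sim C R^{-2}N$. Since the hypothesis only gives $R<\beta Z^{-l}/4$ as an \emph{upper} bound, $R^{-2}N$ can be arbitrarily large. The paper instead localizes around both $0$ and $\bP$ at the common scale $\beta Z^{-t}$ (functions $\chi_1,\chi_2,\chi_3$ in \eqref{12345}); the condition $t<l$ ensures $B_R(\bP)\subset\{\chi_2=1\}$, and $\|\nabla\chi_j\|_\infty^2 N\lesssim\beta^{-2}Z^{1+2t}\le\beta^{-2}Z^{5/2-t/2}$ exactly when $t\le 3/5$---this, not the relativistic $|\bp|^4$ correction, is where the constraint $t<3/5$ enters. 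Your exponential-decay remark applies only to the off-support pieces of the IMS kernel (as in Lemma~\ref{ErrorA}), not to the principal error.

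\textbf{The origin singularity.} With only a two-way split, the block $\tilde\chi[\alpha^{-1}T(\bp)-\varphi^{\rm TF}]\tilde\chi$ still contains the full Coulomb singularity $\sim Z/|\bx|$ at $0$. Applying Theorem~\ref{dayau} there with any cutoff radius $r_0$ produces a term $\sim Z^{5/2}r_0^{1/2}$, not $O(Z^2)$; balancing against the coherent-state cost $N/s^2$ forces $r_0\sim s\sim Z^{-3/5}$ and an error $\sim Z^{11/5}$, which is acceptable but must be accounted for. The paper avoids this tension by localizing near the origin from the outset (the $\chi_1$ block) and, crucially, by building the smearing into the \emph{electron--electron} reduction via Newton's theorem (see \eqref{m1}--\eqref{1p}): this makes $\varphi^{\rm TF}*\Phi_s$ appear automatically on the outer zone (equation \eqref{bb}), so no separate smearing correction with a singularity ever arises. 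Your Lieb--Oxford route is legitimate, but it forces you to handle $\varphi^{\rm TF}-\varphi^{\rm TF}*g_s^2$ near $0$ explicitly, and your estimate of that cost is too optimistic.

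In short: switch the IMS scale around $\bP$ from $R$ to $\beta Z^{-t}$, add a third cutoff around the origin at the same scale (or equivalently budget the $Z^{5/2}r_0^{1/2}$ term correctly), and set the coherent-state radius to $\sim\beta Z^{-(3-t)/4}$ as in \eqref{gspr3}. With those fixes your outline becomes essentially the paper's proof; the only genuine methodological difference---Lieb--Oxford in place of the Newton-smearing correlation inequality---works but is slightly less economical because it does not hand you the smeared potential for free on the outer zone.
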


\begin{proof}
Since $\mathcal{E}^{\rm TF}(\rho^{\rm TF})=-e_0 Z^{\frac73}$ (see
\eqref{Tfe}) to prove the claim it is sufficient to show that the
TF-energy gives a lower bound to the quantum energy modulo lower
order terms. In the proof we first reduce to a one-particle
operator. Then we localize the energy separating the contribution
from the regions near the nuclei from the contribution from the
region far away from them. Finally we study the contribution of each
of these terms. The main contribution to the energy is given by the
region far away from the nuclei. This region will give the
TF-energy.

In the following, $s=(3-t)/4$  ($t<s< 2/3$).

In the proof $C$ denotes a generic positive constant depending only on $q$ and $\kappa$.

{\it Reduction to a one-particle problem.} We are going to estimate from below $H_{\bP}$ by a one-particle operator. This allows us to consider only
Slater determinants when minimizing the energy.

Let $g \in C^{\infty}_0(\R^3)$, $g \geq 0$ be spherically symmetric
with $\supp(g)\subset B_1(0)$ and such that $\| g\|_{2}=1$. Starting
from these $g$ we define $\Phi_{s}(\bx):= (\beta/(8Z^{s}))^{-3}g^2(8
Z ^{s}\bx/ \beta)$. Then by Newton's theorem
\begin{eqnarray*}
&& \sum_{i<j} \frac{1}{|\bx_i -\bx_j|} \geq \sum_{i<j} \iint \frac{\Phi_{s}(\bx_i -\bx) \Phi_s(\bx_j -\by)}{|\bx-\by|} \; d\bx d\by =\\
& =&   \tfrac12  \sum_{i,j=1}^N \iint \frac{\Phi_{s}(\bx_i -\bx) \Phi_s(\bx_j -\by)}{|\bx-\by|} \; d\bx d\by -\tfrac{N}{2} \iint \frac{\Phi_{s}(\bx)
\Phi_s(\by)}{|\bx-\by|} \; d\bx d\by = \dots
\end{eqnarray*}
and introducing $\rho \in L^1(\R^3) \cap L^{\frac53}(\R^3)$, $\rho \geq 0$, to be chosen
\begin{eqnarray}\notag
\dots & = & \tfrac12 \int_{\R^3} \int_{\R^3} \frac{(\sum_{i=1}^N\Phi_{s}(\bx_i -\bx)-\rho(\bx)) (\sum_{j=1}^N \Phi_s(\bx_j -\by) -\rho(\by))}{|\bx-\by|} \; d\bx d\by  \\ \notag
&+& \sum_{i=1}^N \int_{\R^3} \int_{\R^3} \frac{\Phi_{s}(\bx_i -\bx)\rho(\by)}{|\bx-\by|} \; d\bx d\by - D(\rho) -\tfrac{N}{2} \int_{\R^3} \int_{\R^3} \frac{\Phi_{s}(\bx) \Phi_s(\by)}{|\bx-\by|} \; d\bx d\by \\
& \geq & \sum_{i=1}^N \rho * \Phi_{s} * \frac{1}{|\bx_i|} - D(\rho)
- C \|g^2\|^2_{\frac65} N \beta^{-1} Z^{s}. \label{m1}
\end{eqnarray}
In the last inequality we use that the first term on the left hand side of \eqref{m1} is non-negative and that
\begin{eqnarray*}
 \int_{\R^3} \int_{\R^3} \frac{\Phi_{s}(\bx) \Phi_s(\by)}{|\bx-\by|} \; d\bx d\by  & = &  C \beta^{-1} Z^{s} \int_{\R^3} \int_{\R^3} \frac{g^2(\bx) g^2(\by)}{|\bx-\by|} \; d\bx d\by \\
&\leq & C \beta^{-1} Z^{s} \|g^2\|^2_{6/5},
\end{eqnarray*}
by definition of $\Phi_s$ and Hardy-Littlewood-Sobolev's inequality. Hence
\begin{eqnarray}\notag
H_{\bP} & \geq & \sum_{i=1}^N \big(\alpha^{-1} T(\bp_i) -\frac{Z}{|\bx_i|} -\frac{\nu}{|\bx_i -\bP|} \chi_{B_R(\bP)}(\bx_i) + \rho * \Phi_{s} * \frac{1}{|\bx_i|}\big)\\
&&  - D(\rho) - C \|g^2\|^2_{\frac65} N \beta^{-1} Z^{s} .
\label{1p}
\end{eqnarray}

{\it Choice of the localization.} The localization will be given by the following functions $\chi_1, \chi_2\in C^{\infty}_0(\R^3)$:
\begin{eqnarray}\label{12345}
\chi_1(\bx):= \left\{ \begin{array}{ll}
1 &\mbox{if } |\bx|< \frac14  \beta Z^{-t}, \\
0 &\mbox{if } |\bx|> \frac12 \beta Z^{-t},  \rule{0cm}{.4cm}
\end{array} \right. &\hspace{-.5cm} & \chi_2(\bx):= \left\{ \begin{array}{ll}
1 &\mbox{if } |\bx-\bP|< \frac14 \beta Z^{-t},\\
0 &\mbox{if } |\bx-\bP|>\frac12 \beta Z^{-t}  \rule{0cm}{.4cm}
\end{array} \right.
\end{eqnarray}
and $\chi_3 \in C^{\infty}(\R^3)$ such that $\sum_{i=1}^3\chi_i^2(\bx)=1$ for all $\bx \in \R^3$. Moreover we ask that
\begin{equation}\label{grad}
\| \nabla \chi_1\|_{\infty},\| \nabla \chi_2\|_{\infty},\|\nabla
\chi_3\|_{\infty}\leq 2^5 \beta^{-1} Z^{t}.
\end{equation}

Here $t$ is the parameter given in the statement of the proposition. Notice that by the assumptions on $R$ and $\bP$ the functions defined above give
a well defined partition of unity of $\R^3$. Moreover, $B_{R}(\mathbf{P})$ is a subset of $\{ \bx \in \R^3: \chi_{2}(\bx)=1\}$.

{\it The localization in the energy expectation.} We insert now the
localization in the energy expectation. As already observed, since
we reduced the operator to a one-particle operator in the energy
expectation it is sufficient to consider Slater determinants: i.e.
$\psi= u_1 \wedge \dots \wedge u_N$ with $\{u_i\}_{i=1}^N$
orthonormal functions in $L^2(\R^3,\C^q)$. We may assume that $u_i
\in H^{\frac12}(\R^3,\C^q)$ for $i=1, \dots, N$.

{F}rom \eqref{1p} and Theorem~\ref{IMSrel} we find with $\psi= u_1 \wedge \dots \wedge u_N$
\begin{eqnarray}\notag
\langle \psi, H_{\bP}  \psi \rangle & \geq & \sum_{i=1}^N \sum_{j=1}^3 (\chi_j u_i, h \chi_j u_i )- D(\rho)- C \|g^2\|^2_{\frac65} N \beta^{-1}Z^{s}\\
&& -\alpha^{-1} \sum_{i=1}^N \sum_{j=1}^3(u_i, L_j  u_i),\label{p2}
\end{eqnarray}
with
\begin{equation*}
h:=\alpha^{-1} T(\bp) -\frac{Z}{|\cdot |} -\frac{\nu \;  \chi_{B_R(\bP)}(\cdot )}{|\cdot -\bP|} + \rho * \Phi_{s} * \frac{1}{|\cdot|},
\end{equation*}
and $L_{j}$ is the operator (defined in Theorem~\ref{IMSrel}) that gives the error due to the localization in the kinetic energy. We first estimate
this error term. Using the definition of $L_{j}$ we find for all $j \in \{1,2,3\}$, $i \in\{1, \dots,N\}$
\begin{equation*}
(u_i, L_{j} u_i) \leq \frac{\alpha^{-2}}{4 \pi^2} \| \nabla \chi_{j}\|_{\infty}^2 \iint K_2(\alpha^{-1}|\bx -\by|) |u_i(\by)||u_i(\bx)| \; d\bx d\by.
\end{equation*}
We then obtain by using Schwarz's inequality
\begin{equation}\label{aa}
\alpha^{-1} \sum_{i=1}^N \sum_{j=1}^3(u_i, L_j  u_i)\leq
\frac{\alpha^{-3}}{4 \pi^2} \sum_{j=1}^3\|\nabla
\chi_{j}\|_{\infty}^2 \sum_{i=1}^N \int K_2(\alpha^{-1}|\bz|) d\bz
\leq C N \beta^{-2} Z^{2t} ,
\end{equation}
since from \eqref{estk2}
\begin{equation}\label{k2a}
\int_{\R^3} K_2(\alpha^{-1}|\bz|) \;d\bz =\alpha^3 \int_{\R^3} K_2(|\bz|) \;d\bz = 4 \pi \alpha^3 \int_0^{\infty} t^2 K_2(t) \;dt= 6\pi^2 \alpha^3.
\end{equation}

Collecting together \eqref{p2} and \eqref{aa} we get
\begin{equation}
\langle \psi, H_{\bP}  \psi \rangle \geq \sum_{i=1}^N \sum_{j=1}^3
(\chi_j u_i, h \chi_j u_i ) - D(\rho)- C \beta^{-2} Z^{1+2t} -C
\beta^{-1} Z^{7/4-t/4} .\label{p3}
\end{equation}
Here we used that $N\leq 2Z+1$, the choice of $s$ and that we may
choose $g$ such that $\| \nabla g\|^2_2 \leq 2 \pi$.

{\it Near the nuclei.} When $j=1$ in the summation in the first term on the right hand side of \eqref{p3} we find
\begin{equation*}
\sum_{i=1}^N (\chi_1 u_i, h \chi_1 u_i ) \geq \sum_{i=1}^N (\chi_1 u_i, (\alpha^{-1} T(\bp) - \frac{Z}{|\cdot|}) \chi_1 u_i),
\end{equation*}
since $\chi_{B_R(\bP)} \chi_1 \equiv 0$ by the choice of $\chi_1$, and the term $\Phi_s * \rho * \frac{1}{|\cdot|}$ is non-negative. Then by Theorem~\ref{dayau} we find
\begin{eqnarray}\notag
\sum_{i=1}^N (\chi_1 u_i, h \chi_1 u_i ) & \geq & \Tr[\alpha^{-1}
T(\bp) - \frac{Z}{|\cdot|} \chi_{|\bx|<\frac12 \beta Z^{-t}}]_{-} \\
\label{br1} & \geq & -C \beta^{1/2} Z^{5/2-t/2} -C \kappa^{2} Z^{2}.
\end{eqnarray}

To estimate from below the term corresponding to $j=2$ in the sum on the right hand side of \eqref{p3} we use \cite[Theorem 2.8]{SSS}. Here we need the result in \cite{SSS} (instead of Theorem \ref{dayau}) because of the presence of the two nuclei. Notice that Theorem \ref{dayau} can be extended to include also different nuclei. We have
\begin{eqnarray*}
\sum_{i=1}^N (\chi_2 u_i, h \chi_2 u_i ) & \geq & \sum_{i=1}^N (\chi_2 u_i, (\alpha^{-1} T(\bp) -\frac{Z}{|\bx|}-\frac{\nu}{|\bx-\bP|} \chi_{B_{R}(\bP)}) \chi_2 u_i ) \\
& \geq & \Tr [\alpha^{-1} T(\bp)
-\frac{Z}{|\bx|}\chi_{|\bx-\bP|<\frac12 \beta
Z^{-t}}-\frac{\nu}{|\bx-\bP|} \chi_{B_{R}(\bP)}]_{-},
\end{eqnarray*}
and by  \cite[Theorem 2.8]{SSS} we get
\begin{eqnarray}\notag
\sum_{i=1}^N (\chi_2 u_i, h \chi_2 u_i ) & \geq & -C Z^{5/2}
\alpha^{1/2} -C\int_{\frac12 \beta Z^{-t}> |\bx -\bP|>\alpha}
\left(\frac{Z^{5/2}}{|\bx|^{5/2}}+\alpha^3
\frac{Z^4}{|\bx|^4}\right) \; d\bx \\ \notag
& &  -C \int_{R> |\bx -\bP|>\alpha} \left(\frac{\nu^{5/2}}{|\bx-\bP|^{5/2}} +\alpha^3\frac{\nu^4}{|\bx-\bP|^4} \right) \; d\bx\\
& \geq & -C \kappa^{1/2} Z^{2}-C  \beta^{1/2} Z^{5/2-t/2} -C
\kappa^{2} Z^{2}.\label{N2}
\end{eqnarray}
Here we used that $t <l$ and $Z \alpha =\kappa$.

{\it The outer zone.} This region gives the main contribution to the energy. The term in \eqref{p3} that we still have to study is
\begin{equation}\label{2010}
\sum_{i=1}^N  (\chi_3 u_i, h \chi_3 u_i ) - D(\rho)
\end{equation}
We start by estimating the first term in \eqref{2010} using coherent states.

We consider again the function $g \in \mathcal{C}^{\infty}_{0}(\mathbb{R}^3)$ introduced at the beginning of the proof and we define the function
\begin{equation}\label{gspr3}
g_{s}(\bx):= (\beta/(8Z^{s}))^{-\frac{3}{2}}g(8 Z^{s}\bx /\beta)=
\Phi_s^{\frac12}(\bx),
\end{equation}
with $s$ the same parameter as before. For simplicity of notation we write $\tilde{V}:= Z/|\bx|-\rho*1/|\bx|$. Then
\begin{equation*}
\frac{Z}{|\bx|} -\rho* \Phi_s *\frac{1}{|\bx|}= \tilde{V}*\Phi_s -Z \Phi_s* \frac{1}{|\bx|}+\frac{Z}{|\bx|}.
\end{equation*}
Since $\supp(g_{s}) \cap \supp(\chi_3)=\emptyset$ by Newton's Theorem we find
\begin{eqnarray}\label{bb}
 \sum_{i=1}^N  (\chi_3 u_i, h \chi_3 u_i )=  \sum_{i=1}^N  (\chi_3 u_i, (\alpha^{-1} T(\bp)-\tilde{V}*\Phi_s) \chi_3 u_i ).
\end{eqnarray}
We consider the coherent states $g_{s}^{\bp,\bq}$ defined for $\bp,\bq \in \R^3$ by
$$g_{s}^{\bp,\bq}(\bx)=g_{s}(\bx-\bq) e^{-i\bp .\bx}. $$
The following formulas hold for $f \in H^{\frac{1}{2}}(\R^3,\C)$
\begin{eqnarray}\notag
(f,f)&=&\tfrac{1}{(2 \pi)^3} \int_{\R^3} d\bp \int_{\R^3} d\bq \, (f, g_{s}^{\bp,\bq}) \,  (g_{s}^{\bp,\bq},f),\\ \label{coheformulas}
(f,V \ast g^2_{s} f)&=& \tfrac{1}{(2 \pi)^3} \int_{\R^3} d\bp \int_{\R^3} d\bq \, V(\bq) \, (f, g_{s}^{\bp,\bq}) \, (g_{s}^{\bp,\bq},f)
\end{eqnarray}
and
\begin{eqnarray}\notag
(f, T(\bp) f) & = & \tfrac{1}{(2 \pi)^3} \int_{\R^3} d\bp \int_{\R^3} d\bq \; T(\bp) \, (f, g_{s}^{\bp,\bq}) \, (g_{s}^{\bp,\bq},f)\\ \label{Lcohe}
&& - \,  \int_{\R^3} d\bx \int_{\R^3} d\bq \overline{f(\bx)} (L_{q} f)(\bx),
\end{eqnarray}
where $L_{q}$ has integral kernel
\begin{equation*}
L_{q}(\bx,\by)=\frac{\alpha^{-2}}{4 \pi^2}  |g_{s}(\bx-\bq)-g_{s}(\by-\bq)|^2 \frac{K_2(\alpha^{-1}|\bx-\by|)}{|\bx-\by|^2}.
\end{equation*}
Using these formulas we can rewrite \eqref{bb} as follows
\begin{eqnarray}\notag
&& \sum_{i=1}^N ( \chi _{3}u_{i},(\alpha^{-1} T(\bp)-\tilde{V}*\Phi_{s}) \chi _{3}u_i) \\ \notag
& = &  \tfrac{1}{(2 \pi)^3} \alpha^{-1} \int_{\R^3} d\bp \int_{\R^3} d\bq (T(\bp)-\alpha \tilde{V}(\bq))
\, \sum_{j=1}^q \sum_{i=1}^N |(\chi_3 u_i^j, g_{s}^{\bp,\bq})|^2  \\
&& - \, \alpha^{-1} \sum_{i=1}^N \int_{\R^3} d\bx \int_{\R^3} d\bq \;  \overline{\chi_3 u_i(\bx)} (L_{\bq} \chi_3 u_i)(\bx), \label{a1}
\end{eqnarray}
Here $u_{i}^{j}$ is the $j$-th spin component of $u_{i}$. We start by estimating the error term, the last term on the right hand side of \eqref{a1}.
{F}rom the definition of $L_{\bq}$ it follows
\begin{equation*}
L_{q}(\bx,\by)\leq \frac{\alpha^{-2}}{4 \pi^2} \| \nabla g_{s}\|_{\infty}^2 K_2(\alpha^{-1}|\bx-\by|)
(\chi_{\supp(g_s)}(\bx-\bq)+ \chi_{\supp(g_s)}(\by-\bq)),
\end{equation*}
and by the definition of the function $g_s$
\begin{equation*}
\int_{\R^3}L_{q}(\bx,\by) \; d\bq \leq C \| \nabla g\|^2_{\infty}
\alpha^{-2} \beta^{-2} Z^{2s} K_2(\alpha^{-1}|\bx-\by|).
\end{equation*}
By the estimate above, Schwarz's inequality, \eqref{k2a} and the choice of $s$ we find
\begin{equation}\label{br2}
\alpha^{-1} \sum_{i=1}^N \int_{\R^3} d\bx \int_{\R^3} d\bq \;
\overline{\chi_3 u_i(\bx)} (L_{\bq} \chi_3 u_i)(\bx) \leq  C \|
\nabla g\|_{\infty}^2 \beta^{-2} Z^{3/2-t/2} N.
\end{equation}

It remains to study the first term on the right hand side of
\eqref{a1}. In order to get an estimate from below we consider only
the negative part of the integrand. Moreover, since if $|\bq|<\beta
Z^{-t}/8$ then $\supp(\chi_3 g_{s}^{\bp,\bq}) = \emptyset $ (because
$Z^{-t} > Z^{-s}$ since $s>t$) we find
\begin{eqnarray}\notag
&& \tfrac{1}{(2\pi)^{3}} \alpha^{-1}\int_{\R^3} d\bp\int_{\R^3} d\bq
\; (T(\bp)-\alpha\tilde{V}(\bq)) \sum_{j=1}^q\sum_{i=1}^{N} |
(\chi_{3}u^j_{i},g_{s}^{\bp,\bq})|^{2} \\ \label{bie1pr3} & \geq &
\tfrac{q}{( 2\pi)^{3}} \alpha^{-1} \int_{|\bq|\geq \frac18\beta
Z^{-t}} d\bq\int_{T(\bp)-\alpha\tilde{V}(\bq)\leq 0} d\bp \; (
T(\bp)-\alpha\tilde{V}(\bq)) =\dots ,
\end{eqnarray}
where we also use that $\sum_{i=1}^{N} | (\chi_{3}u^j_{i},g_{s
}^{\textbf{p},\textbf{q}})|^{2} \leq 1$ (Bessel's inequality). We
split now the integral as a sum of two terms
\begin{eqnarray}\notag
\dots & = & \tfrac{q}{\left( 2\pi \right) ^{3}} \alpha^{-1}
\iint_{\substack{\frac12|\mathbf{p}|^2-\tilde{V}(\bq)\leq 0\\
|\bq|\geq \frac18\beta Z^{-t}}} d\bq d\bp\;
(T(\bp)-\alpha\tilde{V}(\bq))\\ \label{cioco} &+&
\tfrac{q}{(2\pi)^{3}} \alpha^{-1}
\iint_{\substack{\frac{\alpha}{2}|\mathbf{p}|^2\geq \alpha
\tilde{V}(\bq)\geq T(\bp)\\ |\bq|\geq \frac18\beta Z^{-t}}} d\bq
d\bp\; (T(\bp)-\alpha\tilde{V}(\bq)).
\end{eqnarray}
We consider these two terms separately. The second term in \eqref{cioco} gives a lower order contribution. Indeed
\begin{eqnarray*}
&& \tfrac{q}{(2\pi)^{3}} \alpha^{-1} \iint_{\substack{\frac{\alpha}{2}|\bp|^2\geq \alpha \tilde{V}(\bq)\geq T(\bp)\\ |\bq|\geq \frac18\beta Z^{-t}}}
d\bq d\bp\; (T(\bp)-\alpha\tilde{V}(\bq)) \\
& \geq & - \tfrac{q}{(2\pi)^{3}} \iint_{\substack{(\alpha^2
[\tilde{V}(\bq)]^2_{+}+2[\tilde{V}(\bq)]_{+})^{\frac12} \geq
|\bp|\geq (2[\tilde{V}(\bq)]_{+})^{\frac12}\\ |\bq|\geq \frac18\beta
Z^{-t}}} d\bq d\bp\; [\tilde{V}(\bq)]_{+}=\dots ,
\end{eqnarray*}
and computing the $\bp$-integral
\begin{equation*}
\dots = - C \int_{|\bq|\geq \frac18 \beta Z^{-t}} d\bq \;
[\tilde{V}(\bq)]_{+}^{\frac52}
((1+\frac{\alpha^2}{2}[\tilde{V}(\bq)]_{+} )^{\frac32}-1)=\dots.
\end{equation*}
Using $(1+x)^{\frac32}\leq 1 + \frac32 x +\frac38 x^2$ and that $[\tilde{V}(\bq)]_{+} \leq Z/|\bq|$ we get computing the integral
\begin{equation}\label{72pr2}
\left. \begin{array}{lcl}
\dots & = & - C\alpha^2 \int_{|\bq|\geq \frac18 \beta Z^{-t}} d\bq \; [\tilde{V}(\bq)]_{+}^{\frac72} (1+ \frac{\alpha^2}{8} [\tilde{V}(\bq)]_{+})\\
& \geq & -C \beta^{-\frac12} \kappa^2 Z^{3/2+t/2} - C \kappa^4
\beta^{-\frac32} Z^{1/2+3t/2}.\rule{0cm}{.6cm}
\end{array}\right.
\end{equation}
Here we use that $Z \alpha =\kappa$.

Since $\sqrt{1+x} \geq 1+x/2-x^3/8$ for all $x>0$, we have
\begin{equation*}
T(\bp)\geq \alpha \tfrac{1}{2}|\bp|^2-\alpha^3 \tfrac18 |\bp|^4,
\end{equation*}
and, for the first term on the right hand side of \eqref{cioco}, we obtain
\begin{eqnarray*}
&& \tfrac{q}{\left( 2\pi \right) ^{3}} \alpha^{-1} \iint_{\substack{\frac12|\mathbf{p}|^2-\tilde{V}(\bq)\leq 0\\ |\bq|\geq \frac18 \beta Z^{-t}}}
d\bq d\bp\; ( T(\bp)-\alpha \tilde{V}(\bq)) \geq \\
& \geq &  \tfrac{q}{(2\pi)^{3}}
\iint_{\substack{\frac12|\mathbf{p}|^2-\tilde{V}(\bq)\leq 0\\
|\bq|\geq \frac18 \beta Z^{-t}}} d\bq d\bp\; ( \tfrac12 |\bp|^2 -
\tfrac18 \alpha^2 |\bp|^4 -\tilde{V}(\bq)) = \dots .
\end{eqnarray*}
Computing now the integral with respect to $\bp$, we find
\begin{equation}\label{33opr2}
\dots = -\tfrac{2^{\frac32}q}{15 \pi^2} \int_{|\bq|>\frac18 \beta
Z^{-t}} [\tilde{V}(\bq)]_{+}^{\frac{5}{2}} \, d\bq- C \alpha^2
\int_{|\bq|>\frac18 \beta Z^{-t}} [\tilde{V}(\bq)]_{+}^{\frac{7}{2}}
\, d\bq.
\end{equation}
We see that the second term on the right hand side of \eqref{33opr2} gives a lower order contribution since it is of the same order
as the one in \eqref{72pr2}.

Collecting together \eqref{p3}, \eqref{br1}, \eqref{N2}, \eqref{bb}, \eqref{a1}, \eqref{br2}, \eqref{72pr2} and \eqref{33opr2}
\begin{eqnarray}\label{p4pr2}
\langle \psi, H_{\bP} \psi \rangle & \geq & - C(\beta^{\frac12}
+\beta^{-2}) Z^{5/2-t/2} -\tfrac{2^{\frac32}q}{15 \pi^2} \int_{\R^3}
[\tilde{V}(\bq)]_{+}^{\frac{5}{2}} \, d\bq - D(\rho) \, .
\end{eqnarray}
Here we used also that $N <2Z +1$, the choice of $s$ and that $t
\leq 3/5$.

Now we choose $\rho=\rho^{\rm TF}$ the minimizer of the TF-energy functional of a neutral atom with Coulomb potential
and nuclear charge $Z$. Hence $\rho^{\rm TF}$ satisfies the TF-equation
\begin{equation*}
\tfrac{1}{2} \big(\tfrac{6 \pi^2}{q}\big)^{\frac23} \rho^{\rm TF}(\bx)^{\frac23}=[\tilde{V}(\bx)]_{+},
\end{equation*}
since $\tilde{V}$ is the TF-mean field potential. Notice that here we use that the chemical potential of a neutral atom
is zero. By the choice of $\rho$ from the TF-equation it follows from \eqref{p4pr2} that
\begin{eqnarray*}
\langle \psi, H_{\bP} \psi \rangle  & \geq  &- C (\beta^{\frac12} +\beta^{-2}) Z^{5/2-t/2} + \tfrac{3}{10} \big(\tfrac{6 \pi^2}{q}\big)^{\frac23} \int_{\R^3} d\bx \; \rho^{\rm TF}(\bx)^{\frac53}\\
& & - Z \int_{\R^3}\frac{\rho^{\rm TF}(\bx)}{|\bx|} \; d\bx +D(\rho^{\rm TF})\\
& = & \mathcal{E}^{\rm TF}(\rho^{\rm TF}) -C (\beta^{\frac12}
+\beta^{-2}) Z^{5/2-t/2} \, .
\end{eqnarray*}
The claim follows.
\end{proof}

\begin{proposition}\label{prthomas3}
Let $\rho^{\rm TF}$ be the minimizer of the TF-energy functional of
a neutral atom with nuclear charge $Z$. Let $Z \alpha=\kappa$ be
fixed with $0 \leq \kappa < 2/\pi$ and $Z \geq 1$.

Then there is a constant depending only on $\kappa$ and $q$ such that for all $\{u_i\}_{i=1}^N \subset H^{\frac12}(\R^3; \C^{q})$ orthonormal in $L^2(\R^3)$ we have
\begin{equation*}
\sum_{i=1}^N (u_i, (\alpha^{-1}T(\bp) -\varphi^{\rm
TF})u_i)-D(\rho^{\rm TF}) \geq \mathcal{E}^{\rm TF}(\rho^{\rm TF}) -
C Z^{2+\frac15}\, ,
\end{equation*}
with $D(\cdot)=D(\cdot, \cdot)$ the Coulomb scalar product.
\end{proposition}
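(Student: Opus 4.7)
The plan is to adapt the coherent-state argument from the proof of Proposition~\ref{prthomas} to this one-body setting, where no Newton-theorem reduction is needed and $\rho$ is already fixed to be $\rho^{\rm TF}$. Fix a smooth partition $1 = \chi_1^2 + \chi_2^2$ with $\chi_1$ supported in $B_{\beta Z^{-t}/2}(0)$ and equal to $1$ on $B_{\beta Z^{-t}/4}(0)$, satisfying $\|\nabla\chi_j\|_\infty \leq C Z^t/\beta$; here $t$ and $\beta$ are parameters to be chosen at the end (with $\beta$ a universal constant and $t=3/5$). The IMS-type identity (Theorem~\ref{IMSrel}) gives
\begin{equation*}
\sum_i (u_i, \alpha^{-1} T(\bp) u_i) = \sum_{j=1}^2 \sum_i (\chi_j u_i, \alpha^{-1} T(\bp) \chi_j u_i) - \alpha^{-1} \sum_i (u_i, (L_1+L_2) u_i),
\end{equation*}
with localization error $\leq CNZ^{2t}/\beta^2$ by Schwarz and $\int K_2(\alpha^{-1}|\bz|)d\bz = 6\pi^2\alpha^3$, exactly as in \eqref{aa}.

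For the inner region, $\varphi^{\rm TF}(\bx) \leq Z/|\bx|$ (Theorem~\ref{Slb}), and $\sum_i |\chi_1 u_i\rangle\langle \chi_1 u_i| \leq \chi_1^2 \leq 1$, so
\begin{equation*}
\sum_i (\chi_1 u_i, (\alpha^{-1} T(\bp) - \varphi^{\rm TF}) \chi_1 u_i) \geq \Tr\bigl[\alpha^{-1} T(\bp) - (Z/|\cdot|)\chi_{|\bx|<\beta Z^{-t}/2}\bigr]_-.
\end{equation*}
Rescaling by $\alpha$ and invoking the Daubechies--Lieb--Yau inequality (Theorem~\ref{dayau}) with coupling $\kappa = Z\alpha$ gives a lower bound of order $-C\beta^{1/2} Z^{5/2-t/2} - C\kappa^2 Z^2$.

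For the outer region, the crucial point is that $\varphi^{\rm TF}$ is subharmonic away from the origin, since $-\Delta\varphi^{\rm TF} = -4\pi\rho^{\rm TF} \leq 0$ there. Choosing a nonnegative radial $g_s \in C_0^\infty$ with $\supp(g_s) \subset B_{\beta Z^{-s}/8}(0)$ and $s \geq t$, the ball centered at $\bq\in\supp(\chi_2)$ avoids the origin, so the mean-value inequality yields $\varphi^{\rm TF}(\bq) \leq (\varphi^{\rm TF}*g_s^2)(\bq)$ on $\supp(\chi_2)$. Using this to replace $-\varphi^{\rm TF}$ by $-\varphi^{\rm TF}*g_s^2$, inserting the coherent-state identities \eqref{coheformulas}--\eqref{Lcohe} with $g_s^{\bp,\bq}(\bx) = g_s(\bx-\bq)e^{i\bp\cdot\bx}$, and using Bessel's inequality $\sum_{i,j}|(\chi_2 u_i^j, g_s^{\bp,\bq})|^2 \leq q$, the outer contribution is bounded below (after keeping only the negative part of the integrand and observing that the inner product vanishes when $\supp(g_s(\cdot-\bq))$ misses $\supp(\chi_2)$) by
\begin{equation*}
\frac{q\alpha^{-1}}{(2\pi)^3} \iint_{\substack{T(\bp)\leq\alpha\varphi^{\rm TF}(\bq)\\|\bq|\geq c\beta Z^{-t}}} (T(\bp)-\alpha\varphi^{\rm TF}(\bq))\, d\bp\, d\bq - CNZ^{2s}/\beta^2,
\end{equation*}
the last term being the $L_\bq$-error, controlled as in \eqref{br2}. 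Using $T(\bp) \geq \tfrac{\alpha}{2}|\bp|^2-\tfrac{\alpha^3}{8}|\bp|^4$ and evaluating the momentum integral exactly as in \eqref{cioco}--\eqref{33opr2} produces the main term $-\tfrac{2^{3/2}q}{15\pi^2}\int [\varphi^{\rm TF}]_+^{5/2}$, plus a quartic correction of order $\kappa^2 Z^{3/2+t/2}/\beta^{1/2}$ and an error of order $\beta^{1/2} Z^{5/2-t/2}$ from restoring the excluded ball (using $\varphi^{\rm TF} \leq Z/|\bq|$).

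To close, set $t = s = 3/5$ and $\beta$ a fixed universal constant, whereupon every error is $O(Z^{2+1/5})$ since $5/2-t/2 = 1+2s = 11/5$. The TF equation $\tfrac12(6\pi^2/q)^{2/3}(\rho^{\rm TF})^{2/3} = \varphi^{\rm TF}$ (valid since $\mu^{\rm TF}=0$ for the neutral atom), combined with $\int\rho^{\rm TF}\varphi^{\rm TF} = Z\int\rho^{\rm TF}/|\bx|-2D(\rho^{\rm TF})$, gives
\begin{equation*}
-\tfrac{2^{3/2}q}{15\pi^2}\int [\varphi^{\rm TF}]_+^{5/2} = \mathcal{E}^{\rm TF}(\rho^{\rm TF}) + D(\rho^{\rm TF}),
\end{equation*}
so subtracting $D(\rho^{\rm TF})$ from both sides of the overall lower bound yields the claim. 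The principal technical obstacle is that the coherent-state method cannot cope with the Coulomb singularity of $\varphi^{\rm TF}$ at the origin, which is precisely why a separate inner-region analysis via Daubechies--Lieb--Yau is necessary; the optimization balancing the inner error $\beta^{1/2}Z^{5/2-t/2}$ against the localization/coherent-state errors $Z^{1+2t}/\beta^2$ is what fixes the exponent $11/5$.
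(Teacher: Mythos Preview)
Your proposal is correct and follows essentially the same route as the paper's proof: IMS localization with a single inner/outer split at scale $Z^{-3/5}$, Daubechies--Lieb--Yau for the inner piece, and a coherent-state lower bound for the outer piece, with the TF equation converting $-\tfrac{2^{3/2}q}{15\pi^2}\int[\varphi^{\rm TF}]_+^{5/2}$ into $\mathcal{E}^{\rm TF}(\rho^{\rm TF})+D(\rho^{\rm TF})$. Two minor remarks: the paper fixes $t=s=3/5$ from the outset rather than carrying parameters, and the ``error from restoring the excluded ball'' you mention is actually free, since extending the domain of the negative integral $-\int[\varphi^{\rm TF}]_+^{5/2}$ only decreases it.
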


\begin{proof}
Since $\mathcal{E}^{\rm TF}(\rho^{\rm TF})=-e_0 Z^{\frac73}$ (see
\eqref{Tfe}) to prove the claim it is sufficient to show that the
TF-energy gives a lower bound to the quantum energy modulo lower
order terms. In the proof we localize the energy separating the
contribution from the region near the nucleus to the one far away.
The region far away from the nuclei will give the TF-energy.

In the proof $C$ denotes a generic universal positive constant.

{\it Choice of the localization.} The localization will be given by the functions $\chi_1 \in C^{\infty}_0(\R^3)$ and $\chi_2 \in C^{\infty}(\R^3)$ such that: $0\leq \chi_1,\chi_2 \leq 1$, $\chi_1^2+\chi_2^2=1$ in $\mathbb{R}^3$,
\begin{eqnarray}\label{12}
\chi_1(\bx):= \left\{ \begin{array}{ll}
1 &\mbox{if } |\bx|< 2 Z^{-3/5}, \\
0 &\mbox{if } |\bx|>3 Z^{-3/5}.
\end{array}\right.
\end{eqnarray}
Moreover we ask that
\begin{equation}\label{gradp3}
\left.\begin{array}{l} \| \nabla \chi_1\|_{\infty},\| \nabla
\chi_2\|_{\infty} \leq 2^2 Z^{3/5}.
\end{array}\right.
\end{equation}

{\it The localization in the energy expectation.} We insert now the localization in the energy expectation. {F}rom Theorem~\ref{IMSrel} we find
\begin{eqnarray}\label{locp2}
&& \sum_{i=1}^N (u_i, (\alpha^{-1}T(\bp) -\varphi^{\rm TF})u_i)-D(\rho^{\rm TF}) \\
& \geq  & \sum_{i=1}^N \sum_{j=1}^2(\chi_j u_i, (\alpha^{-1}T(\bp) -\varphi^{\rm TF})\chi_j u_i)-D(\rho^{\rm TF})
-\alpha^{-1} \sum_{i=1}^N \sum_{j=1}^2(u_i, L_j  u_i), \notag
\end{eqnarray}
with $L_{j}$ is the operator (defined in Theorem~\ref{IMSrel}) that
gives the error due to the localization in the kinetic energy. We
first estimate this error term. Since $N \leq 2Z +1$ we find as in
\eqref{aa} that
\begin{equation}\label{locp3a}
\alpha^{-1} \sum_{i=1}^N \sum_{j=1}^2(u_i, L_{j} u_i) \leq C Z^{6/5}
N  \leq C Z^{2+1/5} \, .
\end{equation}

{\it Near the nucleus.} Since
\begin{equation*}
\sum_{i=1}^N (\chi_1 u_i, (\alpha^{-1}T(\bp) -\varphi^{\rm
TF})\chi_1 u_i) \geq \Tr[\alpha^{-1} T(\bp) -\varphi^{\rm TF}
\chi_{|\bx|<3 Z^{-3/5}}]_{-},
\end{equation*}
by Theorem~\ref{dayau} with $R=3 Z^{-3/5}$ we find
\begin{equation}\label{npr3}
\sum_{i=1}^N (\chi_1 u_i, (\alpha^{-1}T(\bp) -\varphi^{\rm
TF})\chi_1 u_i) \geq -C Z^{2+1/5} -C \kappa^{2} Z^{2}.
\end{equation}
Here we use that $Z \alpha =\kappa$.

{\it The outer zone.} This region gives the main contribution to the energy.

Let $g \in C^{\infty}_0(\R^3)$, $g \geq 0$ be spherically symmetric
with $\supp(g)\subset B_1(0)$ and such that $\| g\|_{2}=1$. Starting
from these $g$ we define $\Phi_{Z}(\bx):= ( Z^{-3/5})^{-3}g^2(\bx
Z^{3/5})$ and
\begin{equation*}
g_{Z}(\bx):= (Z^{-3/5})^{-\frac{3}{2}}g(\bx Z^{3/5})=
\Phi_{Z}^{\frac12}(\bx).
\end{equation*}
Since $\supp(g_{Z}) \cap \supp(\chi_2)=\emptyset$ by Newton's
Theorem we find
\begin{eqnarray}\label{bbpr2}
 \sum_{i=1}^N  (\chi_2 u_i, (\alpha^{-1}T(\bp) -\varphi^{\rm TF})\chi_2 u_i )=
 \sum_{i=1}^N  (\chi_2 u_i, (\alpha^{-1}T(\bp) -\varphi^{\rm TF} * \Phi_{Z})\chi_2 u_i ).
\end{eqnarray}
We consider the coherent states $g_{Z}^{\bp,\bq}$ defined for
$\bp,\bq \in \R^3$ by
$$g_{Z}^{\bp,\bq}(\bx)=g_{Z}(\bx-\bq) e^{-i\bp .\bx}. $$
Using formulas \eqref{coheformulas} and \eqref{Lcohe} we can rewrite \eqref{bbpr2} as follows
\begin{eqnarray}\notag
&& \sum_{i=1}^N ( \chi _{2}u_{i},(\alpha^{-1} T(\bp)-\varphi^{\rm
TF}*g^2_{Z}) \chi _{2}u_i) \\ \notag
& = &  \tfrac{1}{(2 \pi)^3} \alpha^{-1} \int_{\R^3} d\bp \int_{\R^3} d\bq (T(\bp)-\alpha \varphi^{\rm TF}(\bq))
\, \sum_{j=1}^q \sum_{i=1}^N |(\chi_2 u_i^j, g_{Z}^{\bp,\bq})|^2  \\
&& - \, \alpha^{-1} \sum_{i=1}^N \int_{\R^3} d\bx \int_{\R^3} d\bq \;  \overline{\chi_2 u_i(\bx)} (L_{\bq} \chi_2 u_i)(\bx), \label{a1pr2}
\end{eqnarray}
Here $u_{i}^{j}$ is the $j$-th spin component of $u_{i}$. We start by estimating the error term, the last term on the right hand side of \eqref{a1pr2}. We find as in \eqref{br2} that
\begin{equation}\label{ce}
\alpha^{-1} \sum_{i=1}^N \int_{\R^3} d\bx \int_{\R^3} d\bq \;
\overline{\chi_2 u_i(\bx)} (L_{\bq} \chi_2 u_i)(\bx) \leq  C \|
\nabla g\|_{\infty}^2 Z^{6/5} N.
\end{equation}

It remains to study the first term on the right hand side of
\eqref{a1pr2}. In order to get an estimate from below we consider
only the negative part of the integrand. Moreover, since if $|\bq|<
Z^{-3/5}$ then $\supp(\chi_2 g_{Z}^{\bp,\bq}) = \emptyset $ we find
\begin{eqnarray}\notag
&& \tfrac{1}{(2\pi)^{3}} \alpha^{-1}\int_{\R^3} d\bp\int_{\R^3} d\bq
\; (T(\bp)-\alpha\varphi^{\rm TF}(\bq)) \sum_{j=1}^q\sum_{i=1}^{N} |
(\chi_{2}u^j_{i},g_{Z}^{\bp,\bq})|^{2} \\ \label{bie1} & \geq &
\tfrac{q}{( 2\pi)^{3}} \alpha^{-1} \int_{|\bq|\geq Z^{-3/5}}
d\bq\int_{T(\bp)-\alpha\varphi^{\rm TF}(\bq)\leq 0} d\bp \; (
T(\bp)-\alpha\varphi^{\rm TF}(\bq)) =\dots ,
\end{eqnarray}
where we also use that $\sum_{i=1}^{N} |
(\chi_{3}u^j_{i},g_{Z}^{\textbf{p},\textbf{q}})|^{2} \leq 1$
(Bessel's inequality). We split now the integral as a sum of two
terms
\begin{eqnarray}\notag
\dots & = & \tfrac{q}{\left( 2\pi \right) ^{3}} \alpha^{-1} \iint_{\substack{\frac12|\mathbf{p}|^2-\varphi^{\rm TF}(\bq)\leq 0\\
|\bq|\geq Z^{-3/5}}} d\bq d\bp\; (T(\bp)-\alpha\varphi^{\rm
TF}(\bq))\\ \label{ciocopr3} &+& \tfrac{q}{(2\pi)^{3}} \alpha^{-1}
\iint_{\substack{\frac{\alpha}{2}|\mathbf{p}|^2\geq \alpha
\varphi^{\rm TF}(\bq)\geq T(\bp)\\ |\bq|\geq Z^{-3/5}}} d\bq d\bp\;
(T(\bp)-\alpha\varphi^{\rm TF}(\bq)).
\end{eqnarray}
We consider these two terms separately. The second term in \eqref{ciocopr3} gives a lower order contribution. Indeed
\begin{eqnarray*}
&& \tfrac{q}{(2\pi)^{3}} \alpha^{-1} \iint_{\substack{\frac{\alpha}{2}|\bp|^2\geq \alpha \varphi^{\rm TF}(\bq)\geq T(\bp)\\
|\bq|\geq Z^{-3/5}}} d\bq d\bp\; (T(\bp)-\alpha\varphi^{\rm TF}(\bq)) \\
& \geq & - \tfrac{q}{(2\pi)^{3}} \iint_{\substack{(\alpha^2
[\varphi^{\rm TF}]^2_{+}+2[\varphi^{\rm TF}]_{+})^{\frac12} \geq
|\bp|\geq (2[\varphi^{\rm TF}(\bq)]_{+})^{\frac12}\\ |\bq|\geq
Z^{-3/5}}} d\bq d\bp\; [\varphi^{\rm TF}(\bq)]_{+}=\dots ,
\end{eqnarray*}
and computing the integral in $\bp$
\begin{equation*}
\dots = - C \int_{|\bq|\geq Z^{-3/5}} d\bq \; [\varphi^{\rm
TF}(\bq)]_{+}^{\frac52} ((1+\frac{\alpha^2}{2}[\varphi^{\rm
TF}(\bq)]_{+} )^{\frac32}-1)=\dots.
\end{equation*}
Using $(1+x)^{\frac32}\leq 1 + \frac32 x +\frac38 x^2$ and that $[\varphi^{\rm TF}(\bq)]_{+} \leq Z/|\bq|$ we get computing the integral
\begin{equation}\label{72}
\left. \begin{array}{lcl}
\dots & = & - C \alpha^2 \displaystyle{\int}_{|\bq|\geq Z^{-3/5}} d\bq \;
[\varphi^{\rm TF}]_{+}^{\frac72} (1+ \frac{\alpha^2}{8} [\varphi^{\rm TF}(\bq)]_{+})\\
& \geq & -C \kappa^{2} Z^{2-\frac15} - C \kappa^{4}
Z^{\frac75}.\rule{0cm}{.6cm}
\end{array}\right.
\end{equation}
Since $\sqrt{1+x} \geq 1+x/2 -x^3/8$ for all $x \geq 0$, we have
\begin{equation*}
T(\bp)\geq \alpha \tfrac{1}{2}|\bp|^2-\alpha^3 \tfrac18 |\bp|^4,
\end{equation*}
and, for the first term on the right hand side of \eqref{ciocopr3}, we obtain
\begin{eqnarray*}
&& \tfrac{q}{\left( 2\pi \right) ^{3}} \alpha^{-1} \iint_{\substack{\frac12|\mathbf{p}|^2-\varphi^{\rm TF}(\bq)\leq 0\\ |\bq|\geq Z^{-3/5}}}
d\bq d\bp\; ( T(\bp)-\alpha \varphi^{\rm TF}(\bq)) \geq \\
& \geq &  \tfrac{q}{(2\pi)^{3}}
\iint_{\substack{\frac12|\mathbf{p}|^2-\varphi^{\rm TF}(\bq)\leq 0\\
|\bq|\geq Z^{-3/5}}} d\bq d\bp\; ( \tfrac12 |\bp|^2 - \tfrac18
\alpha^2 |\bp|^4 -\varphi^{\rm TF}(\bq)) = \dots .
\end{eqnarray*}
Computing now the integral with respect to $\bp$, we find
\begin{equation}\label{33o}
\dots = -\tfrac{2^{\frac32}q}{15 \pi^2} \int_{|\bq|> Z^{-3/5}}
[\varphi^{\rm TF}(\bq)]_{+}^{\frac{5}{2}} \, d\bq- C \alpha^2
\int_{|\bq|>Z^{-3/5}} [\varphi^{\rm TF}(\bq)]_{+}^{\frac{7}{2}} \,
d\bq.
\end{equation}
We see that the second term on the right hand side of \eqref{33o} gives a lower order contribution since it is of the same order as the one in \eqref{72}.

Starting from \eqref{locp2}, by \eqref{locp3a}, \eqref{npr3},
\eqref{ce}, \eqref{72} and \eqref{33o} we find
\begin{eqnarray}\label{p4}
&&  \sum_{i=1}^N (u_i, (\alpha^{-1}T(\bp) -\varphi^{\rm
TF})u_i)-D(\rho^{\rm TF})\\ \notag & \geq & - C (Z^{2+1/5}+ Z^{2} +
Z^{2-1/5}+Z^{7/5})-\tfrac{2^{\frac32}q}{15 \pi^2} \int_{\R^3}
[\varphi^{\rm TF}(\bq)]_{+}^{\frac{5}{2}} \, d\bq - D(\rho^{\rm
TF}).
\end{eqnarray}
The result follows from the TF-equation.
\end{proof}

\vspace{.5cm}

\begin{acknowledgement}
The authors wish to thank Heinz Siedentop for suggesting the problem.
Support from the EU IHP network
{\it Postdoctoral Training Program in Mathematical Analysis of
Large Quantum Systems},
contract no.\
HPRN-CT-2002-00277
is gratefully acknowledged.
\end{acknowledgement}

\end{document}